\newenvironment{tbs}{%
   \small\tt
   \begin{itemize}}{\end{itemize}}
\newcommand{\btbs}{\begin{tbs}}                                                                      
\newcommand{\etbs}{\end{tbs}}
\newcommand{\hide}[1]{}
\newtheorem{theorem}{Theorem}[section]
\newtheorem{fact}[theorem]{Fact}
\newtheorem{proposition}[theorem]{Proposition}
\newtheorem{corollary}[theorem]{Corollary}
\newtheorem{defi}[theorem]{Definition}
\newtheorem{conv}[theorem]{Convention}
\newtheorem{rema}[theorem]{Remark}
\newtheorem{exam}[theorem]{Example}
\newenvironment{definition}{\begin{defi}\rm}{\hfill $\lhd$\end{defi}}
\newenvironment{remark}{\begin{rema}\rm}{\hfill $\lhd$\end{rema}}
\newenvironment{example}{\begin{exam}\rm}{\hfill $\lhd$\end{exam}}
\newenvironment{urlist}{\begin{enumerate}[topsep=0pt,itemsep=-1ex,partopsep=1ex,parsep=1ex,%
    label={\arabic*)}]
    }{\end{enumerate}}
\newenvironment{proofof}[1]{\begin{trivlist}\item[\hskip\labelsep{\bf
Proof~of~{#1}.\ }]}{\hspace*{\fill} {\sc qed}\end{trivlist}}
\newtheorem{claim2}{\sc Claim}
\newenvironment{claim}{\begin{claim2}\rm}{\end{claim2}\rm}
\newenvironment{claimfirst}{\setcounter{claim2}{0}
               \begin{claim2}\rm}{\end{claim2}\rm}
\newenvironment{pfclaim}{\begin{trivlist}\item[]{\sc Proof of
Claim}}{\hfill {\mbox{$\blacktriangleleft$}}\end{trivlist}}
\newcommand{\bv}{\bigvee}
\newcommand{\dia}{\Diamond}
\newcommand{\ntr}{\checkmark}
\newcommand{\bdual}[1]{#1^{\partial}}
\newcommand{\atneg}[1]{\overline{#1}}
\newcommand{\sneg}[1]{\underline{#1}}
\newcommand{\isbnf}{\;::=\;}
\newcommand{\divbnf}{\;\mid\;}
\newcommand{\Prop}{\mathsf{Prop}}
  \newcommand{\PropP}{\mathsf{P}}
  \newcommand{\PropQ}{\mathsf{Q}}
\newcommand{\Lit}{\mathsf{Lit}}
\newcommand{\muML}{\mathcal{L}_{\mu}}
\newcommand{\AFMC}{\muML^{\mathit{af}}}
\newcommand{\nth}[2]{\mathrm{N}^{#1}_{#2}(\AFMC)}
\newcommand{\ol}[1]{\overline{#1}}   % use this for the dual of a fixpoint operator
\newcommand{\Clos}{\mathsf{Clos}}
\newcommand{\FV}{\mathit{FV}}
\newcommand{\BV}{\mathit{BV}}
\newcommand{\thin}[1]{#1^{-}}
\newcommand{\morefocus}[2]{R^{-}({#1},{#2})}
\newcommand{\uls}[1]{\widetilde{#1}}
\newcommand{\backcl}{Q}
\renewcommand{\morefocus}[2]{F({#1},{#2})}
\newcommand{\allfocus}[1]{{#1}^f}
\newcommand{\rcla}[1]{\rightarrow_{C}^{#1}}
\newcommand{\rclat}[1]{\twoheadrightarrow_{C}^{#1}}
\newcommand{\tracestep}{\to_{C}}
\newcommand{\atrail}{\mathsf{A}} % active trace relation
\newcommand{\ptrail}{\mathsf{P}} % passive trace relation
\newcommand{\gtrail}{\mathsf{T}} % general trace relation
\newcommand{\rdc}[1]{\wh{#1}}
\newcommand{\Vitp}{\mathsf{X}}
\newcommand{\ip}[1]{\theta(#1)}
\newcommand{\fitp}[1]{\theta_{#1}}
\newcommand{\mng}[1]{[\![ #1 ]\!]}
\newcommand{\mathstr}[1]{\mathbb{#1}}
\newcommand{\bbA}{\mathstr{A}}
\newcommand{\bbB}{\mathstr{B}}
\newcommand{\bbG}{\mathstr{G}}
\newcommand{\bbS}{\mathstr{S}}
\newcommand{\bbT}{\mathstr{T}}
\newcommand{\Focus}{\ensuremath{\mathsf{Focus}}\xspace}
\newcommand{\Focusinf}{\ensuremath{\mathsf{Focus}_\infty}\xspace}
\newcommand{\ann}{a}
\newcommand{\dann}{\ol{a}}
\newcommand{\AXC}[1]{\AxiomC{#1}}
\newcommand{\UIC}[1]{\UnaryInfC{#1}}
\newcommand{\BIC}[1]{\BinaryInfC{#1}}
\newcommand{\AxLit}{\ensuremath{\mathsf{Ax1}}\xspace}
\newcommand{\AxTop}{\ensuremath{\mathsf{Ax2}}\xspace}
\newcommand{\RuOr}{\ensuremath{\mathsf{R}_{\lor}}\xspace}
\newcommand{\RuAnd}{\ensuremath{\mathsf{R}_{\land}}\xspace}
\newcommand{\RuBox}{\ensuremath{\mathsf{\mathsf{R}_{\Box}}}\xspace}
\newcommand{\RuMod}{\ensuremath{\mathsf{\mathsf{M}}}\xspace}
\newcommand{\RuFp}[1]{\ensuremath{\mathsf{R}_{#1}}\xspace}
\newcommand{\RuMu}{\RuFp{\mu}}
\newcommand{\RuNu}{\RuFp{\nu}}
\newcommand{\RuWeak}{\ensuremath{\mathsf{W}}\xspace}
\newcommand{\RuFocus}{\ensuremath{\mathsf{F}}\xspace}
   \newcommand{\RuFocustot}{\ensuremath{\mathsf{F}^t}\xspace}
\newcommand{\RuUnfocus}{\ensuremath{\mathsf{U}}\xspace}
\newcommand{\RuDischarge}[1]{\ensuremath{\mathsf{D}^{#1}}\xspace}
\newcommand{\Ru}{\mathsf{R}}
\newcommand{\Tokens}{\mathcal{D}}
\newcommand{\dx}{\ensuremath{\mathsf{x}}}
\newcommand{\dy}{\ensuremath{\mathsf{y}}}
\newcommand{\dz}{\ensuremath{\mathsf{z}}}
\newcommand{\tLab}{\mathsf{Q}}
\newcommand{\pLab}{\mathsf{R}}
\newcommand{\fsr}{\mathsf{l}}
\newcommand{\game}[1]{\mathcal{G}({#1})}
\newcommand{\EG}{\mathcal{E}}
\newcommand{\Own}{O}
\newcommand{\WC}{W}
\newcommand{\Om}{\Omega}
\newcommand{\SM}{\mathcal{M}_{\nu}}
\newcommand{\last}{\mathsf{last}}
\newcommand{\first}{\mathsf{first}}
\newcommand{\PM}{\mathit{PM}}
\newcommand{\Eloise}{Eloise\xspace}
\newcommand{\Abelard}{Abelard\xspace}
\newcommand{\Prover}{Prover\xspace}
\newcommand{\Refuter}{Refuter\xspace}
  \newcommand{\eloi}{\exists}
  \newcommand{\abel}{\forall}
\newcommand{\Win}{\mathit{Win}}
\newcommand{\InfP}{\mathsf{InfPath}}
\newcommand{\Inf}{\mathsf{Inf}}
\newcommand{\comp}{\mathop{;}}
\newcommand{\powerset}{\mathcal{P}}
\newcommand{\itv}[2]{[#1, #2]}
\newcommand{\isdef}{\mathrel{:=}}
\newcommand{\nada}{\varnothing}
\newcommand{\sse}{\subseteq}
\newcommand{\sz}[1]{|#1|}
\newcommand{\Dom}{\mathsf{Dom}}
\newcommand{\Ran}{\mathsf{Ran}}
\newcommand{\HF}{\mathit{HF}}
\newcommand{\al}{\alpha}
\newcommand{\be}{\beta}
\newcommand{\ga}{\gamma}
\newcommand{\si}{\sigma}
\newcommand{\om}{\omega}
\newcommand{\Ga}{\Gamma}
\newcommand{\De}{\Delta}
\newcommand{\Si}{\Sigma}
\renewcommand{\phi}{\varphi}
\newcommand{\wh}[1]{\widehat{#1}}
\author{Johannes Marti and Yde Venema}
\title{Focus-style proof systems and interpolation for the
alternation-free $\mu$-calculus}
\begin{document}

\maketitle

\begin{abstract}
In this paper we introduce a cut-free sequent calculus for the
alternation-free fragment of the modal $\mu$-calculus. This system
allows for circular proofs and uses a simple focus mechanism to control
the unravelling of fixpoints along infinite branches. We show that the
proof system is sound and complete and apply it to prove that the
alternation-free fragment has the Craig interpolation property.
\end{abstract}

\section{Introduction}

In this paper we present a circular proof system for the
alternation-free fragment of the modal $\mu$-calculus and use this
system to proof Craig interpolation for the alternation-free fragment.

\subsection{The alternation-free $\mu$-calculus}

The modal $\mu$-calculus, introduced by Kozen~\cite{koze:resu83}, is a
logic for describing properties of processes that are modelled by labelled
transition systems. It extends the expressive power of propositional
modal logic by means of least and greatest fixpoint operators. This
addition permits the expression of all monadic second-order properties
of transition systems~\cite{jani:auto95}. The $\mu$-calculus is
generally regarded as a universal specification language, since it
embeds most other logics that are used for this purpose, such as
\textsc{ltl}, \textsc{ctl}, \textsc{ctl}$^{*}$ and \textsc{pdl}.

The alternation-free $\mu$-calculus is a fragment of the $\mu$-calculus
in which there is no interaction between least and greatest fixpoint
operators. It can be checked that the translations of both \textsc{ctl}
and \textsc{pdl} into the $\mu$-calculus yield alternation-free
formulas. 
Over tree structures, or when restricted to bisimulation-invariant properties, 
the expressive power of the alternation-free $\mu$-calculus corresponds to
monadic second-order logic where the quantification is restricted to sets that
are finite, or in a suitable sense well-founded \cite{niwi:fixe97,facc:char13}. 
For more restricted classes of structures, such as for instance infinite words,
it can be shown that the alternation-free fragment already has the same
expressivity as the full $\mu$-calculus \cite{kaiv:axio95,guti:muca14}.
% Maybe mention the characterization of Gaelle and Yde

Many theoretical results on the modal $\mu$-calculus depend on
the translation from formulas in the $\mu$-calculus to automata
\cite{jani:auto95,wilk:alte01}. The general idea is to construct for
every formula an automaton that accepts precisely the pointed structures
where the formula is true. For the alternation-free fragment the
codomain of this translation can be taken to consist of weak alternating
automata \cite{facc:char13,guti:muca14}. These are parity automata for
which the assignment of priorities to states is restricted such that all
states from the same strongly connected component have the same priority.
% Are there some good references? Maybe we should not have this
% paragraph at all.

\subsection{A cyclic focus system for the alternation-free $\mu$-calculus}

In the theory of the modal $\mu$-calculus automata- and game-theoretic
approaches have long been at the centre of attention. Apart from the
rather straightforward tableau games by Niw\'{i}nski \& Walukiewicz
\cite{niwi:game96} there have for a long time been few successful
applications of proof-theoretic techniques. This situation has changed
with a recent breakthrough by Afshari \& Leigh \cite{afsh:cutf17}, who
obtain completeness of Kozen's axiomatization of the modal
$\mu$-calculus using purely proof-theoretic arguments. The proof of this
result can be taken to consist of a series of
proof transformations: First, it starts with a successful infinite
tableau in the sense of \cite{niwi:game96}. Second, one then adds a
mechanism for annotating formulas that was developed by
Jungteerapanich and Stirling \cite{jung:tabl10,stir:tabl14} to detect
after finitely many steps when a branch of the tableau tree may develop into
a successful infinite branch, thus obtaining a finite but cyclic tableau.
Third, Afshari \& Leigh show how to apply a series of
transformations to this finite annotated tableau to obtain a proof in a cyclic
sequent system for the model $\mu$-calculus. 
Fourth, and finally, this proof can be turned into a Hilbert-style proof in 
Kozen's axiomatization.

In this paper we present an annotated cyclic proof system for the
alternation-free $\mu$-calculus that corresponds roughly to the annotated
tableaux of Jungteerapanich and Stirling mentioned in the second
step above. But, whereas in the system for the full $\mu$-calculus these
annotations are sequences of names for fixpoint variables, for the 
alternation-free fragment it suffices to annotate formulas with just one
bit of information. We think of this bit as indicating whether a
formula is in what we call \emph{in focus} or whether it is
\emph{unfocused}. We use this terminology because our proof system for
the alternation-free $\mu$-calculus is a generalization of the focus
games for weaker fixpoint logics such as \textsc{ltl} and \textsc{ctl}
by Lange \& Stirling \cite{lang:focu01}. These are games based on a tableau
such that at every sequent of the tableau there is exactly one formula
in focus. 
In our system we generalise this so that a proof node may feature a \emph{set}
of formulas in focus.

Our system can be shown to be complete while only allowing for two kind
of manipulations of annotations. The first is the rule that unfolds
least fixpoints. Whenever one is unfolding a least fixpoint formula that
is in focus at the current sequent then its unfolding in the sequent
further away from the root needs to be unfocused. Unfolding greatest
fixpoints has no influence on the annotations. That other manipulation
of annotations is by a focus rule that puts previously unfocused
formulas into focus. It suffices to only apply this rule if the current
sequence does not contain any formula that is in focus. The rule then
simply continues the proof search with the same formulas but now they
are all in focus.

The design of the annotation mechanisms in the tableau by
Jungteerapanich \& Stirling and in the focus system from this paper are
heavily influenced by ideas from automata theory. It was already
observed by Niw\'{i}nski \& Walukiewicz \cite{niwi:game96} that a tree
automaton can be used that accepts precisely the trees that encode
successful tableaux. This automaton is the product of a tree automaton
checking for local consistency of the tableau and a deterministic
automaton over infinite words that detects whether every branch in the
tableau is successful. That a branch in a tableau is successful means
that it carries at least one trail of formulas where the most
significant fixpoint that is unravelled infinitely often is a greatest
fixpoint. It is relatively straight-forward to give a nondeterministic
automaton that detects successful branches, but the construction needs a
deterministic automaton, which is obtained using the Safra construction
\cite{safr:comp88}. The crucial insight of Jungteerapanich \& Stirling
\cite{jung:tabl10,stir:tabl14} is that this deterministic automaton that
results from the Safra construction can be encoded inside the tableau by 
using annotations of formulas.

The relation between detecting successful branches in a proof and the
determinization of automata on infinite words can also be seen more
directly. In the proof system annotations are used to detect whether a
branch of the proof carries at least one trail such that the most
significant fixpoint that is unfolded infinitely often on the trail is a
greatest fixpoint.
This is analogous to a problem that arises when one tries to use the
powerset construction to construct an equivalent deterministic automaton
from a given non-deterministic parity automaton operating on infinite
words. The problem there is to determine whether a sequence of
macrostates of the deterministic automaton carries a run of the original
non-deterministic automaton that satisfies the parity condition. It is
possible to view the annotated sequents of Jungteerapanich \& Stirling
as a representation of the Safra trees which provide the states of a
deterministic Muller automaton that one obtains when determinizing a
non-deterministic parity automaton \cite[sec.~4.3.5]{jung:tabl10}.
% Should we give references to Safra?

For alternation-free formulas it is significantly simpler to detect
successful branches, because one can show that the fixpoints that are
unravelled infinitely often on a trail of alternation-free formulas are
either all least or all greatest fixpoints. One can compare the problem
of finding such a trail to the problem of recognising a successful run
of a non-deterministic weak stream automaton in the macrostates of a
determinization of the automaton. In fact the focus mechanism from the
proof system that we develop in this paper can also be used to transform
a non-deterministic weak automaton into an equivalent deterministic
co-B\"{u}chi automaton. This relatively simple construction is a special
case of Theorem~15.2.1 in \cite{demr:temp16}, which shows that every
non-deterministic co-B\"{u}chi automaton can be transformed into an
equivalent deterministic co-B\"{u}chi automaton.

% \btbs
%  \item We should maybe also mention that our condition on infinite
% traces is co-B\"{u}chi and so quite nice!
% \etbs

% \iffalse
% \subsection{Unguarded formulas}
% 
% It is a common assumption in the literature on the modal $\mu$-calculus
% that formulas are guarded, this means that for every fixpoint subformula
% of the formula of the form $\mu x . \chi$ or $\nu x . \chi$ it holds
% that all free occurrences of $x$ in $\chi$ are in the scope of at least
% one modal operator.
% \cite{frie:off11}
% 
% \cite{burs:guar15}
% 
% Say that the translation from PDL generates unguarded formulas
% \fi

\subsection{Interpolation for the alternation-free $\mu$-calculus}

We apply the proof system introduced in this report to prove that the
alternation-free $\mu$-calculus has Craig's interpolation property. This
means that for any two alternation-free formulas $\varphi$ and $\psi$
such that $\varphi \rightarrow \psi$ is valid there is an interpolant
$\chi$ of $\varphi$ and $\psi$ in the alternation-free $\mu$-calculus.
An interpolant $\chi$ of $\varphi$ and $\psi$ is a formula which contains
only propositional letters that occur in both $\varphi$ and $\psi$ such
that both $\varphi \rightarrow \chi$ an $\chi \rightarrow \psi$ are
valid.

Basic modal logic \cite{gabb:ipol05} and the full $\mu$-calculus
\cite{dago:logi00} have Craig interpolation. 
In fact both formalisms enjoy a even stronger property called uniform 
interpolation, where the interpolant $\chi$ only depends on $\varphi$ and the 
set of propositional letters that occur in $\psi$ (but not on the formula 
$\psi$ itself). 
Despite these strong positive results, interpolation is certainly not
guaranteed to hold for fixpoint logics. For instance, even Craig 
interpolation fails for weak temporal logics or epistemic logics with 
a common knowledge modality \cite{maks:temp91,stud:ckbp09}.
% This last sentence should say much more precisely what kind of logics
% they are studying!
Moreover, one can show that uniform interpolations fails for both
\textsc{pdl} and for the alternation-free $\mu$-calculus~\cite{dago:logi00}. 
The argument relies on the observation that uniform interpolation
corresponds to the definability of bisimulation quantifiers. But, adding
bisimulation quantifiers to \textsc{pdl}, or the alternation-free
fragment, allows the expression of arbitrary fixpoints and thus increases
the expressive power to the level of the full $\mu$-calculus.
It is still somewhat unclear whether \textsc{pdl} has Craig interpolation. 
Various proofs have been proposed, but they have either been retracted 
% , their status is disputed or they have
or still wait for a proper verification~\cite{borz:tabl88,borz:proo20}.

The uniform interpolation result for the modal $\mu$-calculus has been 
generalised to the wider setting of coalgebraic fixpoint 
logic~\cite{mart:unif15,enqv:disj19}, 
but the proofs known for these results are all automata-theoretic in nature.
Recently, however, Afshari \& Leigh~\cite{afsh:lyndxx} pioneered the use of
proof-theoretic methods in fixpoint logics, to prove, among other things, 
a Lyndon-style interpolation theorem for the (full) modal $\mu$-calculus.
Their proof, however, does not immediately yield interpolation results for 
fragments of the logic; in particular, for any pair of alternation-free formulas
of which the implication is valid, their approach will yield an interpolant
inside the full $\mu$-calculus, but not necessarily one that is itself
alternation free.
It is here that the simplicity of our focus-style proof system comes in.

Summarising our interpolation proof for the alternation-free $\mu$-calculus,
we base ourselves on Maehara's method, adapted to the setting of cyclic proofs.
Roughly, the idea underlying Maehara's method is that, given a proof $\Pi$ for 
an implication $\varphi \rightarrow \psi$ one defines the interpolant $\chi$ by 
an induction on the complexity of the proof $\Pi$. 
The difficulty in applying this method to cyclic proof systems is that here,
some proof leaves may not be axiomatic and thus fail to have a trivial
interpolant. 
In particular, a discharged leaf indicates an infinite continuation of the 
current branch. 
Such a leaf introduces a fixpoint variable into the interpolant, which will be 
bound later in the induction. 
The crux of our proof, then, lies in the the way that we handle the additional
complications that arise in correctly managing the annotations in our proof 
system, in order to make sure that these interpolants belong to the right 
fragment of the logic.

\subsection{Overview}

This paper is organized as follows: The preliminaries about the syntax
and semantics of the $\mu$-calculus and its alternation-free fragment
are covered in Section~\ref{s:prel}. 
In Section~\ref{sec-tab} we present our version of the tableau games by 
Niw\'{i}nski \& Walukiewicz that we use later as an intermediate step in the 
soundness and completeness proofs for our proof system. 
In Section~\ref{sec-proofsystem} we introduce our focus system for the
alternation-free $\mu$-calculus and we prove some basic results about the 
system. 
The sections \ref{s:soundness}~and~\ref{s:completeness} contain the proofs of
soundness and completeness of the focus system. 
In Section~\ref{sec-itp} we show how to use the focus system to prove 
interpolation for the alternation-free $\mu$-calculus.

\section{Preliminaries}
\label{s:prel}

We first fix some terminology related to relations and trees and then
discuss the syntax and semantics of the $\mu$-calculus and its
alternation-free fragment.

\subsection{Relations and trees}

Given a binary relation $R \sse S \times S$, we let $R^{-1}$, $R^{+}$ and $R^{*}$
denote, respectively, the converse, the transitive closure and the 
reflexive-transitive closure of $R$.
For a subset $S \sse T$, we write $R[S] \isdef \{ t \in T \mid Rst \text{ for
some } s \in S \}$; in the case of a singleton, we write $R[s]$ rather than
$R[\{s\}]$.
Elements of $R(s)$ and $R^{-1}(s)$ are called, respectively, \emph{successors} 
and \emph{predecessors} of $s$.
An \emph{$R$-path} of length $n$ is a sequence $s_{0}s_{1}\cdots s_{n}$ (with 
$n \geq 0$ such that $Rs_{i}s_{i+1}$ for all $0\leq i<n$); we say that such a 
path \emph{leads from $s_{0}$ to $s_{n}$}.
Similarly, an \emph{infinite path starting at $s$} is a sequence $(s_{n})_{n\in\om}$
 such that $Rs_{i}s_{i+1}$ for all $i<\om$.

A structure $\bbT = (T,R)$, with $R$ a binary relation on $T$, is a \emph{tree} 
if there is a node $r$ such that for every $t \in T$ there is a unique path 
leading from $r$ to $t$.
The node $r$, which is then characterized as the only node in $T$ without
predecessors, is called the \emph{root} of $\bbT$.
Every non-root node $u$ has a unique predecessor, which is called the 
\emph{parent} of $u$; conversely, the successors of a node $t$ are sometimes 
called its \emph{children}.
If $R^{*}tu$ we call $u$ a \emph{descendant} of $t$ and, conversely, $t$ an
\emph{ancestor} of $u$; in case $R^{+}tu$ we add the adjective `proper'.
If $s$ is an ancestor of $t$ we define the \emph{interval} $\itv{s}{t}$ as the
set of nodes on the (unique) path from $s$ to $t$.
A \emph{branch} of a tree is a path that starts at the root.
A \emph{leaf} of a tree is a node without successors.
For nodes of a tree we will generally use the letters $s,t,u,v,\ldots$, for 
leaves we will use $l,m, \ldots$\ .
The \emph{depth} of a node $u$ in a finite tree $\bbT = (T,R)$ is the 
maximal length of a path leading from $u$ to a leaf of $\bbT$. 
The \emph{hereditarily finite} part of a tree $\bbT = (T,R)$ is the subset
$\HF(\bbT) \isdef \{ t \in T \mid R^{*}[t] \text{ is finite} \}$.

A \emph{tree with back edges} is a structure of the form $(T,R,c)$ such that $c$
is a partial function on the collection of leaves, mapping any leaf $l \in 
\Dom(c)$ to one of its proper ancestors; this node $c(l)$ will be called the 
\emph{companion} of $l$.

\subsection{The modal $\mu$-calculus and its alternation-free fragment}

In this part we review syntax and semantics of the modal $\mu$-calculus
and discuss its alternation-free fragment.

\subsubsection{The modal $\mu$-calculus}

\paragraph{Syntax}
The \emph{formulas} in the modal $\mu$-calculus are generated by the
grammar
\[
\phi \isbnf 
   p \divbnf \atneg p 
   \divbnf \bot \divbnf \top
   \divbnf (\phi\lor\phi) \divbnf (\phi\land\phi) \divbnf
   \dia\phi \divbnf \Box\phi 
   \divbnf \mu x\, \phi \divbnf \nu x\, \phi,
\]
where $p$ and $x$ are taken from a fixed set $\Prop$ of propositional
variables and in formulas of the form $\mu x. \phi$ and $\nu x.
\phi$ there are no occurrences of $\atneg x$ in $\phi$. We write
$\muML$ for the set of formulas in the modal $\mu$-calculus.

Formulas of the form $\mu x . \phi$ ($\nu x . \phi$) are called 
\emph{$\mu$-formulas} (\emph{$\nu$-formulas}, respectively); formulas 
of either kind are called \emph{fixpoint formulas}.
The operators $\mu$ and $\nu$ are called fixpoint operators. 
We use $\eta \in \{\mu,\nu\}$ to denote an arbitrary fixpoint operator and
write $\ol{\eta} \isdef \nu$ if $\eta = \mu$ and $\ol{\eta} = \mu$ if $\eta = 
\nu$.
Formulas that are of the form $\Box \phi$ or $\Diamond \phi$ are 
called \emph{modal}. 
Formulas of the form $\phi \land \psi$ or $\phi \lor \psi$ are
called \emph{boolean}.
Formulas of the form $p$ or $\atneg p$ for some $p \in \Prop$ are called
\emph{literals} and the set of all literals is denoted by $\Lit$; a formula is 
\emph{atomic} if it is either a literal or an atomic constant, that is, $\top$ 
or $\bot$.
% A set $\Phi \subseteq \Lit$ is called \emph{locally falsifiable} if it is not
% the case that $p \in \Phi$ and $\atneg p \in \Phi$ for some $p \in
% \Prop$. 
% We write $\mathcal{F} \subseteq \powerset \Lit$ for the set of all locally
% falsifiable sets of literals.

We use standard terminology for the binding of variables by the fixpoint
operators and for substitutions. In particular we write $\FV(\phi)$
for the set of variables that occur freely in $\phi$ and
$\BV(\phi)$ for the set of all variables that are bound by some
fixpoint operator in $\phi$. We do count occurrences of $\atneg{x}$
as free occurrences of $x$. Unless specified otherwise, we assume that
all formulas $\phi \in \muML$ are \emph{tidy} in the sense
$\FV(\phi) \cap \BV(\phi) = \nada$. Given formulas $\phi$ and
$\psi$ and a propositional variable $x$ such that there is no
occurrences of $\atneg{x}$ in $\phi$, we let $\phi[\psi / x]$
denote the formula that results from substituting all free occurrences
of $x$ in $\phi$ by the formula $\psi$. We only apply this
substitution in situations where $\FV(\psi) \cap \BV(\phi) = \nada$.
This guarantees that no variable capture will occur. If the
variable that is substituted is clear from the context we also write
$\phi(\psi)$ for $\phi[\psi/x]$.
An important use of substitutions of formulas are the unfolding of
fixpoint formulas. Given a fixpoint formula $\xi = \eta x . \chi$ its
\emph{unfolding} is the formula $\chi[\xi/x]$.

Given a formula $\phi \in \muML$ we define its \emph{negation} $\ol{\phi}$
as follows.
First, we define the \emph{boolean dual} $\bdual{\phi}$ of $\phi$ using the 
following induction.
\[\begin{array}{lllclll}
   \bdual{\bot}          & \isdef & \top       & \hspace*{1cm}
  & \bdual{\top}         & \isdef & \bot
\\ \bdual{(\atneg{p})}        & \isdef & \atneg{p}
  && \bdual{p}           & \isdef & p
\\ \bdual{(\phi\lor\psi)}  & \isdef & \bdual{\phi} \land \bdual{\psi}
  && \bdual{(\phi\land\psi)}& \isdef & \bdual{\phi} \lor \bdual{\psi}
\\ \bdual{(\dia\phi)}  & \isdef & \Box \bdual{\phi}
  && \bdual{(\Box \phi)} & \isdef & \dia\bdual{\phi}
\\ \bdual{(\mu x.\phi)}    & \isdef & \nu x.\bdual{\phi}
  && \bdual{(\nu x.\phi)}  & \isdef & \mu x.\bdual{\phi}
\end{array}\]
Based on this definition, we define the formula $\ol{\phi}$ as the
formula $\bdual{\phi}[p \leftrightharpoons \atneg{p} \mid p \in
\FV(\phi)]$ that we obtain from $\bdual{\phi}$ by replacing all
occurrences of $p$ with $\atneg{p}$, and vice versa, for all free
proposition letters $p$ in $\phi$. Observe that if $\phi$ is tidy then
so is $\ol{\phi}$. 
% by induction on the number of symbols in $\phi$:
% \[\begin{array}{lllclll}
%    \ol{p} & \isdef & \atneg{p} &&
%    \ol{\atneg{p}} & \isdef & p \\ 
%    \ol{\top} & \isdef & \bot  &&
%    \ol{\bot} & \isdef & \top  \\
%    \ol{\phi \land \psi} & \isdef & \ol{\phi} \lor \ol{\psi} &&
%    \ol{\phi \lor \psi} & \isdef & \ol{\phi} \land \ol{\psi} \\ 
%    \ol{\Box \phi} & \isdef & \Diamond \ol{\phi}  &&
%    \ol{\Diamond \phi} & \isdef & \Box \ol{\phi} \\ 
%    \ol{\mu x. \phi} & \isdef & \nu x. \ol{\phi[\atneg{x}/x]} &&
%    \ol{\nu x. \phi} & \isdef & \mu x. \ol{\phi[\atneg{x}/x]} \\
% \end{array}\]

% We need the substitutions in the cases for the fixpoint operators, in
% order to guarantee that the bound variable occurs only positively. 
% Note that this induction is well-defined because we can let the length of
% $\atneg{x}$ be the same as the length of $x$. 
% Note that in this substitution we can not guarantee that $x \notin \BV(\phi)$.
% Nevertheless there is no danger that the substitution $[\atneg{x}/x]$
% leads to variable capture.

For every formula $\phi \in \muML$ define the set $\Clos_0(\phi)$ as follows
\[\begin{array}{lll l lll}
   \Clos_0(p) & \isdef & \nada
&& \Clos_0(\atneg{p}) & \isdef & \nada
\\ \Clos_0(\psi_0 \land \psi_1) & \isdef & \{ \psi_0, \psi_1 \}
&& \Clos_0(\psi_0 \lor \psi_1) & \isdef & \{ \psi_0, \psi_1 \}
\\ \Clos_0(\Box\psi) & \isdef & \{ \psi \}
&& \Clos_0(\Diamond\psi) & \isdef & \{ \psi \}
\\ \Clos_0(\mu x. \psi) & \isdef & \{ \psi[\mu x. \psi / x] \}
&& \Clos_0(\nu x. \psi) & \isdef & \{ \psi[\nu x. \psi / x] \}
\end{array}\]
If $\psi \in \Clos_{0}(\phi)$ we sometimes write $\phi \tracestep \psi$.
Moreover, we define the \emph{closure} $\Clos(\phi) \subseteq \muML$ of $\phi$
as the least set $\Sigma$ containing $\phi$ that is closed in the sense that 
$\Clos_0(\psi) \subseteq \Sigma$ for all $\psi \in \Sigma$.
We define $\Clos(\Phi) = \bigcup_{\phi \in \Phi} \Clos(\phi)$ for any $\Phi 
\subseteq \muML$.
It is well known that $\Clos(\Phi)$ is finite iff $\Phi$ is finite.

A \emph{trace} is a sequence $(\phi_{n})_{n<\kappa}$, with $\kappa \leq
\omega$, of formulas such that $\phi_{n} \to_{C} \phi_{n+1}$, for all
$n$ such that $n+1 < \kappa$. If $\tau = (\phi_{n})_{n<\kappa}$ is an
infinite trace, then there is a unique formula $\phi$ that occurs
infinitely often on $\tau$ and is a subformula of $\phi_{n}$ for
cofinitely many $n$. This formula is always a fixpoint formula, and
where it is of the form $\phi_{\tau} = \eta x.\psi$ we call $\tau$ an
\emph{$\eta$-trace}.
A proof that there exists a unique such fixpoint formula
$\varphi$ can be found in Proposition~6.4 of \cite{kupk:size20}, but the
observation is well-known in the literature and goes back at least to
\cite{emer:comp88}.
A formula $\phi \in \muML$ is \emph{guarded} if in every subformula
$\eta x . \psi$ of $\phi$ all free occurrences of $x$ in $\psi$
are in the scope of a modality. 
It is well known that every formula can be transformed into an equivalent 
guarded formula, and it is not hard to verify that all formulas in the closure of a 
guarded formula are also guarded.

\paragraph{Semantics}
The semantics of the modal $\mu$-calculus is given in terms of
\emph{Kripke models} $\bbS = (S,R,V)$, where $S$ is a set whose elements
are called \emph{worlds}, \emph{points} or \emph{states}, $R \subseteq S
\times S$ is a binary relation on $S$ called the \emph{accessibility
relation} and $V : \Prop \to \powerset S$ is a function called the
\emph{valuation function}. The \emph{meaning} $\mng{\phi}^\bbS
\subseteq S$ of a formula $\phi \in \muML$ relative to a Kripke model
$\bbS = (S,R,V)$ is defined by induction on the complexity of $\phi$:
\[\begin{array}{lllclll}
   \mng{p}^{\bbS} &\isdef& V(p)
 && \mng{\atneg{p}}^{\bbS} &\isdef& S \setminus V(p)
\\ \mng{\bot}^{\bbS} &\isdef& \nada
 && \mng{\top}^{\bbS} &\isdef& S 
\\ \mng{\phi\lor\psi}^{\bbS} &\isdef& \mng{\phi}^{\bbS} \cup \mng{\psi}^{\bbS} 
 && \mng{\phi\land\psi}^{\bbS} &\isdef& \mng{\phi}^{\bbS} \cap \mng{\psi}^{\bbS}
\\ \mng{\dia\phi}^{\bbS} &\isdef& 
     \{ s \in S \mid R[s] \cap \mng{\phi}^{\bbS} \neq \nada \}
 && \mng{\Box\phi}^{\bbS} &\isdef& 
     \{ s \in S \mid R[s] \sse \mng{\phi}^{\bbS} \}
\\ \mng{\mu x.\phi}^{\bbS} &\isdef& 
     \bigcap \{ U \subseteq S \mid \mng{\phi}^{\bbS[x\mapsto U]}\sse U \}
   &&  \mng{\nu x.\phi}^{\bbS} &\isdef& 
     \bigcup \{ U \subseteq S \mid \mng{\phi}^{\bbS[x\mapsto U]}\supseteq U \}.
\end{array}\]
Here, $\bbS[x \mapsto U]$ for some $U \subseteq S$ denotes the model
$(S,R,V')$, where $V'(x) = u$ and $V'(p) = V(p)$ for all $p \in \Prop$
with $p \neq x$. We say that $\phi$ \emph{is true} at $s$ if $s \in
\mng{\phi}^\bbS$. A formula $\phi \in \muML$ is valid if
$\mng{\phi}^\bbS = S$ holds in all Kripke models $\bbS = (S,R,V)$ and
two formulas $\phi,\psi \in \muML$ are \emph{equivalent} if
$\mng{\phi}^\bbS = \mng{\psi}^\bbS$ for all Kripke models $\bbS$.

Alternatively, the semantics of the $\mu$-calculus is often given in terms of 
a so-called \emph{evaluation} or \emph{model checking game}.
Let $\xi \in \muML$ be a $\mu$-calculus formula, and let $\bbS = (S,R,V)$ be 
a Kripke model.
The \emph{evaluation game} $\EG(\xi,\bbS)$ is the following infinite two-player 
game\footnote{%
   We assume familiarity with such games, see the appendix for some definitions.
   }.
Its positions are pairs of the form $(\phi,s) \in \Clos(\xi)\times S$, and its
ownership function and admissible rules are given in Table~\ref{tb:EG}.
For the winning conditions of this game, consider an infinite match of the form
$\Sigma = (\phi_{n},s_{n})_{n<\omega}$; then we define the winner of the match
to be \Eloise if the induced trace $(\phi_{n})_{n<\omega}$ is a $\nu$-trace,
and \Abelard if it is a $\mu$-trace.
It is well-known that this game can be presented as a parity game, and as such
it has positional determinacy.

\begin{table}[htb]
\begin{center}
\begin{tabular}{|ll|c|l|}
\hline
\multicolumn{2}{|l|}{Position} & Player & Admissible moves\\
\hline
     $(p,s)$        & with $p\in \FV(\xi)$ and $s \in V(p)$         
   & $\abel$ & $\nada$ 
\\   $(p,s)$        & with $p\in \FV(\xi)$ and $s \notin V(p)$      
   & $\eloi$ & $\nada$ 
\\   $(\atneg{p},s)$   & with $p\in \FV(\xi)$ and $s \in V(p)$    
   & $\eloi$ & $\nada$ 
\\   $(\atneg{p},s)$  & with $p\in \FV(\xi)$ and $s \notin V(p)$ 
   & $\abel$ & $\nada$ 
\\ \multicolumn{2}{|l|}{$(\phi \lor \psi,s)$}   & $\eloi$   
   & $\{ (\phi,s), (\psi,s) \}$ 
\\ \multicolumn{2}{|l|}{$(\phi \land \psi,s)$} & $\abel$ 
   & $\{ (\phi,s), (\psi,s) \}$ 
\\ \multicolumn{2}{|l|}{$(\dia \phi,s)$}        & $\eloi$ 
   & $\{ (\phi,t) \mid sRt \}$ 
\\ \multicolumn{2}{|l|}{$(\Box \phi,s) $}       & $\abel$ 
   & $\{ (\phi,t) \mid sRt \}$ 
\\ \multicolumn{2}{|l|}{$(\eta x . \phi,s)$}    & - 
   & $\{ (\phi[\eta x\, \phi/x],s) \}$ 
\\ \hline
\end{tabular}
\end{center}
\caption{The evaluation game $\EG(\xi,\bbS)$}
\label{tb:EG}
\end{table}

\subsubsection{The alternation-free fragment}

As mentioned in the introduction, the alternation-free fragment of the modal 
$\mu$-calculus consists of relatively simple formulas, in which the interaction
between least- and greatest fixpoint operators is restricted.
There are various ways to formalise this intuition.
Following the approach by Niwi\'nski~\cite{niwi:fixp86}, we call a formula 
$\xi$ alternation free if it satisfies the following: if $\xi$ has a subformula 
$\eta x. \phi$ then no free occurrence of $x$ in $\phi$ can be in the scope of 
an $\ol{\eta}$-operator.
An inductive definition of this set can be given as follows.

\begin{definition}
\label{d:afmc}
By a mutual induction we define the \emph{alternation-free $\mu$-calculus}
$\AFMC$, and, for a subset $\PropQ \sse \Prop$ and $\eta \in \{ \mu, \nu \}$,
its \emph{noetherian $\eta$-fragment over $\PropQ$}, $\nth{\eta}{\PropQ}$.
\[\begin{array}{rlc@{\divbnf}c@{\divbnf}c@{\divbnf}c@{\divbnf}%
c@{\divbnf}c@{\divbnf}c@{\divbnf}c@{\divbnf}l@{\divbnf}l@{\divbnf}c}
\AFMC \ni \phi &\isbnf
   & \bot & \top 
   & p  & \ol{p}
   & (\phi_{0} \land \phi_{1}) & (\phi_{0} \lor \phi_{1})
   & \dia \phi & \Box \phi 
   & \mu p. \phi^{\mu}_{p} 
   & \nu p. \phi^{\nu}_{p}
\\[2mm] \nth{\mu}{\PropQ} \ni \phi & \isbnf
   & \bot & \top 
   & q  & 
   & (\phi_{0} \land \phi_{1}) & (\phi_{0} \lor \phi_{1})
   & \dia \phi & \Box \phi 
   & \mu p. \phi^{\mu}_{\PropQ p} &
   & \psi
\\[2mm] \nth{\nu}{\PropQ} \ni \phi &\isbnf&
   \bot & \top 
   & q & 
   & (\phi_{0} \land \phi_{1}) & (\phi_{0} \lor \phi_{1})
   & \dia \phi & \Box \phi 
   && \nu p. \phi^{\nu}_{\PropQ p} 
   & \psi
\end{array}\]
% \[\begin{array}{rlc@{\divbnf}lllll}
% \AFMC \ni \phi &\isbnf&
%    \bot \divbnf \top 
%    & \divbnf p  & \divbnf \ol{p}
%    & \divbnf \phi_{0} \land \phi_{1} \divbnf \phi_{0} \lor \phi_{1}
%    \divbnf \dia \phi \divbnf \Box \phi 
%    & \divbnf \mu p. \phi^{\mu}_{p} 
%    & \divbnf \nu p. \phi^{\nu}_{p}
% \\ \nth{\mu}{\PropQ} \ni \phi &\isbnf&
%    \bot \divbnf \top 
%    & \divbnf q  & 
%    & \divbnf \phi_{0} \land \phi_{1} \divbnf \phi_{0} \lor \phi_{1}
%    \divbnf \dia \phi \divbnf \Box \phi 
%    & \divbnf \mu p. \phi^{\mu}_{\PropQ p}
%    && \divbnf \psi
% \\ \nth{\nu}{\PropQ} \ni \phi &\isbnf&
%    \bot \divbnf \top 
%    & \divbnf q & 
%    & \divbnf \phi_{0} \land \phi_{1} \divbnf \phi_{0} \lor \phi_{1}
%    \divbnf \dia \phi \divbnf \Box \phi 
%    && \divbnf \nu p. \phi^{\nu}_{\PropQ p} 
%    & \divbnf \psi
% \end{array}\]
where $p \in \Prop$, $q \in \PropQ$, 
$\phi^{\eta}_{\PropP} \in \nth{\eta}{\PropP}$ for $\PropP \sse \Prop$, and 
$\psi \in \AFMC$ is such that $\FV(\psi) \cap \PropQ = \nada$.
Here and in the sequel we shall write $p$ for $\{ p \}$ and $\PropQ q$ for 
$\PropQ \cup \{ q \}$.
\end{definition}

Throughout the text we shall simply refer to elements of $\AFMC$ as 
\emph{formulas}.

The intuition underlying this definition is that $\nth{\eta}{\PropQ}$ consists
of those alternation-free formulas in which free variables from $\PropQ$ may 
not occur in the scope of an $\ol{\eta}$-operator.
The name `noetherian' refers to a semantic property that characterize the 
$\nth{\mu}{\PropQ}$ formulas~\cite{font:mode18}:
if a formula $\phi\in\nth{\mu}{\PropQ}$ is satisfied at the root of a tree 
model $\bbT$, then it is also true in a variant of $\bbT$ where we restrict 
the interpretation of the proposition letters in $\PropQ$ to noetherian 
subtrees of $\bbT$, i.e., subtrees without infinite paths.

\begin{example}
For some examples of alternation-free formulas, observe that $\AFMC$ contains 
all basic modal (i.e., fixpoint-free) formulas, as well as all $\muML$-formulas
that use $\mu$-operators or $\nu$-operators, but not both, and all modal and
boolean combinations of such formulas.

For a slightly more sophisticated example, consider the formula $\xi = 
\mu x. (\nu y. p \land \Box y) \land \dia x$.
This formula does feature an alternating chain of fixpoint operators, in the
sense that the $\nu$-formula $\phi = \nu y. p \land \Box y$ is a subformula 
of the $\mu$-formula $\xi$.
However, since the variable $x$ does not occur in $\phi$, this 
formula does belong to $\AFMC$.
To see this in terms of Definition~\ref{d:afmc}, observe that $\psi \in 
\nth{\mu}{x}$ since $x \not\in \FV(\psi)$.
But then the formula $(\nu y. p \land \Box y) \land \dia x$ also belongs
to this fragment, and from this it is immediate that $\xi \in \AFMC$.
\end{example}

Below we gather some basic observations on $\AFMC$. 
First we mention some useful closure conditions, stating that $\AFMC$ is closed
under taking respectively negations, unfoldings, subformulas and guarded 
equivalents.

\begin{proposition}
\label{p:af1}
Let $\xi$ be an alternation-free formula. Then
\begin{urlist}
\item \label{it:af1-1}
   its negation $\ol{\xi}$ is alternation free;
\item \label{it:af1-2}
   if $\xi$ is a fixpoint formula, then its unfolding is alternation free;
\item \label{it:af1-3}
   every subformula of $\xi$ is alternation free;
\item \label{it:af1-4}
   every formula in $\Clos(\xi)$ is alternation free;
\item \label{it:af1-5}
   there is an alternation-free guarded formula $\xi'$ that is equivalent to $\xi$.
\end{urlist}
\end{proposition}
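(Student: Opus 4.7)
The plan is to prove all five items by structural induction along the grammar of Definition~\ref{d:afmc}, strengthened as needed to the noetherian fragments $\nth{\eta}{\PropQ}$.

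For (1), I would factor the negation as $\ol{\phi} = (\bdual{\phi})[p \leftrightharpoons \ol{p} \mid p \in \FV(\phi)]$ and handle each factor separately. The literal swap trivially preserves $\AFMC$, since it does not touch connectives or fixpoints. For the boolean dual, I would prove by simultaneous induction that $\phi \in \nth{\eta}{\PropQ}$ implies $\bdual{\phi} \in \nth{\ol{\eta}}{\PropQ}$; the point is that $\bdual$ exchanges $\mu \leftrightarrow \nu$, $\land \leftrightarrow \lor$ and $\Box \leftrightarrow \Diamond$ symmetrically while leaving literals, and hence free-variable sets, invariant, so the $\psi$-clause in the grammar is preserved without difficulty.

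For items (2) and (4) the main tool is a substitution lemma: assuming bound variables of $\phi$ have been renamed away from the free variables of $\chi$, if $\phi \in \nth{\eta}{\PropQ p}$ and $\chi \in \nth{\eta}{\PropQ}$ then $\phi[\chi / p] \in \nth{\eta}{\PropQ}$. This is a routine induction on $\phi$; the only case worth attention is the inner fixpoint case, where one applies the IH with $\PropQ$ enlarged by the newly bound letter, after checking that $\chi$ still lies in the appropriately enlarged fragment (which holds because the bound letter is not free in $\chi$ by the renaming assumption). Specialising to $\phi = \phi_{p}^{\eta}$ and $\chi = \eta p.\,\phi_{p}^{\eta}$ yields~(2). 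Item~(3) then follows from a direct structural induction, noting that every subformula of a formula in $\nth{\eta}{\PropQ}$ is either again in some noetherian fragment or a subformula of a $\psi$ drawn from the $\psi$-clause. For~(4) one observes that $\Clos(\xi)$ is reached from $\xi$ by a sequence of $\Clos_0$-steps, each of which either extracts a subformula (handled by~(3)) or unfolds a fixpoint (handled by~(2)), so $\Clos(\xi) \sse \AFMC$.

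For (5), I would appeal to the standard guarding procedure that rewrites an innermost fixpoint $\eta p.\,\phi$ whose bound variable $p$ still has unguarded occurrences into $\eta p.\,\phi'$, where $\phi'$ results from replacing these unguarded $p$-occurrences by $\bot$ if $\eta = \mu$, and by $\top$ if $\eta = \nu$. Each such rewrite step is a substitution of an atomic constant and therefore manifestly preserves $\AFMC$, while strictly reducing a suitable well-founded measure counting unguarded bound variables from the inside out. The principal obstacle throughout is the bookkeeping in the noetherian fragments: one must track carefully which propositional letters lie in the forbidden set $\PropQ$, and exploit the $\psi$-clause to pass back and forth between $\AFMC$ and the indexed fragments at each inductive step — in particular in the $\nu$-fixpoint case of the $\nth{\mu}{\PropQ}$ induction (and dually), which is handled exclusively through the $\psi$-clause.
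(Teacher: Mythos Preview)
Your proposal is correct and follows essentially the same approach as the paper. The paper's own proof is quite terse: it defers items~\ref{it:af1-2} and~\ref{it:af1-4} to the substitution lemma stated separately as Proposition~\ref{p:af3} (together with the identification $\AFMC = \nth{\eta}{\nada}$ of Proposition~\ref{p:af2}), and for item~\ref{it:af1-5} it simply asserts that the standard guarding procedure preserves alternation-freeness upon careful inspection --- exactly the ingredients you have identified and sketched.
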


\begin{proof}
Item~\ref{it:af1-2} is immediate by Proposition~\ref{p:af3}(\ref{it:af3-3}
and Proposition~\ref{p:af2}(\ref{it:af2-2}.
For item~\ref{it:af1-5} a careful inspection will reveal that the standard 
procedure for guarding formulas (see 
\cite{walu:comp00,kupfer:autobranch00,brus:guar15}) transforms alternation-free
formulas to guarded alternation-free formulas.
The other items can be proved by routine arguments.
\end{proof}

\begin{proposition}
\label{p:af2}
\begin{urlist}
\item \label{it:af2-1}
If $\PropQ$ and $\PropQ'$ are sets of proposition letters with $\PropQ \sse 
\PropQ'$, then $\nth{\eta}{\PropQ'} \sse \nth{\eta}{\PropQ}$.
\item \label{it:af2-2}
$\AFMC = \nth{\eta}{\nada}$.
\end{urlist}
\end{proposition}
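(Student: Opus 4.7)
The plan is to prove both items by a routine structural induction on the formula, being careful about how the $\psi$-clause in the grammar of $\nth{\eta}{\PropQ}$ absorbs cases that do not fit the other productions.

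For item~\ref{it:af2-1}, I would proceed by induction on the derivation of $\phi \in \nth{\eta}{\PropQ'}$, treating $\eta = \mu$ (the case $\eta = \nu$ being symmetric). The constants $\bot, \top$ are immediate. For a proposition letter $q$ obtained from the $q$-clause of $\nth{\mu}{\PropQ'}$, I split: if $q \in \PropQ$, I apply the $q$-clause of $\nth{\mu}{\PropQ}$; otherwise $q \in \PropQ' \setminus \PropQ$, and I fall back on the $\psi$-clause of $\nth{\mu}{\PropQ}$, noting that $q \in \AFMC$ and $\FV(q) = \{q\}$ is disjoint from $\PropQ$. For a formula $\psi$ produced by the $\psi$-clause of $\nth{\mu}{\PropQ'}$, from $\PropQ \sse \PropQ'$ we get $\FV(\psi) \cap \PropQ \sse \FV(\psi) \cap \PropQ' = \nada$, so $\psi$ also fits the $\psi$-clause of $\nth{\mu}{\PropQ}$. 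The boolean and modal cases are direct applications of the induction hypothesis. Finally, for $\phi = \mu p.\, \phi^{\mu}_{\PropQ' p}$, observe that $\PropQ p \sse \PropQ' p$, so the induction hypothesis applied to $\phi^{\mu}_{\PropQ' p} \in \nth{\mu}{\PropQ' p}$ yields $\phi^{\mu}_{\PropQ' p} \in \nth{\mu}{\PropQ p}$; hence $\mu p.\, \phi^{\mu}_{\PropQ' p} \in \nth{\mu}{\PropQ}$ by the corresponding fixpoint clause.

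For item~\ref{it:af2-2}, the inclusion $\AFMC \sse \nth{\eta}{\nada}$ is immediate, since any $\phi \in \AFMC$ vacuously satisfies $\FV(\phi) \cap \nada = \nada$ and thus matches the $\psi$-clause of $\nth{\eta}{\nada}$. For the converse, I would prove the slightly stronger claim that $\nth{\eta}{\PropQ} \sse \AFMC$ for every $\PropQ \sse \Prop$, by structural induction. The atomic, boolean, modal, and $\psi$-clause cases are direct from the grammar of $\AFMC$. For a fixpoint $\mu p.\, \phi^{\mu}_{\PropQ p}$ (the $\nu$ case being analogous), the grammar of $\AFMC$ demands that the body lie in $\nth{\mu}{\{p\}}$; this follows by applying item~\ref{it:af2-1} to the inclusion $\{p\} \sse \PropQ p$ to move from $\phi^{\mu}_{\PropQ p} \in \nth{\mu}{\PropQ p}$ to $\phi^{\mu}_{\PropQ p} \in \nth{\mu}{\{p\}}$.

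There is no serious obstacle; the only subtlety is in the atomic case of item~\ref{it:af2-1}, where the $\psi$-clause must be used as a uniform fallback to accommodate a letter in $\PropQ' \setminus \PropQ$, and the observation that item~\ref{it:af2-2} genuinely depends on item~\ref{it:af2-1} for the fixpoint step.
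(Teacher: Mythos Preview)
Your proof is correct and largely matches the paper's approach. For item~\ref{it:af2-1} you spell out precisely the induction the paper leaves to the reader, including the one subtle point (the propositional-letter case falling back on the $\psi$-clause). For the inclusion $\nth{\eta}{\PropQ} \sse \AFMC$ in item~\ref{it:af2-2} you argue exactly as the paper does, and you are explicit that item~\ref{it:af2-1} is needed in the fixpoint step --- the paper leaves this implicit under ``a similar induction''. The one place you diverge is the inclusion $\AFMC \sse \nth{\eta}{\nada}$: you observe that the $\psi$-clause of $\nth{\eta}{\nada}$ vacuously absorbs every $\AFMC$-formula, whereas the paper runs an induction on $\AFMC$-formulas and in the fixpoint case routes through $\phi \in \nth{\eta}{p}$ (via the $\psi$-clause there) before invoking item~\ref{it:af2-1} to descend to $\nth{\eta}{\nada}$. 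Your shortcut is valid and strictly simpler; the paper's detour buys nothing extra here.
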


\begin{proof}
Item~\ref{it:af2-1} can be proved by a straightforward induction on the 
complexity of formulas in $\nth{\eta}{\PropQ'}$; we leave the details for the
reader.
A similar induction shows that $\nth{\eta}{\PropQ} \sse \AFMC$, for any set 
$\PropQ$ of variables; clearly this takes care of the inclusion $\sse$ in
item~\ref{it:af2-2}.

This leaves the statement that $\AFMC \sse \nth{\eta}{\nada}$, which we prove 
by induction on the complexity of $\AFMC$-formulas.
We confine our attention here to the case where $\phi \in \AFMC$ is a fixpoint
formula, say, $\phi = \lambda p. \phi'$.
But then it is obvious that $\FV(\phi) \cap \{ p \} = \nada$, so that $\phi \in 
\nth{\eta}{p}$ by definition of the latter set.
It follows that $\phi \in \nth{\eta}{\nada}$ by item~\ref{it:af2-1}.
\end{proof}

The following proposition states some useful closure conditions on sets of the 
form $\nth{\eta}{\PropQ}$.

\begin{proposition}
\label{p:af3}
Let $\chi$ and $\xi$ be formulas in $\AFMC$, let $x,y$ be variables, and let
$\PropQ$ be a set of variables.
Then the following hold:
\begin{urlist}
\item \label{it:af3-1}
if $\xi \in \nth{\eta}{\PropQ}$ and $y \not\in \FV(\xi)$, then $\xi \in 
\nth{\eta}{\PropQ y}$;
\item \label{it:af3-2}
if $\chi \in \nth{\eta}{\PropQ x}$, $\xi \in \nth{\eta}{\PropQ}$ and $\xi$ is
free for $x$ in $\chi$, then $\chi[\xi/x] \in \nth{\eta}{\PropQ}$;
\item \label{it:af3-3}
if $\eta x\, \chi \in \nth{\eta}{\PropQ}$ then $\chi[\eta x\, \chi/x] \in
\nth{\eta}{\PropQ}$.
\end{urlist}
\end{proposition}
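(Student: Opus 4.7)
The plan is to prove the three items in sequence, using structural induction for items~\ref{it:af3-1} and~\ref{it:af3-2} and reducing item~\ref{it:af3-3} to item~\ref{it:af3-2}. For item~\ref{it:af3-1}, I would induct on the derivation of $\xi \in \nth{\eta}{\PropQ}$ from the grammar in Definition~\ref{d:afmc}. The base clauses ($\bot$, $\top$, and $q \in \PropQ$) and the boolean/modal cases are immediate. In the fixpoint clause $\xi = \eta p.\phi$ with $\phi \in \nth{\eta}{\PropQ p}$, the sub-case $y = p$ is direct; otherwise $y \notin \FV(\xi) = \FV(\phi)\setminus\{p\}$ gives $y \notin \FV(\phi)$, the induction hypothesis yields $\phi \in \nth{\eta}{\PropQ p y}$, and hence $\xi \in \nth{\eta}{\PropQ y}$. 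For the escape clause $\xi = \psi$, the required $\FV(\psi) \cap (\PropQ \cup \{y\}) = \nada$ follows from the hypothesis together with $y \notin \FV(\psi)$.

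For item~\ref{it:af3-2}, I would induct on the derivation of $\chi \in \nth{\eta}{\PropQ x}$. The decisive atomic case is $\chi = x$, where $\chi[\xi/x] = \xi$ and the conclusion is exactly the hypothesis; the other atomic and boolean/modal cases are routine. The interesting inductive step is the fixpoint clause $\chi = \eta p. \chi'$ with $\chi' \in \nth{\eta}{\PropQ x p}$: since $\xi$ is free for $x$ in $\chi$, we have $p \notin \FV(\xi)$, so item~\ref{it:af3-1} (already proved) gives $\xi \in \nth{\eta}{\PropQ p}$; applying the induction hypothesis to $\chi'$ with parameter set $\PropQ p$ in place of $\PropQ$ yields $\chi'[\xi/x] \in \nth{\eta}{\PropQ p}$, whence $(\eta p. \chi')[\xi/x] = \eta p. \chi'[\xi/x] \in \nth{\eta}{\PropQ}$. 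The escape clause $\chi = \psi$ is trivial because $\FV(\psi) \cap \PropQ x = \nada$ forces $x \notin \FV(\psi)$, so $\psi[\xi/x] = \psi \in \nth{\eta}{\PropQ}$.

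For item~\ref{it:af3-3}, I would case-split on how $\eta x. \chi \in \nth{\eta}{\PropQ}$ is witnessed. In the fixpoint-clause case, $\chi \in \nth{\eta}{\PropQ x}$ holds directly, and item~\ref{it:af3-2} applied with $\xi \isdef \eta x. \chi$ closes the goal. In the escape-clause case, $\eta x. \chi \in \AFMC$ yields $\chi \in \nth{\eta}{x}$ from the $\AFMC$ grammar; item~\ref{it:af3-2} applied now with $\PropQ \isdef \nada$ and $\xi \isdef \eta x. \chi \in \AFMC = \nth{\eta}{\nada}$ shows that the unfolding lies in $\AFMC$, which together with the easy computation $\FV(\chi[\eta x. \chi/x]) = \FV(\eta x. \chi)$ and the hypothesis $\FV(\eta x. \chi) \cap \PropQ = \nada$ places the unfolding in $\nth{\eta}{\PropQ}$ via the escape clause.

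I do not foresee a major obstacle. The only points requiring care are the mild nondeterminism of Definition~\ref{d:afmc}---a fixpoint formula may witness membership in $\nth{\eta}{\PropQ}$ through either of two rules, which is what forces the case split in item~\ref{it:af3-3}---and the need to avoid invoking Proposition~\ref{p:af1}(\ref{it:af1-2}) in the escape-clause case, since that proposition is itself scheduled to be derived from the present item~\ref{it:af3-3}. Substituting item~\ref{it:af3-2} with $\PropQ \isdef \nada$ for Proposition~\ref{p:af1}(\ref{it:af1-2}) neatly bypasses this potential circularity.
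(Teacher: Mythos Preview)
Your approach is essentially the same as the paper's: both prove items~\ref{it:af3-1} and~\ref{it:af3-2} by structural induction and reduce item~\ref{it:af3-3} to item~\ref{it:af3-2}. There is one small gap in your item~\ref{it:af3-2}: in the fixpoint case $\chi = \eta p.\chi'$, the step ``since $\xi$ is free for $x$ in $\chi$, we have $p \notin \FV(\xi)$'' is only valid when $x$ actually occurs free in $\chi'$ (and $p \neq x$); if $p = x$ or $x \notin \FV(\chi')$ the freeness hypothesis is vacuous and tells you nothing about $p$. The paper covers this by first disposing of the case $x \notin \FV(\chi)$ separately: then $\chi[\xi/x] = \chi \in \nth{\eta}{\PropQ x} \sse \nth{\eta}{\PropQ}$ by Proposition~\ref{p:af2}(\ref{it:af2-1}). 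With that patch your argument goes through.

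Your treatment of item~\ref{it:af3-3} is in fact more explicit than the paper's, which simply records ``immediate by item~\ref{it:af3-2}''. Your case split on the two grammar clauses and your attention to the potential circularity with Proposition~\ref{p:af1}(\ref{it:af1-2}) are both well taken. A slightly shorter route for the escape-clause case is to use item~\ref{it:af3-1} repeatedly (each $q \in \PropQ \setminus \{x\}$ is absent from $\FV(\chi)$ by hypothesis) to pass from $\chi \in \nth{\eta}{x}$ to $\chi \in \nth{\eta}{\PropQ x}$, and then invoke item~\ref{it:af3-2} once with the given $\PropQ$; this is presumably what the paper has in mind.
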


\begin{proof}
We prove item~\ref{it:af3-1} of the proposition by a straightforward induction 
on the complexity of $\xi$.
We only cover the case of the induction step where $\xi$ is of the form $\xi 
= \lambda z. \xi'$.
Here we distinguish cases.
If $\FV(\xi) \cap \PropQ = \nada$ then we find $\FV(\xi) \cap (\PropQ \cup 
\{ y \}) = \nada$ since $y \not\in \FV(\xi)$ by assumption.
Here it is immediate by the definition of $\nth{\eta}{\PropQ y}$ that $\xi$
belongs to it.

If, on the other hand, we have $\FV(\xi) \cap \PropQ \neq \nada$, then we can 
only have $\xi \in \nth{\eta}{\PropQ}$ if $\lambda = \eta$.
We now make a further case distinction: if $y = z$ then we have $\xi' \in 
\nth{\eta}{\PropQ y}$ so that also $\xi \in \nth{\eta}{\PropQ y}$.
If $y$ and $z$ are distinct variables, then it must be the case that 
$\xi' \in \nth{\eta}{\PropQ z}$; since we clearly have $y \not\in \FV(\xi')$ as
well, the inductive hypothesis yields that $\xi' \in 
\nth{\eta}{\PropQ yz}$.
But then we immediately find $\xi \in \nth{\eta}{\PropQ y}$ by definition of the
latter set.
\smallskip

For the proof of item~\ref{it:af3-2} we proceed by induction on the complexity 
of $\chi$. 
Again, we only cover the inductive case where $\chi$ is a fixpoint formula, say, 
$\chi = \lambda y. \chi'$.
We make a case distinction.
First assume that $x \not\in \FV(\chi)$; then we find $\chi[\xi/x] = \chi$, so
that $\chi[\xi/x] \in \nth{\eta}{\PropQ x}$ by assumption.
It then follows that $\chi[\xi/x] \in \nth{\eta}{\PropQ}$ by 
Proposition~\ref{p:af2}(\ref{it:af2-1}.

Assume, then, that $x \in \FV(\chi)$; since $\chi \in \nth{\eta}{\PropQ x}$ this 
can only be the case if $\lambda = \eta$, and, again by definition of 
$\nth{\eta}{\PropQ x}$, we find $\chi' \in \nth{\eta}{\PropQ xy}$.
Furthermore, as $\xi$ is free for $x$ in $\chi$, the variable $y$ cannot be 
free in $\xi$, so that it follows by item~\ref{it:af3-1} and the assumption 
that $\xi \in \nth{\eta}{\PropQ}$ that $\xi \in \nth{\eta}{\PropQ y}$.
We may now use the inductive hypothesis on $\chi'$ and $\xi$, to find that 
$\chi'[\xi/x] \in \nth{\eta}{\PropQ y}$; and from this we conclude that 
$\chi[\xi/x] \in \nth{\eta}{\PropQ}$ by definition of $\nth{\eta}{\PropQ}$.
\smallskip

Finally, item~\ref{it:af3-3} is immediate by item~\ref{it:af3-2}.
\end{proof}

The next observation can be used to simplify the formulation of the winning
conditions of the evaluation game for alternation-free formulas somewhat. 
It is a direct consequence of results in~\cite{kupk:size20}, so we confine 
ourselves to a proof sketch.
\begin{proposition}
\label{p:af4}
For any infinite trace $\tau = (\phi_{n})_{n<\omega}$ of $\AFMC$-formulas the
following are equivalent:
\begin{urlist}
\item \label{it:af4-1}
   $\tau$ is an $\eta$-trace;
\item \label{it:af4-2}
   $\phi_{n}$ is an $\eta$-formula, for infinitely many $n$;
\item \label{it:af4-3}
   $\phi_{n}$ is an $\ol{\eta}$-formula, for at most finitely many $n$.
\end{urlist}
\end{proposition}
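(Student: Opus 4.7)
The plan is to establish the chain of implications $(1) \Rightarrow (2) \Rightarrow (3) \Rightarrow (1)$, relying at the middle step on the fact, established in~\cite{kupk:size20}, that on any infinite trace through an alternation-free formula $\xi$, only one of the two fixpoint kinds can appear infinitely often among the trace formulas. This is the semantic reflex of alternation freeness: once the most significant fixpoint $\phi_{\tau}$ of the trace has been identified as of type $\eta$, any further fixpoint formula appearing cofinitely later on the trace must be either $\phi_{\tau}$ itself or a fixpoint formula occurring inside its unfolding, and the alternation-freeness constraint prevents an $\ol{\eta}$-fixpoint with free dependence on the binding variable from being one of those.

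For $(1) \Rightarrow (2)$, the definition of $\eta$-trace supplies us directly with a fixpoint formula $\phi_{\tau} = \eta x.\psi$ that, by definition, appears infinitely often on $\tau$; so infinitely many $\phi_{n}$ are $\eta$-formulas.

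For $(2) \Rightarrow (3)$, I invoke the cited no-mixing property: since $\tau$ is an infinite trace of $\AFMC$-formulas in which $\eta$-formulas occur infinitely often, it cannot be the case that $\ol{\eta}$-formulas also occur infinitely often on $\tau$; hence only finitely many $\phi_{n}$ are $\ol{\eta}$-formulas.

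For $(3) \Rightarrow (1)$, I recall from the general theory of traces in the $\mu$-calculus (cf.\ the remarks preceding the proposition) that the distinguished formula $\phi_{\tau}$ that occurs infinitely often on $\tau$ and is a subformula of $\phi_{n}$ for cofinitely many $n$ is always a fixpoint formula. So $\phi_{\tau}$ is either an $\eta$-formula or an $\ol{\eta}$-formula, and the latter case is ruled out by (3) (since $\phi_{\tau}$ itself occurs infinitely often on $\tau$). Therefore $\phi_{\tau}$ is an $\eta$-formula, which is precisely what it means for $\tau$ to be an $\eta$-trace.

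The only nontrivial step is $(2) \Rightarrow (3)$, and its content is entirely delegated to \cite{kupk:size20}; accordingly the argument is presented as a proof sketch, with the routine verifications of $(1) \Rightarrow (2)$ and $(3) \Rightarrow (1)$ omitted.
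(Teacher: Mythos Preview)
Your chain $(1)\Rightarrow(2)\Rightarrow(3)\Rightarrow(1)$ is a perfectly reasonable organisation, and the steps $(1)\Rightarrow(2)$ and $(3)\Rightarrow(1)$ are exactly as trivial as you say. The issue is the middle step. You delegate the ``no-mixing'' fact---that on an infinite trace of $\AFMC$-formulas the two fixpoint kinds cannot both occur infinitely often---to~\cite{kupk:size20}, but that reference does not state this result. What~\cite{kupk:size20} supplies is the existence and uniqueness of the characteristic fixpoint formula $\phi_{\tau}$ (their Proposition~6.4, for arbitrary $\mu$-calculus traces) and a substitution-commutation lemma (their Proposition~3.11). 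The alternation-free content still has to be argued.

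The paper's proof sketch does just that: writing $\xi = \phi_{\tau} = \eta z.\xi'$, it shows that every formula $\phi$ reachable from $\xi$ along a trace segment in which $\xi$ remains a substituted subformula has the shape $\phi = \phi^{\circ}[\xi/z]$ with $\phi^{\circ}\in\nth{\eta}{z}$ and $z\in\FV(\phi^{\circ})$. This is proved by induction on the length of the segment, using the noetherian-fragment machinery of Definition~\ref{d:afmc}. From this it follows that any fixpoint formula reached in this way must be an $\eta$-formula (since membership in $\nth{\eta}{z}$ together with $z$ actually free rules out an outermost $\ol{\eta}$), and hence that cofinitely many fixpoint formulas on $\tau$ are $\eta$-formulas. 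That single claim immediately yields all three equivalences.

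So your high-level plan matches the paper's, but the substantive work---the inductive argument via $\nth{\eta}{z}$---is missing from your sketch, and the citation you use to cover it does not in fact cover it.
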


\begin{proof}[Proof (sketch)]
Let $\xi = \eta z. \xi'$ be the characteristic fixpoint formula of $\tau$, i.e., 
$\xi$ is the unique formulas that occurs infinitely often on $\tau$ and that
is a subformula of almost all formulas on $\tau$.
Clearly it suffices to prove that almost every fixpoint formula on $\tau$ is an 
$\eta$-formula as well.

To show why this is the case, it will be convenient to introduce the following
notation.
We write $\psi \rclat{\rho} \phi$ if there is a sequence 
$(\chi_{i})_{0\leq i \leq n}$ such that $\psi = \chi_{0}$, $\phi = \chi_{n}$,
$\chi_{i} \to_{C} \chi_{i+1}$ for all $i<n$, and every $\chi_{i}$ is of the form
$\chi_{i}'[\rho/x]$ for some formula $\chi_{i}'$ and some $x \in \FV(\chi_{i}')$.
Then it readily follows from the definitions that $\xi \rclat{\xi} \phi_{n}$ for 
almost every formula $\phi_{n}$ on $\tau$.
The key observation in the proof is now that if $\xi$ is alternation-free,
and $\phi$ is a fixpoint formula such that $\xi \rclat{\xi} \phi$, then
$\phi$ is an $\eta$-formula.
%
% The direction from right to left is obvious, hence we only prove the
% direction from right to left. We show that all fixpoint formulas in any
% cluster of the closure graph of an alternation-free formula are of the
% same type. It follows that if $\phi_n$ is an $\eta$-formula for
% infinitely many $n$ then so must be all the fixpoints in the terminal
% cluster of $\tau$ and thus $\tau$ is an $\eta$-trace.
%
% To show that any two fixpoints in a cluster have the same type we prove
% that for all alternation-free fixpoint formulas $\phi$ and $\psi$
% with $\psi \rclat{\psi} \phi$ must be of the same fixpoint type. It
% follows that every fixpoint $\phi$ is some cluster is of the same
% type as the most significant fixpoint $\psi$ of the cluster that exists
% by Proposition~6.4 of \cite{kupk:size20}.
%
% Our argument is essentially an induction on the length of the path $\xi
% \rclat{\psi} \phi$. 
%
To be more precise we first show that
\begin{equation} \label{eq:ih}
 \mbox{for all } \phi \mbox{ with } \xi \rclat{\xi} \phi \mbox{
there is some } \phi^\circ \in \nth{\eta}{z} \mbox{ such that } z \in
\FV(\phi^\circ) \mbox{ and } \phi = \phi^\circ [\xi/z].
\end{equation}
We prove this claim by induction on the length of the path $\xi \rclat{\xi}
\phi$. 
In the base case we have $\phi = \xi$ and we let $\phi^\circ = z$.

In the inductive step there is some $\chi$ such that $\xi \rclat{\xi} \chi 
\rcla{\xi} \phi$. 
By the inductive hypothesis there is some $\chi^\circ \in \nth{\eta}{z}$ such
that $z \in \FV(\chi^\circ)$ and $\chi = \chi^\circ [\xi/z]$. 
We distinguish cases depending on the main connective of $\chi$.
Omitting the boolean and modal cases we focus on the case where $\chi$ is a 
fixpoint formula, and we further distinguish cases depending on whether $\chi 
= \xi$ or not.

If $\chi = \xi$ then $\phi = \xi'[\xi/z]$. 
Because $\xi$ is alternation free we know that $\xi' \in \nth{\eta}{x}$. 
% Moreover, we have that $x \in \FV(\xi')$ because $\xi \fsforeq \phi$. 
We can thus let $\phi^{\circ} \isdef \xi[z/x]$.

If $\chi = \lambda y . \chi'$ but $\chi \neq \xi$ then we have $\phi = 
\chi'[\chi/y]$. 
From the inductive hypothesis we get that
$\chi = \chi^\circ[\xi / z]$ for some $\chi^\circ \in \nth{\eta}{z}$
with $z \in \FV{\chi^\circ}$. Because $\chi \neq \xi$ it follows from
$\chi = \lambda y . \chi'$ and $\chi = \chi^\circ[\xi / z]$ that
$\chi^\circ = \lambda y . \rho$ for some $\rho$ with $\chi' = \rho[\xi
/ z]$. Hence, $\phi = \rho[\xi / z][\chi/y]$. 
Because $z \notin \FV(\chi)$ and $y \notin \FV(\xi)$ (because $\BV(\chi) \cap 
\FV(\xi) = \nada$) we may commute these substitutions
(cf.~Proposition~3.11 in \cite{kupk:size20}).
Hence $\phi = \rho[\chi/y][\xi / z]$, and we may set $\phi^\circ \isdef 
\rho[\chi/y]$. 
Because $\chi^\circ = \lambda y . \rho$ and $z \in \FV(\chi^\circ)$ it follows
that $z \neq y$ and that $z \in \FV(\rho)$. Thus also $z \in
\FV(\rho[\chi/y])$. Lastly, it follows from $\chi^\circ = \lambda y .
\rho$, $\chi^\circ \in \nth{\eta}{z}$ and $z \in \FV(\rho)$ that $\rho
\in \nth{\eta}{z}$. It is not hard to see that $\nth{\eta}{z}$ is closed
under substitution with the alternation free formula $\chi$, where $z
\notin \FV(\chi)$ and thus $\rho[\chi/y] \in \nth{\eta}{z}$.
This finishes the proof of \eqref{eq:ih}.

The claim about fixpoint formulas $\phi$ such that $\xi \rclat{\xi} \phi$
can be derived from \eqref{eq:ih} as follows.
Assume that $\phi$ is of the form $\phi = \lambda y . \rho$, then if 
$\lambda y . \rho = \phi^\circ[\xi/z]$ with $z \in \FV(\phi^\circ)$ and 
$\phi \neq \xi$ then it must be the case that $\phi^\circ =
\lambda y . \rho^\circ$, and because $\phi^\circ \in \nth{\eta}{z}$
and $z \in \FV(\rho^\circ)$ this is only possible if $\lambda = \eta$.
That is, $\phi$ is an $\eta$-formula as required.
\end{proof}

\section{The focus system}
\label{sec-proofsystem}

In this section we introduce our annotated proof systems for the
alternation-free $\mu$-calculus. 
We consider two versions of the system, which we call \Focus and \Focusinf,
respectively.
\Focusinf is a proof system that allows proofs to be based on infinite, but 
finitely branching trees.
% Proofs in \Focusinf are infinite trees that are closely related to winning 
% strategies of \Abelard in the tableau game. 
The focus mechanism
that is implemented by the annotations of formulas helps ensuring that all
the infinite branches in a \Focusinf proof are of the right shape.
The proof system \Focus can be seen as a finite variant of \Focusinf. 
The proof trees in this system are finite, but the system is circular in that
it contains a discharge rule that allows to discharge a leaf of the tree if 
the same sequent as the sequent at the leaf is reached again closer to the 
root of the tree. 
As we will see, the two systems are equivalent in the sense that we may
transform proofs in either variant into proofs of the other kind.
% A crucial observation, which is established in Proposition~\ref{p:ppp} below, 
% is that every infinite \Focusinf-proof can be turned into a finite
% \Focus-proof, by using the discharge rule to detach infinite branches.

\subsection{Basic notions}

In this first part of this section we provide the definition of the
proof systems \Focus and \Focusinf.

A \emph{sequent} is a finite set of formulas.
When writing sequents we often leave out the braces, meaning that we write
for instance $\phi_1,\dots,\phi_i$ for the sequent $\{\phi_1,\dots,\phi_i\}$. 
If $\Phi$ is a sequent, we also use the notation $\phi_1,\dots,\phi_i, \Phi$ 
for the sequent $\{\phi_1,\dots,\phi_i\} \cup \Phi$. 
Given a sequent $\Phi$ we write $\dia \Phi$ for the sequent $\dia \Phi \isdef \{\dia
\phi \mid \phi \in \Phi\}$.
Intuitively, sequents are to be read \emph{disjunctively}.

An \emph{annotated formula} is a pair $(\phi,a) \in \AFMC \times \{ f,u
\}$; we usually write $\phi^{a}$ instead of $(\phi,a)$ and call $a$ the
\emph{annotation} of $\phi$. 
We define a linear order $\sqsubseteq$ on the set $\{f,u\}$ of annotations by
putting $u \sqsubset f$, and given $a \in \{ f,u \}$ we let $\ol{a}$ be its 
alternative, i.e., we define $\ol{u} \isdef f$ and $\ol{f} \isdef u$.
A formula that is annotated with $f$ is called \emph{in focus}, and one 
annotated with $u$ is \emph{out of focus}.
We use $a,b,c,\ldots$ as symbols to range over the set $\{f,u\}$.

A finite set of annotated formulas is called an \emph{annotated sequent}.
We shall use the letters $\Si, \Gamma, \Delta, \ldots$ for annotated sequents,
and $\Phi,\Psi$ for sequents.
In practice we will often be sloppy and refer to annotated sequents as 
sequents.
Given a sequent $\Phi$, we define $\Phi^a$ to be the annotated sequent $\Phi^a
\isdef \{\phi^a \mid \phi \in \Phi \}$.
Conversely, given an annotated sequent $\Si$, we define $\uls{\Si}$ as its
underlying plain sequent; that is, $\uls{\Si}$ consists of the formulas
$\phi$ such that $\phi^{a} \in \Si$, for some annotation $a$.

The proof rules of our focus proof systems $\Focus$ and $\Focusinf$ are given 
in Figure~\ref{f:proof rules}.
We use standard terminology when talking about proof rules.
Every (application of a) rule has one \emph{conclusion} and a finite 
(possibly zero) number of \emph{premises}.
\emph{Axioms} are rules without premises. 
The \emph{principal} formula of a rule application is the formula in the 
conclusion to which the rule is applied. 
As non-obvious cases we have that all formulas are principal in the conclusion 
of the rule \RuBox and that the rule \RuDischarge{\dx} has no principal 
formula.
In all cases other than for the rule \RuWeak the principal formula develops
into one or more \emph{residual} formulas in each of the premises. 
Principal and residual formulas are also called \emph{active}.

\begin{figure}[tbh]
\begin{minipage}{\textwidth}
\begin{minipage}{0.16\textwidth}
\begin{prooftree}
 \AxiomC{\phantom{X}}
 \RightLabel{\AxLit}
 \UnaryInfC{$p^a, \atneg{p}^b$}
\end{prooftree}
\end{minipage}
\begin{minipage}{0.12\textwidth}
\begin{prooftree}
 \AxiomC{\phantom{X}}
 \RightLabel{\AxTop}
 \UnaryInfC{$\top^a$}
\end{prooftree}
\end{minipage}
\begin{minipage}{0.21\textwidth}
\begin{prooftree}
 \AxiomC{$\phi^a,\psi^a,\Sigma$}
 \RightLabel{\RuOr}
 \UnaryInfC{$(\phi \lor \psi)^a,\Sigma$}
\end{prooftree}
\end{minipage}
\begin{minipage}{0.28\textwidth}
\begin{prooftree}
 \AxiomC{$\phi^a, \Sigma$}
 \AxiomC{$\psi^a,\Sigma$}
 \RightLabel{\RuAnd}
 \BinaryInfC{$(\phi \land \psi)^a,\Sigma$}
\end{prooftree}
\end{minipage}
\begin{minipage}{0.20\textwidth}
\begin{prooftree}
 \AxiomC{$\phi^a,\Sigma$}
 \RightLabel{\RuBox}
 \UnaryInfC{$\Box \phi^a, \dia \Sigma$}
\end{prooftree}
\end{minipage}
\end{minipage}

\bigskip

\begin{minipage}{\textwidth}
\begin{minipage}{0.24\textwidth}
\begin{prooftree}
 \AxiomC{$\phi[\mu x . \phi / x]^u, \Sigma$}
 \RightLabel{\RuMu}
 \UnaryInfC{$\mu x . \phi^a, \Sigma$}
\end{prooftree}
\end{minipage}
\begin{minipage}{0.24\textwidth}
\begin{prooftree}
 \AxiomC{$\phi[\nu x . \phi / x]^a, \Sigma$}
 \RightLabel{\RuNu}
 \UnaryInfC{$\nu x . \phi^a, \Sigma$}
\end{prooftree}
\end{minipage}
\begin{minipage}{0.16\textwidth}
\begin{prooftree}
 \AxiomC{$\Sigma$}
 \RightLabel{\RuWeak}
 \UnaryInfC{$\phi^a, \Sigma$}
\end{prooftree}
\end{minipage}
\begin{minipage}{0.16\textwidth}
\begin{prooftree}
 \AxiomC{$\phi^f,\Sigma$}
 \RightLabel{\RuFocus}
 \UnaryInfC{$\phi^u,\Sigma$}
\end{prooftree}
\end{minipage}
\begin{minipage}{0.16\textwidth}
\begin{prooftree}
 \AxiomC{$\phi^u,\Sigma$}
 \RightLabel{\RuUnfocus}
 \UnaryInfC{$\phi^f,\Sigma$}
\end{prooftree}
\end{minipage}
\end{minipage}

\begin{prooftree}
 \AxiomC{$[\Sigma]^\dx$}
 \noLine
 \UnaryInfC{$\vdots$}
 \noLine
 \UnaryInfC{$\Sigma$}
 \RightLabel{\RuDischarge{\dx}}
 \UnaryInfC{$\Sigma$}
\end{prooftree}
\caption{Proof rules of the focus system}
\label{f:proof rules}
\end{figure}

Here are some more specific comments about the individual proof rules.
The boolean rules ($\RuAnd$ and $\RuOr$) are fairly standard; observe
that the annotation of the active formula is simply inherited by its
subformulas. The fixpoint rules (\RuMu and \RuNu) simply unfold the
fixpoint formulas; note, however, the difference between \RuMu and \RuNu
when it comes to the annotations: in \RuNu the annotation of the active
$\nu$-formula remains the same under unfolding, while in \RuMu, the
active $\mu$-formula \emph{loses focus} when it gets unfolded.
The box rule \RuBox is the standard modal rule in one-sided sequent systems;
the annotation of any formula in the consequent and its derived formula in the
antecedent are the same.
% \RuBox can be seen as a simplified version of \RuMod, differing from 
% it in two ways.
% First, it can only be applied to a sequent that contains only modal formulas;
% and second, it has one premise only, related to the single box formula in the 
% consequent.
% Since we have an explicit weakening rule \RuWeak the effect of \RuBox is that 
% of a restricted \RuMod where we only continue with one of the premises of \RuMod
% and discard all others. 
% This reflects the fact that proofs in the focus system are related to winning 
% strategies of \Abelard in the tableau game, which select only one successor of
% \RuMod.

The rule \RuWeak is a standard \emph{weakening rule}. Next to \RuMu, the
\emph{focus rules} \RuFocus and \RuUnfocus are the only rules that change the
annotations of formulas. 
Finally, the \emph{discharge rule} \RuDischarge{} is a special proof rule
that allows us to discharge an assumption if it is repeating a sequent that 
occurs further down in the proof. 
Every application \RuDischarge{\dx} of this rule is marked by a so-called 
\emph{discharge token} $\dx$ that is taken from some fixed infinite set 
$\Tokens = \{\dx,\dy,\dz,\dots\}$.
In Figure~\ref{f:proof rules} this is suggested by the notation
$[\Sigma]^\dx$.
The precise conditions under which \RuDischarge{\dx} can be employed are 
explained in Definition~\ref{d:proof} below.

\begin{definition} 
\label{d:proof}
A \emph{pre-proof} $\Pi = (T,P,\Si,\pLab)$ is a quadruple such that
% \begin{itemize}
% \item 
$(T,P)$ is a, possibly infinite, tree with nodes $T$ and parent relation $P$;
% . Thus, we have $Puv$ iff the node $u$ is the parent of the node $v$.
% \item 
$\Si$ is a function that maps every node $u \in T$ to a non-empty annotated 
sequent $\Sigma_u$;
and 
\[
\pLab:\; T \;\to\;
\big\{
\AxLit,\AxTop,\RuOr,\RuAnd,\RuBox,\RuMu,\RuNu,\RuWeak,\RuFocus,\RuUnfocus \big\} 
\cup
\big\{\RuDischarge{\dx} \mid \dx \in \Tokens \big\} \cup \Tokens \cup
\{ \star \},
\]
is a map that assigns to every node $u$ of $T$ its \emph{label} $\pLab(u)$, 
which is either (i) the name of a proof rule, (ii) a discharge token or 
(iii) the symbol $\star$.

To qualify as a pre-proof, such a quadruple is required to satisfy the following
conditions:
\begin{enumerate}
\item \label{i:local condition}
If a node is labelled with the name of a proof rule then it has as many children 
as the proof rule has premises, and the annotated sequents at the node and its 
children match the specification of the proof rules in Figure~\ref{f:proof 
rules}.

% \item \label{i:thin} If $\Sigma_v$ is not thin for some node $v$ then
% $\pLab(v) = \RuWeak$.
% 
% \item \label{i:progressive} If $\pLab(v) \in
% \{\RuOr,\RuAnd,\RuMu,\RuNu\}$ for some node $v$ and the principal
% formula of this rule application at $v$ is $\phi^a \in \Si_v$ then
% $\phi^a \notin \Si_u$ for all $u$ with $P v u$.

\item \label{i:leaf condition}
If a node is labelled with a discharge token or with $\star$ then it is a leaf.
We call such nodes \emph{non-axiomatic leaves} as opposed to the \emph{axiomatic 
leaves} that are labelled with one of the axioms, \AxLit or \AxTop.

\item \label{i:discharge condition}
For every leaf $l$ that is labelled with a discharge token $\dx \in \Tokens$ 
there is exactly one node $u$ in $\Pi$ that is labelled with \RuDischarge{\dx}.
This node $u$, as well as its (unique) child, is a proper ancestor of $l$
and satisfies $\Sigma_u = \Sigma_l$. 
In this situation we call $l$ a \emph{discharged leaf}, and $u$ its
\emph{companion}; we write $c$ for the function that maps a discharged
leaf $l$ to its companion $c(l)$.

\item \label{i:path condition} \label{i:pc}
% If $l$ is a discharged leaf with companion $c(l)$ then every annotated sequent 
% at a node on the path from $c(l)$ to $l$ contains at least one formula that 
% is in focus.
If $l$ is a discharged leaf with companion $c(l)$ then the path from $c(l)$ to
$l$ contains (\ref{i:pc}a) no application of the focus rules, (\ref{i:pc}b) at
least one application of \RuBox, while (\ref{i:pc}c) every node on this path 
features a formula in focus.
\end{enumerate}

Non-axiomatic leaves that are not discharged, are called \emph{open};
the sequent at an open leaf is an \emph{open assumption} of the
pre-proof. We call a pre-proof a \emph{proof in \Focus} if it is finite
and does not have any open assumptions.

A infinite branch $\beta = (v_{n})_{n\in\om}$ is \emph{successful}
if there are infinitely many applications of \RuBox on $\beta$ and there
is some $i$ such that for all $j \geq i$ the annotated sequent at $v_j$ 
contains at least one formula that is in focus and none of the focus
rules \RuFocus and \RuUnfocus is applied at $v_j$. 
A pre-proof is a \emph{\Focusinf-proof} if it does not have any non-axiomatic
leaves and all its infinite branches are successful.

An unannotated sequent $\Phi$ is \emph{derivable} in $\Focusinf$ (in $\Focus$)
if there is a \Focusinf proof (a $\Focus$ proof, respectively) such that $\Phi^f$
is the annotated sequent at the root of the proof.
\end{definition}

% Note that condition~\ref{i:path condition} from Definition~\ref{d:proof} implies
% that the focus rule is never applied on the path to a discharged leaf from its
% companion, because every sequent on this path contains a formula that is in 
% focus.
% We leave it for the reader to verify that by guardedness there is at least one 
% application of $\RuBox$ on such a path.
% Note that condition~\ref{i:thin} from Definition~\ref{d:proof} implies
% that all sequents in any proof are thin, or followed by applications of
% weakening which guarantee that they are thin.
% Condition~\ref{i:destructive} ensures that the principal formula is
% removed from the sequent whenever we apply a boolean of fixpoint rule.
% Because we assume that all formulas are guarded this does not
% essentially constrain the application of the fixpoint rules.

For future reference we make some first observations about (pre-)proofs in 
this system.

\begin{proposition} \label{p:proof in closure}
Let $\Phi$ be the set of formulas that occur in the annotated sequent
$\Sigma_r$ at the root of some pre-proof $\Pi = (T,P,\Si,\pLab)$. Then
all formulas that occur annotated in $\Sigma_t$ for any $t \in T$ are in
$\Clos(\Phi)$.
\end{proposition}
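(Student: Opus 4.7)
The plan is to proceed by induction on the length $n$ of the unique path from the root $r$ to the node $t$, proving the slightly sharper claim that $\uls{\Sigma_t} \sse \Clos(\Phi)$ for every $t \in T$.

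The base case $n = 0$ is immediate, since then $t = r$ and $\uls{\Sigma_r} = \Phi \sse \Clos(\Phi)$. For the inductive step, let $u$ be the parent of $t$, so that by the inductive hypothesis $\uls{\Sigma_u} \sse \Clos(\Phi)$. I would then inspect each possible label $\pLab(u)$ and verify that the formulas underlying $\Sigma_t$ lie in $\Clos(\Phi)$. In most cases $t$ arises from an application of a proof rule whose premise contains the same side formulas as the conclusion together with one or more \emph{residuals} of a principal formula $\phi \in \uls{\Sigma_u}$; by inspection of Figure~\ref{f:proof rules}, each such residual lies in $\Clos_{0}(\phi)$. Since $\phi \in \Clos(\Phi)$ by the inductive hypothesis and $\Clos(\Phi)$ is by definition closed under $\Clos_{0}$, the residuals also lie in $\Clos(\Phi)$.

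The remaining rules do not even introduce new underlying formulas: for \RuWeak, $\uls{\Sigma_t} \sse \uls{\Sigma_u}$; for \RuFocus and \RuUnfocus, $\uls{\Sigma_t} = \uls{\Sigma_u}$; and for \RuDischarge{\dx}, the conclusion and premise coincide as well. The axiom cases \AxLit and \AxTop do not arise since axiomatic nodes are leaves and therefore have no children. In every case, $\uls{\Sigma_t} \sse \Clos(\Phi)$, completing the induction.

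I do not expect any real obstacle here: the proposition is a routine bookkeeping statement about the proof rules, and the only point requiring (minor) attention is that the fixpoint rules \RuMu and \RuNu produce the unfolding $\phi[\eta x.\phi / x]$, which is exactly the element of $\Clos_{0}(\eta x. \phi)$ prescribed by the definition of $\Clos_{0}$. Thus the closure condition on $\Clos(\Phi)$ matches the proof rules one-for-one, and the induction goes through.
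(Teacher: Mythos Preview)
Your proposal is correct and follows essentially the same approach as the paper: an induction along the tree from the root, checking for each rule that the formulas in the premise stay within $\Clos(\Phi)$ given that those in the conclusion do. The paper's own proof is in fact terser than yours, simply noting that this amounts to a rule-by-rule inspection.
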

\begin{proof}
 This is an easy induction on the depth of $t$ in the tree $(T,P)$. It
amounts to checking that if the formulas in the conclusion of any of the
rules from Figure~\ref{f:proof rules} are in $\Clos(\Phi)$ then so are
the formulas at any of the premises.
\end{proof}

\begin{proposition}
\label{p:lr1}
Let $u$ and $v$ be two nodes in a proof $\Pi = (T,P,\Si,\pLab)$ such that 
$Puv$ and $\Ru_{u} \neq \RuFocus$.
Then the following holds:
\begin{equation}
\label{eq:lr2}
\text{if } \Si_{v} \text{ contains a formula in focus, then so does }
\Si_{u}.
\end{equation}
\end{proposition}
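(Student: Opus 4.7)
The plan is a straightforward case distinction on the label $\Ru_u$ of the node $u$. Since any node whose label is a discharge token or $\star$ is a leaf by condition~\ref{i:leaf condition} of Definition~\ref{d:proof}, such labels leave the hypothesis $Puv$ unsatisfiable and the statement vacuous. Hence it remains to handle each proof rule different from $\RuFocus$.

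The key structural observation, read off directly from Figure~\ref{f:proof rules}, is that in every proof rule apart from $\RuFocus$ the side formulas $\Si$ occur in the premise with exactly the same annotations as in the conclusion—the $\dia$ prefix added by $\RuBox$ is immaterial to annotations. Consequently, if the witnessing focused formula of $\Si_v$ is a side formula, then that very formula, with the same annotation, already lies in $\Si_u$, and we are done. It then remains to check the sub-case where the witness is the active (residual) formula of the rule. For $\RuOr$, $\RuAnd$, $\RuNu$ and $\RuBox$, the annotation of the active formula in the premise equals that of the principal formula in the conclusion, so focus transfers directly. For $\RuUnfocus$, the principal formula in the conclusion is always in focus, so $\Si_u$ contains a focused formula regardless of what happens in $\Si_v$. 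The rule $\RuMu$ is slightly idiosyncratic in that its active formula in the premise is \emph{always} annotated $u$; but this precisely forces the witness of focus in $\Si_v$ to lie among the preserved side formulas, so we are returned to the general pattern. Finally, $\RuWeak$ has no residual active formula at all, and $\RuDischarge{\dx}$ satisfies $\Si_u = \Si_v$ by condition~\ref{i:discharge condition} of Definition~\ref{d:proof}, rendering the implication trivial.

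I do not anticipate any genuine obstacle; the proof reduces to a careful walk through the table of rules. The only point worth highlighting is why $\RuFocus$ must be excluded: it is the unique rule in which a focused formula $\phi^f$ in the premise is replaced by an unfocused $\phi^u$ in the conclusion, so a $\Si_v$ whose sole focused formula is precisely the active $\phi^f$ need not yield any focused formula in $\Si_u$, and the implication would fail.
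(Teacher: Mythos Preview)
Your proposal is correct and follows exactly the approach the paper indicates: the paper's own proof is a single sentence stating that the claim ``is proved by straightforward inspection in a case distinction as to the proof rule $\Ru_{u}$,'' and you have supplied precisely that case distinction in detail. One minor terminological looseness: in $\RuBox$ the paper declares \emph{all} formulas principal, so your division into ``active'' versus ``side'' formulas there is slightly informal, but your treatment of the $\dia$-prefixed formulas makes clear that annotations are preserved, so the substance is unaffected.
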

This claim is proved by straightforward inspection in a case distinction as 
to the proof rule $\Ru_{u}$.

\subsection{Circular and infinite proofs}

We first show that \Focusinf and \Focus are the infinitary and 
circular version of the same proof system, and derive the same annotated 
sequents.

\begin{theorem}
\label{t:same}
An annotated sequent is provable in $\Focus$ iff it is provable
in \Focusinf.
\end{theorem}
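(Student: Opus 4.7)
\begin{proofsketch}
The plan is to establish the two directions by fairly standard techniques for cyclic/infinitary proof systems, using in each case the success condition on branches and the discharge conditions (\ref{i:path condition}).

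For the direction $\Focus \Rightarrow \Focusinf$, I would \emph{unfold} a given circular proof $\Pi$ into an infinite tree. Concretely, each discharged leaf $l$ with companion $u$ is replaced by a fresh copy of the subproof of $\Pi$ rooted at $u$, and this replacement is iterated ad infinitum. The resulting tree $\Pi^{\infty}$ is finitely branching, has no non-axiomatic leaves, and by construction every axiomatic leaf remains axiomatic. The only thing to verify is that every infinite branch is successful in the sense of Definition~\ref{d:proof}. But since $\Pi$ is finite, any infinite branch in $\Pi^{\infty}$ must pass through infinitely many (images of) companion-to-discharge intervals of the original proof. Condition~(\ref{i:path condition}) then ensures that on such an infinite branch there are infinitely many applications of \RuBox, no applications of \RuFocus or \RuUnfocus from some point on, and every sequent from that point on contains a focused formula. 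Hence the branch is successful.

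For the converse $\Focusinf \Rightarrow \Focus$, I would take a \Focusinf-proof $\Pi$ of $\Sigma_{r}^{f}$ and \emph{fold} it by cutting every branch at an appropriate repetition. Let $\Phi$ be the (finite) set of formulas occurring in $\Sigma_{r}$; by Proposition~\ref{p:proof in closure} the collection of annotated sequents appearing in $\Pi$ is a finite subset of $\mathcal{P}(\Clos(\Phi) \times \{f,u\})$. Perform a depth-first traversal of $\Pi$ from the root; at a node $v$, declare $v$ a \emph{cut-point} as soon as there exists an ancestor $u$ of $v$ with $\Sigma_{u} = \Sigma_{v}$ such that the interval $\itv{u}{v}$ contains no application of \RuFocus or \RuUnfocus, contains at least one application of \RuBox, and every node on it carries a focused formula. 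Let $\Pi'$ be the tree obtained by truncating every branch at its first cut-point, declaring that point a discharged leaf (equipped with a fresh token) whose companion is the witnessing ancestor $u$, at which we insert a $\RuDischarge{\dx}$ node.

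The main obstacle is showing that this truncation yields a \emph{finite} tree. The tree $\Pi'$ is finitely branching, so by König's lemma it suffices to show it has no infinite branch. Suppose for contradiction that $\beta'$ is an infinite branch in $\Pi'$; then $\beta'$ corresponds to an infinite branch $\beta$ of $\Pi$ on which no valid cut-point exists. By the success condition, there is some $i$ such that from $v_{i}$ on no focus rules are applied and every sequent contains a formula in focus, while infinitely many \RuBox-applications occur. Using the finiteness of the set of annotated sequents, some sequent must be repeated on $\beta$ at indices $j<k$ with $i \le j$ and at least one \RuBox between $v_{j}$ and $v_{k}$; the pair $(v_{j},v_{k})$ then satisfies all the requirements of a cut, contradicting the supposed absence of cuts. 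Hence $\Pi'$ is finite. It remains to check that $\Pi'$ is a genuine \Focus-proof: axiomatic leaves are preserved, every interior rule application is inherited from $\Pi$, and the discharge edges introduced at cut-points satisfy (\ref{i:discharge condition}) and (\ref{i:path condition}) by the very choice of cut-points.
\end{proofsketch}
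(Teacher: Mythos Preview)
Your sketch is correct and follows essentially the same approach as the paper: unravelling the cyclic proof for the forward direction (the paper's Proposition~\ref{p:fintoinf}), and truncating the infinitary proof at \emph{successful repeats} for the converse (Proposition~\ref{p:ppp}). Your K\"onig-based finiteness argument for the folded proof is even slightly more direct than the paper's, which first establishes that no two \emph{first} successful repeats are in ancestor--descendant relation before invoking K\"onig, but the key observation---that every successful branch contains a repeat satisfying condition~(\ref{i:path condition})---is the same.
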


The two directions of this theorem are proved in Propositions
\ref{p:fintoinf}~and~\ref{p:ppp}.

\begin{proposition}
\label{p:fintoinf}
 If an annotated sequent $\Gamma$ is provable in $\Focus$ then it is
provable in \Focusinf.
\end{proposition}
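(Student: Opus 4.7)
The plan is to unfold the finite, cyclic $\Focus$ proof $\Pi$ into an infinite tree $\Pi^{\infty}$ by recursively replacing every discharged leaf with a fresh copy of the subproof rooted at its companion. Concretely, I will describe $\Pi^{\infty}$ via \emph{walks} in $\Pi$: sequences $v_{0}v_{1}\cdots v_{n}$ starting at the root such that either $v_{i+1}$ is a $\Pi$-child of $v_{i}$, or $v_{i}$ is a discharged leaf and $v_{i+1} = c(v_{i})$. The nodes of $\Pi^{\infty}$ are these walks, ordered by prefix, with rule-label and annotated sequent inherited from the last $\Pi$-component. Since $\Sigma_{c(l)} = \Sigma_{l}$ by clause~\ref{i:discharge condition} of Definition~\ref{d:proof}, the local rule-matching condition at the boundary is preserved; the rule-matching at all other nodes is inherited from $\Pi$.

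Next I would check that $\Pi^{\infty}$ has no non-axiomatic leaves. The only leaves of $\Pi$ are the axiomatic ones (labelled $\AxLit$ or $\AxTop$) and the discharged ones; the former remain leaves in $\Pi^{\infty}$ while the latter acquire a unique successor via the $c$-jump. Open assumptions do not exist in $\Pi$ by hypothesis, so there are no leaves to worry about beyond these two kinds.

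The main obstacle, and the real content of the proof, is to show that every infinite branch $\beta = (v_{n})_{n<\omega}$ of $\Pi^{\infty}$ is successful. Projecting $\beta$ back to a sequence of $\Pi$-nodes and using that $\Pi$ is finite, some $\Pi$-node is visited infinitely often; but the only mechanism for revisiting a $\Pi$-node along a walk is to traverse a discharge loop, i.e.\ a companion-to-discharged-leaf segment. Hence $\beta$ decomposes, past some initial prefix, into an infinite concatenation of such segments. Clause~\ref{i:pc} of Definition~\ref{d:proof} then does all the work: (\ref{i:pc}b) gives at least one application of $\RuBox$ per segment, hence infinitely many $\RuBox$ applications along $\beta$; (\ref{i:pc}a) ensures no focus rule is applied anywhere inside the cycling part of $\beta$; and (\ref{i:pc}c) ensures that every node in the cycling part carries a formula in focus. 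Taking $i$ to be the first position at which $\beta$ enters the first discharge segment, both successfulness conditions hold for all $j \geq i$.

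The main subtlety I expect is purely bookkeeping: verifying that the walk-based construction really produces a tree (each walk has a unique parent, namely its proper prefix), that the rule-labels at companions and at their discharged-leaf images match up so that the local condition at unfolded boundary points is a genuine rule application (not a $\RuDischarge{\dx}$-labelled node, since $\Focusinf$ has no discharge rule), and that the argument that an infinite branch must traverse infinitely many discharge loops is watertight. The latter is essentially König's lemma combined with finiteness of $\Pi$: if $\beta$ traversed only finitely many discharge loops, then from some point on it would be a branch of a finite subtree of $\Pi$, contradicting infiniteness.
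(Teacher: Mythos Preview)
Your approach is the same as the paper's: both unfold $\Pi$ along walks that follow the parent relation $P$ and additionally jump from a discharged leaf $l$ to its companion $c(l)$, and both derive successfulness of infinite branches directly from clause~\ref{i:pc} of Definition~\ref{d:proof} by arguing that any infinite walk eventually cycles through discharge segments.

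The one point you flag as ``purely bookkeeping'' but do not actually resolve is not quite trivial and is exactly where the paper does something explicit. As written, your $\Pi^{\infty}$ is not a pre-proof: a walk ending at a discharged leaf inherits a discharge-token label from $\Tokens$ yet has a child (via the $c$-jump), contradicting condition~\ref{i:leaf condition}; and walks ending at companion nodes carry the label $\RuDischarge{\dx}$, which you yourself note must not appear. The paper fixes this by defining $S \isdef \Tokens \cup \{\RuDischarge{\dx} \mid \dx \in \Tokens\}$, taking as nodes of the unfolded proof only those walks whose last component is \emph{not} $S$-labelled, and letting the parent relation skip over any intermediate $S$-labelled nodes. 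One then also checks that no infinite $L$-path stays forever inside $S$-labelled nodes (this uses clause~(\ref{i:pc}b), the guaranteed $\RuBox$ on each discharge segment), so the contraction is well-defined. With this adjustment your argument goes through exactly as the paper's does.
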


\begin{proof}
Let $\Pi = (T,P,\Si,\pLab)$ be a proof of $\Gamma$ in \Focus. 
We define a proof $\Pi' = (T',P',\Si',\pLab')$ of $\Gamma$ in \Focusinf. 
Basically, the idea is to unravel the proof $\Pi$ at discharged leaves; the 
result of this, however, would contain some redundant nodes, corresponding 
to the discharged leaves in $\Pi$ and their companions.
In our construction we will take care to remove these nodes from the paths that 
provide the nodes of the unravelled proof.

Going into the technicalities, we first define the relation $L$ on $T$ such 
that $L u v$ holds iff either $P u v$ or $u$ is a discharged leaf and $v = 
c(u)$. 
Let $A$ be the set of all finite paths $\pi$ in $L$ that start at the root $r$
of $(T,P)$.
Formally, $\pi = v_0,\cdots,v_n$ is in $A$ iff $v_0 = r$ and $Lv_iv_{i+1}$ for
all $i \in \{0,\dots,n-1\}$. 
For any path $\pi = v_0,\cdots,v_n \in A$ define $\last(\pi) = v_n$.

Consider the set $S \isdef \Tokens \cup \{\RuDischarge{\dx} \mid \dx \in
\Tokens\}$; 
% We can think of $S$ as the set of all labels of rules that
% take care of discharged leaves in $\Pi$.
these are the ones that we need to get rid of in $\Pi'$. 
We then define $T' = \{\pi \in A \mid \pLab(\last(\pi)) \notin
S\}$ and set $P' \pi \rho$ for $\pi,\rho \in T'$ iff $\rho = \pi \cdot
u_1 \cdots u_n$ with $n \geq 1$ and $u_i \in S$ for all $i
\in \{1,\dots,n-1\}$. 
Moreover, we set $\Si'_\pi = \Si_{\last(\pi)}$
and $\pLab'(\pi) = \pLab(\last(\pi))$.

Note that for every node $v \in T$ we can define a unique $L$-path $\pi^v =
t^v_0 \cdots t^v_n$ with $t^v_0 = v$, $\pLab(t^v_n) \notin S$ and 
$\pLab(t^v_i) \in S$ for all $i \in \{0,\dots,n-1\}$.
This path is unique because every node $w$ with $\pLab(w) \in S$ has a unique
$L$-successor, and there cannot be an infinite $L$-path through $S$.
(To see this assume for contradiction that there would be such an infinite 
$L$-path $(t^{v}_{n})_{n\in\om}$ through $S$.
Because $T$ is finite it would follow that from some moment on the path visits 
only nodes that it visits infinitely often. 
Hence, there must then be some discharged leaf such that the infinite path 
visits all the nodes that are between mentioned leaf and its companion. 
But then by definition
% by condition~\ref{i:path condition} from Definition~\ref{d:proof} it would
% then follow 
the path passes a node $w$ with $\pLab(w) = \RuBox \notin S$.)
Finally, observe that by the definition of the rules in $S$ we have $\Sigma_v 
= \Sigma_{t^v_n}$ for every such path $\pi^v$.

% Then observe that for every $\rho \in T'$ the assignment $v \mapsto \rho
% \cdot \pi^v$ (where this composition of paths identifies $\last(\rho)$
% with $\pi^v_0$) defines a bijective correspondence between the
% $L$-successors of $\last(\rho)$ and the $P'$-successors of $\rho$.
% Moreover, we have that $\Sigma_v = \Sigma_{\rho \cdot \pi^v}$. 
It is not hard to see that $\Pi'$ is a pre-proof, and that it does not use the
detachment rule.
It thus remains to verify that all infinite branches are successful. 
Let  $\beta = (\pi_{n})_{n\in\om}$ be such a branch; by construction we may
associate with $\beta$ a unique $L$-path $\al = (v_{n})_{n\in\om}$ such that
the sequence $(\last(\beta_n))_{n\in\om}$ corresponds to the subsequence we
obtain from $\al$ by removing all nodes from $S$.
Because $T$ is finite, from some point on $\alpha$ only passes nodes that are 
situated on a path to some discharged leaf from its companion node. 
By condition~\ref{i:pc} from Definition~\ref{d:proof} it then follows that 
$\be$ must be successful.
\end{proof}

The converse direction of Theorem~\ref{t:same} requires some preparations.

% \begin{proposition} \label{p:successful progress}
% For every successful infinite branch $\beta = \beta_0 \beta_1 \cdots$ in
% a \Focusinf-proof $\Pi = (T,P,\Si,\pLab)$ there are infinitely many $i$
% such that $\pLab(\beta_i) = \RuBox$.
% \end{proposition}
% \begin{proof}
%  We first argue that there are infinitely many $i$ such that $\beta_i
% \notin \{\RuWeak,\RuFocus\}$. Assume for a contradiction that there is
% some $i$ such that $\pLab(\beta_j) \in \{\RuWeak,\RuFocus\}$ for all $j
% \geq i$. Because $\beta$ is successful there is a $k$ such that every
% sequent at $\beta_j$ for $j \geq k$ contains at least one formula in
% focus. This means that after $k$ the rule \RuFocus is never applicable,
% since it requires that all formulas are unfocused. Hence, it would
% follow that $\pLab(\beta_j) = \RuWeak$ for all $j \geq \max\{i,k\}$. But
% this is impossible because \RuWeak strictly decreases the number of
% annotated formulas in the sequent and therefore it is only applicable
% finitely many times in a row.
% 
% Because there are infinitely many $i$ such that $\beta_i \notin \{\RuWeak,
% \RuFocus\}$ it follows that there are infinitely many applications of one of 
% the rules in $\{\RuAnd,\RuOr,\RuMu,\RuNu,\RuBox\}$. 
% This is only possible if there are infinitely many applications of \RuMu or 
% \RuNu as well. 
% But then because we are working under the assumption that all formulas are 
% guarded it follows that there are infinitely many applications of \RuBox.
% \end{proof}

\begin{definition}
A node $u$ in a pre-proof $\Pi = (T,P,\Si,\pLab)$ is called a \emph{successful 
repeat} if it has a proper ancestor $t$ such that $\Si_{t} = \Si_{u}$, $\pLab(t)
\neq \RuDischarge{}$, and the path $[t,u]$ in $\Pi$ satisfies 
condition~\ref{i:pc} of Definition~\ref{d:proof}.
Any node $t$ with this property is called a \emph{witness} to the 
successful-repeat status of $u$.
\end{definition}
% Note that the definition of a successful repeat implies that \RuFocus cannot 
% be applied on the path between the successful repeat and its witness.

The following is then obvious.
\begin{proposition} \label{p:successful repeat}
 Every successful branch $\beta = \beta_0 \beta_1 \cdots$ in a \Focusinf-proof
$\Pi = (T,P,\Si,\pLab)$ contains a successful repeat.
\end{proposition}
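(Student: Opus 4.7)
The plan is to combine a pigeonhole argument with the two success conditions on $\beta$. By Proposition~\ref{p:proof in closure}, every annotated sequent $\Si_{\beta_j}$ on $\beta$ is built from formulas in the finite set $\Clos(\uls{\Si_{\beta_0}})$ together with annotations from the two-element set $\{f,u\}$, so only finitely many annotated sequents can occur along $\beta$. Hence some annotated sequent $\Gamma$ satisfies $\Si_{\beta_j}=\Gamma$ for infinitely many $j \in \om$.

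Next I will exploit the success of $\beta$ to fix an index $i$ such that for every $j \geq i$ no focus rule is applied at $\beta_j$ and $\Si_{\beta_j}$ contains a formula in focus. Then the set $M \isdef \{ j \geq i \mid \Si_{\beta_j} = \Gamma \}$ remains infinite, and so does the set $B \isdef \{ j \mid \pLab(\beta_j) = \RuBox \}$ of \RuBox-applications on $\beta$.

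The successful repeat is then obtained by picking $m \isdef \min M$, some $k \in B$ with $k > m$, and some $n \in M$ with $n > k$; I claim that $u \isdef \beta_n$ is a successful repeat witnessed by $t \isdef \beta_m$. Since $m < n$ and both nodes lie on $\beta$, $t$ is a proper ancestor of $u$ in $\Pi$, and the path $[t,u]$ coincides with the segment $\beta_m,\ldots,\beta_n$ of $\beta$. By construction $\Si_t = \Si_u = \Gamma$; every $j \in [m,n]$ satisfies $j \geq i$, which simultaneously yields conditions~(\ref{i:pc}a) (no focus rule along the segment) and~(\ref{i:pc}c) (each $\Si_{\beta_j}$ on the segment carries a focused formula); and the \RuBox-application at $\beta_k$ witnesses condition~(\ref{i:pc}b). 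The side condition $\pLab(t) \neq \RuDischarge{}$ is automatic: any occurrence of \RuDischarge{} in a \Focusinf-proof would be vacuous, there being no non-axiomatic leaves to discharge, so we may assume $\Pi$ contains no such rule.

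The only real point of care is to order the choices so that a \RuBox application falls strictly between the two repeats of $\Gamma$, which is precisely why the indices are picked in the order $m$, $k$, $n$; otherwise the argument is just a direct application of pigeonhole to the finiteness supplied by Proposition~\ref{p:proof in closure}.
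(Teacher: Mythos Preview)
Your proof is correct and spells out exactly the pigeonhole-plus-success-condition argument that the paper leaves implicit by declaring the proposition ``obvious.'' The one place worth a second look is your handling of the side condition $\pLab(t) \neq \RuDischarge{}$: your without-loss-of-generality move is sound (vacuous discharge nodes just copy their sequent and can be excised, or equivalently one can slide $m$ forward along a run of discharges until the first non-discharge node, which exists since $\RuBox$ occurs infinitely often), but in the context of Proposition~\ref{p:ppp} one works with a fixed $\Pi$, so it is worth noting that the excision (or the slide) really does produce a successful repeat in the \emph{original} proof.
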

% \begin{proof}
% First note that by definition of \Focusinf-proofs, there are no applications 
% of the discharge rule in $\Pi$.
% In addition, since $\beta$ is successful there is a $k$ such that $\beta_j$ 
% contains a formula in focus for all $j \geq k$.
% Furthermore, by Proposition~\ref{p:proof in closure} all sequents along $\beta$ 
% are subsets of the finite set $\Clos(\Phi) \times \{ u,f \}$, where $\Phi$
% is the set of formulas in the annotated sequent at the root of $\bbT$.
% It then follows by the pigeonhole principle that $\beta$ must have a 
% successful repeat.
% \end{proof}

\begin{proposition}
\label{p:ppp}
If an annotated sequent $\Ga$ is provable in \Focusinf then it is
provable in \Focus.
\end{proposition}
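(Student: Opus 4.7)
The plan is to convert the $\Focusinf$-proof $\Pi = (T, P, \Si, \pLab)$ of $\Ga$ into a $\Focus$-proof by pruning each branch of $\Pi$ at the first successful repeat encountered, and then inserting fresh $\RuDischarge{}$ nodes above the corresponding witnesses. To this end I would first define $T^{*}$ to be the sub-tree of $T$ consisting of all nodes having no proper ancestor in $T$ that is a successful repeat of $\Pi$. I would then show $T^{*}$ is finite using K\"onig's lemma: $T^{*}$ is finitely branching, and any infinite branch of $T^{*}$ would also be an infinite branch of $\Pi$, hence successful, hence would contain a successful repeat by Proposition~\ref{p:successful repeat}, contradicting its inclusion in $T^{*}$. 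Consequently, every leaf of $T^{*}$ is either an axiomatic leaf of $\Pi$ or a successful repeat of $\Pi$; call the latter \emph{repeat leaves}.

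Next, for each repeat leaf $l$ of $T^{*}$ I would choose a witness $t_{l} \in T$---a proper ancestor of $l$ with $\Si_{t_{l}} = \Si_{l}$, with $\pLab(t_{l}) \neq \RuDischarge{}$ (which is automatic since $\Pi$ contains no such rule), and with the path $[t_{l}, l]$ satisfying condition~\ref{i:pc} of Definition~\ref{d:proof}---and observe that $t_{l}$ automatically lies in $T^{*}$, since any proper ancestor of $t_{l}$ is also a proper ancestor of $l \in T^{*}$. To build $\Pi'$ I assign a fresh discharge token $\dx_{l}$ to each repeat leaf, relabel $l$ with $\dx_{l}$, and insert a new node $t_{l}'$ just above $t_{l}$ in the tree, with the same sequent as $t_{l}$ and with label $\RuDischarge{\dx_{l}}$. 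When several repeat leaves share a common witness in $T^{*}$, I stack the corresponding $\RuDischarge{}$ nodes above that witness in an arbitrary order, each with its own fresh token.

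The verification that $\Pi'$ is a valid $\Focus$-proof then amounts to checking the four conditions of Definition~\ref{d:proof}: the local condition holds because $\RuDischarge{\dx}$ just repeats the sequent of its unique child; the leaf and discharge conditions are immediate from the construction and our use of fresh tokens; and the path condition~\ref{i:pc} for each discharged leaf $l$ reduces to the corresponding condition on the path $[t_{l}, l]$ in $\Pi$, which holds by choice of witness, since the inserted $\RuDischarge{}$ nodes appearing on this path in $\Pi'$ (namely $t_{l}'$ and, possibly, $t_{m}'$ for witnesses $t_{m}$ lying strictly between $t_{l}$ and $l$) are neither focus rules nor box rules, and carry the same sequent as some node already on the original path, so that none of the three sub-conditions of~\ref{i:pc} is disturbed.

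The main obstacle is the bookkeeping required when several repeat leaves share the same witness, or more generally when one chosen witness lies on the path from another witness to its leaf; here one has to place the inserted $\RuDischarge{}$ nodes consistently so as to preserve the one-to-one correspondence between each discharge token and its unique discharge-rule application. This is however purely combinatorial and can be handled by stacking the inserted nodes in an arbitrary but fixed linear order.
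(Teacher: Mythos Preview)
Your proposal is correct and follows essentially the same strategy as the paper's own proof: prune the $\Focusinf$-proof at first successful repeats and insert discharge applications at the chosen witnesses. Your direct definition of $T^{*}$ is a slightly cleaner packaging than the paper's route via the auxiliary sets $Y$, $\wh{Y}$ and $Z$ (and may cut off even some finite subtrees at successful repeats, which is harmless); the only other cosmetic difference is that the paper assigns one discharge token per companion node rather than one per leaf.
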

\begin{proof}
Assume that $\Pi = (T,P,\Si,\pLab)$ is a proof for the annotated sequent
$\Ga$ in \Focusinf.
If $\Pi$ is finite we are done, so assume otherwise; then by K\"{o}nig's Lemma
the set $B^{\infty}$ of infinite branches of $\Pi$ is nonempty.

Because of Proposition~\ref{p:successful repeat} we may define for every
infinite branch $\tau \in B^{\infty}$ the number $\fsr(\tau) \in \omega$
as the least number $n \in \omega$ such that $\tau(n)$ is a successful
repeat. This means that $\tau(\fsr(\tau))$ is the first successful
repeat on $\tau$. Our first claim is the following: 
\begin{equation}
\label{eq:itp01}
\text{there is no pair } \si,\tau \text{ of infinite branches such that }
\si(\fsr(\si)) \text{ is a proper ancestor of } \tau(\fsr(\tau)).
\end{equation}
To see this, suppose for contradiction that $\si(\fsr(\si))$ is a proper ancestor
of $\tau(\fsr(\tau))$, then $\si(\fsr(\si))$ actually lies on the branch $\tau$.
But this would mean that $\si(\fsr(\si))$ is a successful repeat on $\tau$, 
contradicting the fact that $\tau(\fsr(\tau))$ is the \emph{first} successful 
repeat on $\tau$.

Our second claim is that 
\begin{equation}
\label{eq:itp02}
\text{ the set } Y \isdef \{ t \in T \mid t \text{ has a descendant } 
   \tau(\fsr(\tau)), 
   \text{ for some } \tau \in B^{\infty} \} 
\text{ is finite}.
\end{equation}
For a proof of \eqref{eq:itp02}, assume for contradiction that $Y$ is infinite.
Observe that $Y$ is in fact (the carrier of) a subtree of $(T,P)$, and as such 
a finitely branching tree.
It thus follows by K\"onig's Lemma that $Y$ has an infinite branch $\si$, which
is then clearly also an infinite branch of $\Pi$.
Consider the node $s \isdef \si(\fsr(\si))$.
Since $\si$ is infinite, it passes through some proper descendant $t$ of $s$.
This node $t$, lying on $\si$, then belongs to the set $Y$, so that by 
definition it has a descendant of the form $\tau(\fsr(\tau))$ for some $\tau 
\in B^{\infty}$.
But then $\si(\fsr(\si))$ is a proper ancestor of $\tau(\fsr(\tau))$, 
which contradicts our earlier claim \eqref{eq:itp01}.
It follows that the set $Y$ is finite indeed.

Note that it obviously follows from \eqref{eq:itp02} that the set
\[
\wh{Y} \isdef \{ \tau(\fsr(\tau)) \mid \tau \in B^{\infty} \}
\]
is finite as well. Recall that every element $l \in \wh{Y}$ is a
successful repeat; we may thus define a map $c: \wh{Y} \to T$ by setting
$c(l)$ to be the \emph{first} ancestor $t$ of $l$ witnessing that $l$ is
a successful repeat. Finally, let $\Ran(c)$ denote the range of $c$.
\medskip

We are almost ready for the definition of the finite tree $(T',P')$ that will 
support the proof $\Pi'$ of $\Ga$; the only thing left to care of is the 
well-founded part of $\Pi'$.
For this we first define $Z$ to consist of those successors of nodes in $Y$ 
that generate a finite subtree;
% \[
% Z \isdef P[Y] \cap \HF(T,P),
% \]
then it is easy to show that the collection $P^{*}[Z]$ of descendants of nodes
in $Z$ is finite.

With the above definitions we have all the material in hands to define a
\Focus-proof $\Pi' = (T',P',\Si',\pLab')$ of $\Ga$.
The basic idea is that $\Pi'$ will be based on the set $Y \cup P^{*}[Z]$, with
the nodes in $\wh{Y}$ providing the discharged assumptions of $\Pi'$.
Note however, that for a correct presentation of the discharge rule, every
companion node $u$ of such a leaf in $\wh{Y}$ needs to be provided with a 
successor $u^{+}$ that is labelled with the same annotated sequent as the 
companion node and the leaf.

First of all we set
\[
T' \isdef Y \cup P^{*}[Z] \cup \{ u^{+} \mid u \in \Ran(c) \}
\]
and 
\begin{align*}
P' \isdef &  \quad \{ (u,v) \in P \mid 
   u \in T' \setminus \Ran(c) \text{ and } v \in T' \}
\\ & \cup \{ (u,u^{+}) \mid u \in \Ran(c) \}
\\ & \cup \{ (u^{+},v) \mid u \in \Ran(c), (u,v) \in P \}
\end{align*}
The point of adding the nodes $u^+$ is to make space for applications of
the rule \RuDischarge{\dx} at companion nodes.
Furthermore, we put
\[
\Si'(u) \isdef \left\{ \begin{array}{ll}
   \Si(u) & \text{if } u \in T'
\\ \Si(t) & \text{if } u = t^{+} \text{ for some } t \in \Ran(c).
\end{array} \right.
\]
Finally, for the definition of the rule labelling $\pLab'$, we introduce a set $A
\isdef \{ \dx_{u} \mid u \in \Ran(c) \}$ of discharge tokens, and we define
\[
\pLab'(u) \isdef \left\{ \begin{array}{lll}
   \pLab(u)       & \text{if } & u \in T' \setminus (\wh{Y} \cup \Ran(c))
\\ \dx_{c(l)} & \text{if } & u = l \in \wh{Y},
\\ \RuDischarge{\dx_{u}}   & \text{if } & u \in \Ran(c)
\\ \pLab(t)       & \text{if } & 
      u = t^{+} \text{ for some } t \in \Ran(c).
\end{array} \right.
\]
It is straightforward to verify that with this definition, $\Pi'$ is indeed
a \Focus-proof of the sequent $\Ga$.
\end{proof}

\subsection{Thin and progressive proofs}

When we prove the soundness of our proof system it will be convenient to work
with (infinite) proofs that are in a certain normal form.
The idea here is that we restrict (as much as possible) attention to sequents
that are \emph{thin} in the sense that they do not feature formulas that are
both in and out of focus, and to proofs that are \emph{progressive} in the sense
that when (from the perspective of proof search) we move from the conclusion of
a boolean or fixpoint rule to its premise(s), we drop the principal formula.
Theorem~\ref{t:tpp} below states that we can make these assumptions without loss
of generality.

\begin{definition}
 An annotated sequent $\Sigma$ is \emph{thin} if there is no formula
$\phi \in \AFMC$ such that $\phi^f \in \Sigma$ and $\phi^u \in
\Sigma$.
Given an annotated sequent $\Si$, we define its \emph{thinning}
\[
\thin{\Si} \isdef \{ \phi^{f} \mid \phi^{f} \in \Si \} \cup \{ \phi^{u}
\mid \phi^{u} \in \Si, \phi^{f} \not\in \Si \}.
\]
A pre-proof $\Pi = (T,P,\Si,\pLab)$ is \emph{thin} if for all $v \in T$
with $\phi^f,\phi^u \in \Si_v$ we have that $\pLab_v =
\RuWeak$ and $\phi^u \notin \Si_u$ for the unique $u$ with $P v
u$.
\end{definition}

Note that one may obtain the thinning $\thin{\Si}$ from an annotated sequent
$\Si$ by removing the \emph{unfocused} versions of the formulas with a double
occurrence in $\Si$.

The definition of a thin proof implies that whenever a thin
proof contains a sequent that is not thin then this sequent is followed
by applications of the weakening rule until all the duplicate formulas
are weakened away. For example if the sequent $\Sigma_v =
p^u,p^f,q^u,q^f,r$ occurs in a thin proof then at $v$ and all of its
immediate successors there need to be applications of weakening until
only one annotated version of $p$ and one annotated version of $q$ is
left. This might look for instance as follows:
\begin{center}
\begin{prooftree}
\AxiomC{$\vdots$}
\noLine
\UnaryInfC{$p^f,q^u,r$}
\RightLabel{\RuWeak}
\UnaryInfC{$p^f,q^u,q^f,r$}
\RightLabel{\RuWeak}
\UnaryInfC{$p^u,p^f,q^u,q^f,r$}
\noLine
\UnaryInfC{$\vdots$}
\end{prooftree}
\end{center}

% A proof is \emph{progressive} if at any application of a boolean or fixpoint 
% rule, the principal formula does not appear in any of the premises.

\begin{definition}
An application of a boolean or fixpoint rule  at a node $u$ in a pre-proof
$\Pi = (T,P,\Si,\pLab)$ is \emph{progressive} if for the principal formula 
$\phi^a \in \Si_u$ it holds that $\phi^a \notin \Si_v$ for all $v$ with 
$Puv$.\footnote{%
   Note that since we assume guardedness, the principal formula is different
   from its residuals.
   } 
The proof $\Pi$ is \emph{progressive} if all applications of the boolean rules
and the fixpoint rules in $\Pi$ are progressive.
\end{definition}

Our main result is the following.
\begin{theorem}
\label{t:tpp}
Every \Focusinf-derivable sequent $\Phi$ has a thin and progressive
\Focusinf-proof.
\end{theorem}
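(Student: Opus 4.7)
My plan is to transform the given \Focusinf-proof $\Pi$ of $\Phi$ into a thin and progressive proof $\Pi'$ of $\Phi$ by a top-down reshaping of $\Pi$. We build $\Pi'$ recursively, maintaining the invariant that at each node the sequent in $\Pi'$ is the thinning of the corresponding sequent in $\Pi$, and that every application of a boolean or fixpoint rule uses the canonical context $\Sigma \isdef \Si_v \setminus \{\phi^a\}$, which ensures progressiveness since the residuals of those rules are always strictly smaller than the principal (using guardedness for the fixpoint cases).

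The central technical ingredient I would establish is a \emph{redundancy lemma} for \Focusinf-derivability: if $\Si$ is \Focusinf-derivable and contains a formula that is semantically subsumed by the rest of $\Si$, then $\Si$ minus that formula is also \Focusinf-derivable. Two instances suffice for our purposes: first, $\Si \setminus \{\phi^u\}$ is derivable whenever $\phi^f \in \Si$ (needed to produce the thinning of each sequent along the construction); second, whenever we replace a non-canonical premise of a boolean or fixpoint rule application by its canonical counterpart, the smaller premise is still derivable. The proof of the lemma proceeds by a coinductive transformation of an arbitrary proof of $\Si$ into a proof of the smaller sequent, with case distinction on the root rule: when the redundant formula is itself the principal, we either redirect the rule to the surviving copy (thinness case) or skip it, since its action is already realized by the residuals already present (progressiveness case); otherwise we pass the rule through and recurse on the premises. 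With this lemma in hand, the top-down construction of $\Pi'$ goes through by invoking the lemma whenever the invariant demands continuing from a smaller sequent than the one found in $\Pi$.

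The hard part will be to maintain the successful-branches condition of Definition~\ref{d:proof}. The weakening steps inserted by the transformation neither create nor remove $\RuBox$ applications, and do not themselves apply the rules \RuFocus or \RuUnfocus, so the requirements that each infinite branch contains infinitely many $\RuBox$ applications and eventually no focus-rule applications carry over without effort. What requires more care is the eventual presence of a formula in focus on each infinite branch: when the redundancy lemma removes a formula that happens to be in focus — for instance, a focused principal of $\RuMu$, whose residual is annotated $u$ — some other formula on the tail of the branch must remain in focus. This is trivial in the thinness case, since the removed unfocused $\phi^u$ is still represented by the surviving $\phi^f$; and in the progressiveness case it reduces, with the help of Proposition~\ref{p:lr1}, to bookkeeping showing that along any tail of an infinite branch of $\Pi$, the context $\Sigma$ retains enough focused material to sustain the focus condition after pruning. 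Finally, the coinductive character of the construction requires checking that the resulting tree is a well-defined pre-proof in the sense of Definition~\ref{d:proof}, and in particular that it introduces no open leaves.
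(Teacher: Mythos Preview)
The main gap is that your stated invariant — that the $\Pi'$-sequent at each node is the thinning of the corresponding $\Pi$-sequent — cannot be maintained once you also enforce the canonical-context choice. Consider a node $u$ in $\Pi$ where $\RuOr$ is applied to $(\phi\lor\psi)^a$ with a context $\Sigma$ that (redundantly) contains $(\phi\lor\psi)^a$; the premise $\Si_v$ then contains $\phi^a,\psi^a$ \emph{and} $(\phi\lor\psi)^a$. Both $\Si_u$ and $\Si_v$ may already be thin, so thinning leaves them unchanged. But in $\Pi'$ you insist on the canonical context, which drops $(\phi\lor\psi)^a$ from the premise: the $\Pi'$-premise is strictly smaller than $\thin{\Si_v}$. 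So the invariant fails immediately, and once it fails the correspondence with $\Pi$ breaks down — the $\Pi$-subproof above $v$ still has $(\phi\lor\psi)^a$ available and may apply $\RuOr$ to it again, whereas your $\Pi'$-subproof no longer carries that formula. You then appeal to your redundancy lemma to \emph{re-derive} the smaller sequent, but that yields a fresh proof, not a continuation of the correspondence with $\Pi$; the invariant you stated does nothing to control that new proof's structure or its infinite branches, and the whole construction becomes circular.

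The correct invariant is more permissive than ``thinning'': the $\Pi'$-sequent may omit formulas from the $\Pi$-sequent provided their \emph{residuals are already present}. The paper makes this precise via a backwards-closure operator $\backcl$ (Definition~\ref{d:backcl}): for instance $(\phi\lor\psi)^a\in\backcl(\Sigma)$ whenever $\phi^a,\psi^a\in\Sigma$, and similarly for the other connectives. The invariant that actually propagates is $\Si_t \subseteq \backcl(\Xi_t)$, and the single-step statement you want is Proposition~\ref{p:ps}: given one rule application with conclusion $\Gamma$ and any $\Gamma'$ with $\Gamma\subseteq\backcl(\Gamma')$, one can build a short thin, progressive derivation of $\Gamma'$ whose open assumptions $\Delta'$ again satisfy $\Delta\subseteq\backcl(\Delta')$ for the corresponding premises $\Delta$. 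Your two ``instances'' of redundancy are exactly items~\ref{i:thinning} and~\ref{i:proof rules} of Proposition~\ref{p:progressive facts}, read as facts about $\backcl$.

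For the successful-branches condition, Proposition~\ref{p:lr1} alone is not enough: it pushes focus \emph{downward} in $\Pi'$, but first you must know that focus survives the passage from $\Pi$ to $\Pi'$ somewhere along the branch. The load-bearing observation you are missing is that at a $\RuBox$ node every formula is modal, so there $\Gamma\subseteq\backcl(\Gamma')$ collapses to $\morefocus{\Gamma}{\Gamma'}$ (Proposition~\ref{p:pf}\eqref{i:pf:2}); hence a focused formula in $\Gamma$ forces one in $\Gamma'$. Since every infinite branch of $\Pi'$ meets $\RuBox$ infinitely often, this supplies infinitely many nodes with a focused formula, and only \emph{then} does Proposition~\ref{p:lr1} fill in the gaps.
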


For the proof of Theorem~\ref{t:tpp} we need some preparations.
Recall that we defined the linear order $\sqsubseteq$ on annotations such that 
$u \sqsubset f$.

\begin{definition}
\label{d:mf}
Let $\Sigma$ and $\Gamma$ be annotated sequents. 
We define $\morefocus{\Gamma}{\Sigma}$ to hold if for all $\phi^a \in \Gamma$
there is a $b \sqsupseteq a$ such that $\phi^b \in \Sigma$.
\end{definition}

\begin{definition} \label{d:backcl}
Let $\Si$ be a set of annotated formulas.
We define $\backcl_0(\Sigma)$ as the set of all annotated formulas $\phi^a$ 
such that either
\begin{enumerate}
\item $\phi^b \in \Sigma$ for some $b \sqsupseteq a$;
\item 
 $\phi = \phi_0 \lor \phi_1$, and $\phi_0^a \in \Sigma$ and $\phi_1^a \in \Sigma$;
\item 
 $\phi = \phi_0 \land \phi_1$, and $\phi_0^a \in \Sigma$ or $\phi_1^a \in \Sigma$;
\item $\phi = \mu x . \phi_0$ and $\phi_0(\phi)^u \in \Sigma$; or 
\item $\phi = \nu x . \phi_0$ and $\phi_0(\phi)^a \in \Sigma$.
\end{enumerate}
The map $\backcl_0$ clearly being a monotone operator on the sets of annotated
formulas, we define the \emph{backwards closure} of $\Si$ as the least fixpoint
$\backcl(\Sigma)$ of the operator $\Gamma \mapsto \Sigma \cup 
\backcl_0(\Gamma)$.
\end{definition}

In words, $\backcl(\Sigma)$ is the least set of annotated formulas such that
$\Sigma \subseteq \backcl(\Sigma)$ and $\backcl_0(\backcl(\Sigma)) \subseteq 
\backcl(\Sigma)$. 
The following proposition collects some basic properties of $\backcl$; recall 
that we abbreviate $\allfocus{\Sigma} = \allfocus{\uls{\Si}}$, that is, 
$\allfocus{\Sigma}$ consists of the annotated formulas $\phi^{f}$ such that 
$\phi^a \in \Sigma$ for some $a$.

\begin{proposition} 
\label{p:progressive facts}\label{p:pf}
The map $\backcl$ is a closure operator on the collection of sets of annotated 
formulas. 
Furthermore, the following hold for any pair of annotated sequents $\Ga,\Si$.
\begin{enumerate}
\item 
If $\morefocus{\Gamma}{\Sigma}$ then $\Gamma \subseteq \backcl(\Sigma)$.
\item 
\label{i:pf:2} 
If $\Gamma \subseteq \backcl(\Sigma)$ and $\Gamma$ contains only atomic or 
modal formulas, then $\morefocus{\Gamma}{\Sigma}$.
\item \label{i:proof rules} 
If $\Gamma$ is the conclusion and $\Sigma$ is one of the premises of an 
application of one of the rules \RuOr, \RuAnd, \RuMu, or \RuNu, then 
$\Gamma \subseteq \backcl(\Sigma)$.
\item \label{i:thinning} 
$\{\phi^u,\phi^f\} \cup \Sigma \subseteq
\backcl(\{\phi^f\} \cup \Sigma)$.
\item \label{i:more focus} 
If $\phi^a \in \backcl(\Sigma)$ for some $a$ then $\phi^u, \phi^f \in 
\backcl(\allfocus{\Sigma})$.
% \item \label{i:some focus} 
% If $\phi^f \in \backcl(\Sigma)$ then $\Sigma$ contains at least one
% formula that is in focus.
\end{enumerate}
\end{proposition}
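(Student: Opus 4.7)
\begin{proofsketch}
The plan is to dispatch items~1--4 by a direct unpacking of Definition~\ref{d:backcl}, and to save the substantive work for item~\ref{i:more focus}, which requires a small induction.

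The closure-operator property is essentially built into the definition. Extensivity and the containment $\backcl_0(\backcl(\Sigma)) \subseteq \backcl(\Sigma)$ read off the fixpoint equation $\backcl(\Sigma) = \Sigma \cup \backcl_0(\backcl(\Sigma))$. Monotonicity of $\backcl_0$ is visible from its five clauses and lifts to $\backcl$ by the usual least-fixpoint argument. Idempotence then follows because $\backcl(\Sigma)$ is already a fixpoint of the operator $\Gamma \mapsto \backcl(\Sigma) \cup \backcl_0(\Gamma)$.

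For item~1, a witness $b \sqsupseteq a$ with $\phi^b \in \Sigma$ places $\phi^a$ into $\backcl_0(\Sigma)$ via clause~(i). The converse of item~\ref{i:pf:2} rests on the key observation that among the five clauses of $\backcl_0$, only clause~(i) can introduce an atomic or modal formula (the remaining clauses all require a boolean or fixpoint shape); so for an atomic or modal $\phi^a \in \backcl(\Sigma)$ the witness must be provided by clause~(i) directly, i.e.\ by some $\phi^b \in \Sigma$ with $b \sqsupseteq a$. Item~\ref{i:proof rules} is verified rule by rule: for \RuOr and \RuAnd the active formula is produced by clause~(ii) or (iii) and the side context by clause~(i); for \RuMu, clause~(iv) is deliberately designed to output the principal $\mu$-formula with \emph{either} annotation from a single unfocused premise; for \RuNu clause~(v) preserves the annotation of the unfolding. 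Item~\ref{i:thinning} is then immediate once one notes that $\phi^u$ arises by clause~(i) from $\phi^f$ via $f \sqsupset u$.

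The bulk of the work is item~\ref{i:more focus}, which I expect to be the main (mild) obstacle. My plan is to prove by induction on the least-fixpoint stages of $\backcl(\Sigma)$ the strengthened claim that for every $\phi^a \in \backcl(\Sigma)$ both $\phi^u$ and $\phi^f$ lie in $\backcl(\allfocus{\Sigma})$. The base case uses $\phi^f \in \allfocus{\Sigma}$ and then clause~(i) to obtain $\phi^u$. In the inductive step the strengthening is essential: when $\phi^a$ enters via clause~(ii), (iii) or (v), rebuilding \emph{both} $\phi^u$ and $\phi^f$ in $\backcl(\allfocus{\Sigma})$ requires both annotations of the relevant residuals, and the inductive hypothesis supplies exactly that. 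Clause~(iv) is even smoother, since a single application of it produces both annotations of the $\mu$-formula from the unfocused unfolding. No further subtlety arises.
\end{proofsketch}
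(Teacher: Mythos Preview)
Your proposal is correct and follows essentially the same route as the paper: items~1--4 are dispatched directly from the clauses of $\backcl_{0}$, and item~\ref{i:more focus} is handled by a case analysis over those clauses, lifted to $\backcl$ by induction on the fixpoint stages. One small simplification: your strengthened inductive invariant (tracking both $\phi^{u}$ and $\phi^{f}$) is harmless but unnecessary, since once $\phi^{f}\in\backcl(\allfocus{\Sigma})$ you get $\phi^{u}$ for free via clause~(i); the paper accordingly tracks only the focused version.
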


\begin{proof}
These statements are straightforward consequences of
the Definitions~\ref{d:mf} and~\ref{d:backcl}.

For instance, in order to establish part \eqref{i:more focus} it
suffices to prove the following:
\begin{equation}
\label{eq:974}
\phi^{a} \in \backcl_{0}(\Si) \text{ only if } \phi^{f} \in \backcl(\Si^{f}).
\end{equation}
To see this, take an arbitary annotated formula $\phi^{a} \in \backcl_{0}(\Si)$
and make a case distinction as to the reason why $\phi^{a} \in \backcl_{0}(\Si)$.
(1) If $\phi^b \in \Sigma$ for some $b \sqsupseteq a$, then $\Phi^{f} \in \Si^{f}$,
and so $\phi^{f} \in \backcl_{0}(\Si) \sse \backcl(\Si)$.
(2) If $\phi = \phi_0 \lor \phi_1$, and $\phi_0^a, \phi_1^a \in \Sigma$
then $\phi_0^f, \phi_1^f \in \Si^{f}$, so that $\phi^{f} \in \Si^{f}$.
(3) If $\phi = \phi_0 \land \phi_1$, and $\phi_i^a \in \Sigma$ for some $i \in
\{0,1\}$, then $\phi_i^f \in \Sigma^{f}$ so that $\phi^{f}\in\Si^{f}$.
(4) If $\phi = \mu x . \phi_0$ and $\phi_0(\phi)^u \in \Sigma$, then clearly
also $\phi_0(\phi)^u \in \backcl(\Si)$, and so $\phi^{a} \in 
\backcl(\backcl(\Si)) \sse \backcl(\Si)$.
Finally, (5) if $\phi = \nu x . \phi_0$ and $\phi_0(\phi)^a \in \Sigma$, then
$\phi_0(\phi)^f \in \Si^{f}$, so that $\phi^{f} \in \backcl_{0}(\Si) \sse 
\backcl(\Si)$ indeed.
\end{proof}

% \begin{definition}
% A finite pre-proof $\Pi = (T,P,\Si,\pLab)$ with open assumptions is 
% \emph{non-circular} if it features no applications of the discharge rule, and 
% \emph{local} if in addition it does not contain any application of the rule
% \RuBox.
% \end{definition}
% 
\begin{definition}
A pre-proof $\Pi'$ of $\Gamma'$ is a \emph{simulation}
of a pre-proof $\Pi$ of $\Gamma$ if
% \begin{enumerate}
% \item
$\Gamma \subseteq \backcl(\Gamma')$, and
% \item 
for every open assumption $\Delta'$ of $\Pi'$ there is an open assumption 
$\Delta$ of $\Pi$ such that $\Delta \subseteq \backcl(\Delta')$.
% \end{enumerate}
\end{definition}

In the proof below we will frequently use the following proposition, the proof
of which is straightforward.

\begin{proposition} 
	\label{p:thinning}
Let $\Gamma$ and $\Delta$ be two sequents such that $\Gamma \subseteq
\backcl(\Delta)$. 
Then $\thin{\De}$ is thin and satisfies $\Gamma \subseteq \backcl(\thin{\De})$,
and there is a thin, progressive proof $\Pi$ of $\Delta$, which has $\thin{\De}$
as its only open assumption and uses only the weakening rule.
\end{proposition}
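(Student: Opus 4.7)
The plan is to verify the three claims of the proposition: (i) $\thin{\De}$ is thin, (ii) $\Gamma \sse \backcl(\thin{\De})$, and (iii) the desired proof exists. The first is immediate from the definition of $\thin{\De}$: if $\phi^{f} \in \thin{\De}$ then by construction $\phi^{u} \notin \thin{\De}$, so $\thin{\De}$ cannot contain both annotated versions of any formula.

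For (ii) I would first establish the intermediate statement $\De \sse \backcl(\thin{\De})$. To see this, pick an arbitrary $\phi^{a} \in \De$. If $\phi^{a} \in \thin{\De}$ there is nothing to prove, since $\thin{\De} \sse \backcl(\thin{\De})$. Otherwise, by definition of $\thin{\cdot}$, we must have $a = u$ and $\phi^{f} \in \De$; but then $\phi^{f} \in \thin{\De}$, and Proposition~\ref{p:pf}(\ref{i:thinning}) (or just clause~(1) of Definition~\ref{d:backcl}) gives $\phi^{u} \in \backcl(\thin{\De})$. Combining this with the assumption $\Gamma \sse \backcl(\De)$ and using that $\backcl$ is a closure operator, we get
\[
\Gamma \;\sse\; \backcl(\De) \;\sse\; \backcl(\backcl(\thin{\De})) \;=\; \backcl(\thin{\De}),
\]
as required.

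For (iii) I would construct $\Pi$ explicitly as a linear tree using only \RuWeak. Enumerate the ``redundant'' formulas of $\De$, namely list $\phi_{1}^{u},\ldots,\phi_{n}^{u}$ for those $\phi_{i}$ with both $\phi_{i}^{u},\phi_{i}^{f} \in \De$. Set $\De_{0} \isdef \De$ and $\De_{k+1} \isdef \De_{k} \setminus \{\phi_{k+1}^{u}\}$, so that $\De_{n} = \thin{\De}$. The proof $\Pi$ has nodes $v_{0},\ldots,v_{n}$ with $\Si_{v_{k}} = \De_{k}$, each internal node $v_{k}$ carrying the rule label $\RuWeak$ (with principal formula $\phi_{k+1}^{u}$ and unique child $v_{k+1}$), while $v_{n}$ is the single open assumption labelled $\thin{\De}$. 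Progressivity is vacuous because only \RuWeak is used. For thinness, observe that every node $v_{k}$ with $k<n$ still contains the pair $\phi_{k+1}^{u},\phi_{k+1}^{f}$, and at such a node we indeed apply $\RuWeak$ to delete $\phi_{k+1}^{u}$; after the last weakening step, at $v_{n} = \thin{\De}$ no duplicate pair remains, so the thinness condition is satisfied everywhere.

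No step here is really hard: the only subtle point is to make sure that when building the weakening chain the order of removals does not matter and that each intermediate sequent still witnesses the thinness condition in the form stated in the definition of a thin pre-proof. Once this bookkeeping is done, all three claims follow directly, and the construction gives the proof $\Pi$ required by the proposition.
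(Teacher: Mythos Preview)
Your proposal is correct and follows essentially the same route as the paper: both arguments write $\De$ as $\thin{\De}$ together with the redundant unfocused copies $\phi_1^u,\ldots,\phi_n^u$, build $\Pi$ as a chain of weakenings removing these one by one, and derive $\Gamma \sse \backcl(\thin{\De})$ from $\De \sse \backcl(\thin{\De})$ via the closure-operator property (the paper invokes item~\ref{i:thinning} of Proposition~\ref{p:progressive facts} for the latter inclusion, which is exactly what you use). Your write-up is simply more explicit about the bookkeeping for thinness and progressivity.
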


\begin{proof}
It is clear that $\thin{\De}$ is thin and that we may write $\De = \{\phi_1^u,
\dots,\phi_n^u\} \cup \thin{\De}$, where $\phi_{1},\ldots,\phi_{n}$ are the 
formulas that occur both focused and unfocused in $\De$. 
We then let $\Pi'$ be the proof that weakens the formulas $\phi_1^u,\dots,
\phi_n^u$ one by one.
By item~\ref{i:thinning} of Proposition~\ref{p:progressive facts} it
follows that $\Delta \subseteq \backcl(\thin{\De})$. 
Thus, $\Gamma
\subseteq \backcl(\Delta)$ implies $\Gamma \subseteq \backcl(\thin{\De})$
because $\backcl$ is a closure operator.
% The amount of detail in the proof is much higher than later on!
\end{proof}

The key technical observation in the proof of Theorem~\ref{t:tpp} is 
Proposition~\ref{p:ps} below.

\begin{definition}
\label{p:baspr}
A pre-proof $\Pi = (T,P,\Si,\pLab)$ is \emph{basic} if $T$ consists
of the root $r$ and its successors, $\pLab_{r} \neq \RuDischarge{}$ and 
$\pLab_{u} = \star$ for every successor of $r$.
\end{definition}

A basic derivation is thus a pre-proof $\Pi = (T,P,\Si,\pLab)$ of $\Si_{r}$
(where $r$ is the root of $\Pi$) with open assumptions $\{ \Si_{u} \mid u \neq
r \}$.

\begin{proposition} 
\label{p:progressive simulation} \label{p:ps}
Let $\Pi$ be a basic pre-proof of $\Gamma$ with root $r$ and let $\Gamma'$ 
be a sequent such that $\Gamma \subseteq \backcl(\Gamma')$. 
Then there is a thin and progressive simulation $\Pi'$ of $\Pi$ that proves the 
sequent $\Gamma'$.
Moreover, if $\pLab_{r} \neq \RuFocus, \RuUnfocus$ then $\Pi'$ does not use
$\RuFocus$ or $\RuUnfocus$, and if $\pLab_{r} = \RuBox$ then $\RuBox$ is
also the rule applied at the root of $\Pi'$.
\end{proposition}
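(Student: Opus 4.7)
The plan is to argue by case analysis on the rule $\pLab_r$ labelling the root of the basic pre-proof $\Pi$. In each case I aim to keep $\Pi'$ as small as possible: either a single leaf carrying $\Gamma'$ whose backward closure already covers an open assumption of $\Pi$, or a short derivation that mirrors $\pLab_r$ at the root of $\Pi'$. By Proposition~\ref{p:thinning} I may begin with a stretch of $\RuWeak$-applications at the root of $\Pi'$ that turns $\Gamma'$ into its thinning $\thin{\Gamma'}$, preserving $\Gamma \sse \backcl(\thin{\Gamma'})$, so without loss of generality $\Gamma'$ is thin; thinness simplifies verification of the simulation condition and ensures progressiveness for any boolean or fixpoint rule subsequently applied, since a principal formula cannot then reappear in the premise.

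For the axioms $\pLab_r \in \{\AxLit,\AxTop\}$, $\Pi$ has no open assumptions; Proposition~\ref{p:pf}(\ref{i:pf:2}) gives $\morefocus{\Gamma}{\Gamma'}$, showing that the atomic principal formulas of $\Gamma$ appear in $\Gamma'$ with at least as much focus, so I weaken $\Gamma'$ down to the axiomatic sequent. For $\pLab_r = \RuWeak$, the unique premise of $\Pi$ is a subset of $\Gamma$ and hence automatically in $\backcl(\Gamma')$, so the thinned leaf $\Gamma'$ suffices. For $\pLab_r = \RuUnfocus$ the principal formula is $\phi^f \in \Gamma$, forcing $\phi^f \in \Gamma'$, so the premise $\phi^u,\Sigma$ lies directly in $\backcl(\Gamma')$. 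For $\pLab_r = \RuFocus$ the moreover clause permits focus rules in $\Pi'$, and I apply $\RuFocus$ at the root in the subcase where $\phi^f$ is not already present in $\Gamma'$ (then $\phi^u \in \Gamma'$ by the hypothesis).

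The substantive cases are the boolean and fixpoint rules. Let $\chi^a$ be the principal formula of $\pLab_r$. I split on whether some annotated copy of $\chi$ appears in $\Gamma'$. If $\chi^b \in \Gamma'$ for some $b \sqsupseteq a$, I mirror the rule at the root of $\Pi'$ by applying it to $\chi^b$; the resulting single open assumption (for $\RuOr$, $\RuMu$ or $\RuNu$) or pair of open assumptions (for $\RuAnd$) then simulates the corresponding premise(s) of $\Pi$. The residuals carry at least as much focus as those produced by the original rule step, and the fact that $\chi^b$ can be recovered from the residuals via the corresponding clause of $\backcl_0$ yields the inclusion $\backcl(\Gamma') \sse \backcl(\text{new premise})$ for the untouched context. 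If instead $\chi$ is absent from $\Gamma'$ with any suitable annotation, then $\chi^a$ must enter $\backcl(\Gamma')$ through one of the clauses (2)--(5) of $\backcl_0$ in Definition~\ref{d:backcl}, and inspection of the relevant clause shows that in each case at least one residual of $\pLab_r$ already lies in $\backcl(\Gamma')$, so the thinned leaf $\Gamma'$ already simulates $\Pi$. Progressiveness in the mirroring subcase follows from guardedness together with thinness.

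The main obstacle is the case $\pLab_r = \RuBox$, constrained additionally by the moreover clause that $\RuBox$ be the rule at the root of $\Pi'$. Since $\Gamma$ consists purely of modal formulas, Proposition~\ref{p:pf}(\ref{i:pf:2}) yields $\morefocus{\Gamma}{\Gamma'}$, so every $\Box$- or $\dia$-formula of $\Gamma$ appears in $\Gamma'$ with at least as much focus; the key step is to verify that the ambient hypotheses force $\Gamma'$ itself to be of the required shape $\{\Box\phi^b\} \cup \dia\Sigma'$ so that a direct application of $\RuBox$ at the root is possible, after which the resulting premise stands in the $\morefocus{\cdot}{\cdot}$-relation with the premise of $\Pi$ and hence simulates it. A secondary subtlety is to check throughout that no sub-strategy above introduces a $\RuFocus$ or $\RuUnfocus$ step when $\pLab_r$ is not itself a focus rule, which follows from the clean split between the mirroring and thinned-leaf strategies.
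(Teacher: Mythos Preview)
Your overall approach matches the paper's: case analysis on $\pLab_r$, a preliminary thinning of $\Gamma'$ via Proposition~\ref{p:thinning}, and for the boolean and fixpoint rules the dichotomy between ``$\chi^b \in \Gamma'$ for some $b \sqsupseteq a$'' (mirror the rule) versus ``$\chi^a$ enters $\backcl(\Gamma')$ only via clauses (2)--(5) of $\backcl_0$'' (take $\Gamma'$ itself as the open leaf). That part is sound and essentially identical to the paper.

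There is, however, a genuine gap in your treatment of $\RuFocus$. You assert that if $\phi^f \notin \Gamma'$ then $\phi^u \in \Gamma'$, but the hypothesis $\Gamma \subseteq \backcl(\Gamma')$ only yields $\phi^u \in \backcl(\Gamma')$. If $\phi$ is a disjunction $\phi_0 \lor \phi_1$ with $\phi_0^u,\phi_1^u \in \backcl(\Gamma')$, then $\phi$ need not appear in $\Gamma'$ at all, and you have nothing to apply a single $\RuFocus$ step to. The paper handles this differently: it applies $\RuFocus$ repeatedly until \emph{every} formula of $\Gamma'$ is in focus, arriving at $\allfocus{(\Gamma')}$, and then invokes Proposition~\ref{p:pf}(\ref{i:more focus}) to conclude $\phi^f,\Sigma \subseteq \backcl(\allfocus{(\Gamma')})$. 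Your $\RuUnfocus$ case has the analogous misstatement---$\phi^f \in \Gamma$ does not force $\phi^f \in \Gamma'$---but there the conclusion happens to survive, since $\phi^u \in \backcl(\{\phi^f\}) \subseteq \backcl(\backcl(\Gamma')) = \backcl(\Gamma')$ regardless.

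Your $\RuBox$ case also overstates what can be verified: nothing in the hypotheses forces $\Gamma'$ to be exactly of the shape $\{\Box\phi^b\} \cup \dia\Sigma'$; it may contain arbitrary extra formulas. The paper simply weakens those away with $\RuWeak^*$ before applying $\RuBox$. This admittedly means $\RuBox$ is not literally at the root of $\Pi'$ when such weakenings are needed, but the paper glosses over this, and for the downstream use in Theorem~\ref{t:tpp} what matters is only that $\RuBox$ appears on every branch of $\Pi'$.
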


Before we prove this proposition, we first show how our main theorem follows
from it.

\begin{proofof}{Theorem~\ref{t:tpp}}
Let $\Pi = (T,P,\Si,\Ru)$ be a \Focusinf-proof of the sequent $\Phi$, then by
definition we have $\Si_{r} = \Phi^{f}$, where $r$ is the root of $\Pi$.
Obviously we have $\Si_{r} \sse \backcl(\Si_{r})$.

We will transform $\Pi$ into a thin proof of $\Phi$ as follows.
On the basis of Proposition~\ref{p:ps} it is straightforward to define a map 
$\Xi$ which assigns a thin sequent $\Xi_{t}$ to each node $t \in T$, in such
a way that $\Xi_{r} \isdef \Si_{r}$, 
and for every $t \in T$ we find $\Si_{t} \sse \backcl(\Xi_{t})$,
while we also have a thin and progressive pre-proof $\Pi_{t}$ of the sequent 
$\Xi_{t}$ from the assumptions $\{ \Xi_{u} \mid Ptu \}$.
% associated with the successors of $t$.
In addition we know that if $\Ru_{t} \neq \RuFocus, \RuUnfocus$, then the 
derivation $\Pi_{t}$ does not involve the focus rules, and that if $\pLab_{t}
= \RuBox$ then $\RuBox$ is also the rule applied at the root of $\Pi_{t}$.
We obtain a thin and progressive proof $\Pi'$ from this by simply adding all 
these thin and progressive derivations $\Pi_{t}$ to the `skeleton structure'
$(T,P,\Xi)$, in the obvious way.

It is easy to show that $\Pi'$ is a pre-proof, and the additional conditions 
on the focus rules and $\RuBox$ guarantee that every infinite branch of $\Pi'$
witnesses infinitely many applicatinos of $\RuBox$, but only finitely
many applications of the focus rules.
To prove the remaining condition on focused formulas, consider an infinite branch
$\al = (v_{n})_{n\in\om}$ of $\Pi'$.
It is easy to see that by construction we may associate an infinite branch 
$\be = (t_{n})_{n\in\om}$ of $\Pi$ with $\al$, together with a map $f: \om \to 
\om$ such that $\Si_{t_{n}} \sse \backcl(\Xi_{v_{f(n)}})$.
This path $\be$ is successful since $\Pi$ is a proof, and so there is a $k \in
\om$ such that for all $n \geq k$ the sequent $\Si_{t_{n}}$ contains a formula
in focus, and $\pLab(t_{n}) \neq \RuFocus$.
But by Proposition~\ref{p:pf}(\ref{i:pf:2}) for any $n\geq k$ such that 
$\pLab(t_{n}) = \RuBox$, the sequent $\Xi_{v_{f(n)}}$ must contain a focused
formula as well.
Since $\al$ features infinitely many applications of $\RuBox$, this implies 
the existence of infinitely many nodes $v_{m}$ on $\al$ such that $\Xi_{v_{m}}$
contains a focused formula.
And since the focus rule is applied only finitely often on $\al$, by 
Proposition~\ref{p:lr1} it follows from this that $\al$ actually contains 
cofinitely many such nodes, as required.

Furthermore it is obvious that, being constructed by glueing together thin and
progressive proofs, $\Pi'$ has these properties as well.
Finally, since $\Xi_{r} = \Si_{r} = \Phi^{f}$, we have indeed obtained a proof 
for the plain sequent $\Phi$.
\end{proofof}

\begin{proofof}{Proposition~\ref{p:ps}}
By definition of a basic proof, $\Pi = (T,P,\Si,\Ru)$ consists of nothing more
than a single application of the rule $\Ru \isdef \pLab_{r}$ to the annotated 
sequent $\Gamma = \Si_{r}$, where $r$ is the root of $\Pi$.
Because of Proposition~\ref{p:thinning} we can assume without loss of generality
that $\Gamma'$ is thin. 
We then make a case distinction depending on the rule $\Ru$.

Recall that we use $\RuWeak^*$ to denote a finite (potentially zero) number of 
successive applications of weakening.

\begin{description}

\item[\it Case for \AxLit:]
In this case $\Pi$ is of the form
\begin{center}
\begin{prooftree}
 \AxiomC{\phantom{X}}
 \RightLabel{\AxLit}
 \UnaryInfC{$p^a, \atneg{p}^b$}
\end{prooftree}
\end{center}
The assumption is that $\{p^a,\atneg{p}^b\} \subseteq \backcl(\Gamma')$.
By item~\ref{i:pf:2} in Proposition~\ref{p:progressive facts} it
follows that $p^a,\atneg{p}^b \in \Gamma'$.
We can thus define $\Pi'$ to be the proof
\begin{center}
\begin{prooftree}
 \AxiomC{\phantom{X}}
 \RightLabel{\AxLit}
 \UnaryInfC{$p^a, \atneg{p}^b$}
 \RightLabel{$\RuWeak^*$}
 \UnaryInfC{$\Gamma'$}
\end{prooftree}
\end{center}

\item[\it Case for \AxTop:]
In this case $\Pi$ is of the form
\begin{center}
\begin{prooftree}
 \AxiomC{\phantom{X}}
 \RightLabel{\AxTop}
 \UnaryInfC{$\top^a$}
\end{prooftree}
\end{center}
From the assumption that $\top^a \subseteq \backcl(\Gamma')$ it follows
with item~\ref{i:pf:2} of Proposition~\ref{p:progressive facts}
that that $\top^a \in \Gamma'$. We define $\Pi'$ to be the proof.
\begin{center}
\begin{prooftree}
 \AxiomC{\phantom{X}}
 \RightLabel{\AxLit}
 \UnaryInfC{$\top^a$}
 \RightLabel{$\RuWeak^*$}
 \UnaryInfC{$\Gamma'$}
\end{prooftree}
\end{center}

\item[\it Case for \RuOr:]
In this case $\Gamma = \phi_0 \lor \phi_1,\Sigma$ and $\Pi$ is of
the form
\begin{center}
\begin{prooftree}
 \AxiomC{$\phi_0^a,\phi_1^a,\Sigma$}
 \RightLabel{\RuOr}
 \UnaryInfC{$(\phi_0 \lor \phi_1)^a,\Sigma$}
\end{prooftree}
\end{center}
Let $\phi \isdef \phi_0 \lor \phi_1$. Because $\Gamma \subseteq
\backcl(\Gamma')$ it follows that $\phi^a \in \backcl(\Gamma')$.
By definition of $\backcl$ there are two cases for why this might hold,
either $\phi^b \in \Gamma'$ for $b \sqsupseteq a$ or $\phi_0^a \in
\backcl(\Gamma')$ and $\phi_1 \in \backcl(\Gamma')$.

In the latter case where $\phi_0^a \in \backcl(\Gamma')$ and
$\phi_1 \in \backcl(\Gamma')$ we can let $\Pi'$ consist of just the
sequent $\Gamma'$. This proof is thin and progressive and it clear
follows that $\phi_0^a,\phi_1^a,\Sigma \subseteq \backcl(\Gamma')$
because $\Sigma \subseteq \Gamma \subseteq \backcl(\Gamma')$.

In the former case, where $\phi^b \in \Gamma'$ for some $b
\sqsupseteq a$, consider the proof
\begin{center}
\begin{prooftree}
 \AxiomC{$\phi_0^b,\phi_1^b,\Gamma' \setminus \{\phi^b\}$}
 \RightLabel{\RuOr}
 \UnaryInfC{$(\phi_0 \lor \phi_1)^b,\Gamma' \setminus \{\phi^b\}$}
\end{prooftree}
\end{center}
We let $\Pi'$ be this proof. Clearly, this is a proof of $\Gamma' =
(\phi_0 \lor \phi_1)^b,\Gamma' \setminus \{\phi^b\}$ and it is
progressive. Moreover, we have from the definition of $\backcl$ that
$\phi_0^a,\phi_1^a \subseteq \backcl(\phi_0^b, \phi_1^b)$,
as $b \sqsupseteq a$. By item~\ref{i:proof rules} of
Proposition~\ref{p:progressive facts} it holds that $\Gamma' \subseteq
\backcl(\phi_0^b,\phi_1^b,\Gamma' \setminus \{\phi^b\})$. By
assumption we have that $\Gamma \subseteq \backcl(\Gamma')$ and hence
$\Sigma \subseteq \Gamma \subseteq \backcl(\Gamma') \subseteq
\backcl(\phi_0^b,\phi_1^b,\Gamma' \setminus \{\phi^b\})$.
Putting all of these together it follows that
\[
 \phi_0^a,\phi_1^a,\Sigma \subseteq
\backcl(\phi_0^b,\phi_1^b,\Gamma' \setminus \{\phi^b\}).
\]
It remains to be seen that $\Pi$ can be made thin. For the sequent
$\Gamma'$ at the root of $\Pi$ we have already established that it is
thin. It might be, however, that the open assumption
$\phi_0^b,\phi_1^b,\Gamma' \setminus \{\phi^b\}$ is not thin.
If this is the case we can simply apply Proposition~\ref{p:thinning} and
obtain the required proof.

\item[\it Case for \RuAnd:]
In this case $\Gamma = \phi_0 \land \phi_1,\Sigma$ and $\Pi$ is of
the form
\begin{center}
\begin{prooftree}
 \AxiomC{$\phi_0^a,\Sigma$}
 \AxiomC{$\phi_1^a,\Sigma$}
 \RightLabel{\RuAnd}
 \BinaryInfC{$(\phi_0 \land \phi_1)^a,\Sigma$}
\end{prooftree}
\end{center}
Let $\phi \isdef \phi_0 \land \phi_1$. Because $\Gamma \subseteq
\backcl(\Gamma')$ it follows that $\phi^a \in \backcl(\Gamma')$.
By the definition $\backcl$ we may split into two cases: either $\phi^b \in 
\Gamma'$ for $b \sqsupseteq a$ or $\phi_i^a \in \backcl(\Gamma')$ for some 
$i \in \{0,1\}$.

In the subcase where $\phi_i^a \in \backcl(\Gamma')$ for some $i \in
\{0,1\}$ we let $\Pi'$ just be the sequent $\Gamma'$. This sequent is
thin and the proof is trivially progressive. We need to show that there
is some open assumption $\Delta_i$ of $\Pi$ such that $\Delta_i
\subseteq \backcl(\Gamma')$. Let this be the assumption $\phi_i^a,
\Sigma$. We already know that $\phi_i^a \in \backcl(\Gamma')$, so we
it only remains to be seen that $\Sigma \subseteq \backcl(\Gamma')$. But
this follows because $\Sigma \subseteq \Gamma$ and $\Gamma \subseteq
\backcl(\Gamma')$.

In the other subcase we have that $\phi^b \in \Gamma'$ for some $b
\sqsupseteq a$. We let $\Pi'$ be the proof
\begin{center}
\begin{prooftree}
 \AxiomC{$\phi_0^b,\Gamma' \setminus \{\phi^b\}$}
 \AxiomC{$\phi_1^b,\Gamma' \setminus \{\phi^b\}$}
 \RightLabel{\RuAnd}
 \BinaryInfC{$(\phi_0 \land \phi_1)^b,\Gamma' \setminus
\{\phi^b\}$}
\end{prooftree}
\end{center}
By definition this proof is progressive and it is a proof of $\Gamma' =
(\phi_0 \land \phi_1)^b,\Gamma' \setminus \{\phi^b\}$. We then
show that for each open assumption $\phi_i^b, \Gamma' \setminus
\{\phi^b\}$ of $\Pi$, where $i \in \{0,1\}$, there is the open
assumption $\phi_i^a,\Sigma$ of $\Pi$ such that
\begin{equation*}
 \phi_i^a,\Sigma \subseteq \backcl(\phi_i^b, \Gamma' \setminus
\{\phi^b\}).
\end{equation*}
Because $a \sqsubseteq b$ it is clear that $\phi_i^a \in
\backcl(\{\phi_i^b\})$. So we only need $\Sigma \subseteq
\backcl(\phi_i^b, \Gamma' \setminus \{\phi^b\})$. But this follows
from $\Sigma \subseteq \Gamma \subseteq \backcl(\Gamma')$ and the fact
that $\Gamma' \subseteq \backcl(\phi_i^b,\Gamma' \setminus
\{\phi^b\})$, which is item~\ref{i:proof rules} in
Proposition~\ref{p:progressive facts}. 
Finally, as before, we use Proposition~\ref{p:thinning} to deal with non-thin
open assumptions of $\Pi'$, if any.

\item[\it Case for \RuMu:]
In this case $\Gamma = \mu x . \phi_0(x),\Sigma$ and $\Pi$ is of the
form
\begin{center}
\begin{prooftree}
 \AxiomC{$\phi_0(\phi)^u,\Sigma$}
 \RightLabel{\RuMu}
 \UnaryInfC{$(\mu x . \phi_0(x))^a,\Sigma$}
\end{prooftree}
\end{center}
Here we write $\phi = \mu x . \phi_0(x)$. Because $\Gamma \subseteq
\backcl(\Gamma')$ it follows that $\phi^u \in \backcl(\Gamma')$. 
By definition of $\backcl$ this gives us the cases that either $\phi^b
\in \Gamma'$ for some $b \sqsupseteq a$ or $\phi_0(\phi)^u \in
\backcl(\Gamma')$.

In the subcase where $\phi_0(\phi)^u \in \backcl(\Gamma')$ we let
$\Pi'$ just be the sequent $\Gamma'$. This sequent is thin and the proof
is trivially progressive. We need to show $\phi_0(\phi)^u,\Sigma
\subseteq \backcl(\Gamma')$. Because we are in the subcase for
$\phi_0(\phi)^u \in \backcl(\Gamma')$ it suffice to show that
$\Sigma \subseteq \backcl(\Gamma')$. But this follows because $\Sigma
\subseteq \Gamma$ and $\Gamma \subseteq \backcl(\Gamma')$.

In the other subcase we have that $\phi^b \in \Gamma'$ for some $b
\sqsupseteq a$. We let $\Pi'$ be the proof
\begin{center}
\begin{prooftree}
 \AxiomC{$\phi_0(\phi)^u,\Gamma' \setminus \{\phi^b\}$}
 \RightLabel{\RuMu}
 \UnaryInfC{$(\mu x . \phi_0(x))^b,\Gamma' \setminus
\{\phi^b\}$}
\end{prooftree}
\end{center}
Clearly, this proof is progressive and it is a proof of $\Gamma' =
(\mu x . \phi_0(x))^b,\Gamma' \setminus
\{\phi^b\}$. We can also show that
\[
 \phi_0(\phi)^u,\Sigma \subseteq \backcl(\phi_0(\phi)^u,
\Gamma' \setminus \{\phi^b\}).
\]
For this it clearly suffices to show that $\Sigma \subseteq
\backcl(\phi_0(\phi)^u, \Gamma' \setminus \{\phi^b\})$. This
follows from $\Sigma \subseteq \Gamma \subseteq \backcl(\Gamma')$ and
the fact that $\Gamma' \subseteq \backcl(\phi_0(\phi)^u, \Gamma'
\setminus \{\phi^b\})$, which comes from item~\ref{i:proof rules} in
Proposition~\ref{p:progressive facts}. 
Finally, as before, we use Proposition~\ref{p:thinning} to deal with non-thin
open assumptions of $\Pi'$, if any.

\item[\it Case for \RuNu:]
In this case $\Gamma = \nu x . \phi_0(x),\Sigma$ and $\Pi$ is of the
form
\begin{center}
\begin{prooftree}
 \AxiomC{$\phi_0(\phi)^a,\Sigma$}
 \RightLabel{\RuNu}
 \UnaryInfC{$(\nu x . \phi_0(x))^a,\Sigma$}
\end{prooftree}
\end{center}
Here, we write $\phi = \nu x . \phi_0(x)$. Because $\Gamma \subseteq
\backcl(\Gamma')$ it follows that $\phi^u \in \backcl(\Gamma')$. By
the definition $\backcl$ this gives us the cases that either $\phi^b
\in \Gamma'$ for some $b \sqsupseteq a$ or $\phi_0(\phi)^u \in
\backcl(\Gamma')$.

In the subcase where $\phi_0(\phi)^u \in \backcl(\Gamma')$ we let
$\Pi'$ just be the sequent $\Gamma'$. This sequent is thin and the proof
is trivially progressive. We need to show $\phi_0(\phi)^u,\Sigma
\subseteq \backcl(\Gamma')$. Because we are in the subcase for
$\phi_0(\phi)^u \in \backcl(\Gamma')$ it suffice to show that
$\Sigma \subseteq \backcl(\Gamma')$. But this follows because $\Sigma
\subseteq \Gamma$ and $\Gamma \subseteq \backcl(\Gamma')$.

In the other subcase we have that $\phi^b \in \Gamma'$ for some $b
\sqsupseteq a$. We let $\Pi'$ be the proof
\begin{center}
\begin{prooftree}
 \AxiomC{$\phi_0(\phi)^b,\Gamma' \setminus \{\phi^b\}$}
 \RightLabel{\RuNu}
 \UnaryInfC{$(\nu x . \phi_0(x))^b,\Gamma' \setminus
\{\phi^b\}$}
\end{prooftree}
\end{center}
Clearly, this proof is progressive and it is a proof of $\Gamma' = (\mu
x . \phi_0(x))^b,\Gamma' \setminus \{\phi^b\}$. We can also show
that
\[
 \phi_0(\phi)^a,\Sigma \subseteq \backcl(\phi_0(\phi)^b,
\Gamma' \setminus \{\phi^b\}).
\]
Because $a \sqsubseteq b$ it is clear that $\phi_0(\phi)^a \in
\backcl(\{\phi_0(\phi)^b\})$. So it clearly suffices to show that $\Sigma \subseteq
\backcl(\phi_0(\phi)^b, \Gamma' \setminus \{\phi^b\})$. This
follows from $\Sigma \subseteq \Gamma \subseteq \backcl(\Gamma')$ and
the fact that $\Gamma' \subseteq \backcl(\phi_0(\phi)^b, \Gamma'
\setminus \{\phi^b\})$, which comes from item~\ref{i:proof rules} in
Proposition~\ref{p:progressive facts}. 
Any remaining non-thin open assumptions are dealt with using 
Proposition~\ref{p:thinning}.

\item[\it Case for \RuBox:]
In this case $\Ga$ must be of the form $\Ga = \Box\phi^{a},\dia\Si$, and $\Pi$
is the derivation
\begin{center}
\begin{prooftree}
 \AxiomC{$\phi^{a},\Sigma$}
 \RightLabel{\RuBox}
 \UnaryInfC{$\Box\phi^a,\dia\Sigma$}
\end{prooftree}
\end{center}
Because $\Ga \sse \backcl(\Ga')$ it follows from 
Proposition~\ref{p:pf}\eqref{i:pf:2} that $\morefocus{\Ga}{\Ga'}$.
But then $\Ga'$ must contain a subset of the form $\Box\phi^{b},\dia\Si'$, with 
$a \sqsubseteq b$ and $\morefocus{\Si}{\Si'}$.
Consider the following derivation $\Pi'$:
\begin{center}
\begin{prooftree}
 \AxiomC{$\phi^{b},\Sigma'$}
 \RightLabel{\RuBox}
 \UnaryInfC{$\Box\phi^b,\dia\Sigma'$}
\RightLabel{$\RuWeak^*$}
 \UnaryInfC{$\Gamma'$}
\end{prooftree}
\end{center}
It is easy to see that we have $\morefocus{\De}{\De'}$, where $\De \isdef
\phi^{a},\Si$ and $\De' \isdef \phi^{b},\Si'$ are the assumptions of the 
pre-proofs $\Pi$ and $\Pi'$, respectively.
Furthermore, the proof $\Pi'$ is obviously progressive, and if not thin already,
can be made so by applying Proposition~\ref{p:thinning}.

\item[\it Case for \RuWeak:]
In this case $\Gamma = \phi^a,\Sigma$ and $\Pi$ is of the form
\begin{center}
\begin{prooftree}
 \AxiomC{$\Sigma$}
 \RightLabel{\RuWeak}
 \UnaryInfC{$\phi^a,\Sigma$}
\end{prooftree}
\end{center}
We can let $\Pi'$ consist of just the sequent $\Gamma'$. This sequent
is thin and the proof is trivially progressive. We need to show that
$\Sigma \subseteq \backcl(\Gamma')$. Clearly $\Sigma \subseteq \Gamma$,
and $\Gamma \subseteq \backcl(\Gamma')$ holds by assumption.

\item[\it Case for \RuFocus:]
In this case $\Gamma = \phi^a,\Sigma$ and $\Pi$ is of the form
\begin{center}
\begin{prooftree}
 \AxiomC{$\phi^f,\Sigma$}
 \RightLabel{\RuFocus}
 \UnaryInfC{$\phi^u,\Sigma$}
\end{prooftree}
\end{center}
We let $\Pi'$ be the proof
\begin{center}
\begin{prooftree}
 \AxiomC{$\allfocus{(\Gamma')}$}
 \RightLabel{$\RuFocus^*$}
 \UnaryInfC{$\Gamma'$}
\end{prooftree}
\end{center}
Here, $\allfocus{(\Gamma')} = \{\phi^f \mid \phi^a \in \Gamma'
\mbox{ for some } a \in \{u,f\}\}$, as in Proposition~\ref{p:progressive
facts}, and $\RuFocus^*$ are as many applications of the focus rule as
we need to put every formula in $\Gamma'$ in focus. This proof $\Pi'$ is
trivially progressive and it is thin because $\Gamma'$ is thin and hence
changing the annotations of some formulas in $\Gamma'$ in this way still
yields a thin sequent. 
From item~\ref{i:more focus} of
Proposition~\ref{p:progressive facts} it is clear that $\phi^f,\Sigma
\subseteq \backcl{\allfocus{(\Gamma')}}$ is implied by $\phi^u,\Sigma
\subseteq \backcl(\Gamma')$.

\item[\it Case for \RuUnfocus:]
In this case $\Gamma = \phi^f,\Sigma$ and $\Pi$ is of the form
\begin{center}
\begin{prooftree}
 \AxiomC{$\phi^u,\Sigma$}
 \RightLabel{\RuUnfocus}
 \UnaryInfC{$\phi^f,\Sigma$}
\end{prooftree}
\end{center}
We can let $\Pi'$ consist of just the sequent $\Gamma'$. This sequent
is thin and the proof is trivially progressive. We need to show that
$\phi^u,\Sigma \subseteq \backcl(\Gamma')$. By the definition of
$\backcl$ we have that $\phi^u \in \backcl(\phi^f)$. 
Thus $\phi^u, \Sigma \subseteq \backcl(\phi^f,\Sigma)$. 
Moreover, we have by assumption that $\phi^f,\Sigma = \Gamma \subseteq
\backcl(\Gamma')$. 
Putting this together, and using that $\backcl$ is a
closure operator, we get $\phi^u,\Sigma \subseteq \backcl(\Gamma')$.
\end{description}

Since we have covered all the cases in the above case distinction, this proves
the main part of the proposition.
The additional statements about the focus rules and the rule $\RuBox$ can 
easily be verified from the definition of $\Pi'$ given above.
\end{proofof}

\section{Tableaux and tableau games}
\label{s:nw tableaux}
\label{sec-tab}

In this section we define a tableau game for the alternation-free
$\mu$-calculus that is a adaptation of the tableau game by Niwi\'{n}ski
and Walukiewicz \cite{niwi:game96}. We also show that the tableau game
is adequate with respect to the semantics in Kripke frames, meaning that
\Prover has a winning strategy in the tableau game for some tableau of
some formula iff the formula is valid. The soundness and completeness
proofs for the focus system of this paper rely on this result. There we
will exploit that proofs in the focus system closely correspond to
winning strategies for one of the two players in the tableau game.

\subsection{Tableaux}

We first introduce tableaux, which are the graph over which the tableau game 
is played. 
The nodes of a tableau for some formula $\varphi$ are labelled with sequents
(as defined in the previous section) consisting of formulas taken from the
closure of $\varphi$.

Our tableaux are defined from the perspective that sequents are read 
disjunctively. 
We show below that \Prover has a
winning strategy in the tableau for some sequent if the disjunction of
its formulas are valid. This is is different from the satisfiability
tableaux in \cite{niwi:game96}, where sequents are read conjunctively.

The tableau system is based on the rules in Figure~\ref{f:tableaux rules}. 
We use the same terminology here as we did for rules in the focus system.
The tableau rules \AxLit, \AxTop, \RuOr, \RuAnd, \RuMu and \RuNu are direct
counterparts of the focus proof rules with the same name, the only difference
being that the tableau rules are simpler since they do not involve the 
annotations.

% We call a rule \emph{applicable} to some sequent if the sequent is of a form 
% that matches the conclusion of the rule.

\begin{figure}[thb]
\begin{minipage}{\textwidth}
\begin{minipage}{0.20\textwidth}
\begin{prooftree}
 \AxiomC{\phantom{X}}
 \RightLabel{\AxLit}
 \UnaryInfC{$p, \atneg{p}, \Phi$}
\end{prooftree}
\end{minipage}
\begin{minipage}{0.20\textwidth}
\begin{prooftree}
 \AxiomC{\phantom{X}}
 \RightLabel{\AxTop}
 \UnaryInfC{$\top, \Phi$}
\end{prooftree}
\end{minipage}
\begin{minipage}{0.24\textwidth}
\begin{prooftree}
 \AxiomC{$\varphi,\psi,\Phi$}
 \RightLabel{\RuOr}
 \UnaryInfC{$\varphi \lor \psi,\Phi$}
\end{prooftree}
\end{minipage}
\begin{minipage}{0.30\textwidth}
\begin{prooftree}
 \AxiomC{$\varphi, \Phi$}
 \AxiomC{$\psi,\Phi$}
 \RightLabel{\RuAnd}
 \BinaryInfC{$\varphi \land \psi,\Phi$}
\end{prooftree}
\end{minipage}
\end{minipage}

\bigskip

\begin{minipage}{\textwidth}
\begin{minipage}{0.45\textwidth}
\begin{prooftree}
 \AxiomC{$\varphi_1,\Phi$}
 \AxiomC{$\dots$}
 \AxiomC{$\varphi_n,\Phi$}
 \LeftLabel{(\dag)}
 \RightLabel{\RuMod}
 \TrinaryInfC{$\Psi,\Box \varphi_1,
\dots, \Box \varphi_n, \Diamond \Phi$}
\end{prooftree}
\end{minipage}
% \begin{minipage}{0.50\textwidth}
% \begin{prooftree}
%  \AxiomC{$\varphi_1,\Phi$}
%  \AxiomC{$\dots$}
%  \AxiomC{$\varphi_n,\Phi$}
%  \RightLabel{\RuMod}
%  \TrinaryInfC{$p_1,\dots,p_i,\neg q_1,\dots,\neg q_j,\Box \varphi_1,
% \dots, \Box \varphi_n, \Diamond \Phi$}
% \end{prooftree}
% \end{minipage}
\begin{minipage}{0.22\textwidth}
\begin{prooftree}
 \AxiomC{$\varphi[\mu x . \varphi / x], \Phi$}
 \RightLabel{\RuMu}
 \UnaryInfC{$\mu x . \varphi, \Phi$}
\end{prooftree}
\end{minipage}
\begin{minipage}{0.22\textwidth}
\begin{prooftree}
 \AxiomC{$\varphi[\nu x . \varphi / x], \Phi$}
 \RightLabel{\RuNu}
 \UnaryInfC{$\nu x . \varphi, \Phi$}
\end{prooftree}
\end{minipage}
\end{minipage}

\caption{Rules of the tableaux system}
\label{f:tableaux rules}
\end{figure}

The \emph{modal rule} \RuMod can be seen as a game-theoretic version of the 
box rule \RuBox from the focus system, differing from it in two ways.
First of all, the number of premises of \RuMod is not fixed, but depends on the
number of box formulas in the conclusion; as a special case, if the conclusion 
contains no box formula at all, then the rule has an empty set of premises, 
similar to an axiom.
Second, the rule \RuMod does allow side formulas in the consequent that are not
modal; note however, that \RuMod has as its \emph{side condition} (\dag) that 
this set $\Psi$ contains atomic formulas only, and that it is \emph{locally
falsifiable}, i.e., $\Psi$ does not contain $\top$ and there is no proposition 
letter $p$ such that both $p$ and $\atneg{p}$ belong to $\Psi$.
This side condition guarantees that \RuMod is only applicable if no other
tableau rule is.

\begin{definition}
\label{d:tableau}
A \emph{tableau} is a quintuple $\bbT = (V,E,\Phi,\tLab,v_I)$, where $V$ is a 
set of \emph{nodes}, $E$ is a binary relation on $V$, $v_I$ is the \emph{initial
node} or \emph{root} of the tableau, and both $\Phi$ and $\tLab$ are labelling
functions.
Here $\Phi$ maps every node $v$ to a non-empty sequent $\Phi_v$, and 
\[
\tLab: V \to \{\AxLit,\AxTop,\RuOr,\RuAnd,\RuMod,\RuMu,\RuNu\}
\]
associates a proof rule $\tLab_{v}$ with each node $v$ in $V$.
Tableaux are required to satisfy the following \emph{coherence} conditions:

\begin{enumerate}[resume]

\item \label{i:tr1}
If a node is labelled with the name of a proof rule then it has as many 
successors as the proof rule has premises, and the sequents at the 
node and its successors match the specification of the proof rules in 
Figure~\ref{f:tableaux rules}.

\item \label{i:tr2}
A node can only be labelled with the modal rule $\RuMod$ if its side condition
(\dag) is met.
%  \item \label{i:star} If a node is labeled with $\star$ then it is a leaf
% and no rule is applicable at the sequent of the node. This means that
% the sequent contains only literals, $\bot$ and diamond formulas and it
% does not contain $\atneg{p}$ and $p$ for some $p \in \Prop$.
% \texttt{Question: How should we treat sequents that only contain
% diamonds and no boxes? Should they be a case of \RuMod without premises,
% where \Prover looses trivially or should they have no applicable rule
% and thus become $\star$-nodes?}

\item \label{i:tr3}
In any application of the rules $\RuOr, \RuAnd, \RuMu$ and $\RuNu$,
the principal formula is not an element of the context $\Phi$.

\end{enumerate}
A tableau is a \emph{tableau for a sequent} $\Phi$ if $\Phi$ is the sequent of 
the root of the tableau.
\end{definition}

Observe that it follows from condition~\ref{i:tr2} in Definition~\ref{d:tableau}
that if a node $u$ is labelled with \RuMod, then no other rule is applicable.

\begin{proposition} \label{p:tableau exists}
There is a tree-based tableau for every sequent $\Phi$.
\end{proposition}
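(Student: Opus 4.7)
The plan is to construct the tableau recursively as a tree, choosing at each node an applicable rule dictated by the shape of the current sequent. Starting at the root with sequent $\Phi$, the underlying tree $(V,E)$ is built by successively generating fresh children according to the premises of the chosen rule. Because each application of a rule produces genuinely new nodes (distinct from any existing node), the underlying graph is automatically a tree with root $v_{I}$ labelled by $\Phi$; there is no cycle-closing to worry about, as there would be in a circular or cyclic proof system.

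The core step is to specify, for each sequent $\Psi$ appearing at a node, which rule to apply. I would proceed by cases in the following priority order. First, if $\top \in \Psi$ declare the node an \AxTop leaf; similarly if $\{p,\ol{p}\}\sse\Psi$ for some $p$, declare it an \AxLit leaf. Second, if $\Psi$ contains some non-atomic, non-modal formula $\chi$, fix a choice function that selects one such $\chi$ and apply the corresponding rule (\RuOr, \RuAnd, \RuMu, or \RuNu) with $\chi$ as the principal formula, writing $\Psi$ as $\chi,(\Psi\setminus\{\chi\})$ so that condition~\ref{i:tr3} of Definition~\ref{d:tableau} is automatically satisfied; the one or two premises generated then become the fresh children. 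Third, if $\Psi$ consists only of literals and modal formulas and is locally falsifiable (which must be the case since the previous clauses did not fire), apply \RuMod, creating one fresh child for each box-formula in $\Psi$; note that the side condition (\dag) is met precisely because the earlier clauses are exhausted.

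The construction is then carried out coinductively: at finite stages one has built a finite prefix of the tableau, and one extends the tree by applying the rule-selection procedure above to every non-leaf, non-axiomatic node at the frontier. The limit of this process yields a (possibly infinite) tree $(V,E)$ together with the labelling functions $\Phi$ and $\tLab$. By construction $\Phi_{v_{I}} = \Phi$, and every node $v$ with a non-axiomatic label satisfies the matching condition of Figure~\ref{f:tableaux rules} with respect to its children, so condition~\ref{i:tr1} holds. Condition~\ref{i:tr2} holds because \RuMod is invoked only in the third clause, where its side condition was explicitly verified, and condition~\ref{i:tr3} holds because in the second clause the principal formula is removed from the context by set-theoretic bookkeeping.

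There is no serious obstacle here: the only subtlety is that the construction in general does not terminate, since unfolding fixpoint formulas produces formulas of unbounded size. However, the definition of a tableau imposes no finiteness requirement, and by Proposition~\ref{p:proof in closure}-style reasoning one sees that all sequents produced live within the finite set $\Clos(\Phi)$, so the process is well-defined at every stage. The resulting infinite tree is the desired tree-based tableau for~$\Phi$.
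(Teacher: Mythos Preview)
Your proof is correct and follows essentially the same approach as the paper's: a step-wise construction that extends the tree at each non-axiomatic leaf by some applicable rule, relying on the fact that every sequent admits at least one rule (with \RuMod serving as the fallback once only atomic and modal formulas remain). The paper's proof is a two-sentence sketch, whereas you spell out the priority order explicitly and verify the side conditions; your closing remark about ``formulas of unbounded size'' is a bit off (unfoldings stay inside the finite set $\Clos(\Phi)$), but this does not affect the argument.
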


\begin{proof}
This can be proved in a straightforward step-wise procedure in which we 
construct the tree underlying $\bbT$ by repeatedly extending it at non-axiomatic
leaves using any of the proof rules that are applicable at that leaf. 
This generates a possibly infinite tree that is a tableau because in every
sequent there is at least one rule applicable. 
Note that \RuMod can be applied in sequents without modal formulas, in which 
case it has no premises and thus creates a leaf of the tableau.
\end{proof}

A crucial aspect of tableaux for the $\mu$-calculus is that one has to
keep track of the development of individual formulas along infinite
paths in the tableau. For this purpose we define the notion of a trail
in a path of the tableau.

\begin{definition}
\label{d:tableaux trails}
Let $\bbT = (V,E,\Phi,\tLab,v_I)$ be a tableau. 
For all nodes $u,v \in V$ such that $E u v$ we define the \emph{active trail
relation} $\atrail_{u,v} \subseteq \Phi_u \times \Phi_v$ and the \emph{passive
trail relation} $\ptrail_{u,v} \subseteq \Phi_u \times \Phi_v$, both of which
relate formulas in the sequents at $u$ and $v$.
The idea is that $\atrail$ connects the active formulas in the premise and 
conclusion, whereas $\ptrail$ connects the side formulas. 
Both relations are defined via a case distinction depending on the rule that is 
applied at $u$:

\textit{Case $\tLab_{u} = \RuOr$:} Then $\Phi_u = \{\varphi \lor \psi\}
\cup \Psi$ and $\Phi_v = \{\varphi,\psi\} \cup \Psi$ for some sequent
$\Psi$. We define $\atrail_{u,v} = \{(\varphi \lor
\psi,\varphi),(\varphi \lor \psi,\psi)\}$ and $\ptrail_{u,v} =
\Delta_\Psi$, where $\Delta_\Psi = \{(\varphi,\varphi) \mid \varphi \in
\Psi\}$.

\textit{Case $\tLab_{u} = \RuAnd$:} In this case $\Phi_u = \{\varphi \land
\psi\} \cup \Psi$ and $\Phi_v = \{\chi\} \cup \Psi$ for some sequent
$\Psi$ and $\chi$ such that $\chi = \varphi$ if $v$ corresponds to the
left premise of \RuAnd and $\chi = \psi$ if $v$ corresponds to the right
premise. In both cases we set $\atrail_{u,v} = \{(\varphi \land
\psi,\chi)\}$ and $\ptrail_{u,v} = \Delta_\Psi$.

\textit{Case $\tLab_{u} = \RuMod$:} Then $\Phi_u = \Psi \cup \{\Box
\varphi_1,\dots,\Box \varphi_n\} \cup \Diamond \Phi$ and $\Phi_v =
\{\phi_{v}\} \cup \Phi$ for some sequent $\Phi$ and locally falsifiable
set of literals $\Psi \subseteq \Lit$. We can thus define $\atrail_{u,v}
= \{(\Box \phi_{v},\phi_{v})\} \cup \{(\Diamond \phi, \phi) \mid \phi \in
\Phi\}$ and $\ptrail_{u,v} = \nada$.

\textit{Case $\tLab_{u} = \RuMu$:} Then $\Phi_u = \{\mu x . \varphi\}
\cup \Psi$ and $\Phi_v = \{\varphi[\mu x . \varphi / x]\} \cup \Psi$ for
some sequent $\Psi$. We define $\atrail_{u,v} = \{(\mu x .
\varphi,\varphi[\mu x . \varphi / x])\}$ and $\ptrail_{u,v} =
\Delta_\Psi$.

\textit{Case $\tLab_{u} = \RuNu$:} Then $\Phi_u = \{\nu x . \varphi\} \cup
\Psi$ and $\Phi_v = \{\varphi[\nu x . \varphi / x]\} \cup \Psi$ for some
sequent $\Psi$. We define $\atrail_{u,v} = \{(\nu x . \varphi,\varphi[\nu x .
\varphi / x])\}$ and $\ptrail_{u,v} = \Delta_\Psi$.

Note that it is not possible that $\tLab_{u} = \AxLit$ or $\tLab_{u} =
\AxTop$ because $u$ is assumed to have a successor.

Finally, for all nodes $u$ and $v$ with $Euv$, the \emph{general trail relation} 
$\gtrail_{u,v}$ is defined as $\gtrail_{u,v} \isdef \atrail_{u,v} \cup 
\ptrail_{u,v}$.
\end{definition}

Note that for any two nodes $u,v$ wth $Euv$ and $(\phi,\psi) \in \gtrail_{u,v}$,
we have either $(\phi,\psi) \in \atrail_{{u,v}}$ and $\psi \in 
\Clos_{0}(\phi)$, or else $(\phi,\psi) \in \atrail_{{u,v}}$ and $\phi = \psi$.

\begin{definition}
Let $\bbT = (V,E,\Phi,\tLab,v_I)$ be a tableau. 
A \emph{path} in $\bbT$ is simply a path in the underlying graph $(V,E)$ of 
$\bbT$, that is, a sequence $\pi = (v_{n})_{n<\kappa}$, for some ordinal 
$\kappa$ with $0 < \kappa \leq \omega$, such that $Ev_{i}v_{i+1}$ for every $i$
such that $i+1 < \kappa$.
A \emph{trail} on such a path $\pi$ is a sequence $(\phi_{n})_{n<\kappa}$ of 
formulas such that $(\phi_{i},\phi_{i+1}) \in \gtrail_{v_i,v_{i+1}}$, 
whenever $i+1 < \kappa$.
\end{definition}

\begin{remark}
Although our tableaux are very much inspired by the ones introduced by 
Niwi\'{n}ski and Walukiewicz~\cite{niwi:game96}, there are some notable
differences in the actual definitions.
In particular, the fixpoint rules in our tableaux simply unfold fixpoint 
formulas; that is, we omit the mechanism of definition lists.
% Furthermore, since we are working with alternation-free formulas, our definition
% \todo % or do we?
% of successful paths can be simplified somewhat, cf.~Proposition~\ref{p:af4}.
Some minor differences are that we always decompose formulas until we reach 
literals, and that our tableaux are not necessarily tree-based.
\end{remark}

It is easy to see that because of guardedness, we have the following.

\begin{proposition} 
\label{p:prog}% \label{i:progress in tableau}
Let $\pi$ be an infinite path in a tableau $\bbT$, and let 
$(\phi_{n})_{n<\omega}$ be a trail on $\pi$.
Then
\begin{urlist}
\item $\pi$ witnesses infinitely many applications of the rule \RuMod;
\item there are infinitely many $i$ such that $(\phi_{i},\phi_{i+1}) 
\in \atrail_{v_i,v_{i+1}}$.
\end{urlist}
\end{proposition}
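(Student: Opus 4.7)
The plan is to prove item~1 first and obtain item~2 as an almost immediate consequence. For item~1, I would argue by contradiction: assume that $\RuMod$ is applied only finitely often along $\pi$. Since $\pi$ is infinite and the axiom rules create leaves, from some index onwards only the rules $\RuOr, \RuAnd, \RuMu, \RuNu$ are applied. The strategy is then to exhibit a well-founded measure on sequents that strictly decreases at every such step, yielding a contradiction.

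The measure I have in mind is a multiset rank. Assuming without loss of generality (by Proposition~\ref{p:af1}(\ref{it:af1-5})) that the root sequent consists of guarded formulas, so that every formula occurring in $\bbT$ lies in a finite closure of guarded formulas, I would assign to each formula $\phi$ in that closure a natural number $w(\phi)$: put $w(\phi) = 0$ if $\phi$ is modal, literal or a constant, and otherwise let $w(\phi)$ be the length of the longest $\to_C$-chain starting at $\phi$ and staying entirely within the non-modal part of the closure. The measure of a sequent $\Phi$ is the finite multiset $\{w(\phi) \mid \phi \in \Phi\}$, compared under the standard well-founded multiset extension of $<$ on $\mathbb{N}$.

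The main technical point, and the place where guardedness really earns its keep, is to show that $w$ is well-defined, i.e., that the restriction of $\to_C$ to the non-modal part of the closure is acyclic. A hypothetical cycle cannot consist solely of boolean steps, since those are strict subformula decompositions. Thus it must use a fixpoint unfolding $\eta x.\psi \to_C \psi[\eta x.\psi/x]$; but to return to $\eta x.\psi$ via $\to_C$, one must reach one of the substituted occurrences, and by guardedness each free $x$ in $\psi$ sits under a modality, so any such return passes through a modal closure element and leaves the restricted graph. This is the step I expect to require the most care to state cleanly, though the underlying fact about guarded closures is standard. Once $w$ is in hand, checking that each of $\RuOr, \RuAnd, \RuMu, \RuNu$ strictly decreases the multiset measure is routine: the principal formula is replaced by one or two $\to_C$-successors of strictly smaller weight, and condition~\ref{i:tr3} of Definition~\ref{d:tableau} ensures the principal formula is genuinely removed rather than absorbed into the context.

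Item~2 then drops out by inspecting Definition~\ref{d:tableaux trails}: in the case $\tLab_u = \RuMod$ one has $\ptrail_{u,v} = \nada$, so every trail step taken at a $\RuMod$-node is automatically in $\atrail_{u,v}$. Combined with item~1 this gives infinitely many active steps on any trail along~$\pi$.
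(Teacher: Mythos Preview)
Your argument is correct and is precisely the kind of elaboration the paper has in mind: the paper itself offers no proof beyond the remark ``it is easy to see that because of guardedness,'' and your multiset-rank argument is the standard way to make that precise. The derivation of item~2 from item~1 via the observation that $\ptrail_{u,v}=\nada$ at $\RuMod$-nodes is exactly right.

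One small point of presentation: invoking Proposition~\ref{p:af1}(\ref{it:af1-5}) to assume guardedness ``without loss of generality'' is not quite apt here, since the statement concerns a \emph{given} tableau and you cannot replace its formulas by equivalents. Guardedness is rather a standing assumption in the paper (made explicit in the line immediately preceding the proposition), so you should simply cite that assumption. This does not affect the correctness of your argument.
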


Before we move on to the definition of tableau games, we need to have a closer
look at trails.
Recall that for any two nodes $u,v \in V$, the trail relation $\gtrail_{u,v}$
is the union of an active and a passive trail relation, and that the passive
relation is always a subset of the diagonal relation on formulas.
As a consequence, we may \emph{tighten} any trail $(\phi_{n})_{n<\kappa}$ on a
path $\pi = (v_{n})_{n<\kappa}$ simply by omitting all $\phi_{i+1}$ from the
sequence for which $(\phi_{i},\phi_{i+1})$ belongs to the passive trail relation
$\ptrail_{v_{i},v_{i+1}}$.

\begin{definition}
Let $\tau = (\phi_{n})_{n<\kappa}$ be a trail on the path $\pi = 
(v_{n})_{n<\kappa}$ in some tableau $\bbT$.
Then the \emph{tightened} trail $\rdc{\tau}$ is obtained from $\tau$ by 
omitting all $\phi_{i+1}$ from $\tau$ for which $(\phi_{i},\phi_{i+1})$ belongs 
to the passive trail relation $\ptrail_{v_{i},v_{i+1}}$.
\end{definition}

It is not difficult to see that tightened trails are \emph{traces}, and that it
follows from Proposition~\ref{p:prog} that the tightening of an infinite 
trail is infinite.

\begin{definition}
Let $\tau = (\phi_{n})_{n<\om}$ be an infinite trail on the path $\pi = 
(v_{n})_{n<\om}$ in some tableau $\bbT$.
Then we call $\tau$ a \emph{$\nu$-trail} if its tightening $\rdc{\tau}$ is 
a $\nu$-trace.
\end{definition}

\subsection{Tableau games}

We are now ready to introduce the \emph{tableau game} $\game{\bbT}$ that we
associate with a tableau $\bbT$.
We will first give the formal definition of this game, and then provide an 
intuitive explanation; Appendix~\ref{sec:games} contains more information on
infinite games.
We shall refer to the two players of tableau games as \emph{\Prover} (female)
and \emph{\Refuter} (male).

\begin{definition}
\label{d:tg}
Given a tableau $\bbT = (V,E,\Phi,\tLab,v_I)$, the \emph{tableau game} 
$\game{\bbT}$ is the (initialised) board game $\game{\bbT} = 
(V,E,\Own,\SM,v_{I})$ defined as follows.
$\Own$ is a partial map that assigns a player to some positions in $V$; the
player $\Own(v)$ will then be called the \emph{owner} of the position $v$.
More specifically, \Refuter owns all positions that are labelled with 
one of the axioms, \AxLit or \AxTop, or with the rule \RuAnd; \Prover owns all 
position labelled with \RuMod; 
$O$ is undefined on all other positions.
In this context $v_{I}$ will be called the \emph{initial} or \emph{starting}
position of the game.

The set $\SM$ is the \emph{winning condition} of the game (for \Prover); it is 
defined as the set of infinite paths through the graph that carry a $\nu$-trail.
\end{definition}

A \emph{match} of the game consists of the two players moving a token from one 
position to another, starting at the initial position, and following the edge
relation $E$.
The owner of a position is responsible for moving the token from that position 
to an adjacent one (that is, an $E$-successor); in case this is impossible 
because the node has no $E$-successors, the player \emph{gets stuck} and 
immediately loses the match.
For instance, \Refuter loses as soon as the token reaches an axiomatic leaf 
labelled \AxLit or \AxTop; similarly, \Prover loses at any modal node without
successors.
If the token reaches a position that is not owned by a player, that is, a node
of $\bbT$ that is labelled with the proof rule \RuOr, \RuMu or \RuNu, the token
automatically moves to the unique successor of the position.
If neither player gets stuck, the resulting match is infinite; we declare 
\Prover to be its winner if the match, as an $E$-path, belongs to the set $\SM$,
that is, if it carries a $\nu$-trail.

Finally, we say that a position $v$ is a \emph{winning} position for a player 
$P$ if $P$ has a way of playing the game that guarantees they win the resulting 
match, no matter how $P$'s opponent plays. 
For a formalisation of these concepts we refer to the Appendix.

% \begin{definition}
% Let $\bbT = (V,E,\Phi,\tLab,v_I)$ be some tableau.
% A \emph{match} or \emph{play} of the tableau game $\game{\bbT}$ is simply a 
% path through the graph $(V,E)$ that starts at $v_{I}$.
% Such a match $\pi$ is \emph{full} if it is maximal as a path, that is, either
% finite with $E[\last(\pi)] = \nada$, or infinite.
% A full match $\pi$ is \emph{won} by \Prover if it is finite and \Refuter owns
% $\last(\pi)$ or it is infinite and carries a $\nu$-trail; full matches that are 
% not won by \Prover are won by \Refuter.
% A match that is not full is called \emph{partial}, and we let $\PM_{P}$ denote 
% the set of partial matches $\pi$ such that $P$ owns the position $\last(\pi)$.
% 
% A \emph{strategy for a player} $P$ is a function $f: \PM_{P} \to V$; if $f(\pi)
% \not\in E[\last(\pi)]$, for some $\pi \in \PM_{P}$, we say that $f$ prescribes 
% an \emph{illegitimate move} in $\pi$.
% A match $\pi = (v_{i})_{i<\kappa}$ is \emph{guided} by a $P$-strategy $f$ if 
% $f(v_{0}v_{1}\cdots v_{n-1}) = v_{n}$ for all $n<\kappa$ 
% such that $v_{0}\cdots v_{n-1}\in \PM_{P}$.
% %
% A position $v$ is \emph{reachable} by a strategy $f$ is there is an $f$-guided
% match $\pi$ with $v = \last(\pi)$.
% A $P$-strategy $f$ is \emph{legitimate} if the moves that it prescribes to 
% $f$-guided partial matches are always legitimate, and \emph{winning for $P$}
% if in addition all $f$-guided full matches are won by $P$.
% \end{definition}

\begin{remark} \label{r:treestrat}
If $\bbT$ is \emph{tree-based} the notion of a strategy can be simplified. 
The point is that in this case finite matches can always be identified with
their last position, since any node in a tree corresponds to a unique path from 
the root to that node. 
It follows that any strategy in such a game is \emph{positional} (that is, the
move suggested to the player only depends on the current position). 
Moreover, we may identify a strategy for either player with a \emph{subtree} 
$S$ of $\bbT$ that contains the root of $\bbT$ and, for any node $s$ in $S$,
(1) it contains all successors of $s$ in case the player owns the position $s$,
while
(2) it contains exactly one successor of $s$ in case the player's opponent
owns the position $s$.
\end{remark}

The observations below are basically due to Niwi\'nski \& 
Walukiewicz~\cite{niwi:game96}.

\begin{theorem}[Determinacy]
Let $\bbT$ be a tableau for a sequent $\Phi$.
Then at any position of the tableaux game for $\bbT$ precisely one of the
players has a winning strategy.
\end{theorem}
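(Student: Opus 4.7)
The plan is to reduce $\game{\bbT}$ to a parity game on a product graph and invoke positional determinacy of parity games. The winning condition $\SM$ is not itself a parity condition, since it existentially quantifies over trails along the played path; the standard remedy is to absorb this existential quantifier into the game graph by tracking candidate trails via an auxiliary deterministic automaton.

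Concretely, the set $\Clos(\Phi)$ of formulas that can appear in sequents of $\bbT$ is finite. The predicate ``$\pi$ carries a $\nu$-trail'' on an infinite $E$-path $\pi$ can then be recognised by a nondeterministic co-B\"uchi automaton $\mathcal{N}$ over the alphabet of $E$-steps of $\bbT$: its states are the formulas in $\Clos(\Phi)$, its transitions on an $E$-step $(v,v')$ are given by the general trail relation $\gtrail_{v,v'}$, and its acceptance asks the guessed trail to pass through $\mu$-formulas only finitely often. By Proposition~\ref{p:af4}, this captures precisely the property of being a $\nu$-trail.

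I would then determinise $\mathcal{N}$ into an equivalent deterministic parity automaton $\mathcal{D}$; thanks to alternation-freeness, a focus-style powerset construction suffices (cf.~the comments in Section~\ref{sec-proofsystem} and the co-B\"uchi determinisation from~\cite{demr:temp16}), avoiding the need for a full Safra construction. The synchronous product $\game{\bbT} \times \mathcal{D}$ is then a parity game: positions are pairs $(v,q)$, the ownership function is inherited from the $v$-component while the $q$-component updates deterministically, and the winning condition is the parity condition lifted from $\mathcal{D}$. By construction, a player wins in $\game{\bbT}$ from $v$ iff they win in the product from $(v, q_0)$, where $q_0$ is the initial state of $\mathcal{D}$.

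Since parity games enjoy (positional) determinacy, by the classical results of Emerson--Jutla and Mostowski, the product game is determined at every position, and projecting strategies back to $\game{\bbT}$ yields the theorem. The main obstacle is the automata-theoretic determinisation step, which is where finiteness of $\Clos(\Phi)$ and the alternation-freeness of its elements are crucially used; once $\mathcal{D}$ is in hand, the remaining work (well-definedness of the product, ownership preservation, and transfer of winning strategies in both directions) is routine bookkeeping.
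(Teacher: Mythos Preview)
Your approach is correct and closely aligned with the paper's, differing mainly in how explicitly the reduction is carried out. The paper observes that the winning condition $\SM$ factors through a finite colouring $\gamma: V \to C$ into an $\omega$-regular language, and then cites the B\"uchi--Landweber theorem on determinacy of regular games directly. You instead unpack this one level: you build a nondeterministic co-B\"uchi automaton over trails, determinise it (noting that alternation-freeness makes a focus-style breakpoint construction sufficient), take the product, and invoke parity-game determinacy. Since B\"uchi--Landweber is itself typically established via exactly such a product-with-deterministic-automaton reduction, your argument is effectively an expansion of the paper's citation rather than a genuinely different route. Your version has the virtue of tying the determinacy proof explicitly to the paper's focus mechanism; the paper's is more economical. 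One small technical point worth tightening: your ``alphabet of $E$-steps of $\bbT$'' may be infinite when $\bbT$ is tree-based, but since $\gtrail_{v,v'}$ depends only on the finite data $(\Phi_v,\tLab_v,\Phi_{v'})$ you can pass to a finite alphabet---this is precisely what the paper's finite colouring set $C$ accomplishes.
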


\begin{proof}
The key observation underlying this theorem is that tableau games are 
\emph{regular}.
That is, using the labelling maps $\tLab$ and $\Sigma$ of a tableau $\bbT$, we
can find a finite set $C$, a colouring map $\ga: V \to C$, and an 
$\omega$-regular subset $L \sse C^{*}$ such that $\SM = \{ (v_{n})_{n\in\omega}
\in \InfP(\bbT) \mid (\ga(v_{n}))_{n\in\omega} \in L \}$.
The determinacy of $\game{\bbT}$ then follows by the classic result by B\"uchi
\& Landweber~\cite{buch:solv69} on the determinacy of regular games.
We skip further details of the proof, since it is rather similar to the analogous proof
in~\cite{niwi:game96}.
\end{proof}

For the Adequacy Theorem below we do provide a proof, since our proof is
somewhat different from the one by Niwi\'nski and Walukiewicz.
% , and we will need an adaptation of our proof further on.

% \newpage

\begin{theorem}[Adequacy]
\label{t:adequacy}
Let $\bbT$ be a tableau for a sequent $\Phi$.
Then \Refuter (\Prover, respectively) has a winning strategy in $\game{\bbT}$ 
iff the formula $\bv\Phi$ is refutable (valid, respectively).
\end{theorem}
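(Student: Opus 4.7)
The plan is to leverage the determinacy theorem just established: exactly one of \Prover and \Refuter has a winning strategy in $\game{\bbT}$ from the initial position, and since the validity of $\bv\Phi$ is simply the negation of its refutability, it suffices to prove only one of the two stated equivalences. I focus on the statement about \Refuter and treat both directions.

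For the forward direction --- refutability of $\bv\Phi$ yielding a winning strategy for \Refuter --- I would fix a model $\bbS$ and state $s$ at which every formula of $\Phi$ is false, and have \Refuter play \emph{semantically}: alongside each position $v$ of a match she maintains an auxiliary state $s_v$ (with $s_{v_I} = s$) so that every formula of $\Phi_v$ is false at $s_v$. At positions not owned by either player and at \RuAnd the invariant is preserved in the obvious way (at \RuAnd she picks the conjunct that is false at $s_v$); at an application of \RuMod, when \Prover selects a box formula $\Box\varphi_i$, \Refuter picks as the next state a successor $t$ of $s_v$ at which $\varphi_i$ is false, while the unboxed diamonds stay false at $t$ automatically. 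The invariant then forces the match to avoid every axiomatic leaf, so every finite match ends at a modal leaf where \Prover is stuck. To handle infinite matches I plan to use positional determinacy of the evaluation game: the tightening of any putative infinite trail can be read off as a legal infinite match in $\EG(\phi_{0},\bbS)$ starting at $(\phi_{0},s_{0})$ along which the invariant ``current formula false at current state'' persists, and by coordinating \Refuter's semantic choices with \Abelard's positional winning strategy in $\EG$ one shows that the induced trace must be a $\mu$-trace, not a $\nu$-trace, ruling out $\nu$-trails on matches consistent with \Refuter's strategy.

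For the converse direction --- a winning strategy $\rho$ for \Refuter yielding refutability --- I would build a Kripke model $\bbS_{\rho}$ directly from $\rho$. Assuming for convenience that $\bbT$ is tree-based (cf.\ Proposition~\ref{p:tableau exists}) so that by Remark~\ref{r:treestrat} the strategy $\rho$ can be represented as a subtree $S$ of $\bbT$, the states of $\bbS_{\rho}$ are the \RuMod-labelled nodes of $S$ (plus the root if it is not of this kind). Accessibility is defined by following \Prover's choices of premises at \RuMod applications and collapsing the non-modal rules that may intervene before the next modal node; the valuation at a modal state is fixed by the literals in its sequent, using the local-falsifiability side condition of \RuMod. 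To show that $\bv\Phi$ fails at the state associated with the root I would argue by contradiction: if some $\phi\in\Phi$ were true at this state then \Eloise's winning strategy in $\EG(\phi,\bbS_{\rho})$ could be used to drive a match of $\game{\bbT}$ consistent with $\rho$ and carrying a $\nu$-trail along its path, hence a \Prover win, contradicting the winning property of $\rho$.

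The principal obstacle in both directions is the faithful translation between the two games: a tableau sequent can contain several formulas simultaneously, each corresponding to its own evaluation game played on the same underlying state, and the tableau's single stream of modal transitions must be coordinated across all of them. In both directions the key technical work lies in this coordination --- of \Refuter's semantic choices in the first direction, of \Eloise's in the second --- and in both cases positional determinacy of $\EG$ is the essential bridge between the parity-style winning condition on evaluation traces and the $\nu$-trail condition on tableau paths.
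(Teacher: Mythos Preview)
Your proposal is correct and takes essentially the same approach as the paper: both directions hinge on translating between tableau matches and evaluation-game matches via trails, with positional determinacy of the evaluation game controlling the infinite case. The paper differs only in packaging --- it takes the states of the countermodel to be maximal non-modal \emph{paths} in \Refuter's strategy subtree rather than modal nodes (which handles the root and the valuation a bit more cleanly than your ``plus the root'' clause), and in that direction it shows directly that $\forall$ wins each evaluation game rather than arguing by the contrapositive you sketch.
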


\begin{proof} 
Fix a sequent $\Phi$ and a tableau $\bbT$ for $\Phi$.
We will prove the following statement:
\begin{equation}
\label{eq:admain}
\text{\Refuter has a winning strategy in } \game{\bbT} 
\text{ iff }
\Phi \text{ is refutable}.
\end{equation}
The theorem follows from this by the determinacy of $\game{\bbT}$.

For the left to right implication of \eqref{eq:admain}, 
fix a tableau $\bbT = (V,E,\Phi,\tLab,v_I)$; it will be convenient to assume that
$\bbT$ is \emph{tree based}. 
This is without loss of generality: if the graph underlying $\bbT$ does not have
the shape of a tree, we may simply continue with its unravelling.

Let $f$ be a winning strategy for \Refuter in the game $\game{\bbT}$; recall that
we may think of $f$ as a subtree $\bbT_{f}$ of $\bbT$.
We will first define the pointed model in which the sequent $\Phi$ can be
refuted.
We define a \emph{state} to be a maximal path in $\bbT_{f}$ which does not 
contain any modal node, with the possible exception of its final node
$\last(\pi)$.
Note that by maximality, the first node of a state is either the root of $\bbT$
or else a successor of a modal node. 
Given a state $\pi = v_{0}\cdots v_{k}$ and a formula $\phi$, we say that $\phi$ 
\emph{occurs at} $\pi$, if $\phi \in \bigcup_{i} \Phi_{v_{i}}$.
We let $S_{f}$ denote the collection of all states, and define an accessibility 
relation $R_{f}$ on this set by putting $R_{f}\pi\rho$ iff the first node of 
$\rho$ is an $E$-successor of the last node of $\pi$.
Note that this can only happen if $\last(\pi)$ is modal.
Finally, we define the valuation $V_{f}$ by putting $V_{f}(p) \isdef \{ \pi \mid 
p \not\in \Phi_{\last(\pi)} \}$, and we set $\bbS_{f} \isdef (S_{f},R_{f},V_{f})$.

In the sequel we will need the following observation; we leave its proof as an
exercise.
\begin{claimfirst}
\label{cl:fair}
Let $\phi \in \Phi_{v_{j}}$ be a non-atomic formula, where $v_{j}$ is some node
on a finite path $\pi = (v_{i})_{i<k}$.
If $\pi$ is a state, then the formula is active at some node $v_{m}$ on $\pi$,
with $j \leq m < k$.
\end{claimfirst}

Now let $\pi_{0}$ be any state of which $\first(\pi_{0})$ is the root of $\bbT$.
We will prove that the pointed model $\bbS_{f},\pi_{0}$ refutes $\Phi$ by 
showing that 
\begin{equation}
\label{eq:adkey}
\text{for every }\phi \in \Phi, 
\text{ the position } (\phi,\pi_{0}) \text{ is winning for $\abel$ in }
\EG(\textstyle{\bv}\Phi,\bbS_{f}).
\end{equation}
To prove this, we will provide $\abel$ with a winning strategy in the 
evaluation game $\EG(\bv\Phi,\bbS_{f})@(\phi,\pi_{0})$, for each $\phi\in\Phi$.
Fix such a $\phi$, and abbreviate $\EG \isdef 
\EG(\bv\Phi,\bbS_{f})@(\phi,\pi_{0})$.
The key idea is that, while playing $\EG$, $\abel$ maintains a private match of
the tableau game $\game{\bbT}$, which is guided by \Refuter's winning
strategy $f$ and 
such that the current match of $\EG$ corresponds to a trail on this 
$\game{\bbT}$-match.

For some more detail on this link between the two games, let $\Sigma =
(\phi_{0},\pi_{0})(\phi_{1},\pi_{1})\cdots (\phi_{n},\pi_{n})$ be a partial
match of $\EG$.
We will say that a $\game{\bbT}$-match $\pi$ is \emph{linked to} $\Sigma$ if 
the following holds.
First, let $i_{1},\ldots,i_{k}$ be such that $0 < i_{1} < \cdots < i_{k} \leq n$ 
and $\phi_{i_{1}-1},\ldots,\phi_{i_{k}-1}$ is the sequence of all 
\emph{modal} formulas among $\phi_{0},\ldots,\phi_{n-1}$.
Then we require that $\pi$ is the concatenation 
$\pi = \pi_{0}\circ\cdots \circ \pi_{i_{k}-1}\circ \rho$, where 
each $\pi_{i}$ is a state and $\rho \sqsubseteq \pi_{n}$, 
and that the sequence $\phi_{0} \cdots\phi_{n}$ is the active tightening
of some trail on $\pi$.

Clearly then the matches that just consist of the initial positions of $\EG$ 
and $\game{\bbT}$, respectively, are linked.
Our proof of \eqref{eq:adkey} is based on the fact that $\abel$ has a strategy
that keeps such a link throughout the play of $\EG$.
As the crucial observation underlying this strategy, the following claim states
that $\abel$ can always maintain the link for one more round of the evaluation 
game.

\begin{claim}
\label{cl:tbcp1}
Let $\Sigma = (\phi_{0},\pi_{0})(\phi_{1},\pi_{1})\cdots (\phi_{n},\pi_{n})$ be
some $\EG$-match and let $\pi$ be an $f$-guided $\game{\bbT}$-match that is 
linked to $\Sigma$.
Then the following hold.
\begin{urlist}
\item
If $(\phi_{n},\pi_{n})$ is a position for $\abel$ in $\EG$, then he has a move
$(\phi_{n+1},\pi_{n+1})$ such that some $f$-guided extension $\pi'$ of $\pi$ is 
linked to $\Sigma\cdot (\phi_{n+1},\pi_{n+1})$.
\item 
If $(\phi_{n},\pi_{n})$ is not a position for $\abel$ in $\EG$, then for any
move $(\phi_{n+1},\pi_{n+1})$ there is some $f$-guided extension $\pi'$ of $\pi$ 
that is linked to $\Sigma\cdot (\phi_{n+1},\pi_{n+1})$.
\end{urlist}
\end{claim}

\begin{pfclaim}
Let $\Sigma$ and $\pi$ be as in the formulation of the claim.
Then $\pi = \pi_{0}\circ\cdots \circ \pi_{i_{k}-1}\circ \rho$, where
$\rho \sqsubseteq \pi_{n}$ and
$i_{1},\ldots,i_{k}$ are such that $0 < i_{1} < \cdots < i_{k} \leq n$ 
and $\phi_{i_{1}-1},\ldots,\phi_{i_{k}-1}$ is the sequence of all 
\emph{modal} formulas among $\phi_{0},\ldots,\phi_{n-1}$.
Furthermore $(\phi_{i})_{i\leq n} = \rdc{\tau}$ for some trail $\tau$ on $\pi$.
Write $\rho = v_{i_{k}}\cdots v_{l}$, then $\rho = \pi_{n}$ iff $v_{l}$ is
modal.

We prove the claim by a case distinction on the nature of $\phi_{n}$.
Note that $\phi_{n} \in \Phi_{v_{l}}$, and that by Claim~\ref{cl:fair} there is
a node $v_{i}$ on the path $\pi_{n}$ such that $i_{k} \leq i$ and $\phi_{n}$ is
active at $v_{i}$.

\begin{description}

\item[Case $\phi_{n} = \psi_{0} \land \psi_{1}$] for some formulas $\psi_{0},
\psi_{1}$.
The position $(\phi_{n},\pi_{n})$ in $\EG$ then belongs to $\abel$. 
As $\psi_{0} \land \psi_{1}$ is the active formula at the node $v_{i}$ in $\bbT$, 
this means that $\tLab_{v_{i}} = \RuAnd$, so that $v_{i}$, as a position of 
$\game{\bbT}$, belongs to \Refuter.
This means that in $\EG$, $\abel$ may pick the formula $\psi_{j}$ which is 
associated with the successor $v_{i+1}$ of $v_{i}$ on $\pi_{n}$.
Note that, since $\pi_{n}$ is part of the $f$-guided match $\pi$, this successor
is the one that is picked by \Refuter in $\game{\bbT}$ at the position $v_{i}$
in the match $\pi$.

We define $\Sigma' \isdef \Sigma \cdot (\psi_{j},\pi_{n})$, 
$\pi' \isdef \pi \cdot v_{l+1}\cdots v_{i}v_{i+1}$, and 
$\tau' \isdef \tau \cdot \phi_{n} \cdots \phi_{n} \cdot \psi_{j}$. 
It is then immediate by the definitions that 
$\pi' = \pi_{0}\circ\cdots \circ \pi_{i_{k}-1}\circ \rho'$, where 
$\rho' \isdef \rho \cdot v_{l+1}\cdots v_{i}\cdot v_{i+1}$; 
Observe that since $v_{i+1}$ lies on the path $\pi_{n}$, 
we still have $\rho' \sqsubseteq \pi_{n}$.
Furthermore, it is obvious that $\tau'$ extends $\tau$ via a number of passive
trail steps, i.e., where $\phi_{n}$ is not active, until $\phi_{n}$ is the active
formula at $v_{i}$; from this it easily follows that $\rdc{\tau'} = \rdc{\tau} 
\cdot \psi_{j} = \phi_{0}\cdots\phi_{n}\cdot\psi_{j}$.
Furthermore, since the position $v_{i+1}$ of $v_{i}$ lies on the path $\pi_{n}$, 
it was picked by \Refuter's winning strategy in $\game{\bbT}$ at the position 
$v_{i}$ in the match $\pi$; this means that the match $\pi'$ is still $f$-guided.

\item[Case $\phi_{n} = \psi_{0} \lor \psi_{1}$] for some formulas $\psi_{0},
\psi_{1}$.
The position $(\phi_{n},\pi_{n})$ in $\EG$ then belongs to $\eloi$, so suppose
that she continues the match $\Sigma$ by picking the formula $\psi_{j}$.
In this case we have $\tLab_{v_{i}} = \RuOr$, so that $v_{i}$ has a unique 
successor $v_{i+1}$ which features both $\psi_{0}$ and $\psi_{1}$ in its label 
set.
% $\Phi_{v_{i+1}}$.

This means that if we define $\Sigma' \isdef \Sigma \cdot (\psi_{j},\pi_{n})$,
$\pi' \isdef \pi \cdot v_{l+1}\cdots v_{i}v_{i+1}$ and 
$\tau' \isdef \tau \cdot \phi_{n} \cdots \phi_{n} \cdot \psi_{j}$,
it is not hard to see that $\Sigma'$ and $\pi'$ are linked, with $\tau'$ the 
witnessing trail on $\pi'$.

\item[Case $\phi_{n} = \eta x. \psi$] for some binder $\eta$, 
variable $x$ and formula $\psi$.
The match $\Sigma$ is then continued with the automatic move 
$(\psi[\eta x\, \psi/x],\pi_{n+1})$.
This case is in fact very similar to the one where $\phi$ is a disjunction, so
we omit the details.

\item[Case $\phi_{n} = \Box\psi$] for some formula $\psi$.
Then the position $(\phi_{n},\pi_{n})$ belongs to $\abel$: he has to come up with
an $R_{f}$-successor of the state $\pi_{n}$.
Since $\Box\psi$ is active in it, the node $v_{i}$ must be modal, in the sense
that $\tLab_{v_{i}} = \RuMod$.
By the definition of a state this can only be the case if $v_{i}$ is the last
node on the path/state $\pi_{n}$; recall that in this case we have $\rho =
\pi_{n}$.
Let $u \in E[v_{i}]$ be the successor of $v$ associated with $\psi$,
and let $\pi_{n+1}$ be any state with $\first(\pi_{n+1}) = u$.
It follows by definition of $R_{f}$ that $\pi_{n+1}$ is a successor of 
$\pi_{n}$ in the model $\bbS_{f}$.
This $\pi_{n+1}$ will then be $\abel$'s (legitimate) pick in $\EG$ at the 
position $(\Box\psi,\pi_{n+1})$.

Define $\Sigma' \isdef \Sigma \cdot (\psi,\pi_{n+1})$, 
$\pi' \isdef \pi \cdot v_{l+1} \cdots v_{i} u$
and $\tau' \isdef \tau \cdot \phi_{n} \cdots \phi_{n} \cdot \psi$.
Then we find that 
% $\pi_{i_{k}} = \pi_{n} = \rho$, and so
$\pi' = \pi_{0}\circ\cdots \circ \pi_{i_{k}-1}\circ \rho \circ \rho'$,
where $\rho'$ is the one-position path $u$.
Clearly then $\rho' \sqsubseteq \pi_{n+1}$.
Furthermore, it is easy to verify that $\rdc{\tau'} = \rdc{\tau} \cdot \psi 
= \phi_{0}\cdots\phi_{n}\psi$.
This means that $\Si'$ and $\pi'$ are linked, as required.

\item[Case $\phi_{n} = \dia\psi$] for some formula $\psi$.
As in the previous case this means that $v_{i}$ is a modal node, and $v_{i} = 
\last(\pi_{n})$.
However, the position $(\phi_{n},\pi_{n})$ now belongs to $\eloi$; suppose that 
she picks an $R_{f}$-successor $\pi_{n+1}$ of $\pi_{n}$.
Let $u \isdef \first(\pi_{n+1})$, then it follows from the definition of $R_{f}$
that $u$ is an $E$-successor of $v_{i}$.
As such, $u$ is a legitimate move for \Prover in the tableau game.

It then follows, exactly as in the previous case, that 
$\pi' \isdef \pi \cdot v_{l+1} \cdots v_{i} u$ is linked to 
$\Sigma' \isdef \Sigma \cdot (\psi,\pi_{n+1})$.
\end{description}
This finishes the proof of the claim.
\end{pfclaim}

On the basis of Claim~\ref{cl:tbcp1}, we may assume that $\abel$ indeed uses a 
strategy $f'$ that keeps a link between the $\EG$-match and his privately played 
$f$-guided $\game{\bbT}$-match.
We claim that $f'$ is actually a winning strategy for him.
To prove this, consider a \emph{full} $f'$-guided match $\Sigma$; we claim that
$\abel$ must be the winner of $\Sigma$.
This is easy to see if $\Sigma$ is finite, since it follows by the first item of
the Claim that playing $f'$, $\abel$ will never get stuck.

This leaves the case where $\Sigma$ is infinite.
Let $\Sigma = (\phi_{n},s_{n})_{n<\omega}$; it easily follows from
Claim~\ref{cl:tbcp1} that there must be an infinite $f$-guided 
$\game{\bbT}$-match $\pi$, such that the sequence $(\phi_{n})_{n<\omega}$ is the
tightening of some trail on $\pi$.
Since $\pi$ is guided by \Refuter's winning strategy $f$ this means that all 
of its trails are $\mu$-trails; but then obviously $(\phi_{n})_{n<\omega}$ is 
a $\mu$-trace, meaning that $\abel$ is the winner of $\Sigma$ indeed.
\medskip

The implication from left to right in \eqref{eq:admain} is proved along similar
lines, so we permit ourselves to be a bit more sketchy.
Assume that $\Phi$ is refuted in some pointed model $(\bbS,s)$.
Then by the adequacy of the game semantics for the modal $\mu$-calculus, $\abel$
has a winning strategy $f$ in the evaluation game $\EG(\bv\phi,\bbS)$ initialised
at position $(\bv\Phi,s)$.
Without loss of generality we may assume $f$ to be \emph{positional}, i.e., it 
only depends on the current position of the match.

The idea of the proof is now simple: while playing $\game{\bbT}$, \Refuter will
make sure that, where $\pi = v_{0}\cdots v_{k}$ is the current match,
every formula in $\Phi_{v_{k}}$ is the endpoint of some trail, and 
every trail $\tau$ on $\pi$ is such that its tightened trace $\rdc{\tau}$ is 
the projection of an $f$-guided match of $\EG(\bv\phi,\bbS)$ initialised at 
position $(\phi,s)$ for some $\phi \in \Phi$.
To show that \Refuter can maintain this condition for the full duration of the
match, it suffices to prove that he can keep it during one single round.
For this proof we make a case distinction, as to the rule applied at the last
node $v_{k}$ of the partial $\game{\bbT}$-match $\pi = v_{0}\cdots v_{k}$.
The proof details are fairly routine, so we confine ourselves to one case,
leaving the other cases as an exercise.

Assume, then, that $v_{k}$ is a conjunctive node, that is, $\tLab_{v_{k}} = 
\RuAnd$.
This node belongs to \Refuter, so as his move he has to pick an $E$-successor of
$v_{k}$.
The active formula at $v_{k}$ is some conjunction, say, $\psi_{0}\land\psi_{1}
\in \Phi_{v_{k}}$.
By the inductive assumption there is some trail $\tau = \phi_{0}\cdots \phi_{k}$
on $\pi$ such that $\phi_{k} = \psi_{0}\land\psi_{1}$, and there is some 
$f$-guided $\EG$-match of which $\rdc{\tau}$ is the projection, i.e., it is of 
the form $\Sigma = (\phi_{0},s_{0})\cdots(\phi_{k},s_{k})$.
Now observe that in $\EG$, the last position of this match, viz., $(\phi_{k},
s_{k}) = (\psi_{0}\land \psi_{1},s_{k})$, belongs to $\abel$.
Assume that his winning strategy $f$ tells him to pick the formula $\psi_{j}$ 
at this position, then in the tableau game, at the position $v_{k}$, \Refuter 
will pick the $E$-successor $u_{j}$ of $v_{k}$ that is associated with the 
conjunct $\psi_{j}$.
That is, he extends the match $\pi$ to $\pi' \isdef \pi\cdot u_{j}$.

To see that \Refuter has maintained the invariant, consider an arbitrary trail 
on $\pi'$; clearly such a trail is of the form $\sigma' = \sigma \cdot \psi$, 
for some trail $\sigma$ on $\pi$, and some formula $\psi \in \Phi_{u_{j}}$.
It is not hard to see that either $\last(\sigma) = \psi_{0}\land\psi_{1}$ 
and $\psi=\psi_{j}$, or else $\last(\sigma) = \psi$.
In the first case $\rdc{\sigma'}$ is the match $(\phi_{0},s_{0})\cdots
(\phi_{k},s_{k})\cdot (\psi_{j},s_{k})$; in the second case we find that 
$\rdc{\sigma'} = \rdc{\sigma}$ so that for the associated $f$-guided 
$\EG$-match we can take any such match that we inductively know to exist for
$\sigma$.
\end{proof}

\begin{corollary} \label{cor:invariant}
Let $\bbT$ and $\bbT'$ be two tableaux for the same sequent. 
Then \Prover has a winning strategy in $\game{\bbT}$ iff she has a winning 
strategy in $\game{\bbT'}$.
\end{corollary}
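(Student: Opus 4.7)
The plan is to derive Corollary \ref{cor:invariant} directly from the Adequacy Theorem (Theorem~\ref{t:adequacy}) together with determinacy, observing that the right-hand side of the adequacy statement depends only on the sequent $\Phi$ at the root and not on the tableau chosen for $\Phi$. Concretely, I would argue as follows.

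Let $\Phi$ be the sequent for which $\bbT$ and $\bbT'$ are both tableaux, and suppose \Prover has a winning strategy in $\game{\bbT}$. By determinacy, this is equivalent to saying \Refuter does \emph{not} have a winning strategy in $\game{\bbT}$. Applying the left-to-right direction of Theorem~\ref{t:adequacy} (or rather its contrapositive), the sequent $\Phi$ is \emph{not} refutable, i.e., the formula $\bv\Phi$ is valid. Since $\bbT'$ is a tableau for the same sequent $\Phi$, we may now invoke the right-to-left direction of Theorem~\ref{t:adequacy} for $\bbT'$: as $\bv\Phi$ is valid, $\bv\Phi$ is not refutable, so \Refuter has no winning strategy in $\game{\bbT'}$, and hence by determinacy \Prover does. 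The converse direction is entirely symmetric.

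Thus the entire proof is a two-line diagram of the form \Prover wins on $\bbT$ $\Leftrightarrow$ $\bv\Phi$ valid $\Leftrightarrow$ \Prover wins on $\bbT'$, where each biconditional is witnessed by combining Adequacy with determinacy. There is no genuine obstacle here: the work has already been done in Theorem~\ref{t:adequacy}, and the corollary simply expresses that validity of $\bv\Phi$ is a tableau-independent semantic property. The only point worth remarking explicitly is that Theorem~\ref{t:adequacy} is stated in terms of \Refuter, so one has to pass through determinacy twice (once to rephrase \Prover-winning as \Refuter-non-winning on $\bbT$, and once more to rephrase \Refuter-non-winning as \Prover-winning on $\bbT'$).
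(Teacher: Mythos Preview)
Your proof is correct and matches the paper's intended argument: the corollary is placed immediately after the Adequacy Theorem with no explicit proof, signalling that it follows directly from adequacy (plus determinacy). One minor simplification: Theorem~\ref{t:adequacy} already states the \Prover version parenthetically (``\Prover has a winning strategy in $\game{\bbT}$ iff $\bv\Phi$ is valid''), so you can chain the two biconditionals directly without detouring through \Refuter and invoking determinacy twice.
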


\section{Soundness}
\label{s:soundness}

In this section we show that our proof systems are sound, meaning that
any provable formula is valid. Because of the adequacy of the tableau game
that was established in Theorem~\ref{t:adequacy} it suffices to show that 
for every provable formula \Prover has a winning strategy in some tableau
for this formula. Moreover, we only need to consider proofs in \Focusinf
because by Theorem~\ref{t:same} every formula that is provable in \Focus
is also provable in \Focusinf.

\begin{theorem} 
\label{t:soundness}
Let $\Phi$ be some sequent.
If $\Phi$ is provable in \Focusinf then there is some tableau $\bbT$ for
$\Phi$ such that \Prover has a winning strategy in $\game{\bbT}$.
\end{theorem}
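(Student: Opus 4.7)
The plan is to invoke the adequacy theorem (Theorem~\ref{t:adequacy}) by building, from a given \Focusinf-proof of $\Phi^f$, a tableau $\bbT$ for $\Phi$ together with a winning strategy $f$ for \Prover in $\game{\bbT}$. First I would apply Theorem~\ref{t:tpp} to assume the proof $\Pi = (T,P,\Sigma,\pLab)$ is thin and progressive. The tableau $\bbT$ is then obtained by traversing $\Pi$, stripping all annotations, contracting the applications of \RuFocus, \RuUnfocus and \RuWeak (none of which have a counterpart in the tableau system), and translating each \RuBox step into a \RuMod application with a single premise; this is legitimate since the conclusion of a \RuBox step in a thin progressive proof has the form $\Box\varphi^a, \dia\Sigma$ and hence carries no non-modal side formulas. \Prover's strategy $f$ is essentially canonical: at each \RuMod node the proof dictates which (unique) premise to take, and at each \RuAnd node both proof subtrees are available to answer \Refuter's choice.

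Any infinite match $\pi$ of $\game{\bbT}$ following $f$ lifts back to an infinite branch $\beta$ of $\Pi$ that is \emph{successful} by definition of a \Focusinf-proof: from some step $n_0$ onward, every annotated sequent on $\beta$ contains a focused formula, no focus rule is applied, and \RuBox is applied infinitely often along $\beta$. To verify that $\pi$ satisfies the $\nu$-trail winning condition I would extract from $\beta$ an infinite focus-preserving trail $(\phi_n^f)_{n\geq n_0}$ of annotated formulas by a K\"onig-style argument: since \RuFocus does not reintroduce focus after stabilization, every focused formula at step $m \geq n_0$ must descend along the proof's trail relation from some focused formula at step $n_0$, and hence the finitely branching forest rooted at the focused set $F_{n_0}$ must contain an infinite branch, since otherwise it could not accommodate the infinitely many pairs $(m,\phi^f)$ forced by the successfulness of $\beta$.

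The crucial observation is then that such a focus-preserving trail cannot pass through any $\mu$-formula. Suppose some $\phi_n$ were a $\mu$-formula. Since \RuBox requires a purely modal sequent, $\phi_n$ would have to be removed from the sequent between step $n$ and the next \RuBox application on $\beta$, but the only rules capable of doing so are \RuMu, which extinguishes focus, and \RuWeak, which simply ends the trail; both options contradict the trail being infinite and focus-preserving. Since all formulas in $\Clos(\Phi)$ are alternation free (Proposition~\ref{p:af1}), Proposition~\ref{p:af4} now applies to the tightening of our trail and yields that it is a $\nu$-trace; translating this back along the contracted focus and weakening steps (which contribute only passive or vanishing trail edges) gives a $\nu$-trail witnessing that $\pi$ is a win for \Prover.

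The main obstacle I anticipate is the careful handling of weakening: contracting \RuWeak in $\Pi$ may leave the tableau's premise sequent strictly smaller than what the preceding tableau rule formally demands, so some bookkeeping is needed to certify that $\bbT$ is a genuine tableau and that the lift from a match $\pi$ to a branch $\beta$ recovers the trail structure faithfully. This is plausibly addressed either by first showing that \Focusinf admits weakening-elimination, or by retaining the weakened formulas in $\bbT$ and reusing the very same focus-persistence argument to rule out any $\mu$-trails arising from the additional formulas.
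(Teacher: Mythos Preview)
Your overall plan coincides with the paper's: reduce to a thin, progressive \Focusinf-proof via Theorem~\ref{t:tpp}, and on any infinite branch extract a $\nu$-trail by a K\"onig argument on the focused formulas (this is precisely Proposition~\ref{p:nu-trail}); your reasoning that such a focus-preserving trail cannot pass through a $\mu$-formula, because \RuMu extinguishes focus and the formula must be processed before the next \RuBox, is essentially the paper's argument.

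The gap is in the construction of the tableau. Contracting the \RuWeak steps does not yield a well-formed tableau, since the tableau calculus has no weakening; and your claim that the conclusion of \RuBox ``carries no non-modal side formulas'' is a statement about the \emph{proof} sequent, not about the \emph{tableau} sequent you are building. If, as in your second proposed fix, the weakened formulas are retained in the tableau, then the tableau node corresponding to a \RuBox step will in general contain non-atomic, non-modal side formulas, so the side condition~(\dag) of \RuMod fails outright; it may also contain additional box formulas, forcing \RuMod to have several premises rather than one. Neither of your fixes resolves this: weakening is not eliminable in \Focusinf (the axioms carry no side formulas, so even $p^f,\atneg{p}^f,q^f$ is unprovable without \RuWeak), and your worry about ``$\mu$-trails from the additional formulas'' misidentifies the issue --- the problem is the bare applicability of \RuMod, not the winning condition. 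The paper's remedy is to maintain a map $g$ from strategy nodes to proof nodes with the invariant $\uls{\Si}_{g(u)} \subseteq \Phi_u$ (so the tableau sequent is a \emph{superset} of the proof sequent). At a \RuBox step the tableau first decomposes all extra boolean and fixpoint formulas until only atomic and modal formulas remain; \RuMod is then applied, \Prover's strategy selects the single premise matching the proof's box formula, and the other premises are completed to arbitrary subtableaux outside the strategy via Proposition~\ref{p:tableau exists}. This saturation step at \RuBox is the missing ingredient in your construction.
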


We will prove the soundness theorem by transforming a thin and progressive 
$\Focusinf$-proof of $\Phi$ into a winning strategy for \Prover in the tableau 
game associated with some tableau for $\Phi$.
To make a connection between proofs and tableaux more tight, we first consider
the notion of an (annotated) trail in the setting of $\Focusinf$-proofs.

\begin{definition} \label{d:trails}
Let $\Pi = (T,P,\Si,\pLab)$ be a thin and progressive proof in \Focusinf. 
For all nodes $u,v \in T$ such that $P u v$ we define the \emph{active trail 
relation} $\atrail_{u,v} \subseteq \Si_u \times \Si_v$ and the \emph{passive 
trail relation} $\ptrail_{u,v} \subseteq \Si_u \times \Si_v$ by a case 
distinction depending on the rule that is applied at $u$.
Here we use the notation $\De_{S} \isdef \{(s,s) \mid s \in S\}$, for 
any set $S$.

\textit{Case $\pLab(u) = \RuOr$:} 
Then $\Sigma_u = \{(\phi \lor \psi)^a\} \uplus \Gamma$ and $\Sigma_v = 
\{\phi^a,\psi^a\} \cup \Gamma$, for some annotated sequent $\Gamma$. 
We define 
$\atrail_{u,v} \isdef \{((\phi \lor \psi)^a,\phi^a),((\phi \lor \psi)^a, \psi^a)\}$
and $\ptrail_{u,v} \isdef \Delta_\Gamma$.

\textit{Case $\pLab(u) = \RuAnd$:} 
In this case $\Sigma_u = \{(\phi_{0} \land \phi_{1})^{a}\} \uplus \Gamma$ and 
$\Sigma_v = \{\phi_{i}^a\} \cup \Gamma$ for some $i \in \{0,1\}$ and some
annotated sequent $\Gamma$.
% and $\chi$ such
% that $\chi = \phi$ if
% $v$ corresponds to the left premise of \RuAnd and $\chi = \psi$ if $v$
% corresponds to the right premise. 
We set $\atrail_{u,v} \isdef \{((\phi_{0} \land 
  \phi_{1})^a,\phi_{i}^a)\}$ and 
$\ptrail_{u,v} \isdef \Delta_\Gamma$.

\textit{Case $\pLab(u) = \RuMu$:} 
Then $\Sigma_u = \{\mu x . \phi^a\} \uplus \Gamma$ and 
$\Sigma_v = \{\phi[\mu x . \phi / x]^u\} \cup \Gamma$ for some sequent $\Gamma$.
We define $\atrail_{u,v} \isdef \{(\mu x . \phi^a,\phi[\mu x . \phi / x]^f)\}$
and $\ptrail_{u,v} \isdef \Delta_\Gamma$.

\textit{Case $\pLab(u) = \RuNu$:} 
Then $\Sigma_u = \{\nu x . \phi^a\} \uplus \Gamma$ and $\Sigma_v = 
\{\phi[\nu x . \phi / x]^a\} \cup
\Gamma$ for some sequent $\Gamma$. 
We define $\atrail_{u,v} \isdef \{(\nu x . \phi^a,\phi[\nu x . \phi / x]^a)\}$
and $\ptrail_{u,v} \isdef \Delta_\Gamma$.

\textit{Case $\pLab(u) = \RuBox$:} Then $\Sigma_u = \{\Box \phi^a\}
\cup \dia \Gamma$ and $\Sigma_v = \{\phi^a\} \cup \Gamma$ for
some annotated sequent $\Gamma$. 
We define $\atrail_{u,v} =
\{(\Box \phi^a,\phi^a)\} \cup \{(\dia \psi^b,\psi^b) \mid
\psi^b \in \Sigma\}$ and $\ptrail_{u,v} = \nada$.

\textit{Case $\pLab(u) = \RuWeak$:} 
In this case $\Sigma_u = \Sigma_v \uplus \{ \phi^{a} \}$ and we set
$\atrail_{u,v} \isdef \nada$ and $\ptrail_{u,v} \isdef\Delta_{\Sigma_v}$.

\textit{Case $\pLab(u) = \RuFocus$:} 
Then $\Sigma_u = \{\phi^u\} \cup \Ga$ and $\Sigma_v = \{\phi^f\} \cup \Ga$
for some annotated sequent $\Gamma$. 
We define $\atrail_{u,v} = \nada$ and
$\ptrail_{u,v} = \{(\phi^u,\phi^f)\} \cup \De_{\Ga}$.

\textit{Case $\pLab(u) = \RuUnfocus$:} 
Then $\Sigma_u = \{\phi^f\} \cup \Ga$ and $\Sigma_v = \{\phi^u\} \cup \Ga$
for some annotated sequent $\Gamma$. 
We define $\atrail_{u,v} = \nada$ and
$\ptrail_{u,v} = \{(\phi^f,\phi^u)\} \cup \De_{\Ga}$.

We also define the \emph{general trail relation} $\gtrail_{u,v} \isdef
\atrail_{u,v} \cup \ptrail_{u,v}$ for all nodes $u$ and $v$ with $Puv$.
\end{definition}

Note that in the case distinction of Definition~\ref{d:trails}, it is not 
possible that $u$ is an axiomatic leaf since it has a successor, and it is not
possible that $\pLab(u) \in \Tokens \cup \{\RuDischarge{\dx} \mid \dx \in 
\Tokens\}$ since $\Pi$ is a proof in \Focusinf.

We extend the trail relation $\gtrail_{u,v}$ to any two nodes such that $u$ is 
an ancestor of $v$ in the underlying proof tree.

\begin{definition}
Let $u,v$ be nodes of a proof tree $\Pi = (T,P,\Si,\pLab)$ such that $P^{*}uv$.
The relation $\gtrail_{u,v}$ is defined inductively such that $\gtrail_{u,u}
\isdef \De_{\Si_{u}}$, and if $Puw$ and $P^{*}wv$ then $\gtrail_{u,v} \isdef
\gtrail_{u,w} \comp \gtrail_{w,v}$, where $\comp$ denotes relational composition. 
\end{definition}

As in the case of tableaux, we will be specifically interested in infinite 
trails.

\begin{definition}
An \emph{(annotated) trail} on an infinite path $\alpha = (v_{n})_{n\in\om}$ in 
a \Focusinf-proof $\Pi$ is an infinite sequence $\tau = (\phi_n^{a_n})_{n\in\om}$ 
of annotated formulas such that $(\phi_i^{a_i} \phi_{i + 1}^{a_{i+1}}) \in 
\gtrail_{v_i,v_{i+1}}$ for all $i \in \omega$.
The tightening of such an annotated trail is defined exactly as in the case of 
plain trails.
An infinite trail $\tau$ is an \emph{$\eta$-trail}, for $\eta \in \{ \mu,\nu \}$ 
if its tightening $\rdc{\tau}$ is an $\eta$-trace.
\end{definition}

The central observation about the focus mechanism is that it enforces every 
infinite branch in a thin and progressive \Focusinf-proofs to contain a 
$\nu$-trail.

\begin{proposition} \label{p:nu-trail}
Every infinite branch in a thin and progressive \Focusinf-proof carries 
a $\nu$-trail.
\end{proposition}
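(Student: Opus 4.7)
The plan is to extract from the success condition on the infinite branch $\alpha = (v_n)_{n\in\om}$ an infinite chain of focused formulas linked by the trail relation, and then verify that this chain constitutes a $\nu$-trail. By definition of success, there is an index $i$ such that for every $j \geq i$ the sequent $\Si_{v_j}$ contains at least one focused formula and $\pLab(v_j) \notin \{\RuFocus, \RuUnfocus\}$, while in addition $\alpha$ witnesses infinitely many applications of \RuBox.

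First I would establish a backward-totality lemma for focused formulas: for every $j \geq i$, every $\psi^f \in \Si_{v_{j+1}}$ admits some $\phi^f \in \Si_{v_j}$ with $(\phi^f,\psi^f)\in \gtrail_{v_j,v_{j+1}}$. This is a short case analysis on the rule applied at $v_j$. In every rule other than \RuMu, either the active-trail target inherits the annotation of its source, or the side formulas are preserved with their annotations by the passive trail. In the case of \RuMu, the unfolded formula in the premise is annotated $u$, hence any focused formula in the premise must already be a side formula and thus has itself as focused predecessor. The cases \RuFocus and \RuUnfocus are excluded by hypothesis.

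Equipped with this, I would apply K\"onig's Lemma to the finitely branching tree whose level-$k$ nodes are the focused formulas in $\Si_{v_{i+k}}$, with $\chi^f$ at level $k{+}1$ being a child of $\phi^f$ at level $k$ whenever $(\phi^f,\chi^f)\in \gtrail_{v_{i+k},v_{i+k+1}}$. This tree is infinite because the backward-totality lemma lets us trace any focused formula at arbitrary depth back to a focused formula at $v_i$, and hence the tree has nodes at every level. An infinite branch of this tree yields a sequence $(\psi_j^f)_{j \geq i}$ of focused formulas along $\alpha$ with all consecutive pairs in the trail relation. Extending it backwards to a full trail $\tau = (\phi_n^{a_n})_{n\in\om}$ on $\alpha$ is routine, since the unrestricted trail relation is total from right to left for every proof rule.

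The main obstacle, and the final step, is to show that $\tau$ is a $\nu$-trail. That the tightening $\rdc{\tau}$ is infinite is immediate: at each \RuBox-step the passive trail is empty, so the corresponding step of $\tau$ is active and contributes to the tightening, and there are infinitely many such steps. For the $\nu$-character I would invoke Proposition~\ref{p:af4} and argue that only finitely many $\phi_n$ are $\mu$-formulas. The key observation is that for $n \geq i$ a focused $\mu$-formula $\phi_n = \mu x.\chi$ would be \emph{stuck}: if an active step were taken at $v_n$, it would have to be through \RuMu and would produce an unfocused successor, contradicting that $\phi_{n+1}$ is focused; hence the step must be passive, $\phi_{n+1}=\phi_n$ remains a focused $\mu$-formula, and iterating this reasoning forces all subsequent trail steps to be passive, contradicting that $\alpha$ carries infinitely many \RuBox applications which force active trail steps. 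Thus no $\phi_n$ with $n \geq i$ is a $\mu$-formula, at most finitely many $\phi_n$ (namely some among the first $i$) are, and by Proposition~\ref{p:af4} the tightening $\rdc{\tau}$ is a $\nu$-trace, as required.
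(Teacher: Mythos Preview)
Your proof is correct and follows essentially the same approach as the paper: both establish the backward-totality of the focused-trail relation past the point where the focus rules cease, apply K\"onig's Lemma to extract an infinite focused trail, and then argue that this trail cannot be a $\mu$-trail because the rule \RuMu necessarily unfocuses its unfolding. Your final ``stuck'' argument is somewhat more explicit than the paper's terse version (the paper simply notes that one cannot unravel a focused $\mu$-formula and remain focused, then invokes guardedness), but the underlying idea is identical.
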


\begin{proof}
Consider an infinite branch $\alpha = (v_{n})_{n\in\om}$ in some 
$\Focusinf$-proof $\Pi = (T,P,\Si,\pLab)$.
Then $\al$ is successful by assumption, so that we may fix a $k$ such that
for every $j \geq k$, the sequent $\Si_{v_{j}}$ contains a formula in focus,
and $\pLab(v_{j})$ is not a focus rule.

We claim that 
\begin{equation} \label{e:predecessor}
\text{for every $j \geq k$ and $\psi^f \in \Sigma_{v_{j+1}}$ 
there is some $\chi^f \in \Si_{v_{j}}$ such that $(\chi^{f},\psi^{f}) \in 
\gtrail_{v_{j},v_{j+1}}$}.
\end{equation}
To see this, let $j \geq k$ and  $\psi^f \in \Sigma_{v_{j+1}}$.
It is obvious that there is some annotated formula $\chi^a \in \Si_{v_{j}}$
with $(\chi^{a},\psi^{f}) \in \gtrail_{v_{j},v_{j+1}}$.
The key observation is now that in fact $a = f$, and this holds because 
the only way that we could have $(\chi^{u},\psi^{f}) \in \gtrail_{v_{j},v_{j+1}}$
is if we applied the focus rule at $v_{j}$, which would contradict 
our assumption on the nodes $v_{j}$ for $j \geq k$.

Now consider the graph $(V,E)$ where 
\[
V \isdef \{ (j,\phi) \mid k \leq j < \omega \text{ and } 
\phi^{f} \in \Si_{\al_{j}} \},
\] 
and 
\[
E \isdef \big\{ \big( (j,\phi),(j+1,\psi) \big) \mid (\phi^{f},\psi^{f}) \in 
\gtrail_{v_{j},v_{j+1}} \big\}
\]
This graph is directed, acyclic, infinite and finitely branching.
Furthermore, it follows by \eqref{e:predecessor} that every node $(j,\phi)$ is 
reachable in $(V,E)$ from some node $(k,\psi)$.
Then by a (variation of) K\"{o}nig's Lemma there is an infinite path 
$(n,\phi_{n}^f)_{n\in\om}$ in this graph. 
The induced sequence $\tau \isdef (\phi_{n}^f)_{n\in\om}$ is a trail on $\al$
because the formulas are related by the trail relation. 
By guardedness, $\tau$ must be either a $\mu$-trail or a $\nu$-trail.
But $\tau$ cannot feature infinitely many $\mu$-formulas, since it is not 
possible to unravel a $\mu$-formula $\phi_{j}^{f}$ and end up with 
a formula of the form $\phi_{j + 1}^f$, simply because the rule \RuMu attaches the label $u$ to the 
unravelling of $\phi_{j}$.
This means that $\tau$ cannot be a $\mu$-trail, and hence it must be 
a $\nu$-trail.
\end{proof}

\begin{proofof}{Theorem~\ref{t:soundness}}
Let $\Pi = (T,P,\Si,\pLab)$ be a \Focusinf-proof for $\Phi^f$. 
By Theorem~\ref{t:tpp} we may assume without loss of generality that $\Pi$ is 
thin and progressive.
We are going to construct a tableau $\bbT = (V,E,\Phi,\tLab,v_I)$ and a winning
strategy for \Prover in $\game{\bbT}$.
Our construction will be such that $(V,E)$ is a potentially infinite tree, of 
which the winning strategy $S \subseteq V$ for \Prover is a 
subtree, as in Remark~\ref{r:treestrat}.

The construction of $\bbT$ and $S$ proceeds via an induction that starts from 
the root and in every step adds children to one of the nodes in the subtree
$S$ that is not yet an axiom. 
Nodes of $\bbT$ that are not in $S$ are always immediately completely extended
using Proposition~\ref{p:tableau exists}. 
Thus, they do not have to be treated in the inductive construction. 
The construction of $S$ is guided by the structure of $\Pi$.

In addition to the tableau $\bbT$ we will construct a function $g : S \to T$ 
mapping those nodes of $\bbT$ that belong to the strategy $S$ to nodes of $\Pi$.
This function will satisfy the following three conditions, which will allow us 
to lift the $\nu$-trails from $\Pi$ to $S$:
\begin{enumerate}
\item \label{i:order preserving} If $Euv$ then $P^* g(u) g(v)$.
% \item \label{i:trail reflecting} If $E v u$ and $\phi^a \in
% \Sigma_{g(u)}$ then $\phi \in \Phi_u$ and for every $\psi^b$ such
% that $(\psi^b,\phi^a) \in \gtrail_{g(v),g(u)}$ holds in the trail relation
% of the proof $\Pi$ we have that $\psi \in \Phi_v$ and that
% $(\psi,\phi) \in \gtrail_{u,v}$ holds in the trail relation of the tableau
% $\bbT$.
\item \label{i:tra} %\label{i:trail reflecting}
The sequent $\Si_{g(l)}$ is thin, and $\uls{\Sigma}_{g(u)} \sse \Phi_{u}$.
% if $\phi^{a} \in \Sigma_{g(u)}$ then $\phi \in \Phi_{u}$.
\item \label{i:trb}
If $Euv$ and $(\psi^b,\phi^a) \in \gtrail^{\Pi}_{g(u),g(v)}$ then 
$(\psi,\phi) \in \gtrail^{\bbT}_{u,v}$.
\end{enumerate}

We now describe the iterative construction of the approximating objects $\bbT_i$,
$S_i$ and $g_i$ for all $i \in \omega$, which in the limit will yield $\bbT$, 
$S$ and $g$. 
Each $\bbT_i$ will be a \emph{pre-tableau}, that is, an object as defined in 
Definition~\ref{d:tableau}, except that we do not require the rule labelling to 
be defined for every leaf of the tree.
Leaves without labels will be called \emph{undetermined}, and the basic idea 
underlying the construction is that each step will take care of one undetermined leaf.
We will make sure that in each step $i$ of the construction, the entities
$\bbT_i$, $S_i$ and $g_i$ satisfy 
the conditions \ref{i:order preserving}, \ref{i:tra} and \ref{i:trb},
and moreover ensure that all undetermined leaves of $\bbT_i$ belong to $S_i$. 
It is easy to see that then also $S$ and $g$ satisfy these conditions.
\medskip

In the base case we let $\bbT_0$ be the node $v_I$ labelled with just $\Phi$
at the root of the tableau. We let $g_0(v_0)$ be the root of the proof $\Pi$.
The strategy $S_0$ just contains the node $v_0$.
\medskip

In the inductive step we assume that we have already constructed a pre-tableau 
$\bbT_i$, a subtree $S_i$ corresponding to \Prover's strategy and a function 
$g_i : S_i \to T$ satisfying the above conditions \ref{i:order preserving} 
-- \ref{i:trb}.

To extend these objects further we fix an undetermined leaf $l$ of $S_i$.
We may choose $l$ such that its distance to the root of $\bbT_i$ is minimal 
among all the undetermined leaves of $\bbT_i$. 
This will guarantee that every undetermined leaf gets treated eventually and 
thus ensure that the trees $S$ and $T$ in the limit do not contain any 
undetermined leaves. 
We  distinguish cases depending on the rule that is applied in $\Pi$ at 
$g_i(l)$.
\smallskip

\textit{Case $\pLab(g_i(l)) = \AxLit$ or $\pLab(g_i(l)) = \AxTop$:} 
In this case we may simply label the node $l$ with the corresponding axiom, 
while apart from this, we do not change $\bbT_{i}$, $S_{i}$ of $g_{i}$.
Note that $l$ will remain an (axiomatic) leaf of the tableau $\bbT$.
\smallskip

\textit{Case $\pLab(g_i(l)) = \RuOr$:} 
If the rule applied at $g_i(l)$ is \RuOr with principal formula, say,  
$(\phi\lor\psi)^{a}$, then this application of $\RuOr$ is followed by a 
(possibly empty) series of applications of weakening until a descendant $t$
of $g_{i(l)}$ is reached that is labeled with a thin sequent.

By condition~\ref{i:tra} the formula $\phi \lor \psi$ occurs at $l$, as it 
occurs in $g_i(l)$, so that we may label $l$ with the disjunction rule as 
well.
We extend $\bbT_i$, $S_i$ and $g_i$ accordingly, meaning that $T_{i + 1}$ is
$T_i$ extended with one node $v$ that is labelled with the premise of the 
application of the disjunction rule, $S_{i + 1}$ is $S_i$ extended to contain 
$v$ and $g_{i + 1}$ is just like $g_i$ but additionally maps $v$ to $t$. 
It is easy to check that with these definitions, the conditions
\ref{i:order preserving} -- \ref{i:trb} are satisfied. 
For condion \ref{i:tra} we need the fact that the formula 
$(\phi\lor\psi)^{\ol{a}}$ does not occur as a side formula in $\Si_{g_{i}(l)}$
since the latter sequent is thin, so that, as $\Pi$ is also progressive, the 
formula $\phi\lor\psi$ does not appear in the premisse of the rule at all, and
hence not in $\Si_{t}$ either.
\smallskip

\textit{Case $\pLab(g_i(l)) = \RuAnd$:} 
In the case where \RuAnd is applied at $g_i(l)$ with principal formula 
$(\phi \land \psi)^a$ it follows that $g_i(l)$ has a child $s_\phi$ for $\phi^a$
and a child $s_\psi$ for $\psi^a$, and that these nodes have thin descendants
$t_{\phi}$ and $t_{\psi}$, respectively, each of which is reached by a possibly
empty series of weakenings.

By condition~\ref{i:tra} it follows that $\phi \land \psi \in \Phi_l$. 
We can then apply the conjunction rule at $l$ to the formula $\phi \land \psi$
and obtain two new premises $v_\phi$ and $v_\psi$ for each of the conjuncts.
$\bbT_{i + 1}$ is defined to extend $\bbT_i$ with these additional two children. 
We let $S_{i + 1}$ include both nodes $v_\phi$ and $v_\psi$ as the 
conjunction rule belongs to \Refuter in the tableaux game. 
Moreover, $g_{i + 1}$ is the same as $g_i$ on the domain of $g_i$, while it 
maps $v_\phi$ to $t_\phi$ and $v_\psi$ to $t_\psi$. 
It is easy to check that the conditions \ref{i:order preserving} -- \ref{i:trb} 
are satisfied, where for condition condion \ref{i:tra} we use the thinness and 
progressivity of $\Pi$ as in the case for $\RuOr$.
\smallskip

\textit{Case $\pLab(g_i(l)) = \RuBox$:} 
We want to match this application of \RuBox in $\Pi$ with an application of the 
rule \RuMod in the tableau system. 
To make this work, however, two difficulties need to be addressed.
Let $s$ be the successor of $g_{i}(l)$ in $\Pi$, and, as before, assume that 
$\RuBox$ is followed by a possibly empty series of weakenings until a descendant
$t$ of $s$ is reached that is labelled with a thin sequent.

The first issue is that to apply the rule \RuMod in the tableau system, every
formula in the consequent must be either atomic or modal, whereas the sequent
$\Phi_{l}$ may contain boolean or fixpoint formulas.
The second difficulty is that the rule \RuBox in the focus proof system has 
only one premise, whereas the tableau rule \RuMod has one premise for each box 
formula in the conclusion.

To address the first difficulty we step by step apply the Boolean rules (\RuOr 
and \RuAnd) to break down all the Boolean formulas in $\Phi_l$ and the fixpoint 
rules (\RuMu and \RuNu) to unfold all fixpoint formulas.
% , until only literals and modal formulas are left. 
Because the rule \RuAnd is branching this process generates a subtree $\bbT_{l}$
at $l$ such that all leaves of $\bbT_{l}$ contain literals and modal formulas 
only.
Moreover, any modal formula from $\Phi_l$ is still present in $\Phi_m$, for any
such leaf $m$, because modal formulas are not affected by the application 
of Boolean or fixpoint rules.

We add all nodes of $\bbT_{l}$ to the strategy $S$, and we define $g_{i + 1}(u) 
\isdef g_i(u)$ for any $u$ in this subtree.
To see that this does not violate condition~\ref{i:tra} or \ref{i:trb}, note 
that all formulas in $\Sigma_{g_i(l)}$ are modal and so, as we saw, remain
present throughout the subtree.

Note that $\bbT_{l}$ may contain leaves $m$ such that $\Phi_{m}$ does not meet
the side condition (\dag) of the modal rule $\RuMod$; this means, however, that 
$\Phi_{m}$ is axiomatic, so that we may label such a leaf $m$ with either 
$\AxLit$ or $\AxTop$.
We then want to expand any remaining leaf in $\bbT_{l}$ by applying the modal 
rule \RuMod.
To see how this is done, fix such a leaf $m$.
Applying the modal rule of the tableau system at $m$ generates a new child
$n_\chi$ for every box formula $\Box \chi \in \Phi_l$. 
At this point we have to solve our second difficulty mentioned above, which is
to select one child $n_\chi$ to add into $S_{i + 1}$ and finish the construction
of the tableau for all other children.

To select the appropriate child of $m$, consider the unique box formula $\Box 
\phi$ such that $\Box \phi^a \in \Sigma_{g_i(l)}$ for some $a \in \{f,u\}$
--- such a formula exists because \RuBox is applied at $g_i(l)$. 
By condition~\ref{i:tra} we then have $\Box \phi \in \Phi_l$ and from this it 
follows, as we saw already, that $\Box \phi \in \Phi_m$.
We select the child $n_\phi$ of $m$ to be added to $S_{i + 1}$ and set 
$g_{i+1}(n_\phi) = t$, where $t$ is defined before. 
It is not hard to see that this definition satisfies the conditions~\ref{i:tra} 
and~\ref{i:trb}, because all diamond formulas in $\Sigma_{g_i(l)}$ are also in 
$\Phi_l$ and thus still present in $\Phi_m$.

We still need to deal with the other children of $m$, since these are still 
undetermined but not in $S_{i + 1}$, something we do not allow in our iterative
construction.
To solve this issue we simply use Proposition~\ref{p:tableau exists} to obtain
a new tree-shaped tableau $\bbT_k$ for any such child $k$ of $m$ with $k \neq 
n_\phi$. 
For the definition of $\bbT_{i + 1}$ we append $\bbT_k$ above the child $k$. 
Hence, the only undetermined leaf that is left above $m$ in $\bbT_{i+1}$ is the 
node $n_\phi$, which belongs to $S_{i + 1}$.
\smallskip

\textit{Case $\pLab(g_i(l)) = \RuMu$ or $\pLab(g_i(l)) = \RuNu$:} 
The case for the fixpoint rules is similar to the case for $\RuOr$, we just 
apply the corresponding fixpoint rule on the tableau side.
\smallskip

\textit{Case $\pLab(g_i(l)) = \RuWeak$:} 
Note that in this case the sequent $\Sigma_{t}$, associated with the successor 
node $t$ of $g(l)$, being the premise of an application of the weakening rule, 
is a (proper) subset of the consequent sequent $\Sigma_{g_i(l)}$.
In this case we simply define $T_{i+1} \isdef \bbT_i$ and $S_{i+1} \isdef
S_i$, but we modify $g_i$ so that $g_{i+1} : S_{i+1} \to T$ maps 
$g_{i+1}(l) = t$ and $g_{i+1}(k) = g_i(k)$ for all $k \neq l$. 
This clearly satisfies condition~\ref{i:order preserving}. 
To see that it satisfies the other two conditions we use the facts that 
$\Sigma_t \subseteq \Sigma_{g_i(l)}$, and that the trail relation for 
the weakening rule is trivial.

However, after applying this step we still have that $l$ is an undetermined leaf
of $\bbT_{i + 1}$. Thus the construction does not really make progress
in this step and one might worry that not all undetermined leaves get
eventually. We address this matter further below.
\smallskip

\textit{Case $\pLab(g_i(l)) = \RuFocus$:} The case for the focus change rule
\RuFocus is analogous to the previous case for the weakening rule
\RuWeak. The fact that the annotations of formulas change has no bearing
on the conditions.
\smallskip

We now address the problem that in the cases for \RuWeak and \RuFocus,
we do not extend $\bbT_i$ at its undetermined leaf $l$. Thus, without further
arguments it would seem possible that the construction loops through these
cases without ever making progress at the undetermined leaf $l$. 
To see that this can not happen note first that in each of these cases we are 
moving on in the proof $\Pi$ in the sense that $g_{i + 1}(l) \neq g_i(l)$ and
$(g_i(l),g_{i + 1}(l)) \in P$. 
Thus, if we would never make progress at $l$ this means that we would need to
follow an infinite path in $\Pi$ of which every node is labelled with either
\RuWeak or with \RuFocus. 
However, this would contradict Proposition~\ref{p:nu-trail} because every 
infinite branch in $\Pi$ is successful.
\medskip

It remains to be seen that $S$ is a winning strategy for \Prover. 
It is clear that \Prover wins all finite matches that are played according to
$S$ because by construction all leaves in $S$ are axioms. 
To show that all infinite matches are winning, consider an infinite path 
$\beta = (v_{n})_{n\in\om}$ in $S$. 
We need to show that $\beta$ contains a $\nu$-trail. 
Using condition~\ref{i:order preserving} it follows that there is an infinite 
path $\alpha = (t_{n})_{n\in\om}$ in $\Pi$ such that for every $i \in \omega$
we have that $g(v_i) = t_{k_i}$ for some $k_i \in \omega$, and, moreover, 
$k_i \leq k_j$ if $i \leq j$. 
By Proposition~\ref{p:nu-trail} the infinite path $\alpha$ contains a 
$\nu$-trail $\tau = \phi_0^{a_0} \phi_1^{a_1} \cdots$. 
With condition~\ref{i:trb} it follows that $\tau' \isdef \phi_{k_0} \phi_{k_1} \phi_{k_2} 
\cdots$ is a trail on $\beta$.
By Proposition~\ref{p:af4}, $\tau$ contains only finitely many $\mu$-formulas;
from this it is immediate that $\tau'$ also features at most finitely many 
$\mu$-formulas.
Thus, using Proposition~\ref{p:af4} a second time, we find that $\tau'$ is
a $\nu$-trail, as required.
\end{proofof}

\section{Completeness}
\label{s:completeness}

In this section we show that the focus systems are complete, that is,
every valid sequent is provable in either $\Focus$ or $\Focusinf$. 
As for the soundness argument in the previous section, we rely on 
Theorem~\ref{t:adequacy} which states that \Prover has a winning strategy
in any tableau for a given valid formula, and on Theorem~\ref{t:same}
which claims that every formula that is provable in \Focusinf is also provable
in \Focus. 
Thus, it suffices to show that winning strategies for \Prover in the tableau
game can be transformed into \Focusinf-proofs.

\begin{theorem} \label{t:completeness}
If \Prover has a winning strategy in some tableau game for a sequent $\Phi$
then $\Phi$ is provable in \Focusinf.
\end{theorem}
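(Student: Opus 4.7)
The plan is to transform a winning strategy $S$ for \Prover in some tableau game $\game{\bbT}$ for $\Phi$ into a thin \Focusinf-proof $\Pi$ of $\Phi$. By Corollary~\ref{cor:invariant} and Proposition~\ref{p:tableau exists}, I may take $\bbT$ to be tree-based, so that by Remark~\ref{r:treestrat} the strategy $S$ is a subtree of $\bbT$ that branches at every node owned by \Refuter and selects exactly one successor at every node owned by \Prover.

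First I would define the underlying tree of $\Pi$ by translating each node of $S$ into one or more focus proof steps. Axiom nodes \AxLit, \AxTop in $S$ translate to the corresponding focus axioms followed by a block of \RuWeak applications that delete the excess side formulas. Nodes labelled \RuOr, \RuAnd, \RuMu, \RuNu translate directly to an application of the same-named focus rule. A node $v$ labelled \RuMod is handled as follows: \Prover's strategy singles out one successor corresponding to a box formula $\Box\phi \in \Phi_v$; I translate this step into a block of \RuWeak applications removing the unselected box formulas together with the atomic side formulas, followed by a single \RuBox application with principal formula $\Box\phi$.

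Second I would propagate annotations along the resulting tree, starting from $\Phi^f$ at the root and proceeding toward the leaves. The rules \RuOr, \RuAnd, \RuNu, \RuBox preserve the annotation of the principal formula, \RuMu assigns annotation $u$ to the unfolding, and \RuWeak simply deletes a formula. Whenever the construction produces an annotated sequent all of whose formulas carry annotation $u$, I insert a block of \RuFocus applications, one per formula, refocusing every formula back to $f$ before continuing. Since no sequent ever contains the same formula with two different annotations, the resulting $\Pi$ is thin.

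Third I would verify that every infinite branch $\alpha$ of $\Pi$ is successful. Let $\beta$ be the projection of $\alpha$ onto $S$; it is an infinite branch of $\bbT$. Proposition~\ref{p:prog} gives infinitely many \RuMod steps on $\beta$, hence infinitely many \RuBox steps on $\alpha$. Since $\beta$ is played according to \Prover's winning strategy, it carries a $\nu$-trail $\tau = (\phi_n)_{n<\omega}$. This trail lifts to a sequence $\tau^a$ of annotated formulas on $\alpha$ whose annotation drops from $f$ to $u$ only at a \RuMu step on $\tau$, and is raised from $u$ to $f$ only during the inserted \RuFocus blocks. By Proposition~\ref{p:af4} the trace $\tau$ contains only finitely many $\mu$-formulas, so past some index $N$ the annotation on $\tau^a$ can only move upwards; hence after the first refocus event past $N$, the trail formula enters focus and remains so, so that no further refocus event is triggered and every subsequent sequent on $\alpha$ carries a focused formula. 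The main obstacle lies precisely in this third step: cleanly lifting the $\nu$-trail from the tableau branch to an annotated trail on the proof branch and showing that refocus events can occur only finitely often, a trail-tracking argument that mirrors, in reverse, the analysis used to establish Proposition~\ref{p:nu-trail} in the soundness proof.
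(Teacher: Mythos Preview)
Your approach is essentially the paper's: build the proof from the strategy subtree, translate each tableau step into focus rules (with \RuMod becoming a block of \RuWeak followed by one \RuBox on the box formula selected by the strategy), apply a total refocus whenever the sequent becomes entirely unfocused, and argue success by lifting the $\nu$-trail from the tableau branch.

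Two small points deserve attention. First, your thinness claim is not automatic: after applying, say, \RuOr to $(\phi\lor\psi)^f$ in a sequent that already contains $\phi^u$, the premise will feature both $\phi^f$ and $\phi^u$. The paper handles this by always weakening away the unfocused duplicate (``thin normalisation'') before proceeding; without this step the trail-lifting becomes ambiguous, since a trail formula $\phi$ on $\beta$ may correspond to two differently annotated copies in $\alpha$. Second, your success argument only treats the case where some refocus event occurs past index $N$. You should also record that if no refocus is ever triggered past $N$ then, by your own refocus policy, every sequent past $N$ already contains a focused formula and there are only finitely many focus applications---which is exactly the remaining case. The paper makes both cases explicit.
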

\begin{proof}
Let $\bbT = (V,E,\Phi,\tLab,v_I)$ be a tableau for $\Phi$ and let $S$ be
a winning strategy for \Prover in $\game{\bbT}$. 
Because of Proposition~\ref{p:tableau exists}, Corollary~\ref{cor:invariant} 
and Remark~\ref{r:treestrat} of we may assume that $\bbT$ is tree based, with 
root $v_I$, and that $S \subseteq V$ is a subtree of $\bbT$. 
We will construct a \Focusinf-proof $\Pi = (T,P,\Sigma,\pLab)$ for $\Phi^f$.

Applications of the focus rules in $\Pi$ will be very restricted.
To start with, the unfocus rule $\RuUnfocus$ will not be used at all, and the 
focus rule $\RuFocus$ will only occur in series of successive applications, 
with the effect of transforming an annotated sequent of the form $\Psi^{u}$ 
into its totally focused companion $\Psi^{f}$.
It will be convenient to think of this series of applications of $\RuFocus$ as
a \emph{single} proof rule, which we shall refer to as the total focus rule 
$\RuFocustot$:
\begin{prooftree}
 \AxiomC{$\Phi^f$}
 \RightLabel{\RuFocustot}
 \UnaryInfC{$\Phi^u$}
\end{prooftree}

We construct the pre-proof $\Pi$ of $\Phi^f$ together with a function $g : S
\to T$ in such a way that the following conditions are satisfied:
\begin{enumerate}
\item \label{i:first} \label{i:ordpres} 
If $E v u$ then $P^+ g(v) g(u)$.
\item \label{i:path lifting} 
For every $v \in S$ and every infinite branch $\beta = (v_{n})_{n\in\om}$
in $\Pi$ with $v_0 = g(v)$ there is some $i \in \omega$ and some $u \in S$ 
such that $Evu$ and $g(u) = v_i$.
\item \label{i:push formulas down} 
$\Sigma_{g(v)}$ is thin.
% For every $\varphi \in \Phi_v$ 
% there is a unique $a_\varphi \in \{f,u\}$ such 
% that $\varphi^{a_\varphi} \in \Sigma_{g(v)}$.
\item \label{i:trace preserving} 
If $Evu$ and $(\varphi,\psi) \in \gtrail_{v,u}$ then $(\varphi^{a_\varphi},
\psi^{a_\psi}) \in \gtrail_{g(v),g(u)}$.
\item \label{i:unfocus reflects mu}
If $Evu$, and $s$ and $t$ are nodes on the path from $g(v)$ to $g(u)$ such that
$P^+st$, $(\chi^a,\varphi^f) \in \gtrail_{g(v),s}$ for some $a \in \{f,u\}$ and 
$(\varphi^f,\psi^u) \in \gtrail_{s,t}$, then $\chi = \varphi$ and $\chi$ is a 
$\mu$-formula.
\item \label{i:last} \label{i:eventually focus} 
If $\alpha$ is an infinite branch of $\Pi$ and \RuFocustot is applicable at some
node on $\alpha$, then $\RuFocustot$ is applied at some later node on $\alpha$.
\end{enumerate}
The purpose of these conditions is that they allow us to prove later
that every branch in $\Pi$ is successful.

We construct $\Pi$ and $g$ as the limit of finite stages, where at stage
$i$ we have constructed a finite pre-proof $\Pi_i$ and a partial
function $g_i : S \to \Pi_i$. At every stage we make sure that $g_i$ and
$\Pi_i$ satisfy the following conditions:
\begin{enumerate}[resume]
\item \label{i:open leaves in range} 
All open leaves of $\Pi_i$ are in the range of $g_i$.
\item \label{i:same sequent} 
All nodes $v \in S$ for which $g_i(v)$ is defined satisfy
% \[
$\Phi_v = \uls{\Si}_{g_{i}(v)}$.
% \{\varphi \mid \varphi^a \mbox{ is in the sequent at } g_i(v)
% \mbox{ in } \Pi_i \mbox{ for some } a \in \{f,u\}\}.
% \]
\end{enumerate}

In the base case we define $\Pi_0$ to consist of just one node $r$ that is
labelled with the sequent $\Phi^f$. 
The partial function $g_0$ maps $r$ to $v_I$. 
Clearly, this satisfies the conditions \ref{i:open leaves in range} and 
\ref{i:same sequent}.

In the inductive step we consider any open leaf $m$ of $\Pi_i$, which
has a minimal distance from the root of $\Pi_i$. 
This ensures that in the limit every open leaf is eventually treated, so that 
$\Pi$ will not have any open leaves. 
By condition~\ref{i:open leaves in range} there is a $u \in S$ such that $g(u) 
= m$.

Our plan is to extend the proof $\Pi_i$ at the open leaf $m$ to mirror the
rule that is applied at $u$ in $\bbT$. In general this is possible because by 
condition~\ref{i:same sequent} the formulas in the annotated sequent at $m = 
g_i(u)$ are the same as the formulas at $u$.
All children of $u$ that are in $S$ should then be mapped by $g_{i+1}$ to new 
open leaves in $\Pi_{i + 1}$. 
This guarantees that condition~\ref{i:open leaves in range} is satisfied at 
step $i+1$ and because we are going to simulate the rule in the tableau by rules
in the focus system we ensure that condition~\ref{i:same sequent} holds at
these children as well. 
Clearly, the precise definition of $\Pi_{i+1}$ depends on the rule applied at 
$u$. 
Before going into the details we address two technical issues that feature in
all the cases.

First, to ensure that condition~\ref{i:eventually focus} is satisfied by our
construction we will apply \RuFocustot at $m$, whenever it is applicable. 
Thus, we need to check whether all formulas in the sequent of $m$ are annotated 
with $u$. 
If this is the case then we apply the total focus rule and proceed with the 
premise $n$ of this application of the focus rule. 
Otherwise we just proceed with $n = m$. 
Note that in either case the sequent at $n$ contains the same formulas as the 
sequent at $m$ and if $n \neq m$ then the trace relation relates the formulas 
at $n$ in an obvious way to those at $m$.

The second technical issue is that to ensure condition~\ref{i:push formulas 
down} we may need to apply \RuWeak to the new leaves of $\Pi_{i + 1}$. 
To see how this is done assume we have already extended $\Pi_i$ and obtained 
a new leaf $v$ which we would like to add into the range of $g_{i + 1}$. 
The annotated sequent at $v$, however, might contain both instances $\varphi^f$
and $\varphi^{u}$ of some formula $\varphi$, which would violate 
condition~\ref{i:push formulas down}. 
To take care of this we apply \RuWeak to get rid of the unfocused occurrence 
$\varphi^u$. 
in fact, we might need to apply \RuWeak multiple times to get rid of all 
unfocused duplicates of formulas. 
In the following we will refer to the node of the proof, that is obtained by
repeatedly applying \RuWeak in this way at an open leaf $l$, as the 
\emph{thin normalisation} of $l$.
\medskip

We are now ready to discuss the main part of the construction, which is based
on a case distinction depending on the rule $\tLab(u)$ that is applied at $u$.
\smallskip

\textit{Case $\tLab(u) = \AxLit$ or $\tLab(u) = \AxTop$:}
In this case we can just apply the corresponding rule at $m = g(u)$.
We might need to apply \RuWeak to get rid of side formulas that were present 
in the tableau. 
There is no need to extend $g_i$.

\textit{Case $\tLab(u) = \RuOr$:} 
In this case we can just apply \RuOr at $m$. 
This generates a new open leaf $l$ which corresponds to the successor node $v$
of $u$ in the tableau. 
We define $g_{i + 1}$ such that it maps $v$ to the thin normalisation of $l$.

\textit{Case $\tLab(u) = \RuAnd$:} 
In this case we also apply \RuAnd in the focus system at $m$. 
This generates two successors which we can associate with the two children of 
$u$, both of which must be in $S$.
Thus, $g_{i+1}$ will map the children of $u$ to the thin normalisations of the 
successors we have added to $m$.

\textit{Case $\tLab(u) = \RuMod$:} 
In this case we want to apply the rule \RuBox in the focus system. 
However, the sequent $\Sigma_m$ might contain multiple box formulas, whereas
\RuBox can only be applied to one of those.
To select the proper formula $\Box \varphi^{a} \in \Sigma_m$ we use the fact 
that the successors of $u$ are indexed by the box formulas in $\Phi_{u}$, and 
that the strategy $S$ contains precisely one of these successors.
That is, let $\Box \varphi^{a} \in \Sigma_m$ be such that its associated 
successor $v_\varphi$ of $u$ belongs to $S$. 
We then apply \RuWeak at $m$ until we have removed all formulas from the sequent
that are not diamond formulas and that are distinct from $\Box \varphi$. 
Once this is done the sequent only contains annotated versions of the diamond 
formulas from $\Phi_u$ plus an annotated version of the formula $\Box \varphi$.
We can then apply \RuBox and obtain a new node $l$ and we define 
$g_{i+1}(v_\varphi)$ to be the thin normalisation of $l$.

\textit{Case $\tLab(u) = \RuMu$ or $\tLab(u) = \RuNu$:} 
This is analogous to the case for \RuOr. 
Note, however, that the application of the fixpoint rules in the focus system 
has an effect on the annotation. 
\medskip

We define the function $g : S \to T$ as the limit of the maps $g_i$.
To see that $g$ is actually a total function, first observe that for every 
$v \in S$ and $i \in \omega$ either $v$ is already in the domain of $g_i$, 
in which case it is in the domain of $g$, or there is some node $u$ on the branch
leading to $v$ that is mapped by $g_i$ to an open leaf of $\Pi_i$.
Eventually, the proof is extended at this leaf because in every step we treat
an open leaf that is maximally close to the root. 
It is easy to check that in every step, when we extend the proof $\Pi_j$ at some
open leaf, we also move forward on the branches of $\bbT$ that run through $v$. 
Iterating this reasoning shows that eventually $v$ must be added to the domain 
of some $g_j$.
\medskip

We now show that $g$, together with $\Pi$, satisfies the
conditions~\ref{i:first}--\ref{i:last}. To start with, it is clear from
the step-wise construction of $g$ and $\Pi$ that
condition~\ref{i:ordpres} is satisfied.

Condition~\ref{i:path lifting} holds because all trees $\Pi_i$
are finite. Thus, on every infinite branch of $\Pi$ there are infinitely
many nodes that are a leaf in some $\Pi_i$ and by condition~\ref{i:open
leaves in range} each of these nodes is in the range of $g_i$ and thus of
$g$. 

Condition~\ref{i:push formulas down} is obviously satisfied at the root
of $\Pi$. It is satisfied at all other nodes because of
condition~\ref{i:same sequent} and because we make sure that we only add nodes
to the domain of $g$ that are normalized, using the procedure described above.

To see that condition~\ref{i:trace preserving} is satisfied by $\Pi$ and $g$
one has to carefully inspect each case of the inductive definition of $\Pi$. 
This is tedious but does not give rise to any technical difficulties.

To check condition~\ref{i:unfocus reflects mu}, note that if $(\varphi^f,\psi^u)
\in \gtrail_{s,t}$ then the trace from $\varphi^{f}$ to $\psi^u$ must lose its 
focus at some point on the path from $s$ to $t$. 
Since we do not use the unfocus rule in $\Pi$, the only case of the inductive 
construction of $\Pi$ where this is possible is the case where $\tLab(u) = 
\RuMu$.
In this case the formula that loses its focus is the principal formula, which
is then a $\mu$-formula and already present at the open leaf that we are
extending.

For condition~\ref{i:eventually focus} first observe that
if \RuFocustot is applicable at some node that is an open leaf of some
$\Pi_i$ then it will be applied immediately when this open leaf is taken
care of. 
Moreover, it is not hard to see that if \RuFocustot becomes applicable at 
some node $v$ during some stage $i$ of the construction of $\Pi$, then it 
will remain applicable at every node that is added above $v$ at this stage. 
This applies in particular to the new open leaves that get added above $v$, 
and so the total focus rule will be applied to each of these at a later stage
of the construction.
% there after, because in
% every step of the inductive construction we first try to us \RuFocus.
% Moreover, one can see that if \RuFocus is applicable at any node that is
% added at some step $i$ then it is applicable on all the open leaves that
% are added to $\Pi_i$: The only way how \RuFocus might become applicable
% in any proof is that we get loose the last formula in focus by either
% applying \RuMu to it or by using \RuWeak. The latter is not possible in
% the construction above because whenever we use \RuWeak we keep the
% formula that is in focus. So it must be that we made \RuFocus applicable
% by applying \RuMu to the last focused formula. One can see in the
% corresponding case above that then the only open leaf that is added to
% the proof is the sequent immediately after the application of \RuMu, in
% which \RuMu will be applicable. Putting all of these together we have
% that whenever \RuMu is applicable on some path in $\Pi$ then it is also
% applicable on a later node that is an open leaf of some $\Pi_i$ and thus
% \RuMu is going to be applied thereafter on the path.
\medskip

It remains to show that every infinite branch in $\Pi$ is successful.
Let $\beta = (v_{n})_{n\in\om}$ be such a branch.
We claim that
\begin{equation}
\label{eq:cpl1}
\text{from some moment on, every sequent on $\be$ contains a formula in focus},
\end{equation}
and to prove \eqref{eq:cpl1} we will link $\be$ to a match in $S$.
Observe that because of condition~\ref{i:path lifting} we can `lift' $\beta$
to a branch $\alpha = (t_{n})_{n\in\om}$ in $S$ such that there are 
$0 = k_0 < k_1 < k_2 < \cdots$ with $g(t_i) = v_{k_i}$
for all $i < \omega$. 
Because $\alpha$, as a match of the tableau game, is won by \Prover, it contains
a $\nu$-trail $(\phi_{n})_{n\in\om}$. 
This trail being a $\nu$-trail means that there is some $m \in \omega$ such that 
$\varphi_h$ is a $\mu$-formula for no $h \geq m$. 
We then use condition~\ref{i:trace preserving} to obtain a trace $\psi_0^{a_0}
\psi_1^{a_1} \cdots$ in $\beta$ such that $\varphi_i = \psi_{k_i}$.
Now distinguish cases.

First assume that there is an application of the total focus rule at some $v_l$,
with $l \geq k_m$.
Then at $v_{l + 1}$ all formulas are in focus and thus in particular the
annotation $a_{l+1}$ of the formula $\psi_{l+1}$ must be equal to $f$.
We show that
\begin{equation}
\label{eq:cpl2}
a_n = f \text{ for all } n > l. 
\end{equation}
Assume for contradiction that this is not the case and let $t$ be the smallest
number larger than $l$ such that $a_t = u$; since $a_{l+1} = f$ we find that 
$n > l+1$, and by assumption on $n$ we have $a_{t - 1} = f$.
Now let $h$ be such that $v_{n-1}$ and $v_{n}$ are on the path between 
$g(t_h) = v_{k_h}$ and $g(t_{h+1}) = v_{k_{h+1}}$; since 
$k_m \leq l \leq n-1$ it follows that $h \geq m$. 
But then by condition~\ref{i:unfocus reflects mu} $\varphi_h$ must be a 
$\mu$-formula, which contradicts our observation above that $\varphi_h$ is 
\emph{not} a $\mu$-formula for any $h \geq m$.
This proves \eqref{eq:cpl2}, which means that for every $n > l$, the formula 
$\psi_{n}$ is in focus at $v_{n}$.
From this \eqref{eq:cpl1} is immediate.

If, on the other hand, there is \emph{no} application of the total focus rule 
on $v_{k_m} v_{k_m+1} \cdots$ then it follows by condition~\ref{i:eventually 
focus} that the total focus rule is not \emph{applicable} at any sequent 
$v_{l}$ with $l \geq k_{m}$.
In other words, all these sequents contain a formula in focus, which
proves~\eqref{eq:cpl1} indeed.
\end{proof}

\section{Interpolation}
\label{sec-itp}

In this section we will show that the alternation-free fragment of the modal
$\mu$-calculus enjoys the Craig interpolation property.
To introduce the actual statement that we will prove, consider an implication
of the form $\phi \to \psi$, with $\phi,\psi \in \AFMC$.
First of all, we may without loss of generality assume that $\phi$ and $\psi$ 
are guarded, so that we may indeed take a proof-theoretic approach using the
$\Focus$ system.
Given our interpretation of sequents, we represent the implication $\phi \to 
\psi$ as the sequent $\ol{\phi},\psi$, and similarly, the implications involving
the interpolant $\theta$ can be represented as, respectively, the sequents
$\ol{\phi},\theta$ 
and $\ol{\theta},\psi$.
What we will prove below is that for an arbitrary derivable sequent
$\Gamma$,
and an arbitrary partition $\Gamma^{L},\Gamma^{R}$ of $\Gamma$, there is an 
interpolant $\theta$ such that the sequents $\Gamma^{L},\theta$ and 
$\Gamma^{R},\ol{\theta}$ are both provable.

Before we can formulate and prove our result, we need some preparation.
First of all, we will assume that in our $\Focus$ proofs every application of the
discharge rule discharges at least one assumption, i.e., every node in the proof
that is labelled with the discharge rule is the companion of at least one leaf.
It is easy to see that we can make this assumption without loss of generality
--- we leave the details to the reader.

Furthermore, it will be convenient for us to fine-tune the notion of 
a partition in the following way.

\begin{definition}
A \emph{partition} of a set $A$ is a non-empty finite tuple $(A_{1},\ldots,
A_{n})$ of pairwise disjoint subsets of $A$ such that $\bigcup_{i=1}^{n} A_{i} 
= A$.
A binary partition of $A$ may be denoted as $A^{L} \mid A^{R}$; in this setting
we may refer to the members of $A^{L}$ and $A^{R}$ as being \emph{left} and 
\emph{right} elements of $A$, respectively.
\end{definition}

Finally, to formulate the condition on an interpolant, note that we may identify
the \emph{vocabulary} of a sequent $\Si$ simply with the set $\FV(\Si)$ of free 
variables occurring in $\Si$. Our interpolation result can then be
stated as follows:

\begin{theorem}[{\bf Interpolation}]
\label{t:itp}
Let $\Pi$ be a \Focus-proof of some sequent $\Gamma$, and let $\Gamma^{L}\mid\Gamma^{R}$ be a 
partition of $\Gamma$.
Then there are a formula $\theta$ with $\FV(\theta) \sse \FV(\Gamma^{L}) \cap 
\FV(\Gamma^{R})$, and \Focus-proofs $\Pi^{L}$, $\Pi^{R}$, all effectively 
obtainable from $\Pi, \Gamma^{L}$ and $\Gamma^{R}$, such that $\Pi^{L}$ derives the
sequent $\Gamma^{L},\theta$ and $\Pi^{R}$ derives the sequent $\Gamma^{R},
\ol{\theta}$.
\end{theorem}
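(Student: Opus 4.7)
The plan is to adapt Maehara's method to our cyclic focus system, proving a strengthened statement by induction on the structure of the proof $\Pi$. For each node $v$ of $\Pi$ with annotated sequent $\Sigma_v$ and each partition $\Sigma_v^L \mid \Sigma_v^R$ inherited from $\Gamma^L\mid\Gamma^R$ by propagating side information through the rules, I will construct a \emph{quasi-interpolant} formula $\theta_v$ together with \Focus-subderivations of the annotated sequents $\Sigma_v^L,\theta_v^a$ and $\Sigma_v^R,\ol{\theta_v}^a$. The free variables of $\theta_v$ will come from $\FV(\Sigma_v^L)\cap\FV(\Sigma_v^R)$ augmented by a finite set of fresh variables $\{x_c\}$, one for each companion node $c$ strictly above $v$ whose cycle has not yet been closed at $v$. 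Applied at the root of $\Pi$, where no companion remains open, this produces the desired $\theta$ with $\FV(\theta)\sse\FV(\Gamma^L)\cap\FV(\Gamma^R)$.

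For the axiomatic and local rules, the interpolants are computed in the standard Maehara style. At an \AxLit axiom $p^a,\ol{p}^b$, the quasi-interpolant is $\bot$ or $\top$ if both literals lie on the same side, and $p$ or $\ol{p}$ otherwise depending on which literal is on which side; \AxTop is similar. For \RuOr, \RuAnd, \RuMu, \RuNu and \RuWeak I split into cases on whether the principal formula belongs to $\Sigma^L$ or $\Sigma^R$: on the same side the interpolants propagate essentially unchanged, while for \RuAnd with principal formula in $\Sigma^L$ the conclusion interpolant is the disjunction of the two premise interpolants (and the dual conjunction when the principal formula lies in $\Sigma^R$). For \RuBox, the partition is split between the single box formula and the diamond context, and the interpolant of the conclusion picks up an additional $\dia$ or $\Box$ according to the side of the principal formula; this is precisely the step at which the proviso on free variables is preserved, because the modal rule does not introduce fresh proposition letters. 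The focus rules \RuFocus and \RuUnfocus merely transmit interpolants, possibly with an annotation flip.

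The crux lies in the treatment of the discharge mechanism. At a discharged leaf $l$ with companion $c$ I set $\theta_l \isdef x_c$ for a fresh variable $x_c$, and record two open assumptions $\Sigma_l^L,x_c^a$ and $\Sigma_l^R,\ol{x_c}^a$ that will be tied off higher up. Inductively, when we reach $c$ with a quasi-interpolant $\theta_c'$ built from the subtree above $c$ (which may depend on $x_c$), I close the loop by binding $x_c$ with either $\mu$ or $\nu$, setting $\theta_c \isdef \eta_c x_c.\theta_c'$, and by installing matching applications of \RuDischarge{} in the two subderivations. The polarity $\eta_c$ is determined by the partition: intuitively it reflects whether the cycle's active trail is realised on the left or the right half, so that each of the two new cycles introduced in $\Pi^L$ and $\Pi^R$ still satisfies the path conditions of Definition~\ref{d:proof} --- at least one formula in focus throughout, no focus rule on the cycle, and at least one application of \RuBox.

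The main obstacle, as flagged in the introduction, is to guarantee that the $\mu$- and $\nu$-bindings introduced at successive companion nodes do not destroy alternation-freeness of the resulting interpolant; a careless construction would easily generate unrestricted nestings of $\mu$ inside $\nu$ inside $\mu$. To control this, I plan to exploit Proposition~\ref{p:af4} together with the focus discipline: in a thin and progressive proof every infinite branch supports a trail whose fixpoints are all of the same kind. I will maintain as an invariant that every fresh variable $x_c$ carries a polarity $\eta_c\in\{\mu,\nu\}$ consistent with the focus annotations along the cycle through $c$, and that the quasi-interpolant $\theta_v$ belongs to $\nth{\eta_c}{\PropQ_c}$ for the appropriate set $\PropQ_c$ containing the still-open fresh variables. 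The closure properties in Proposition~\ref{p:af3} then guarantee that binding $x_c$ by $\eta_c$ at the companion produces an $\eta_c$-formula that remains in $\AFMC$. Verifying this invariant through the modal case, where the Noetherian fragments of Definition~\ref{d:afmc} interact with the propagation of fresh variables across $\RuBox$, will be the most delicate point of the argument.
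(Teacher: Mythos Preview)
Your plan is essentially the paper's approach: Maehara's method with fresh variables at discharged leaves, bound by $\mu$ or $\nu$ at companions, with the polarity chosen so as to keep the interpolant alternation-free and the resulting cyclic proofs valid. Two technical points deserve more attention, however.

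First, when you propagate the partition from the root you obtain a split $\Sigma_l^L\mid\Sigma_l^R$ at each discharged leaf $l$ and a split $\Sigma_{c(l)}^L\mid\Sigma_{c(l)}^R$ at its companion, and these need \emph{not} coincide, even though $\Sigma_l=\Sigma_{c(l)}$. If they differ, your discharge step breaks: the open assumption $\Sigma_l^L,x_c^a$ cannot be tied off by a \RuDischarge{} at $c(l)$ whose conclusion is $\Sigma_{c(l)}^L,(\eta_c x_c.\theta')^a$. The paper resolves this by a preprocessing step (Proposition~\ref{p:itp2}): unravel $\Pi$ to a \Focusinf-proof, extend the partition, and re-cut to a finite proof in which every leaf and its companion carry the \emph{same} partition (a \emph{balanced} nodewise partition). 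You need this or an equivalent device before the induction can start.

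Second, you say the polarity $\eta_c$ is determined by which side ``realises the active trail,'' but discharge intervals in $\Pi$ may overlap, and a node can lie on several of them; you must assign it a single polarity compatible with all of them simultaneously. The paper handles this with a \emph{fixpoint colouring} (Definition~\ref{d:coloring}) assigning $\mu$, $\nu$, or a neutral colour to each node, subject to the constraint that $\mu$-coloured nodes have a focused formula in $\Sigma^L$ and $\nu$-coloured nodes one in $\Sigma^R$. That such a colouring exists (Proposition~\ref{p:itp3}) rests on showing that any two nodes linked through overlapping discharge intervals have the same left/right focus status (Proposition~\ref{p:itp4}); this is where condition~\ref{i:path condition} of Definition~\ref{d:proof} on the path from companion to leaf is actually exploited. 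The modal case, which you single out as the most delicate, is in fact routine; the real work is in making the polarity choice coherent across nested cycles and then checking that the two proofs $\Pi^L,\Pi^R$ themselves satisfy the discharge conditions of Definition~\ref{d:proof}.
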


The remainder of this section contains the proof of this theorem. 
We first consider the definition of interpolants for the conclusion of a
single proof rule, under the assumption that we already have
interpolants for the premises. We then show in
Proposition~\ref{p:locitp} that this definition is well-behaved. We need
some additional auxiliary definitions.

In this section it will be convenient to define the negation of $\theta$
in a slightly simpler manner than in section~\ref{s:prel}. This is
possible since the bound variables of $\theta$ will be taken from the
set $\Tokens$ of discharge tokens, which is disjoint from the collection
of variables used in the formulas featuring in $\Pi$.

\begin{definition}
\label{d:altneg}
Given a formula $\phi$ such that $\BV(\phi) \sse \Tokens$, we define the 
formula $\sneg{\phi}$ as follows.
For atomic $\phi$ we define
\[
\sneg{\phi} \isdef 
   \left\{\begin{array}{ll}
      x         & \text{if } \phi = x \in \Tokens
   \\ \ol{\phi} & \text{otherwise},
   \end{array}\right.
\]
and then we inductively we continue with
\[\begin{array}{lllclll}
   \sneg{\phi \land \psi} & \isdef & \sneg{\phi} \lor \sneg{\psi} 
   &&
   \sneg{\phi \lor \psi} & \isdef & \sneg{\phi} \land \sneg{\psi}
\\ \sneg{\Box \phi} & \isdef & \Diamond \sneg{\phi}  
   &&
   \sneg{\Diamond \phi} & \isdef & \Box \sneg{\phi} 
\\ \sneg{\mu x. \phi} & \isdef & \nu x. \sneg{\phi} 
   &&
   \sneg{\nu x. \phi} & \isdef & \mu x. \sneg{\phi}
\end{array}\]
\end{definition}
It is not hard to see that $\sneg{\theta} = \ol{\theta}$ precisely if
$\FV(\theta)$ does not contain any discharge token from $\Tokens$ as a free
variable.
For atomic formulas $\phi$ that are not of the form $x \in \Tokens$ we will 
continue to write $\ol{\phi}$ rather than $\sneg{\phi}$.

\begin{definition}
A formula is \emph{basic} if it is either atomic, or of the form $x$, $x_{0} \land x_{1}$, 
$x_{0} \lor x_{1}$, $\dia x$ or $\Box x$, where $x$, $x_{0}$ and $x_{1}$ are 
discharge tokens.
\end{definition}

\begin{definition}
\label{d:locitp}
Let $\Ru$ be some derivation rule, let 
\begin{prooftree}
\AXC{$\Si_{0} \quad\ldots\quad \Si_{n-1}$}
\UIC{$\Si$}
\end{prooftree}
% \end{equation}
be an instance of $\Ru$, and let $\Si^{L}\mid\Si^{R}$ be a partition of $\Si$.
By a case distinction as to the nature of the rule $\Ru$ we define a \emph{basic}
formula $\chi(x_{0},\ldots,x_{n-1})$, together with a partition $\Si_{i}^{L}\mid
\Si_{i}^{R}$ for each $\Si_{i}$.
Here the variables $x_{0},\ldots,x_{n-1}$ correspond to the premises of the
rule.

\begin{description}

\item[Case $\Ru = \AxLit$.]
Let $\Si$ be of the form $\Si = \{ p, \atneg{p} \}$, and observe that 
since there are no premises, we only need to define the formula $\chi$.
For this purpose we make a further case distinction as to the exact nature of
the partition.

If $\Si^{L}\mid \Si^{R} = p^a \mid \atneg{p}^{b}$, define $\chi \isdef \atneg{p}$.

If $\Si^{L}\mid \Si^{R} = \atneg{p}^a \mid p^{b}$, define $\chi \isdef p$.

If $\Si^{L}\mid \Si^{R} = p^a, \atneg{p}^{b} \mid \nada$, define $\chi \isdef \bot$.

If $\Si^{L}\mid \Si^{R} = \nada \mid p^a, \atneg{p}^{b}$, define $\chi \isdef \top$.

\item[Case $\Ru = \AxTop$.]
Here $\Si$ must be of the form $\Si = \{\top \}$, and, as in the case of the 
other axiom, we only need to define the formula $\chi$ since there are no 
premises.
We make a further case distinction.

If $\Si^{L}\mid \Si^{R} = \top^a \mid \nada$, define $\chi \isdef \bot$.

If $\Si^{L}\mid \Si^{R} = \nada \mid \top^a$, define $\chi \isdef \top$.

\item[Case $\Ru = \RuAnd$.]
We distinguish cases, as to which side the active formula $(\phi_{0}\land
\phi_{1})^{a}$ belongs to.

   \begin{description}

   \item[Subcase $(\phi_{0}\land \phi_{1})^{a} \in \Si^{L}$.]
   We may then represent the partition of $\Si$ as 
   $(\phi_{0}\land\phi_{1})^{a},\Si_{0} \mid \Si_{1}$.
   Here we define $\chi(x_{0},x_{1}) \isdef x_{0} \lor x_{1}$, and we partition
   the premises of $\RuAnd$ as, respectively, 
   $\phi_{0}^{a},\Si_{0} \mid \Si_{1} \setminus \{\phi_{0}^{a}\}$ and
   $\phi_{1}^{a},\Si_{0} \mid \Si_{1} \setminus \{\phi_{1}^{a}\}$.
    
   \item[Subcase $(\phi_{0}\land \phi_{1})^{a} \in \Si^{R}$.]
   We may now represent the partition of $\Si$ as 
   $\Si_{0} \mid \Si_{1}, (\phi_{0}\land\phi_{1})^{a}$.
   Now we define $\chi(x_{0},x_{1}) \isdef x_{0} \land x_{1}$, and we partition
   the premises of $\RuAnd$ as, respectively, 
   $\Si_{0}\setminus \{\phi_{0}^{a}\} \mid \Si_{1}, \phi_{0}^{a}$ and
   $\Si_{0}\setminus \{\phi_{1}^{a}\} \mid \Si_{1}, \phi_{1}^{a} $.

   \end{description}

\item[Case $\Ru = \RuOr$.]
We only consider the case where the active formula $(\phi_{0}\lor\phi_{1})^{a}$
belongs to $\Si^{L}$ (the other case is symmetric).
We may then represent the partition of $\Si$ as 
$(\phi_{0}\lor\phi_{1})^{a},\Si_{0} \mid \Si_{1}$.
Here we define $\chi(x_{0}) \isdef x_{0}$, and 
we partition the premise of $\RuOr$ as 
$\phi_{0}^{a},\phi_{1}^{a},\Si_{0} \mid \Si_{1} \setminus 
   \{\phi_{0}^{a},\phi_{1}^{a}\}$.

\item[Case $\Ru = \RuBox$.]
We distinguish cases, as to whether the active formula $\Box\phi^{a}$ belongs
to $\Si^{L}$ or to $\Si^{R}$.

    \begin{description}

    \item[Subcase $\Box\phi^{a} \in \Si^{L}$.]
	We may then represent the partition of $\Si$ as $\Box\phi^{a},\dia\Si_{0}
	\mid \dia\Si_{1}$.
	We define $\chi \isdef \dia x_{0}$ and we partition the premise 
	of $\RuBox$ as $\phi, \Si_{0} \mid \Si_{1} \setminus \{ \phi \}$.

    \item[Subcase $\Box\phi^{a} \in \Si^{R}$.]
	We may then represent the partition of $\Si$ as 
	$\Si_{0} \mid \Si_{1}, \Box\phi^{a}$.
	Now we define $\chi \isdef \Box x_{0}$ and we partition the premise 
	of $\RuBox$ as $\Si_{0} \setminus \{ \phi \} \mid \Si_{1}, \phi$.
	\end{description}

\item[Case $\Ru = \RuMu$.]
We only consider the case where the active formula $\mu x.\phi^{a}$ belongs to
$\Si^{L}$ (the other case is symmetric).
We may then represent the partition of $\Si$ as $\mu x.\phi^{a},\Si_{0} \mid 
\Si_{1}$.
Here we define $\chi(x_{0}) \isdef x_{0}$, and we partition the premise
of $\RuMu$ as $\phi(\mu x.\phi)^{u},\Si_{0} \mid \Si_{1} \setminus \{ 
\phi(\mu x.\phi)^{u} \}$.

\item[Case $\Ru = \RuNu$.]
The definitions are analogous to the case of $\RuMu$. 

\item[Case $\Ru = \RuWeak$.]
We only consider the case where the active formula $\phi^{a}$ belongs to 
$\Si^{L}$ (the other case is symmetric).
We may then represent the partition of $\Si$ as 
$\phi^{a},\Si_{0} \mid \Si_{1}$.
Here we define $\chi(x_{0}) \isdef x_{0}$, and 
we partition the premise of $\RuWeak$ as 
$\Si_{0} \mid \Si_{1}$.

\item[Case $\Ru = \RuFocus$.]
We only consider the case where the active formula $\phi^{u}$ belongs to 
$\Si^{L}$ (the other case is symmetric).
We may then represent the partition of $\Si$ as 
$\phi^u, \Si_{0} \mid \Si_{1}$.
In this case we define $\chi(x_{0}) \isdef x_{0}$, and 
we partition the premise of $\RuFocus$ as $\phi^f, \Si_0 \mid \Si_{1}
\setminus \{\phi^f\}$.

\item[Case $\Ru = \RuUnfocus$.]
This case is analogous to the case for \RuFocus, just swapping the
annotations of $\phi$.
% We only consider the case where the active formula $\phi^{u}$ belongs to 
% $\Si^{L}$ (the other case is symmetric).
% We may then represent the partition of $\Si$ as 
% $\phi^f, \Si_{0} \mid \Si_{1}$.
% In this case we define $\chi(x_{0}) \isdef x_{0}$, and 
% we partition the premise of $\RuFocus$ as $\phi^u, \Si_0 \mid \Si_{1}$.

\item[Case $\Ru = \RuDischarge{}$.]
In this case the premise and the conclusions are the same, and so we also 
partition the premise in the same way as the conclusion.
Furthermore, we define $\chi \isdef x_{0}$.
\end{description}
\end{definition}

\begin{proposition}[{\bf Interpolation Transfer}]
\label{p:locitp}
Let
% \begin{equation}
% \label{eq:1}
\begin{prooftree}
\AXC{$\Si_{0} \quad\ldots\quad \Si_{n-1}$}
\UIC{$\Si$}
\end{prooftree}
% \end{equation}
be an instance of some derivation rule $\Ru \neq \RuDischarge{}$, let $\Si^{L}
\mid\Si^{R}$ be a partition of $\Si$, and let $\chi$ and $\Si_{i}^{L}\mid
\Si_{i}^{R}$, for $i = 0,\ldots,n-1$ be as in Definition~\ref{d:locitp}.
Then the following hold:
\begin{urlist}
\item \label{i:coh}
$\FV(\Si_{i}^{K}) \sse \FV(\Si^{K})$ where $K \in \{ L, R \}$;

\item \label{i:itptrf}
For any sequence $\theta_{0},\ldots,\theta_{n-1}$ of formulas and any 
$b \in \{ u,f \}$ there are derivations $\Xi^{L}$ and $\Xi^{R}$:

\begin{minipage}{.40\textwidth}
\begin{prooftree}
\AXC{$\Si_{0}^{L},\theta_{0}^{b} \quad \ldots \quad 
   \Si_{n-1}^{L},\theta_{n-1}^{b}$}
\noLine\UIC{$\vdots$}
\noLine\UIC{$\Xi^L$}
\noLine\UIC{$\vdots$}
\UIC{$\Si^{L}, \chi(\theta_{0},\ldots,\theta_{n-1})^{b}$}
\end{prooftree}
\end{minipage}
\quad and \quad 
\begin{minipage}{.40\textwidth}
\begin{prooftree}
\AXC{$\Si_{0}^{R},\sneg{\theta_{0}}^{b} \quad \ldots \quad 
   \Si_{n-1}^{R},\sneg{\theta_{n-1}}^{b}$}
\noLine\UIC{$\vdots$}
\noLine\UIC{$\Xi^R$}
\noLine\UIC{$\vdots$}
\UIC{$\Si^{R}, \sneg{\chi(\theta_{0},\ldots,\theta_{n-1})}^{b}$}
\end{prooftree}
\end{minipage}

\noindent
Provided that $\Ru \notin \{\RuFocus,\RuUnfocus\}$, these derivations satisfy the following conditions:
   \begin{urlist}
   \item[a)] $\Xi^{L}$ and $\Xi^{R}$ do not involve
      the rules \RuFocus or \RuUnfocus.
   \item[b)] 
   If, for some $i$, the assumption $\Si_{i}^{L},\theta_{i}^{b}$ contains a
   formula in focus, then so does every sequent in $\Xi^{L}$ on the path to
   this assumption.
   \item[c)] 
   If, for some $i$, the assumption $\Si_{i}^{R},\sneg{\theta_{i}}^{b}$ contains 
   a formula in focus, then so does every sequent in $\Xi^{L}$ on the path to
   this assumption.
   \item[d)] If $\Ru = \RuBox$ then there is an applications of $\RuBox$ at
the root of $\Xi^L$ and $\Xi^R$.
\end{urlist}
\end{urlist}
\end{proposition}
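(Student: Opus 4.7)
The proof proceeds by case analysis on the rule $\Ru$, constructing $\Xi^L$ and $\Xi^R$ explicitly in each case by essentially mimicking $\Ru$ applied to the appropriate side of the partition, using the basic formula $\chi$ from Definition~\ref{d:locitp} together with the interpolants $\theta_0,\ldots,\theta_{n-1}$ to form $\chi(\theta_0,\ldots,\theta_{n-1})$. Part~\ref{i:coh} about free variables is an inspection of Definition~\ref{d:locitp}: each premise partition $\Si_i^L \mid \Si_i^R$ is obtained from $\Si^L\mid\Si^R$ by moving subformulas of the active formula from one side to the other, and these subformulas have free variables contained in $\FV(\Si^L)$ or $\FV(\Si^R)$, respectively; for the axiom cases, $\chi$ itself is chosen from $\{ \bot, \top, p, \atneg{p} \}$ depending on the subcase, so that $\FV(\chi)$ is trivially contained in $\FV(\Si^L) \cap \FV(\Si^R)$.

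For part~\ref{i:itptrf}, the definition of $\chi$ is engineered so that the derivations are short. In the $\RuAnd$ case with $(\phi_0 \land \phi_1)^a \in \Si^L$ and $\chi(x_0,x_1) = x_0 \lor x_1$, the derivation $\Xi^L$ takes the two premise-interpolation proofs of $\phi_i^a,\Si_0,\theta_i^b$, weakens each by the missing $\theta_{1-i}^b$, applies $\RuOr$ to form $(\theta_0 \lor \theta_1)^b$, and closes with a single $\RuAnd$; dually, $\Xi^R$ applies $\RuAnd$ (preceded by weakening as needed) to build $\sneg{\theta_0}\land\sneg{\theta_1}$ over $\Si^R$. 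The $\RuBox$ case is the most delicate: for $\Box\phi^a \in \Si^L$ with $\chi(x_0) = \dia x_0$, the derivation $\Xi^L$ is a single application of $\RuBox$ with principal formula $\phi^a$, which produces $\Box\phi^a$ and simultaneously modalises $\theta_0$ into $\dia\theta_0$; and $\Xi^R$ first applies $\RuWeak$ to reintroduce $\phi^a$ into the premise (needed precisely when $\phi \in \Si_1$) and then applies $\RuBox$ with principal formula $\sneg{\theta_0}^b$, yielding $\Box\sneg{\theta_0}^b$ at the root and thereby securing condition~(d). The $\RuOr$, fixpoint and weakening cases are analogous trivial or one-step derivations with $\chi(x_0)=x_0$; the axiom cases have empty $\Xi^L,\Xi^R$ apart from a single axiom application and some weakening; and the focus rules also take $\chi(x_0)=x_0$ and require essentially no work, since conditions~(a)--(d) are not imposed on them.

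The main obstacle is verifying conditions~(a)--(c) uniformly across the case analysis. Condition~(a) holds by construction, since when $\Ru \notin \{\RuFocus,\RuUnfocus\}$ the only rules used in building $\Xi^L$ and $\Xi^R$ are the axioms, the boolean, modal, fixpoint and weakening rules. Conditions~(b) and~(c) reduce to the observation that all these rules leave the annotations of residual (non-principal) formulas intact, so a focused formula present in some assumption $\Si_i^K,\theta_i^b$ (or with $\theta_i^b$ itself being in focus) remains in focus along every intermediate sequent on the path down to the root of $\Xi^K$; the only exception, $\RuMu$ acting on its principal formula, concerns the active formula rather than the interpolant side of the path and is therefore immaterial. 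Condition~(d) follows directly from the explicit construction in the $\RuBox$ case above. The verification is thus spread across the case distinction rather than concentrated in a single hard step, with the $\RuBox$ case requiring the most attention because of the extra weakening needed to compensate for the removal of $\phi^a$ from the right premise partition.
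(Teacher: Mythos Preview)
Your proposal is correct and takes essentially the same approach as the paper: a case analysis on $\Ru$, with explicit short derivations $\Xi^L$ and $\Xi^R$ in each case (the paper writes out \AxLit, \RuAnd, \RuOr, \RuBox, \RuMu, \RuWeak, \RuFocus in full and declares the rest analogous), followed by the same observation for conditions (a)--(d) that no focus-changing rules are used and that a focused formula at a child persists at its parent. Your phrasing for part~\ref{i:coh} (``moving subformulas from one side to the other'') is slightly imprecise---the residuals are \emph{added} to the side of the active formula and merely \emph{removed} from the other side if they happened to occur there---but the variable-inclusion conclusion is the right one and matches the paper's one-line ``direct inspection''.
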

\begin{proof}
The proof of both parts proceeds via a case distinction depending on the 
proof rule $\Ru$, following the case distinction in Definition~\ref{d:locitp}.
Part~(\ref{i:coh} easily follows from a direct inspection.
For part~(\ref{i:itptrf} we restrict attention to some representative cases.

Below we use $\RuWeak^*$ as a `proof rule' in the sense that, in a proof,
we draw the configuration
\AXC{$\Gamma_{t}$}
\RightLabel{$\RuWeak^*$}
\UIC{$\Gamma_{s}$}
\DisplayProof
to indicate that either $\Gamma_{t}$ is a proper subset of $\Gamma_{s}$,
in which case we are using repeated applications of the weakening rule
at node $s$, or else there is only one single node $s = t$ labelled with
$\Gamma_{s} = \Gamma_{t}$.

\begin{description}
\item[Case $\Ru = \AxLit$.]
As an example consider the case where the partition is such that
$\Sigma^L \mid \Sigma^R = p^a \mid \atneg{p}^{c}$. Then we have by
definition that $\chi = \atneg{p}$ and hence we need to supply proofs for
the annotated sequents $\Sigma^L,\chi^b = p^a,\atneg{p}^b$ and
$\Sigma^R,\sneg{\chi}^b = \atneg{p}^{c},p^b$. 
Both of these can easily be proved with the axiom \AxLit.

As a second example consider the case where the partition is such that
$\Sigma^L \mid \Sigma^R = p^a, \atneg{p}^{c} \mid \nada$. Then we have that
$\chi = \bot$ and hence need to provide proofs for the sequents
$\Sigma^L,\chi^b = p^a,\atneg{p}^{c},\bot^b$ and $\Sigma^R,\sneg{\chi}^b =
\top^b$. The latter is proved with \AxTop and for the former we use the
proof:
\begin{prooftree}
\AXC{\phantom{X}}
\RightLabel{\AxLit}
\UIC{$p^a,\atneg{p}^{c}$}
\RightLabel{\RuWeak}
\UIC{$p^a,\atneg{p}^{c},\bot^b$}
\end{prooftree}

\item[Case $\Ru = \RuAnd$.]
First assume that the active formula $(\phi_{0}\land\phi_{1})^{a}$
belongs to $\Si^{L}$. We may then represent the partition of $\Si$ as
$(\phi_{0}\land\phi_{1})^{a},\Si_{0} \mid \Si_{1}$.
For the claim of the proposition, the following derivations suffice:

\begin{minipage}{.45\textwidth}
\begin{prooftree}
\AXC{$\Si_{0},\phi_{0}^{a},\theta_{0}^{b}$}
\RightLabel{$\RuWeak$}
\UIC{$\Si_{0},\phi_{0}^{a},\theta_{0}^{b},\theta_{1}^{b}$}
\RightLabel{$\RuOr$}
\UIC{$\Si_{0},\phi_{0}^{a},(\theta_{0}\lor\theta_{1})^{b}$}
\AXC{$\Si_{0},\phi_{1}^{a},\theta_{1}^{b}$}
\RightLabel{$\RuWeak$}
\UIC{$\Si_{0},\phi_{1}^{a},\theta_{0}^{b},\theta_{1}^{b}$}
\RightLabel{$\RuOr$}
\UIC{$\Si_{0},\phi_{1}^{a},(\theta_{0}\lor\theta_{1})^{b}$}
\RightLabel{$\RuAnd$}
\BIC{$\Si_{0},(\phi_{0}\land\phi_{1})^{a},(\theta_{0}\lor\theta_{1})^{b}$}
\end{prooftree}
\end{minipage}
\begin{minipage}{.45\textwidth}
\begin{prooftree}
\AXC{$\Si_{1}\setminus\{\phi_{0}^{a}\},\sneg{\theta_{0}}^{b}$}
\RightLabel{$\RuWeak^*$}
\UIC{$\Si_{1},\sneg{\theta_{0}}^{b}$}
\AXC{$\Si_{1}\setminus\{\phi_{1}^{a}\},\sneg{\theta_{1}}^{b}$}
\RightLabel{$\RuWeak^*$}
\UIC{$\Si_{1},\sneg{\theta_{1}}^{b}$}
\RightLabel{$\RuAnd$}
\BIC{$\Si_{1}, (\sneg{\theta_{0}}\land\sneg{\theta_{1}})^{b}$}
\end{prooftree}
\end{minipage}

We then consider the other possibility, where the active formula
$(\phi_{0}\land\phi_{1})^{a}$ belongs to $\Si^{R}$. 
We may represent the partition of $\Si$ as $\Si_{0} \mid
(\phi_{0}\land\phi_{1})^{a},\Si_{1}$.
Now the following derivations suffice:
\begin{prooftree}
\AXC{$\Si_{0}\setminus\{\phi_{1}^{a}\},\theta_{0}^{b}$}
\RightLabel{$\RuWeak^*$}
\UIC{$\Si_{0},\theta_{0}^{b}$}
\AXC{$\Si_{0}\setminus\{\phi_{0}^{a}\},\theta_{1}^{b}$}
\RightLabel{$\RuWeak^*$}
\UIC{$\Si_{0},\theta_{1}^{b}$}
\RightLabel{$\RuAnd$}
\BIC{$\Si_{0}, (\theta_{0}\land\theta_{1})^{b}$}
\end{prooftree}

\begin{prooftree}
\AXC{$\Si_{1},\phi_{0}^{a},\sneg{\theta_{0}}^{b}$}
\RightLabel{$\RuWeak$}
\UIC{$\Si_{1},\phi_{0}^{a},\sneg{\theta_{0}}^{b},\sneg{\theta_{1}}^{b}$}
\RightLabel{$\RuOr$}
\UIC{$\Si_{1},\phi_{0}^{a},(\sneg{\theta_{0}}\lor\sneg{\theta_{1}})^{b}$}
\AXC{$\Si_{1},\phi_{1}^{a},\sneg{\theta_{1}}^{b}$}
\RightLabel{$\RuWeak$}
\UIC{$\Si_{1},\phi_{1}^{a},\sneg{\theta_{0}}^{b},\sneg{\theta_{1}}^{b}$}
\RightLabel{$\RuOr$}
\UIC{$\Si_{1},\phi_{1}^{a},(\sneg{\theta_{0}}\lor\sneg{\theta_{1}})^{b}$}
\RightLabel{$\RuAnd$}
\BIC{$\Si_{1},(\phi_{0}\land\phi_{1})^{a},(\sneg{\theta_{0}}\lor\sneg{\theta_{1}})^{b}$}
\end{prooftree}

\item[Case $\Ru = \RuOr$.]
We only consider the case where the active formula $(\phi_0 \lor
\phi_1)^a$ belongs to $\Si^{L}$ (the other case is similar). We may then
represent the partition of $\Si$ as $(\phi_0 \lor \phi_1)^a,\Si_{0}
\mid \Si_{1}$.
The two derivations below then suffice to prove the proposition:

\begin{prooftree}
\AXC{$\phi_0^a,\phi_1^a, \Si_{0}, \theta_{0}^{b}$}
\RightLabel{$\RuOr$}
\UIC{$(\phi_0 \lor \phi_1)^a, \Si_{0},\theta_{0}^{b}$}
\end{prooftree}

\begin{prooftree}
\AXC{$\Si_{1}\setminus\{\phi_0^{a},\phi_1^a\},\sneg{\theta_{1}}^{b}$}
\RightLabel{$\RuWeak^*$}
\UIC{$\Si_{1}\setminus\{\phi_0^{a}\},\sneg{\theta_{1}}^{b}$}
\RightLabel{$\RuWeak^*$}
\UIC{$\Si_{1}, \sneg{\theta_{0}}^{b}$}
\end{prooftree}

\item[Case $\Ru = \RuBox$.]
We only consider the case where the active formula $\Box\phi^{a}$ belongs to
$\Si^{L}$ (the other case is similar).
We may then represent the partition of $\Si$ as 
$\Box\phi^{a},\dia\Si_{0} \mid \dia\Si_{1}$.
The two derivations below then suffice to prove the proposition:

\begin{prooftree}
\AXC{$\phi^a, \Si_{0}, \theta_{0}^{b}$}
\RightLabel{$\RuBox$}
\UIC{$\Box\phi^a, \dia\Si_{0},\dia\theta_{0}^{b}$}
\end{prooftree}

\begin{prooftree}
\AXC{$\Si_{1}\setminus\{\phi^{a}\},\sneg{\theta_{1}}^{b}$}
\RightLabel{$\RuWeak^*$}
\UIC{$\Si_{1}, \sneg{\theta_{0}}^{b}$}
\RightLabel{$\RuBox$}
\UIC{$\dia\Si_{1}, \Box\sneg{\theta_{0}}^{b}$}
\end{prooftree}

\item[Case $\Ru = \RuMu$.]
We only consider the case where the principal formula $\mu x. \phi^a$ belongs to $\Si^{L}$ (the other case is similar). We may then
represent the partition of $\Si$ as $\mu x. \phi^a,\Si_{0}
\mid \Si_{1}$. 
The two derivations below then suffice to prove the proposition:

\begin{prooftree}
\AXC{$\phi(\mu x. \phi)^u, \Si_{0}, \theta_{0}^{b}$}
\RightLabel{$\RuMu$}
\UIC{$\mu x. \phi^a, \Si_{0},\theta_{0}^{b}$}
\end{prooftree}

\begin{prooftree}
\AXC{$\Si_{1}\setminus\{\phi(\mu x. \phi)^{u}\},\sneg{\theta_{1}}^{b}$}
\RightLabel{$\RuWeak^*$}
\UIC{$\Si_{1}, \sneg{\theta_{0}}^{b}$}
\end{prooftree}

\item[Case $\Ru = \RuNu$.]
This case is analogous to the case of \RuMu, simply keeping the
annotation of the principal formula, instead of unfocusing.

\item[Case $\Ru = \RuWeak$.]
We only consider the case where the weakened formula $\phi^a$ belongs to $\Si^{L}$ (the other case is similar). We may then
represent the partition of $\Si$ as $\phi^a,\Si_{0}
\mid \Si_{1}$. 
For $\Xi^L$ we can use the derivation
\begin{prooftree}
\AXC{$\Si_{0}, \theta_{0}^{b}$}
\RightLabel{$\RuWeak$}
\UIC{$\phi^a, \Si_{0},\theta_{0}^{b}$}
\end{prooftree}

The derivation $\Xi^R$ consists of the single sequent
$\Si_1,\sneg{\theta_0}^b$, without any rules being applied.

\item[Case $\Ru = \RuFocus$.]
Again, only consider the case where the principal formula is on the
left. We can write the partition of $\Si$ as $\phi^u, \Si_0 \mid \Si_1$
and use the proofs
\begin{prooftree}
\AXC{$\phi^f,\Si_{0}, \theta_{0}^{b}$}
\RightLabel{$\RuFocus$}
\UIC{$\phi^u, \Si_{0},\theta_{0}^{b}$}
\end{prooftree}
and
\begin{prooftree}
\AXC{$\Si_{1}\setminus\{\phi^f\},\sneg{\theta_{1}}^{b}$}
\RightLabel{$\RuWeak^*$}
\UIC{$\Si_{1}, \sneg{\theta_{0}}^{b}$}
\end{prooftree}

\item[Case $\Ru = \RuUnfocus$.]
This case is analogous to the case for \RuFocus.
\end{description}

To finish the proof of Proposition~\ref{p:locitp}, we need to check that each 
of the proofs given above satisfies the conditions (a) - (c).
Condition (a) can be verified by a direct inspection.
One may also verify the conditions (b) and (c) directly, using the observation
that for any node $t$ in the pre-proofs $\Pi^{L}$ and $\Pi^{R}$, if some 
formula occurring at a child of $t$ is annotated with $f$, then also some
formula at $t$ is annotated with $f$. Lastly, one can check in the case
for \RuBox that the constructed proof also contains an application of
\RuBox at its root.
\end{proof}

To prove Theorem~\ref{t:itp} we assemble the interpolant $\theta$ by an
induction on the tree that underlies the proof $\Pi$, where most cases
of the inductive step are covered by Definition~\ref{d:locitp} and
Proposition~\ref{p:locitp}. The main difficulty is treating the cases
for discharged leafs and the discharge rule. The idea is to introduce a
fresh variable as the interpolant of a discharged leaf and to then bind
the variable with a fixpoint operator at the step that corresponds to
the application of the discharge rule at the companion of the leaf. We
need to ensure that this can be done in such that the interpolant stays
alternation-free. The key notion that allows us to organize the
introduction of fixpoint operators to the interpolant are the fixpoint
colourings from Definition~\ref{d:coloring} below. The fixpoint colouring
specifies for every node in $\Pi$ whether the application of the
discharge rule at the node should be either a least fixpoint $\mu$ or a
greatest fixpoint $\nu$. Before we can discuss this notion we need to
show that the partition of $\Phi^L \mid \Phi^R$ of the root of $\Pi$
can be extended in a well-behaved way to all nodes of the proof.

\begin{definition}
Let $\Pi = (T,P,R,\Si)$ be a proof.
A \emph{nodewise partition} of $\Pi$ is a pair $(\Si^{L},\Si^{R})$ of labellings  
such that, for every $t \in T$, the pair $\Si^{L}_{t}\mid \Si^{R}_{t}$ is a 
partition of $\Si_{t}$.
Such a partition is \emph{coherent} if it agrees with the derivation rules 
applied in the proof, as expressed by Definition~\ref{d:locitp}.
% the conditions \ref{c:coh1} and \ref{c:coh2} of Proposition~\ref{p:itpchi}.
\end{definition}

\begin{proposition}
\label{p:itp1}
Let $\Pi$ be a proof of some sequent $\Ga$ and let $(\Ga^{L},\Ga^{R})$ be 
a partition of $\Ga$.
Then there is a unique coherent nodewise partition $(\Si^{L},\Si^{R})$ of $\Pi$ 
such that $\Si^{L}_{r} = \Ga^{L}$ and $\Si^{R}_{r} = \Ga^{R}$, where $r$ is
the root of $\Pi$.
\end{proposition}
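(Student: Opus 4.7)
The plan is to prove the proposition by a straightforward induction on the distance from the root of the tree $(T,P)$ underlying $\Pi$. The idea is that Definition~\ref{d:locitp}, though formally presented as a way to build interpolants, in fact simultaneously prescribes, given a partition of the conclusion of a rule, a unique partition of each of its premises. We will exploit this observation in both the existence and the uniqueness arguments.

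For existence, I would define the labellings $\Sigma^L, \Sigma^R$ recursively. At the root $r$ we set $\Sigma^L_r \isdef \Gamma^L$ and $\Sigma^R_r \isdef \Gamma^R$, which is a partition of $\Sigma_r = \Gamma$ by assumption. For the inductive step, suppose $\Sigma^L_t \mid \Sigma^R_t$ has been defined and is a partition of $\Sigma_t$. If $t$ is a leaf there is nothing more to do at $t$. Otherwise, consider the rule $\pLab(t)$ applied at $t$, together with the children $t_0, \dots, t_{n-1}$ of $t$; inspecting the corresponding case of Definition~\ref{d:locitp}, one checks that the particular sub-case (e.g., whether the principal formula is in $\Sigma^L_t$ or $\Sigma^R_t$) is uniquely determined by the given partition, and that in each sub-case the partition of each premise $\Sigma_{t_i}$ prescribed by the definition is indeed a partition of $\Sigma_{t_i}$. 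We set $\Sigma^L_{t_i} \mid \Sigma^R_{t_i}$ accordingly. The resulting nodewise partition is coherent by construction, since coherence is exactly the demand that at every node the partition agrees with Definition~\ref{d:locitp}.

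For the uniqueness claim, suppose $(\Sigma^L, \Sigma^R)$ and $(\widetilde{\Sigma}^L, \widetilde{\Sigma}^R)$ are two coherent nodewise partitions of $\Pi$ that agree at the root. A simple induction on the distance from the root shows that they agree at every node: if they agree at $t$ and $t$ is internal, then coherence at $t$ forces the partitions of the children of $t$ to be precisely those prescribed by Definition~\ref{d:locitp} applied to the common partition $\Sigma^L_t \mid \Sigma^R_t$, so they must coincide. The discharge rule case is unproblematic here, since Definition~\ref{d:locitp} specifies that the partition of the premise simply copies the partition of the conclusion.

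The only step requiring any care is the case-by-case verification, in the existence argument, that the prescription of Definition~\ref{d:locitp} really yields a \emph{partition} of each premise $\Sigma_{t_i}$ (that is, that the two sides are disjoint and cover $\Sigma_{t_i}$). This is routine: one just reads off, in each rule, how the residual formulas in the premises arise from the principal and side formulas in the conclusion, and checks that the assignment of each residual to a side matches the side of its ancestor formula in $\Sigma_t$. I do not expect any genuine obstacle in this proof; the proposition is essentially a bookkeeping lemma that lifts the local definition in~\ref{d:locitp} to the global tree structure of $\Pi$.
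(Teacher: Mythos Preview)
Your proposal is correct and follows the same approach as the paper, which simply records the proposition as ``immediate by the definitions''; you have merely spelled out the routine induction that underlies this remark.
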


\begin{proof}
Immediate by the definitions.
\end{proof}

We shall refer to the nodewise partition given in Proposition~\ref{p:itp1} as
being \emph{induced} by the partition of the root sequent.

\begin{definition}
Let $\Pi = (T,P,R,\Si)$ be a proof and let $(\Si^{L}, \Si^{R})$ be a coherent
nodewise partition of $\Pi$.
This partition is called \emph{balanced} if $\Si^{L}_{l} = \Si^{L}_{c(l)}$ and 
$\Si^{R}_{l} = \Si^{R}_{c(l)}$, for every discharged leaves $l$ of $\Pi$.
\end{definition}

In words, a coherent nodewise partition is balanced if it splits the sequents
of any discharged leaf in exactly the same manner as it splits the leaf's
companion node.
% Note that in fact it suffices to require that either $\Si^{L}_{l} = 
% \Si^{L}_{c(l)}$ or $\Si^{R}_{l} = \Si^{R}_{c(l)}$, for every xxxxxxx
% $l$ of $\Pi$.
As a corollary of the following proposition, for every partition $(\Ga^{L},
\Ga^{R})$ of a provable sequent $\Ga$ we can find a proof on which the induced
partition is balanced.

\begin{proposition}
\label{p:itp2}
Let $\Pi$ be a proof of some sequent $\Ga$, and let $(\Ga^{L},\Ga^{R})$
be a partition of $\Ga$. 
Then there is some finite proof $\Pi'$ of $\Ga$ such that the nodewise 
partition on $\Pi'$, induced by $(\Ga^{L},\Ga^{R})$, is balanced.
\end{proposition}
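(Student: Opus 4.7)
The plan is to construct $\Pi'$ by unfolding $\Pi$ while tracking the partition induced by coherence at each node. Each node $v$ of $\Pi'$ will carry a shadow $\sigma_v \in T$ and a partition $\varpi_v$ of $\Si_{\sigma_v}$; the annotated sequent at $v$ is $\Si_{\sigma_v}$, and the rule applied at $v$ mirrors the one at $\sigma_v$ in $\Pi$, with fresh discharge tokens substituted at every $\RuDischarge{}$-node. Starting from the root of $\Pi'$ with shadow $r$ and partition $\Gamma^L \mid \Gamma^R$, the construction proceeds at each non-discharge internal node by generating children matching those of $\sigma_v$, with partitions propagated according to Proposition~\ref{p:itp1}.

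The crucial case is a node $v$ whose shadow $\sigma_v = l$ is a discharged leaf of $\Pi$, carrying partition $\varpi$. Since $\Si_l = \Si_{c(l)}$, the same $\varpi$ is also a partition of $\Si_{c(l)}$. If some proper ancestor $u$ of $v$ in $\Pi'$ already bears the label $(c(l), \varpi)$, then $v$ becomes a discharged leaf whose companion is $u$ (which carries the fresh discharge token assigned when it was created). Otherwise, we continue by jumping through the back-edge: we relabel $v$ with $\RuDischarge{\dy}$ for a fresh token $\dy$, reset its shadow to $c(l)$, and build its unique child with shadow equal to the child of $c(l)$ in $\Pi$ and the same partition $\varpi$. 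In this way $v$ itself becomes a potential companion for subsequent unfoldings below it. Termination follows from the finiteness of the label set $\{(t,\varpi) \mid t \in T,\; \varpi \text{ a partition of } \Si_t\}$: each jump step introduces a label $(c(l),\varpi)$ that is fresh along the current branch, so no branch of $\Pi'$ can be infinite, and $\Pi'$ is finite. Balance is then immediate: at every discharge in $\Pi'$, leaf and companion carry identical labels, so their induced partitions on the common sequent agree.

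The main obstacle will be verifying that $\Pi'$ is a genuine $\Focus$-proof, specifically condition~\ref{i:path condition} of Definition~\ref{d:proof} at each discharge. The path in $\Pi'$ from a companion to its discharged leaf projects onto a concatenation of companion-to-leaf paths in $\Pi$ interspersed with jumps that preserve annotated sequents; each such piece of $\Pi$ already satisfies the three bullets of that condition by assumption, and stitching them together yields the required $\RuBox$ application, absence of focus rules, and focused formula at every intermediate node. A final cleanup step enforces the standing assumption that every $\RuDischarge{}$ of $\Pi'$ discharges at least one leaf, by contracting any unused $\RuDischarge{}$-node with its unique child.
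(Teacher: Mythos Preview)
Your construction is the paper's strategy made explicit: unravel $\Pi$ while tracking the induced partition, then cut off at repeats. The paper simply invokes Propositions~\ref{p:fintoinf} and~\ref{p:ppp} (with ``successful repeat'' strengthened to also match partitions), which makes condition~\ref{i:path condition} of Definition~\ref{d:proof} automatic, since successful repeats satisfy it by definition. You instead discharge at the first ancestor with matching shadow-and-partition, so you must verify condition~\ref{i:path condition} by hand.

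That verification has a gap. The pieces of your projected path have the form $[c(l_i), l_{i+1}]$ --- from the companion of \emph{one} leaf to a \emph{different} leaf --- and condition~\ref{i:path condition} in $\Pi$ only covers intervals $[c(m), m]$, so the claim that ``each such piece already satisfies the three bullets by assumption'' is not justified. The conclusion is still correct, but needs different arguments. For (\ref{i:pc}a): any discharged leaf below a focus-rule node $e$ in $\Pi$ has its companion also below $e$ (otherwise $e \in [c(m),m]$, violating~\ref{i:path condition} in $\Pi$), so once the shadow drops strictly below $e$ it stays there; if some node $w$ on $[u,v]$ carried a focus rule with shadow $e$, then $v$'s shadow $l$ and hence $u$'s shadow $c(l)$ would lie strictly below $e$, forcing every shadow from $u$ onward --- including $w$'s --- strictly below $e$, a contradiction. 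Part (\ref{i:pc}c) then follows from (\ref{i:pc}a) and Proposition~\ref{p:lr1}, since $\Si_{l}$ contains a focused formula. For (\ref{i:pc}b): closing the shadow sequence via the back edge $l \to c(l)$ and iterating yields an infinite branch of the unravelling, which by Proposition~\ref{p:fintoinf} passes through $\RuBox$ infinitely often; hence the finite segment already contains a $\RuBox$.
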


\begin{proof}
Let $\vec{\Pi}$ be the full unravelling of $\Pi$ into a \Focusinf-proof
according to Proposition~\ref{p:fintoinf}, and extend the nodewise
partition of $\Pi$ to $\vec{\Pi}$ in the obvious way.
Using the same strategy as in the proof of Proposition~\ref{p:ppp} we may `cut
off' $\vec{\Pi}$ to a balanced proof $\Pi'$.
% which is an unravelling of $\Pi$.
% We obtain $\Pi'$ from $\Pi$ using a stepwise construction, where at each step
% we unravel the current proof at any unbalanced xxxxxxx.
% Here we call a xxxxxxx \emph{unbalanced} if its companion node is
% split in a different way (with respect to the nodewise partition that is induced
% by $(\Ga^{L},\Ga^{R})$).
% 
% For the details, let $(\Si^{L},\Si^{R})$ be a nodewise partition of some proof
% $\La$, let $l$ be some unbalanced leaf, and let $\La'$ be the one-step
% unravelling of $\La$ around $l$.
% Extend the nodewise partition to $\La'$ in the obvious way.
% Now consider an arbitrary new leaf of $\La'$; it must be of the form $\wh{m}$
% for some descendant $m$ of $c(l)$ in $T$.
\end{proof}

\begin{definition} \label{d:coloring}
Let $\Pi = (T,P,R,\Si)$ be a proof of some sequent $\Ga$, and let 
$(\Si^{L},\Si^{R})$ be a nodewise partition of $\Ga$.
A \emph{fixpoint colouring} for $(\Si^{L},\Si^{R})$ is a map $\eta: T \to \{ \mu,
\nu, \ntr \}$, satisfying the conditions below (where we write 
$T_{\mu} \isdef \eta^{-1}(\mu)$, etc.):
\begin{urlist}
\item 
$T_{\ntr}$ consists of those nodes that belong to \emph{no} set of the form 
$\itv{c(l)}{l}$;
\item
for every discharged leaf $l$ of $\Pi$ we have either 
$\itv{c(l)}{l} \sse T_{\mu}$ or $\itv{c(l)}{l} \sse T_{\nu}$;
\item \label{i:in focus}
if $t \in T_{\mu}$ then $\Si^{L}_{t}$ contains a focused formula, 
and if $t \in T_{\nu}$ then $\Si^{R}_{t}$ contains a focused formula.
\end{urlist}
We usually write $\eta_{t}$ rather than $\eta(t)$ and refer to $\eta_{t}$ as 
the \emph{fixpoint type} of $t$.
Nodes in $T_{\ntr}, T_{\mu}$ and $T_{\nu}$ will sometimes be called 
\emph{transparent}, \emph{magenta} and \emph{navy}, respectively.
\end{definition}

\begin{proposition}
\label{p:itp3}
Let $(\Si^{L},\Si^{R})$ be a balanced nodewise partition of some proof $\Pi$.
Then there is a fixpoint colouring $\eta$ for $(\Si^{L}, \Si^{R})$.
\end{proposition}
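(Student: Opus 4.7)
The plan is to colour each interval $[c(l),l]$ according to which side of the balanced partition carries a focused formula throughout the interval, and then verify that overlapping intervals receive compatible colours. The heart of the argument is a \emph{focus invariance lemma}: for every discharged leaf $l$ of $\Pi$, the property ``$\Si^L_t$ contains a focused formula'' is constant as $t$ ranges over $[c(l),l]$, and similarly for $\Si^R_t$. Granted this lemma, call an interval $[c(l),l]$ of type L, R, or LR according as, at any (equivalently, every) node on it, only $\Si^L_t$, only $\Si^R_t$, or both carry a focused formula; by condition~(4c) of Definition~\ref{d:proof} at least one of $\Si^L_t,\Si^R_t$ always does, so the type is well-defined. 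I then colour type-L intervals with $\mu$, type-R intervals with $\nu$, and type-LR intervals with $\mu$ (any consistent choice would do).

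To prove the focus invariance lemma, I use the trail relation of Definition~\ref{d:trails}, extended to the discharge rule as the identity on sequents. Three observations suffice. (i)~Every annotated formula occurring in a premise traces back to a source in the conclusion of each of our proof rules. (ii)~Condition~(4a) of Definition~\ref{d:proof} forbids $\RuFocus$ and $\RuUnfocus$ along the path from $c(l)$ to $l$; among the remaining rules only $\RuMu$ alters an annotation along a trail step, and then only from $f$ to $u$, never the reverse. (iii)~Because $(\Si^L,\Si^R)$ is balanced and each proof rule partitions its premises consistently with Definition~\ref{d:locitp}, the trail relation preserves the L/R-split. Combining these, any focused $\phi^f \in \Si^L_l$ lies at the end of a trail along $[c(l),l]$ (obtained by iterating~(i) backward from $\phi^f$) that must be focused at every position by~(ii) and remain in $\Si^L$ by~(iii); restricting to the prefix ending at $t$ shows that $\Si^L_t$ contains a focused formula. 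A symmetric forward-tracing argument, starting from a focused formula in $\Si^L_t$ and ending in $\Si^L_l$, gives the converse; the case of $\Si^R$ is analogous.

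Finally, I check consistency at intersections and the three defining conditions of the colouring. Whether $\Si^L_t$ (respectively $\Si^R_t$) contains a focused formula depends only on $t$, so the focus invariance lemma forces all intervals through $t$ to share a type. In particular, a type-L interval and a type-R interval cannot share a node, and a type-LR interval can overlap only with other type-LR intervals, so the colouring of each connected component of overlapping intervals is unambiguous. Condition~(i) of Definition~\ref{d:coloring} then holds by construction, condition~(ii) because each interval is coloured uniformly, and condition~(iii) because every $\mu$-coloured $t$ lies in an interval of type L or LR, which by the invariance lemma guarantees a focused formula in $\Si^L_t$ (and symmetrically for $\nu$-coloured nodes). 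The main technical obstacle I anticipate is the proof of the invariance lemma itself, and specifically the rule-by-rule verification of observation~(iii) and the handling of nested discharge-rule applications (which act as trail-identities) along the path.
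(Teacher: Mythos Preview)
Your approach is essentially the paper's: both establish a focus-invariance claim (the paper's Proposition~\ref{p:itp4} via Claim~\ref{cl:lr2}) and then colour each non-transparent region according to which side carries focus. The paper organises this in terms of connected components of the close-connectedness relation, you in terms of intervals and their overlaps; these are equivalent formulations.

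There is, however, a gap in your proof of the invariance lemma. The backward half is fine: given $\phi^f \in \Si^L_l$, iterating~(i) backward yields a trail that, by~(ii) and the backward reading of~(iii), stays focused and stays in $\Si^L$ down to any $t \in [c(l),l]$. But the ``symmetric forward-tracing argument'' for the converse fails. Along the path from $t$ to $l$, a focused formula in $\Si^L_t$ may be removed by $\RuWeak$ (so no forward trail exists at all), may lose its focus at an application of $\RuMu$, or may migrate sides --- observation~(iii) fails in the forward direction: for instance at $\RuAnd$ with principal formula on the right and one of its conjuncts already present on the left, that conjunct is moved to the right in the premise partition per Definition~\ref{d:locitp}. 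The fix, which is precisely what the paper does, is to replace forward tracing by another backward step plus balancedness: given a focused formula in $\Si^L_t$, trace backward along $[c(l),t]$ to obtain one in $\Si^L_{c(l)}$, and then invoke $\Si^L_l = \Si^L_{c(l)}$. With this correction your argument goes through.
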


For a proof of Proposition~\ref{p:itp3}, we need the following definition and 
auxiliary proposition.

\begin{definition}
\label{d:conn}
Let $u_{0}$ and $u_{1}$ be two nodes of some proof $\Pi$.
% We call $u_{0}$ and $u_{1}$ \emph{closely connected} if there are non-axiomatic
% leaves $l_{0}$ and $l_{1}$ (not necessarily distinct) such that $u_{0} \in 
% \itv{c(l_{0})}{l_{0}}$, $u_{1} \in \itv{c(l_{1})}{l_{1}}$ and 
% $\itv{c(l_{0})}{l_{0}} \cap \itv{c(l_{1})}{l_{1}} \neq \nada$.
We call $u_{0}$ and $u_{1}$ \emph{closely connected} if there is a non-axiomatic
leaf $l$ such that $u_{0},u_{1} \in \itv{c(l)}{l}$.
The relation of being \emph{connected} is the reflexive/transitive closure of 
that of being closely connected.
\end{definition}

The relation of being connected is easily seen to be an equivalence relation, 
which refines the partition induced by the fixpoint colouring; note that 
transparent nodes are only connected to themselves.
Furthermore, as we will see, the partition induced by the connectedness relation
refines the fixpoint colouring mentioned in Proposition~\ref{p:itp3}.
Here is the key observation that makes this possible.

\begin{proposition}
\label{p:itp4}
Let $(\Si^{L},\Si^{R})$ be a balanced nodewise partition of some proof $\Pi = 
(T,P,\Ru,\Si)$,
and let $u$ and $v$ be connected nodes of $\Pi$.
Then, for $K \in \{ L, R \}$, we have
\begin{equation}
\label{eq:lr1}
\Si^{K}_{u} \text{ contains a formula in focus iff } \Si^{K}_{v}
\text{ contains a formula in focus}.
\end{equation}
\end{proposition}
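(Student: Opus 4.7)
The plan is to reduce the claim to closely connected pairs by transitivity, and then within a single interval $\itv{c(l)}{l}$, further reduce to parent--child pairs via induction on path length. Observe that by condition~(\ref{i:pc}a) of Definition~\ref{d:proof}, neither \RuFocus nor \RuUnfocus occurs at any node on such an interval, so we only have to analyze the coherent partition across the rules $\AxLit, \AxTop, \RuAnd, \RuOr, \RuBox, \RuMu, \RuNu, \RuWeak$ and $\RuDischarge{}$.

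The key technical ingredient is the following one-step lemma, which can be seen as a partition-aware refinement of Proposition~\ref{p:lr1}: for nodes $s,t \in \itv{c(l)}{l}$ with $P s t$ and any $K \in \{L,R\}$, if $\Si^{K}_{t}$ contains a formula in focus then so does $\Si^{K}_{s}$. This is proved by a direct case analysis driven by Definition~\ref{d:locitp}. The crucial observation is that any focused formula in the premise side $\Si^{K}_{t}$ is either a side formula that persists (possibly modally wrapped, as in $\RuBox$) on the same $K$-side of the conclusion, or a residual whose principal in the conclusion appears on the same $K$-side and is itself focused. The only rule that converts a focused principal into an unfocused residual is $\RuMu$, but the resulting residual is unfocused, so the problematic direction does not arise. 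The $\RuDischarge{}$ case is trivial, since premise and conclusion carry the same partitioned sequent. Iterating this lemma along a path yields Claim~$A$: if $s$ is an ancestor of $t$ in $\itv{c(l)}{l}$ and $\Si^{K}_{t}$ contains a focused formula, then so does $\Si^{K}_{s}$.

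Claim~$A$ immediately yields the ``if'' direction of~\eqref{eq:lr1} when $u$ is an ancestor of $v$. For the ``only if'' direction, balancedness is the crux: assuming $\Si^{K}_{u}$ is focused, Claim~$A$ applied to the ancestor-descendant pair $(c(l), u)$ gives that $\Si^{K}_{c(l)}$ is focused; by balancedness $\Si^{K}_{l} = \Si^{K}_{c(l)}$, so $\Si^{K}_{l}$ is focused; and Claim~$A$ applied to the pair $(v, l)$ then delivers that $\Si^{K}_{v}$ is focused. Transitivity of the connectedness relation then extends the conclusion beyond a single interval.

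The main obstacle is the case analysis in the one-step lemma. The coherent partition may strip ``duplicated'' formulas from the non-principal side (for instance, when a residual of a left-principal $(\phi_{0} \lor \phi_{1})$ coincides with a right-side formula $\phi_{0}$, which must then be removed from $\Si^{R}$ in the premise), so $\Si^{K}_{s}$ and $\Si^{K}_{t}$ are in general incomparable as sets. What saves the argument is that focus can only ``transfer'' across sides through such coincidences, and in those cases the principal on the destination side is already focused in the conclusion whenever the transferred residual is focused in the premise --- preserving exactly the pointwise implication we need. Balancedness, in turn, is what prevents this transfer from accumulating into a persistent mismatch across the whole interval.
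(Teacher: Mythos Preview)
Your proposal is correct and follows essentially the same strategy as the paper: reduce to closely connected pairs via transitivity, use a one-step focus-propagation lemma along paths in $\itv{c(l)}{l}$ (the paper invokes Proposition~\ref{p:lr1} for this, while you spell out the partition-aware version via the case analysis of Definition~\ref{d:locitp}), and close the loop using balancedness at the companion. Your explicit treatment of the partition-aware one-step lemma is a useful addition, since the paper's appeal to Proposition~\ref{p:lr1} for the partitioned sequent $\Si^{K}$ tacitly relies on exactly the refinement you prove.
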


\begin{proof}
Fix $K \in \{ L, R \}$.
We first consider one direction of the equivalence in \eqref{eq:lr1}, for
a special case.
\begin{claimfirst} \label{cl:lr2}
Let $u$ and $v$ be nodes in $\Pi$ such that $v$ is a discharged leaf and
$u \in \itv{c(v)}{v}$.
Then $u$ and $v$ satisfy \eqref{eq:lr1}.
\end{claimfirst}
\begin{pfclaim}
Assume first that $\Si^K_u$ contains a formula in focus. Note that the
discharge rule is never applied on the path $\itv{c(v)}{v}$. We can thus
iteratively apply Proposition~\ref{p:lr1} backwards along the path
$\itv{c(v)}{u}$ to find that $\Si^K_{c(v)}$ contains a formula in focus.
But then the same applies to $\Si^K_v$: since $(\Si^L,\Si^R)$ is
balanced we have $\Si^K_v = \Si^K_{c(v)}$. For the other direction
assume that $\Si^K_v$ contains a formula in focus. Again with
Proposition~\ref{p:lr1} applied iteratively, now backwards along the
path $\itv{u}{v}$, we show that $\Si^K_u$ must contain a formula in
focus as well.
\end{pfclaim}

Finally, it is immediate by Claim~\ref{cl:lr2} and the definitions that
\eqref{eq:lr1} holds in case $u$ and $v$ are closely connected, and from
this an easy induction shows that \eqref{eq:lr1} holds as well if $u$
and $v$ are merely connected.
\end{proof}

\begin{proofof}{Proposition~\ref{p:itp3}}
Let $(\Si^{L},\Si^{R})$ be a balanced nodewise partition of some proof $\Pi$.
First define $\eta_{u} = \ntr$ for every node $u$ that does not lie on any path
to a discharged leaf from its companion node.

Then, consider any equivalence class $C$ of the connectedness relation defined 
in Definition~\ref{d:conn} such that $C \cap T_{\ntr} = \nada$, and make a case 
distinction.
If every node $u$ in $C$ is such that $\Si^{L}_{u}$ contains a formula in focus,
then we map all $C$-nodes to $\mu$.

If, on the other hand, some node $u$ in $C$ is such that $\Si^{L}_{u}$ contains
\emph{no} formula in focus, we reason as follows.
Since $\eta_{u} \neq \ntr$, $u$ must lie on some path to a non-axiomatic leaf 
$l$ from its companion node $c(l)$.
By the conditions on a successful proof, $\Si_{u}$ must contain \emph{some} 
formula in focus, and so this formula must belong to $\Si^{R}_{u}$.
It then follows from Proposition~\ref{p:itp4} that \emph{every} node in $C$
has a right formula in focus.
In this case we map all $C$-nodes to $\nu$.

With this definition it is straightforward to verify that $\eta$ is a fixpoint
colouring for $\Si^{L}\mid \Si^{R}$.
\end{proofof}

We will now see how we can read off interpolants from a balanced nodewise
partition and an associated fixpoint colouring.
Basically, the idea is that with every node of the proof we will associate a 
formula that can be seen as some kind of `preliminary' interpolant for the 
partition of the sequent of that node.

\begin{definition}
\label{d:itp}
Let $(\Si^{L},\Si^{R})$ be a balanced nodewise partition of some proof $\Pi$, 
and let $\eta$ be some fixpoint colouring for $(\Si^{L},\Si^{R})$. 
By induction on the depth of nodes we will associate a formula $\ip{s}$ with
every node $s$ of $\Pi$.
The bound variables of these formulas, if any, will be supplied by the 
discharge tokens used in $\Pi$.

For the definition of $\ip{s}$, inductively assume that $\ip{t}$ has already 
been defined for all proper descendants of $s$.
We distinguish cases depending on whether $s \in \Ran(c)$ and on whether $s$ 
is a discharged leaf:
\begin{description}
\item[Case $s \in \Dom(c)$.]
In this case we consider the discharge token $\dx_{c(s)}$ associated
with the companion of $s$ as a variable and define 
\[
 \ip{s} \isdef \dx_{c(s)}.
\]

\item[Case $s \not\in \Dom(c)$ and $s \not\in \Ran(c)$.]
Note that this case includes the situation where $s$ is an axiomatic leaf,
which is one of the base cases of the induction.

Let $\Ru = \Ru_{s}$ be the derivation rule applied at the node $s$, and 
assume that $s$ has successors $v_{0},\ldots,v_{n-1}$.
Let $\chi_{s}(x_{0},\ldots,x_{n-1})$ be the basic formula provided by 
Definition~\ref{d:locitp}.
Inductively we assume formulas $\ip{v_{i}}$ for all $i<n$, and so 
we may define
\[
 \ip{s} \isdef \chi_{s}(\ip{v_{0}},\ldots,
   \ip{v_{n-1}}).
\]

\item[Case] $s \in \Ran(c)$.
In this case the rule applied at $s$ is the discharge rule, with
discharge token $\dx_{s}$, $s$ has a unique child $s'$, and, obviously,
we have $\eta_{s} \in \{ \mu, \nu \}$.
We define
\[
 \ip{s} \isdef \eta_{s}\dx_{s} . \ip{s'}.
\]
In this case we bind the variable $\dx_{s}$, which was introduced at the leaves
discharged by $s$.
\end{description}
Finally we define
\[
 \fitp{\Pi} \isdef \ip{r},
\]
where $r$ is the root of $\Pi$.
\end{definition}

We will prove a number of statements about these interpolants $\ip{s}$, for
which we need some auxiliary definitions. 
We call a node $u$ a \emph{proper connected ancestor of $s$}, notation: 
$P_{c}^{+}us$, if $u$ is both connected to and a proper ancestor of $s$. 
For a node $s$ in $\Pi$ we then define
\[
\Vitp(s) \isdef \{ \dx_{u} \mid u \in \Ran(c) \text{ and } P^{+}_{c} us \}.
\]
Intuitively, $\Vitp(s)$ can be seen as the set of discharge tokens that may 
occur as free variables in the interpolant $\ip{s}$.
Furthermore, we call a node \emph{special} if it is not connected to its parent,
or if has no parent at all (that is, it is the root of $\Pi$).
Observe that in particular all nodes in $T_{\ntr}$ are special.

\begin{proposition}
\label{p:itp5}
The following hold for every node $s$ in $\Pi$:
\begin{urlist}
\item \label{it:clitp5-1}
if $\Ru_{s} \neq \RuDischarge{}$ then $\Vitp(s) = \Vitp(v)$ for every $v 
\in P(s)$ that is connected to $s$;
\item \label{it:clitp5-2}
if $\Ru_{s} = \RuDischarge{}$ then $\Vitp(s) = \Vitp(s')\setminus\{\dx_{s}\}$,
where $s'$ is the unique child of $s$;
\item \label{it:clitp5-3}
if $s$ is special then $\Vitp(s) = \nada$.
\end{urlist}
% \begin{urlist}
% \item \label{it:clitp0-1}
% For any $s' \in P(s)$ we have $\Vitp(s) = \Vitp(s')\setminus
% \{\dx_{s}\}$ if $\Ru_{s} = \RuDischarge{}$, and $\Vitp(s') \subseteq
% \Vitp(s)$ if $\Ru_{s} \neq \RuDischarge{}$.
% \item \label{it:clitp0-2}
% If $s$ is special then $\Vitp(s) = \nada$.
% \end{urlist}
\end{proposition}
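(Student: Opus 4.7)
The plan is to unfold the definition of $\Vitp$ in each case and carefully compare the sets of proper connected ancestors involved. Two observations do most of the work: (i) the proper ancestors of a child $v$ of $s$ are precisely $s$ itself together with the proper ancestors of $s$; and (ii) the connectedness relation is an equivalence relation, so whenever $s$ and some node $w$ are connected, a node $u$ is connected to $s$ iff it is connected to $w$.

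For Part~\ref{it:clitp5-1}, let $v \in P(s)$ be connected to $s$. Every proper ancestor of $v$ is either $s$ itself or a proper ancestor of $s$. The hypothesis $\Ru_s \neq \RuDischarge{}$ gives $s \notin \Ran(c)$, so $s$ contributes no token to $\Vitp(v)$; and for any proper ancestor $u$ of $s$, transitivity of connectedness yields that $u$ is connected to $s$ iff $u$ is connected to $v$. Part~\ref{it:clitp5-2} follows the same pattern, but here the token $\dx_s$ does appear in $\Vitp(s')$: since $s \in \Ran(c)$ is a proper ancestor of $s'$ and is closely connected to $s'$ via the interval $\itv{s}{l}$ for any leaf $l$ discharged by $s$ (which passes through the unique child $s'$), we have $\dx_s \in \Vitp(s')$. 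Transitivity handles the other ancestors, yielding $\Vitp(s') = \Vitp(s) \cup \{\dx_s\}$ with $\dx_s \notin \Vitp(s)$.

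Part~\ref{it:clitp5-3} is the substantive case. The root case is immediate, so assume $s$ has a parent $v$ not connected to $s$, and suppose for a contradiction that some $u \in \Ran(c)$ is a proper ancestor of $s$ connected to $s$. Fix a shortest close-connectedness chain $u = w_0, w_1, \ldots, w_n = s$, and let $j$ be the least index with $w_j$ not a proper ancestor of $s$. Then $j \geq 1$, $w_{j-1}$ is a proper ancestor of $s$, and $w_j$ equals $s$ or is a proper descendant of $s$. The nodes $w_{j-1}$ and $w_j$ share some interval $\itv{c(l)}{l}$; since this interval is a linear path in the tree containing both an ancestor of $s$ and a node that is $s$ or lies below $s$, we have $s \in \itv{c(l)}{l}$, and likewise $v \in \itv{c(l)}{l}$ because $v$ lies on the sub-path from $w_{j-1}$ down to $s$. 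Hence $v$ and $s$ are closely connected, contradicting the specialness of $s$.

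The main technical point is the chain argument in Part~\ref{it:clitp5-3}: one must observe that any close-connectedness chain from a proper ancestor back to $s$ has a step that either touches $s$ from above or crosses it, and then use the linearity of loop intervals to force $v$ into the same interval as $s$. Parts~\ref{it:clitp5-1} and~\ref{it:clitp5-2} are routine once the correct interplay between the parent relation and the transitive closure of close-connectedness has been isolated.
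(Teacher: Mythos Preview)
Your treatment of parts~\ref{it:clitp5-1} and~\ref{it:clitp5-2} is correct and agrees with the paper's (very terse) argument; in particular, your observation that $s$ and its child $s'$ are closely connected via the interval $\itv{s}{l}$ for any leaf $l$ with $c(l)=s$ is exactly the point the paper uses (relying on the standing assumption that every companion discharges at least one leaf).

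For part~\ref{it:clitp5-3}, the paper merely asserts that a special node has no proper connected ancestors; your attempt to spell this out is welcome, but it contains a gap. From your choice of $j$ as the least index with $w_j$ not a proper ancestor of $s$, you conclude that ``$w_j$ equals $s$ or is a proper descendant of $s$''. This does not follow: $w_j$ may be \emph{incomparable} with $s$. Indeed, $w_{j-1}$ and $w_j$ lie in a common interval and are therefore comparable with \emph{each other}, which forces $w_j$ to be a proper descendant of $w_{j-1}$; but $w_{j-1}$ may have a child other than the one on the path to $s$, and the interval (hence $w_j$) may descend into that other subtree. Your invocation of ``shortest chain'' does not obviously close this gap.

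The repair is minor. Take instead $j$ to be the least index with $w_j$ in the subtree rooted at $s$; such $j$ exists since $w_n = s$, and $j\geq 1$ since $w_0=u$ is a proper ancestor. Then $w_{j-1}$ and $w_j$ share an interval $\itv{c(l)}{l}$ and are hence comparable. If $w_{j-1}$ were a descendant of $w_j$ it would lie in the subtree of $s$, contradicting minimality of $j$; so $w_{j-1}$ is an ancestor of $w_j$, and since $s$ is also an ancestor of $w_j$ the two are comparable, forcing $w_{j-1}$ to be a proper ancestor of $s$. Now the interval $\itv{c(l)}{l}$ contains both a proper ancestor of $s$ and a node in the subtree of $s$, hence contains $s$ and its parent $v$, and your concluding contradiction goes through.
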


\begin{proof}
For item~\ref{it:clitp5-1}, the key observation is that if $\Ru_{s} \neq 
\RuDischarge{}$, and $v$ is connected to $s$, then $s$ and $v$ have exactly 
the same connected strict ancestors.
From this it is immediate that $\Vitp(s) = \Vitp(v)$.

In case $\Ru_{s} = \RuDischarge{}$, then $s$ is connected to its unique child
$s'$ --- here we use the fact that every application of the discharge rule
discharges at least one leaf, so that $s'$ actually lies on some path from
$s$ to a leaf of which $s$ is the companion.
But if $s$ and $s'$ are connected, then they have the same connected strict 
ancestors, with the obvious exception of $s$ itself.
From this item~\ref{it:clitp5-2} follows directly.

Item~\ref{it:clitp5-3} follows from the definition of $\Vitp(s)$ and the 
observation that if $s$ is special then it has no proper connected ancestors.
\end{proof}

Our next claim is that the interpolant $\fitp{\Pi}$ is of the right syntactic
shape, in that it is alternation free and only contains free variables that 
occur in both $\Sigma^L_r$ and $\Sigma^R_r$, where $r$ is the root of $\Pi$.

\begin{proposition}
\label{p:itp6}
The following hold for every node $s$ in $\Pi$:
\begin{urlist}
\item \label{it:clitp1-1}
$\FV(\ip{s}) \sse \Big(\FV(\Si^{L}_{s}) \cap \FV(\Si^{R}_{s})\Big)\cup\Vitp(s)$;
\item \label{it:clitp1-2}
$\ip{s} \in \nth{\eta_{s}}{\Vitp(s)}$ if $\eta_s \in \{\mu,\nu\}$;
\item \label{it:clitp1-3}
$\ip{s} \in \nth{\nu}{\nada} = \nth{\mu}{\nada}$ if $s$ is special.
\end{urlist}
\end{proposition}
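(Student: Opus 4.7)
The plan is to establish all three items simultaneously by induction on the depth of $s$ in the proof tree underlying $\Pi$, with the case split following the three clauses in Definition~\ref{d:itp}. Two of the three cases are short. When $s\in\Dom(c)$, we have $\ip{s} = \dx_{c(s)}$; the companion $c(s)$ is a proper ancestor of $s$ closely connected to $s$ via the interval $\itv{c(s)}{s}$, so $\dx_{c(s)}\in\Vitp(s)$, delivering item~\ref{it:clitp1-1}; for the same reason the parent of $s$ lies on this interval and $s$ is not special, so item~\ref{it:clitp1-3} is vacuous, while item~\ref{it:clitp1-2} reduces to the fact that a single propositional variable from $\Vitp(s)$ is admitted directly by the grammar of $\nth{\eta_s}{\Vitp(s)}$. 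When $s\in\Ran(c)$, we have $\ip{s} = \eta_s\dx_s.\ip{s'}$ for the unique child $s'$; coherence of the partition forces $\Si^K_s = \Si^K_{s'}$ for $K\in\{L,R\}$, Proposition~\ref{p:itp5}(\ref{it:clitp5-2}) gives $\Vitp(s') = \Vitp(s)\cup\{\dx_s\}$, and $\eta_{s'} = \eta_s$ because $s'$ lies on every interval $\itv{s}{l}$ discharged at $s$. Binding $\dx_s$ discards it from the free variables, giving item~\ref{it:clitp1-1}, and the inductive hypothesis $\ip{s'}\in\nth{\eta_s}{\Vitp(s)\cup\{\dx_s\}}$ combined with the grammar clause $\eta_s p.\phi^{\eta_s}_{\PropQ p}$ delivers item~\ref{it:clitp1-2}; when $s$ is special, $\Vitp(s) = \nada$ and item~\ref{it:clitp1-3} follows from item~\ref{it:clitp1-2} via Proposition~\ref{p:af2}(\ref{it:af2-2}).

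The substantive case is $s\not\in\Dom(c)\cup\Ran(c)$, where $\ip{s} = \chi_s(\ip{v_0},\ldots,\ip{v_{n-1}})$ with $\chi_s$ the basic formula supplied by Definition~\ref{d:locitp}. Item~\ref{it:clitp1-1} combines Proposition~\ref{p:locitp}(\ref{i:coh}), which gives $\FV(\Si^K_{v_i})\sse\FV(\Si^K_s)$, with the inductive hypothesis on each $v_i$ and the observation that any propositional constant appearing atomically in $\chi_s$ lies in $\FV(\Si^L_s)\cap\FV(\Si^R_s)$. For item~\ref{it:clitp1-2} with $\eta_s\in\{\mu,\nu\}$, each child is handled separately: if $v_i$ is connected to $s$, then Proposition~\ref{p:itp5}(\ref{it:clitp5-1}) gives $\Vitp(v_i)=\Vitp(s)$ while monochromaticity of intervals forces $\eta_{v_i}=\eta_s$, so the inductive hypothesis yields $\ip{v_i}\in\nth{\eta_s}{\Vitp(s)}$; if $v_i$ is special, the inductive hypothesis of item~\ref{it:clitp1-3} together with Proposition~\ref{p:itp5}(\ref{it:clitp5-3}) gives $\ip{v_i}\in\AFMC$ with $\FV(\ip{v_i})$ contained in the propositional variables of the sequents, hence disjoint from the discharge-token set $\Vitp(s)\sse\Tokens$, so iterating Proposition~\ref{p:af3}(\ref{it:af3-1}) lifts $\ip{v_i}$ into $\nth{\eta_s}{\Vitp(s)}$. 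Since $\nth{\eta_s}{\Vitp(s)}$ is closed under the boolean, modal, and atomic constructors that may occur in $\chi_s$, the composite lies in that fragment. For item~\ref{it:clitp1-3}, if $\eta_s\in\{\mu,\nu\}$ it follows from item~\ref{it:clitp1-2} as before, while if $\eta_s = \ntr$ then $s$ lies in no interval, hence every child $v_i$ is special; the inductive hypothesis of item~\ref{it:clitp1-3} combined with closure of $\AFMC$ under the basic constructors settles the claim.

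The main obstacle, and the reason the three items must be proved in tandem, lies in this substantive case: the children of a magenta or navy node may either share its colour (when connected to it) or start a fresh subproof (when special), and these two kinds of children are only guaranteed to satisfy different flavours of the inductive hypothesis, namely item~\ref{it:clitp1-2} for the former and item~\ref{it:clitp1-3} for the latter. The bridge is Proposition~\ref{p:af3}(\ref{it:af3-1}), which converts alternation-freeness plus absence of discharge tokens into membership in the noetherian fragment indexed by $\Vitp(s)$, thereby allowing the two flavours of inductive hypothesis to be combined uniformly under the basic constructor $\chi_s$.
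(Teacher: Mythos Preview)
Your proof is correct and follows essentially the same approach as the paper: induction on the depth of $s$ with the three-way case split from Definition~\ref{d:itp}. The one structural difference is that you prove all three items in a single simultaneous induction, whereas the paper proves items~\ref{it:clitp1-1} and~\ref{it:clitp1-2} together and then derives item~\ref{it:clitp1-3} from item~\ref{it:clitp1-2} afterward. Your choice is the cleaner one, for exactly the reason you give in your final paragraph: in the inductive step for item~\ref{it:clitp1-2}, a special child $v_i$ may well be transparent ($\eta_{v_i}=\ntr$), and then the grammar clause ``$\psi\in\AFMC$ with $\FV(\psi)\cap\PropQ=\nada$'' that is needed to place $\ip{v_i}$ into $\nth{\eta_s}{\Vitp(s)}$ requires $\ip{v_i}\in\AFMC$, which is precisely item~\ref{it:clitp1-3} at $v_i$. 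Similarly, the paper's after-the-fact derivation of item~\ref{it:clitp1-3} from item~\ref{it:clitp1-2} only literally applies when $\eta_s\in\{\mu,\nu\}$; your separate treatment of the case $\eta_s=\ntr$ (all children special, use the inductive hypothesis of item~\ref{it:clitp1-3} and closure of $\AFMC$ under basic constructors) fills that gap. You are also slightly more careful than the paper about the atomic propositional constants that $\chi_s$ may contribute in the axiom cases.
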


\begin{proof}
We prove the first two items by induction on the depth of $s$ in $\Pi$, making 
the same case distinction as in Definition~\ref{d:itp}. 

\begin{description}
\item[Case] $s \in \Dom(c)$.
In this case $s$ is a discharged leaf, and we have $\ip{s} = \dx_{c(s)}$, so 
that $\FV(\ip{s}) = \{\dx_{c(s)}\} \subseteq \Vitp(s)$ because the companion 
$c(s)$ of $s$ must be a proper ancestor of $s$ and by definition $c(s)$ is 
connected to $s$. 
Moreover, we clearly find $\ip{s} \in \nth{\eta_{s}}{\Vitp(s)}$.

\item[Case] $s \not\in \Dom(c)$ and $s \not\in \Ran(c)$.
Assume that $t$ has children $v_{0},\ldots,v_{n-1}$, then we have 
$\ip{s} = \chi_{s}(\ip{v_{0}},\ldots,\ip{v_{n-1}})$, where 
% We assume that the induction hypothesis applies to the formula $\ip{v_{i}}$, 
% for each $i<n$.
$\chi_{s}(x_{0},\ldots,x_{n-1})$ is the basic formula provided by 
Definition~\ref{d:locitp}.

For item~\ref{it:clitp1-1} we now reason as follows:
   \begin{align*}
   \FV(\ip{s}) & = \bigcup_{i} \FV(\ip{v_{i}}) 
   & \text{(definition $\ip{s}$)}
   \\ & \sse \bigcup_{i} \Big(
        \big(\FV(\Si^{L}_{v_{i}}) \cap \FV(\Si^{R}_{v_{i}})\big)
        \cup \Vitp(v_{i}) \Big)
   & \text{(induction hypothesis)}
   \\ & \sse \bigcup_{i} \Big(
        \big(\FV(\Si^{L}_{v_{i}}) \cap \FV(\Si^{R}_{v_{i}})\big)
        \cup \Vitp(s) \Big)
   & \text{(Proposition~\ref{p:itp5}(\ref{it:clitp5-1})}
   \\ & \sse 
        \big(\FV(\Si^{L}_{s}) \cap \FV(\Si^{R}_{s})\big)
        \cup \Vitp(s) 
   & \text{(Proposition~\ref{p:locitp}(\ref{i:coh})},
   \end{align*}
which suffices to prove item \ref{it:clitp1-1}.

For item~\ref{it:clitp1-2} we first show that if $\eta_s \in \{\mu,\nu\}$ then
$\ip{s} \in \nth{\eta_{s}}{\Vitp(s)}$. 
Assume that $\eta_s \in \{\mu,\nu\}$. 
We claim that 
\begin{equation} \label{eq:thekidsarefine}
\ip{v_i} \in \nth{\eta_s}{\Vitp(s)} \quad \mbox{for all } i < n.
\end{equation}
To see that this is the case fix $i$ and distinguish cases depending on whether
$v_i$ is special or not. 
If $v_i$ is special then we reason as follows:
\begin{align*}
\FV(\ip{v_i}) & \sse 
\Big(\FV(\Si^{L}_{v_{i}}) \cap \FV(\Si^{R}_{v_{i}})\Big)\cup\Vitp(s)
   & \text{(induction hypothesis)}
\\ & = \FV(\Si^{L}_{v_{i}}) \cap \FV(\Si^{R}_{v_{i}})
   & \text{(Proposition~\ref{p:itp5}(\ref{it:clitp5-3})}
\\ & \subseteq \FV(\Si^{L}_{s}) \cap \FV(\Si^{R}_{s}),
   & \text{(Proposition~\ref{p:locitp}(\ref{i:coh})}
\end{align*}
so that $\FV(\ip{v_i}) \cap \Vitp(s) = \nada$.
From this \eqref{eq:thekidsarefine} is immediate by the definitions.

On the other hand, if $v_i$ is not special then by definition it is connected
to $s$. 
It follows that $\eta_{v_i} = \eta_s \in \{\mu,\nu\}$ and thus we obtain by
the inductive hypothesis that $\ip{v_i} \in \nth{\eta_s}{\Vitp(v_i)}$. 
But since $s \notin \Ran(c)$ we have $\Ru_s \neq \RuDischarge{}$ and so by
Proposition~\ref{p:itp5}(\ref{it:clitp5-2} we find $\Vitp(v_i) = \Vitp(s)$. 
This finishes the proof of \eqref{eq:thekidsarefine}.

To show that $\ip{s} \in \nth{\eta_s}{\Vitp(s)}$ recall that $\ip{s} =
\chi_s(\ip{v_0},\dots,\ip{v_{n-1}})$. 
Because of \eqref{eq:thekidsarefine} it suffices to check that
$\nth{\eta_s}{\Vitp(s)}$ is closed under the schema $\chi_s$. 
But since $\chi_{s}$ is a basic formula, this is immediate by the definitions.

\item[Case $s \in \Ran(c)$.]
In this case the rule applied at $s$ is the discharge rule, with discharge 
token $\dx_{s}$, $s$ has a unique child $s'$, $\eta_{s} \in \{
\mu, \nu \}$ and by definition $\ip{s} = \eta_{s}\dx_{s} . \ip{s'}$. 
To prove item~\ref{it:clitp1-1} we can then reason as follows:
   \begin{align*}
   \FV(\ip{s}) & = \FV(\ip{s'}) \setminus \{ \dx_{s} \}
   & \text{(definition $\ip{s}$)}
   \\ & \sse \Big(
        \big(\FV(\Si^{L}_{s'}) \cap \FV(\Si^{R}_{s'})\big)
        \cup \Vitp(s') \Big)
		\setminus \{ \dx_{s} \}
   & \text{(induction hypothesis)}
   \\ & \subseteq \Big(
        \big(\FV(\Si^{L}_{s}) \cap \FV(\Si^{R}_{s})\big)
        \cup \Vitp(s') \Big)
		\setminus \{ \dx_{s} \}
   & \text{($\Sigma^L_{s'} = \Sigma^L_s$ and $\Sigma^R_{s'} = \Sigma^R_s$)}
   \\ & \sse 
        \big(\FV(\Si^{L}_{s}) \cap \FV(\Si^{R}_{s})\big) 
		\cup \big(\Vitp(s') \setminus \{ \dx_{s} \} \big)
   & \text{(basic set theory)}
   \\ & =
        \big(\FV(\Si^{L}_{s}) \cap \FV(\Si^{R}_{s})\big) 
		\cup \Vitp(s) 
   & \text{(Proposition~\ref{p:itp5}(\ref{it:clitp5-2})}
   \end{align*}

To check item~\ref{it:clitp1-2}, note that $\eta_s \in \{\mu,\nu\}$, because 
$s$ itself is on the path from $s$ to any of the leaves that it discharges, and
that $\eta_{s'} = \eta_s$ because $s'$ is connected to $s$.
By the inductive hypothesis we find that $\ip{s'} \in \nth{\eta_s}{\Vitp(s')}$,
so that it is clear from the definitions that $\ip{s} \in \nth{\eta_s}{\Vitp(s') 
\setminus \{\dx_s\}}$. 
It follows that $\ip{s} \in \nth{\eta_s}{\Vitp(s)}$, since $\Vitp(s) = \Vitp(s')
\setminus \{\dx_s\}$ by Proposition~\ref{p:itp5}(\ref{it:clitp5-2}.
\end{description}
This finishes the proof of the first two items of the proposition.
\medskip

For item~\ref{it:clitp1-3}, let $s$ be special.
It is then immediate from item~\ref{it:clitp1-2} and 
Proposition~\ref{p:itp5}(\ref{it:clitp5-3} that $\ip{s} \in 
\nth{\eta_s}{\nada}$.
The statement then follows by the observation of 
Proposition~\ref{p:af2}(\ref{it:af2-2} that $\nth{\mu}{\nada} = \AFMC = 
\nth{\nu}{\nada}$.
\end{proof}

Proposition~\ref{p:itp7} is the key technical result of our proof.
In its formulation we need the following.

\begin{definition}
Let $\Pi = (T,P,\Si,\Ru)$ be some proof.
A \emph{global annotation} for $\Pi$ is a map $\ann: T \to \{ u,f \}$; the dual 
of the global annotation $\ann$ is the map $\dann$ given by  
\[
\dann(t) \isdef 
\left\{\begin{array}{ll}
   f & \text{if } \ann(t) = u 
\\ u & \text{if } \ann(t) = f.
\end{array}\right.
\]
A global annotation $\ann$ is \emph{consistent} with a fixpoint colouring
$\eta$ if it satisfies $\ann(t) = u$ if $\eta_t = \mu$ and $\ann(t) = f$ if 
$\eta_t = \nu$.
% \footnote{\rood{
%    Note that we need to turn a fixpoint colouring into a binary partition 
%    in order to keep the induced annotation \emph{total}. This does 
%    not appear to be problem.
%    }}
% $\ann,\dann: T_{\mu} \cup T_{\nu} \to \{ u,f \}$ by putting 
% We shall refer to these maps as the \emph{annotations} that are \emph{induced}
% by the fixpoint colouring.
\end{definition}

Note that the conditions on an annotation $a$ to be consistent with a fixpoint
colouring $\eta$ only mentions the nodes in $T_{\mu}$ and $T_{\nu}$; the 
annotation $\ann(t)$ can be arbitrary for $t \in T_{\ntr}$.
\medskip

For the final part of the interpolation argument we need a general observation
about the result of applying a substitution to (all formulas in a) \emph{proof}.
First we need some definitions.

\begin{definition}
Let $\Sigma$ be an annotated sequent.
We define $\BV(\Sigma) = \bigcup\{\BV(\psi) \mid \psi^a \in \Sigma\}$, and, for
any formula $\phi$ such that $\FV(\phi) \cap \BV(\Sigma) = \nada$,
we set 
\[
 \Sigma[\phi / x] \isdef \{(\psi[\phi / x])^a \mid \psi^a \in
\Sigma\}.
\]
Furthermore, where $\Pi = (T,P,R,\Si)$ is some proof, we let $\Pi[\phi/x]$ 
denote the labelled tree $\Pi[\phi/x] \isdef (T,P,R,\Si')$ which is obtained
from $\Pi$ by replacing every annotated sequent $\Si_{t}$ with 
$\Sigma_{t}[\phi/x]$.
\end{definition}

\begin{proposition} \label{p:psubst}
Let $\Pi$ be a \Focus-proof of a sequent $\Sigma$ with open assumptions 
$\{\Gamma_i \mid i \in I\}$, and let $\phi$ be a formula such that $\FV(\phi) 
\cap \BV(\Sigma) = \nada$. 
Then $\Pi[\phi/x]$ is a well-formed \Focus-proof of the sequent 
$\Sigma[\phi/x]$, with open assumptions $\{\Gamma_i [\phi/x] \mid i \in I\}$.
\end{proposition}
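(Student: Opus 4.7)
The plan is to proceed by a straightforward structural induction on the proof tree $\Pi$, checking that every proof rule and every side condition from Definition~\ref{d:proof} is preserved under the substitution $[\phi/x]$. Before starting the induction, I would establish a preparatory observation: by Proposition~\ref{p:proof in closure}, every annotated formula appearing at any node $t$ of $\Pi$ is drawn from $\Clos(\uls{\Sigma})$, and since taking the closure does not introduce new bound variables, we have $\BV(\Sigma_t) \sse \BV(\Sigma)$ for every $t$. Hence the side hypothesis $\FV(\phi) \cap \BV(\Sigma) = \nada$ propagates uniformly across the proof, so that each substituted sequent $\Sigma_t[\phi/x]$ is well-defined and no variable capture can occur.

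The inductive step then consists of verifying, case by case, that a node $t$ labelled with rule $\Ru$ and deriving $\Sigma_t$ from premises $\Sigma_{u_1},\dots,\Sigma_{u_n}$ still represents a valid application of the same rule $\Ru$ to the sequents $\Sigma_t[\phi/x]$ and $\Sigma_{u_i}[\phi/x]$. For the axioms and for the boolean rules \RuAnd, \RuOr, the modal rule \RuBox, the weakening rule \RuWeak, and the focus rules \RuFocus, \RuUnfocus, this is immediate by direct inspection, since substitution commutes with all connectives and never affects annotations. The one genuinely substantive case is that of the fixpoint rules \RuMu and \RuNu, where the key equation is
\[
(\eta y.\psi)[\phi/x] \;\tracestep\; \psi[\eta y.\psi/y][\phi/x] \;=\; \psi[\phi/x]\bigl[(\eta y.\psi)[\phi/x]/y\bigr].
\]
This commutation is valid precisely because $y \notin \FV(\phi)$, which is guaranteed by the preparatory observation since $y \in \BV(\Sigma)$.

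Finally, I would check that the global proof conditions \ref{i:leaf condition}--\ref{i:path condition} of Definition~\ref{d:proof} carry over. Since the construction leaves the underlying tree $(T,P)$, the rule labelling $\pLab$, and consequently the set of discharged leaves together with their companion assignment $c$ completely unchanged, the only nontrivial item is the sequent-equality clause $\Sigma_u = \Sigma_l$ of condition \ref{i:discharge condition}, which survives because substitution is applied uniformly at every node. The path condition \ref{i:path condition} (no focus rules along the path, at least one \RuBox, a focused formula at every node) depends only on rule labels and annotations, both of which are invariant under $[\phi/x]$.

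I do not anticipate a serious technical obstacle: the argument is in essence a bookkeeping exercise made safe by the capture-avoidance setup inherited from the preliminaries. The only delicate point is commuting substitution past fixpoint unfolding in the cases for \RuMu and \RuNu, which is exactly where the hypothesis $\FV(\phi) \cap \BV(\Sigma) = \nada$ is used and where the preparatory lemma earns its keep.
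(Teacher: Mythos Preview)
Your proposal is correct and follows essentially the same approach as the paper's proof sketch: establish that $\BV(\Sigma_t) \sse \BV(\Sigma)$ via the closure argument (Proposition~\ref{p:proof in closure}), then verify rule by rule that substitution preserves each local inference and the global proof conditions. You supply more detail than the paper does, particularly in isolating the fixpoint-unfolding commutation as the one nontrivial case, but the underlying strategy is the same.
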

\begin{proof} (Sketch)
One may show that $\BV(\chi) \subseteq \BV(\psi)$ for every $\chi \in 
\Clos(\psi)$, by an induction on the length of the trace from $\psi$ to $\chi$ 
witnessing that $\chi \in \Clos(\psi)$.
Because every formula $\chi$ that occurs in one of the sequents of $\Pi$ belongs
to the closure of $\Sigma$ it follows that $\BV(\chi) \subseteq \BV(\Sigma)$ 
and hence all the substitutions are well-defined. 
Moreover, one can check that all the proof rules remain valid if one performs 
the same substitution uniformly on all the formulas in the conclusion and the
premises. 
It should also be clear that the global conditions on proofs are not 
affected by the substitution.
\end{proof}

\begin{proposition}
\label{p:itp7}
Let $(\Si^{L},\Si^{R})$ be a balanced nodewise partition of some proof $\Pi$,
let $\eta$ be some fixpoint colouring for $(\Si^{L},\Si^{R})$, and let $\ann: 
T \to \{ u, f \}$ be a global annotation that is consistent with $\eta$.
Then we can effectively construct \Focus-proofs $\Pi^{L}$ and $\Pi^{R}$ of the sequents
$\Si^{L}_{r},(\fitp{\Pi})^{\ann(s)}$ and 
$\Si^{R}_{r},(\sneg{\fitp{\Pi}})^{\dann(s)}$, respectively, where $r$ is the root
of $\Pi$.
\end{proposition}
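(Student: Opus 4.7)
The plan is to establish the proposition by a bottom-up induction on the underlying finite tree of $\Pi$, proving the following strengthened claim: for every node $s$ of $\Pi$ there are pre-proofs $\Xi^{L}_{s}$ and $\Xi^{R}_{s}$ of $\Si^{L}_{s},(\ip{s})^{\ann(s)}$ and $\Si^{R}_{s},(\sneg{\ip{s}})^{\dann(s)}$ whose open assumptions are exactly the sequents $\Si^{L}_{l},\dx_{c(l)}^{\ann(l)}$ and $\Si^{R}_{l},\dx_{c(l)}^{\dann(l)}$ corresponding to those discharged leaves $l$ in the subtree below $s$ whose companion $c(l)$ is a proper ancestor of $s$. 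At the root $r$ of $\Pi$ no such leaves remain, yielding the proposition.

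The base cases are immediate: axiomatic leaves are handled directly by Proposition~\ref{p:locitp}, while a discharged leaf $s$ has interpolant $\ip{s}=\dx_{c(s)}$ and the required pre-proof is just the single-node tree labelled with the open assumption. For the inductive step when $\Ru_{s}\neq\RuDischarge{}$, I splice the premise pre-proofs $\Xi^{L}_{v_{i}}$ together using the local derivation supplied by Proposition~\ref{p:locitp} for $\Ru_{s}$, whose combining formula $\chi_{s}$ of Definition~\ref{d:locitp} produces the new interpolant. The annotation parameter $b$ is set to $\ann(s)$; if this disagrees with $\ann(v_{i})$ for some child, a single $\RuFocus$ or $\RuUnfocus$ inserted between the local derivation and the root of $\Xi^{L}_{v_{i}}$ bridges the gap. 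Such a mismatch can only occur when $v_{i}$ is \emph{special}, for otherwise $s$ and $v_{i}$ share a colour class and therefore, by the consistency of $\ann$ with $\eta$, an annotation; consequently $\ip{v_{i}}\in\AFMC$ is closed by Proposition~\ref{p:itp6}(\ref{it:clitp1-3}) and the inserted focus rule lies outside every cycle.

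The crucial case is $s\in\Ran(c)$, where $\Ru_{s}=\RuDischarge{\dx}$. Here the unique child $s'$ is connected to $s$, so $\eta_{s}=\eta_{s'}$, $\ann(s)=\ann(s')$, and $\ip{s}=\eta_{s}\dx_{s}.\ip{s'}$. Starting from the inductive pre-proof $\Xi^{L}_{s'}$, I apply the substitution $[\eta_{s}\dx_{s}.\ip{s'}/\dx_{s}]$ uniformly to every sequent, using Proposition~\ref{p:psubst}; note that $\dx_{s}$ does not occur free in any formula of $\Si^{L}_{t}$, since these come from the closure of $\Gamma$ and $\Gamma$ itself contains no discharge tokens. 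The substitution converts each open assumption $\Si^{L}_{l},\dx_{s}^{\ann(l)}$ corresponding to a leaf $l$ with $c(l)=s$ into $\Si^{L}_{s},(\eta_{s}\dx_{s}.\ip{s'})^{\ann(l)}$, using $\Si^{L}_{l}=\Si^{L}_{s}$ by balance and $\ann(l)=\ann(s)$ since $l$ is connected to $s$. Applying $\RuFp{\eta_{s}}$ at the root (the annotation on the unfolded body matches in both cases: $\RuMu$ forces it to be $u$ when $\eta_{s}=\mu$, which equals $\ann(s')$, and $\RuNu$ preserves $\ann(s')=f$ when $\eta_{s}=\nu$) followed by $\RuDischarge{\dx'}$ with a fresh token $\dx'$ closes exactly these converted assumptions. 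The construction of $\Xi^{R}_{s}$ is entirely dual.

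The main obstacle is verifying the three side conditions of Definition~\ref{d:proof}(\ref{i:pc}) for every cycle introduced by a new $\RuDischarge{\dx'}$. Condition (b), that at least one $\RuBox$ lies on the path, follows because the corresponding original cycle $[s,l]$ in $\Pi$ contains a $\RuBox$ and these applications are preserved by item (d) of Proposition~\ref{p:locitp}. Condition (a), that no focus rule lies on the path, holds because the focus rules inserted in the previous paragraph only appear at boundaries into special children, whereas every node on a cycle in $\Pi$ is connected to its parent on that cycle and hence never special. Condition (c), that every sequent on the path contains a focused formula, is where the consistency of $\ann$ with $\eta$ becomes essential: every node $t$ on the new cycle satisfies $\eta_{t}=\eta_{s}$, so either $\eta_{s}=\nu$, whence $\ann(t)=f$ makes the interpolant $(\ip{t})^{f}$ itself focused, or $\eta_{s}=\mu$, whence the third condition of Definition~\ref{d:coloring} (namely \ref{i:in focus}) guarantees a focused formula in $\Si^{L}_{t}$; items (b) and (c) of Proposition~\ref{p:locitp} propagate this focus through the local derivations, and substitution of a formula for a variable leaves annotations intact. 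The dual argument handles $\Xi^{R}$, with the roles of $\mu$ and $\nu$ swapped.
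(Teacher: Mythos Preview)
Your proposal is correct and follows essentially the same approach as the paper: an induction on the depth of $s$ with the same three-way case split (discharged leaf, ordinary rule, companion), the same use of Proposition~\ref{p:locitp} to glue local derivations, insertion of focus/unfocus rules only at the boundary into special children, and closure at companions by substitution followed by $\RuFp{\eta_{s}}$ and $\RuDischarge{}$. Your verification of the path conditions~(\ref{i:pc}a--c) is organised slightly differently---you check them at the moment each cycle is created, whereas the paper bundles this into a separate auxiliary claim proved by the same induction---but the underlying argument (the observation~\eqref{eq:core} that nodes in $T_{\mu}\cup T_{\nu}$ always carry a focused formula, combined with items~(a)--(d) of Proposition~\ref{p:locitp}) is identical.
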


\begin{proof}
For every node $s$ of $\Pi$ we will construct two proofs with open assumptions,
$\Pi^{L}_{s}$ and $\Pi^{R}_{s}$, for the sequents
$\Si^{L}_{s},\ip{s}^{\ann(s)}$ and $\Si^{R}_{s},\sneg{\ip{s}}^{\dann(s)}$, 
respectively. 
We will make sure that the only
open assumptions of these proofs will be associated with leaves $l$ of
which the companion node $c(l)$ is a proper connected ancestor of $s$.
We define $\Pi^{L}_s$ and $\Pi^{R}_s$ as labelled trees that satisfy
conditions \ref{i:local condition}~and~\ref{i:leaf condition} from Definition~\ref{d:proof}. We
check the other conditions in subsequent claims. 
The definition of $\Pi^{L}_{s}$ and $\Pi^{R}_{s}$ proceeds by induction on the 
depth of $s$ in the tree $\Pi$, where we make the same case distinction as in 
Definition~\ref{d:itp}.

\begin{description}
\item[Case $s \in \Dom(c)$.]
In this case we let $\Pi^L_s$ and $\Pi^R_s$ be the leaves that are labelled
with the discharge variable $\dx_{c(s)}$ and the sequents $\Sigma^L_s,\ip{s} 
= \Sigma^L_s, \dx_{c(l)}^{\ann(l)}$ and $\Sigma^R_s,\sneg{\ip{s}} = 
\Sigma^R_s, \dx_{c(l)}^{\dann(l)}$, respectively. Note that here we are
creating an open assumption that is labelled with a discharge token and
not with $\star$. This open assumption will be discharged later when the
induction is at the node $c(s)$.

\item[Case $s \not\in \Dom(c)$ and $s \not\in \Ran(c)$.]
The basic strategy in this case is to use Proposition~\ref{p:locitp} to
extend the proofs $\Pi^L_s$ and $\Pi^R_s$. The details depend on the
global annotation $\ann$. We only consider the subcases where
$\ann(s)$ is distinct from $\ann(v)$ for at least one child $v$ of $s$.
The case where $\ann(s) = \ann(v)$ for all $v \in P(s)$ is similar, but
easier.

\begin{description}
\item[Subcase $\ann(s) = u$, but $\ann(v) = f$, for some $v \in P(s)$.]
As a representative example of this, consider the situation where
$\Ru_{s}$ is binary, and $\ann(s) = \ann(v_{0}) = u$, while $\ann(v_{1}) = f$,
where $v_{0}$ and $v_{1}$ are the two successors of $s$.

We first consider the proof $\Pi^L_{s}$.
Inductively we assume labelled trees $\Pi^{L}_{v_{0}}$ and $\Pi^{L}_{v_{1}}$ 
for, respectively, the sequents $\Si^{L}_{v_{0}},\ip{v_{0}}^{u}$ and 
$\Si^{L}_{v_{1}},\ip{v_{1}}^{f}$.
Combining these with the proof with assumptions $\Xi^{L}$ from 
Proposition~\ref{p:locitp}, we then define $\Pi^{L}_{s}$ to be the following    labelled tree:
   \begin{prooftree}
   \AXC{$\Pi^{L}_{v_{0}}$}
   \UIC{$\Si^{L}_{v_{0}},\ip{v_{0}}^{u}$}
   \AXC{$\Pi^{L}_{v_{1}}$}
   \UIC{$\Si^{L}_{v_{1}},\ip{v_{1}}^{f}$}
   \RightLabel{$\RuFocus$}
   \UIC{$\Si^{L}_{v_{1}},\ip{v_{1}}^{u}$}
   \BIC{$\Xi^{L}$}
   \UIC{$\Si^{L}_{s},\chi_{s}(\ip{v_{0}},\ip{v_{1}})^{u}$}
   \end{prooftree}

A similar construction works for $\Pi^R_{s}$: Inductively we are
given proofs $\Pi^{R}_{v_{0}}$ and $\Pi^{R}_{v_{1}}$ for, respectively,
the sequents $\Si^{R}_{v_{0}},\sneg{\ip{v_{0}}}^{f}$ and
$\Si^{R}_{v_{1}},\sneg{\ip{v_{1}}}^{u}$. Together with the proof $\Xi^{R}$
that we obtain from Proposition~\ref{p:locitp} we can define $\Pi^R_{s}$
as follows:
   \begin{prooftree}
   \AXC{$\Pi^{R}_{v_{0}}$}
   \UIC{$\Si^{R}_{v_{0}},\sneg{\ip{v_{0}}}^{f}$}
   \AXC{$\Pi^{R}_{v_{1}}$}
   \UIC{$\Si^{R}_{v_{1}},\sneg{\ip{v_{1}}}^{u}$}
   \RightLabel{$\RuUnfocus$}
   \UIC{$\Si^{R}_{v_{1}},\sneg{\ip{v_{1}}}^{f}$}
   \BIC{$\Xi^{R}$}
   \UIC{$\Si^{R}_{s},\sneg{\chi_{s}(\ip{v_{0}},\ip{v_{1}})}^{f}$}
   \end{prooftree}

\item[Subcase $\ann(s) = f$, but $\ann(v) = u$, for some $v \in P(s)$.]
Similarly as in the previous subcase, we consider a representative example 
where $s$ has two successors, $v_{0}$ and $v_{1}$, but now $\ann(s) = 
\ann(v_{0}) = f$, while $\ann(v_{1}) = u$.
Inductively we are provided with labelled trees $\Pi^{L}_{v_{0}}$ and 
$\Pi^{L}_{v_{1}}$ for, respectively, the sequents 
$\Si^{L}_{v_{0}},\ip{v_{0}}^{f}$ and 
$\Si^{L}_{v_{1}},\ip{v_{1}}^{u}$.
Combining these with the proof with assumptions $\Xi^{L}$, which we obtain by 
Proposition~\ref{p:locitp}, we then define $\Pi^{L}_{s}$ to be the following 
labelled tree:
	 
   \begin{prooftree}
   \AXC{$\Pi^{L}_{v_{0}}$}
   \UIC{$\Si^{L}_{v_{0}},\ip{v_{0}}^{f}$}
   \AXC{$\Pi^{L}_{v_{1}}$}
   \UIC{$\Si^{L}_{v_{1}},\ip{v_{1}}^{u}$}
   \RightLabel{$\RuUnfocus$}
   \UIC{$\Si^{L}_{v_{1}},\ip{v_{1}}^{f}$}
   \BIC{$\Xi^{L}$}
   \UIC{$\Si^{L}_{s},\chi_{s}(\ip{v_{0}},\ip{v_{1}})^{f}$}
   \end{prooftree}
Again, a similar construction works for $\Pi^R_{s}$.
\end{description}

\item[Case $s \in \Ran(c)$.]
In this case the rule applied at $s$ is the discharge rule; let $\dx_{s}$,
$s'$ and $\eta_{x}$ be as in the corresponding case in Definition~\ref{d:itp}.

Note that by the assumption on $\ann$ we have that $\ann(s) = \ann(s')$ and
$\ann(s) = \ann(l)$ for any discharged leaf $l$ such that $c(l) = s$.
Furthermore, there are only two possibilities: either $\ann(s) = u$ and 
$\eta_{s} = \mu$, or $\ann(s) = f$ and $\eta_{s} = \nu$.
We cover both cases at once but first only consider the definition of $\Pi^L_{s}$.
Inductively we have a proof $\Pi^L_{s'}$ of $\Sigma^L_{s'}, \ip{s'}^{\ann(s')}$.
Note that $\Sigma^L_{s'} = \Sigma^L_s$, because the discharge rule is applied at
$s$.
   
Let $(\Pi')^{L} \isdef \Pi^{L}_{s'}[\eta_{s}\dx_{s}.\ip{s'}/\dx_{s}]$;
that is, $(\Pi')^{L}$ is the labelled tree $\Pi^{L}_{s'}$, with all
occurrences of $\dx_{s}$ replaced by the formula
$\eta_{s}\dx_{s}.\ip{s'}$. That this is a well-defined operation on
proofs follows from Proposition~\ref{p:psubst}. However, we need to make
sure that $\FV(\eta_{s} \dx_{s}.\ip{s'}) \cap \BV(\Sigma^L_{s'},
\ip{s'}^{\ann(s')}) = \emptyset$. This follows with
item~\ref{it:clitp1-1} of Proposition~\ref{p:itp6} and the observations
that the variables in $\Vitp(s')$ do not occur as bound variables in any
of the formulas in $\Sigma^L_{s'}$ nor in $\ip{s'}$. Note that
$(\Pi')^{L}$ has the open assumption $\Sigma^L_{s}, (\eta_{s}\dx_{s}.
\ip{s'})^{\ann(s)}$ instead of $\Sigma^L_{s}, \dx_{s}^{\ann(s)}$.

To obtain $\Pi^{L}_{s}$ from $(\Pi')^{L}$, add one application of the fixpoint
rule for $\eta_{s}\dx_{s}.\ip{s'}$, followed by an application of the discharge
rule for the discharge token $\dx_{s}$:
   
   \begin{prooftree}
   \AxiomC{$[\Sigma^{L}_{s},\big(\eta_{s}\dx_{s}.\ip{s'}\big)^{\ann(s)}]^{\dx_{s}}$}
   \UnaryInfC{$(\Pi')^L$}
   %\UnaryInfC{$\Pi$}
%    \UnaryInfC{$\Sigma, \itp_{s'}(s')[\eta_{s}\dx_{s}.\itp{s'}{s'}/\dx_{s}]$}
   \UnaryInfC{$\Sigma^L_{s}, 
   \big(\ip{s'}[\eta_{s} \dx_{s}.\ip{s'}/\dx_{s}]\big)^{\ann(s)}$}
   \RightLabel{$\Ru_{\eta_s}$}
   \UnaryInfC{$\Sigma^L_{s}, \big(\eta_{s} \dx_{s}.\ip{s'}\big)^{\ann(s)}$}
   \RightLabel{\RuDischarge{\dx_{s}}}
   \UnaryInfC{$\Sigma^L_s, \big(\eta_{s} \dx_{s}.\ip{s'}\big)^{\ann(s)}$}
   \end{prooftree}
The application of the rule $\Ru_{\eta_s}$ is correct because if $\eta_s
= \mu$ then $\ann(s) = u$. Thus, the unfolded fixpoint formula in the
premise of the application of $\Ru_{\eta_s}$ is still annotated with
$\ann(s)$. If $\eta_s = \nu$ then the unfolded fixpoint stays annotated
with $\ann(s)$ because $\Ru_\nu$ does not change the annotation of its
principal formula. Also note that the proof $\Pi^L_s$ no longer contains
open assumptions that are labelled with the token $\dx_{s}$.

A similar construction can be used to define $\Pi^R_{s}$.
By induction there is a proof $\Pi^R_{s'}$ of $\Sigma^R_{s'},
\sneg{\ip{s'}}^{\dann(s')}$. 
As before we use Proposition~\ref{p:psubst} to substitute all occurrences of 
$\dx_{s}$ with $\ol{\eta_{s}}\dx_{s}.\sneg{\ip{s'}}$ in the proof $\Pi^R_{s'}$ 
to obtain a proof $(\Pi')^{R} \isdef
\Pi^{R}_{s'}[\ol{\eta_{s}}\dx_{s}.\sneg{\ip{s'}}/\dx_{s}]$. Note that $(\Pi')^{R}$
has the open assumption
$\Sigma^R_{s},(\ol{\eta_{s}}\dx_{s}.\sneg{\ip{s'}})^{\dann(s)}$ instead
of $\Sigma^R_{s}, \dx_{s}^{\dann(s)}$. We then construct the proof
$\Pi^R_s$ as follows:
   \begin{prooftree}
   \AxiomC{$[\Sigma^{R}_{s},\big(\ol{\eta_{s}}\dx_{s}.\sneg{\ip{s'}}\big)^{\dann(s)}]^{\dx_{s}}$}
   \UnaryInfC{$(\Pi')^R$}
   \UnaryInfC{$\Sigma^R_{s}, 
   \big(\sneg{\ip{s'}}[\ol{\eta_{s}}
\dx_{s}.\sneg{\ip{s'}}/\dx_{s}]\big)^{\dann(s)}$}
   \RightLabel{$\Ru_{\ol{\eta_s}}$}
   \UnaryInfC{$\Sigma^R_{s}, \big(\ol{\eta_{s}}
\dx_{s}.\sneg{\ip{s'}}\big)^{\dann(s)}$}
   \RightLabel{\RuDischarge{\dx_{s}}}
   \UnaryInfC{$\Sigma^R_s, \big(\ol{\eta_{s}}
\dx_{s}.\sneg{\ip{s'}}\big)^{\dann(s)}$}
   \end{prooftree}
Note that if $\ol{\eta_s} = \mu$ then $\eta_s = \nu$, $\ann(s) = f$ and
$\dann(s) = u$. Therefore, the application of the rule $\Ru_\mu$ above
has the right annotation at the unfolded fixpoint.
\end{description}
%\end{description}

We now check that $\Pi^L_r$ and $\Pi^R_r$ are indeed \Focus-proofs of,
respectively, the sequents $\Sigma^L_r,\ip{r}^{\ann(s)}$ and 
$\Sigma^R_r,\sneg{\ip{r}}^{\dann(r)}$, where $r$ is the root of $\Pi$.
Note that whereas we are proving statements about $\Pi^L_r$ and $\Pi^R_r$,
our proof is by induction on the complexity of the original proof $\Pi$.
In the formulation of the inductive hypothesis it is convenient to allow for
proofs in which some open assumptions are already labelled with a
discharge token instead of with $\star$. 
(In the end of the induction this makes no difference because $\Pi^L_r$ and 
$\Pi^R_r$ do not have any open assumption.) 
With this adaptation we will establish the claim below.

Before going into the details we observe that, given the inductive definition 
of the proof $\Pi^{L} = \Pi^{L}_{r}$, it contains, for every node $s$ in $\Pi$,
some substitution instance of $\Pi^{L}_{s}$ as a subproof.
In particular, we may assume the existence of an injection $f^L$ mapping 
$\Pi$-nodes to $\Pi^{L}$-nodes, in such a way that $f^L(s)$ is the root of the 
proof tree $\Pi^{L}_{s}$, for every node $s$ of $\Pi$.
A similar observation holds for the proof $\Pi^{R}$.

\begin{claimfirst}
\label{cl:itp2}
For all nodes $s$ in $\Pi$ the following hold.
\begin{urlist}
\item 
$\Pi^{L}_{s}$ is a \Focus-proof for the sequent $\Si^{L}_{s},
\ip{s}^{\ann(s)}$, with assumptions $\{ \Si^{L}_{l},\dx_{c(l)}^{\ann(l)}
\mid P^{+}c(l)s \text{ and } P^{*}sl \}$ such that additionally for
every node $t'$ that is on a path from the root $f^L(s)$ of $\Pi^L_s$ to one of
its open assumptions the following hold:
\begin{enumerate}
\item \label{it:itpcl-1}
the annotated sequent at $t'$ contains at least one formula that is in focus;
\item \label{it:itpcl-2}
the rule applied at $t'$ is not $\RuFocus$ or $\RuUnfocus$;
\item \label{it:itpcl-3}
if $t' = f^L(s')$ and \RuBox is applied at $s'$ then \RuBox is applied at
$t'$.
\end{enumerate}

\item
$\Pi^{R}_{s}$ is a \Focus-proof for the sequent $\Si^{R}_{s},
\sneg{\ip{s}}^{\dann(s)}$, with assumptions $\{
\Si^{R}_{l},\dx_{c(l)}^{\dann(l)} \mid P^{+}c(l)s \text{ and } P^{*}sl
\}$ such that additionally for
every node $t'$ that is on a path from the root of $\Pi^R_s$ to one of
its open assumptions it holds that:
\begin{enumerate}
 \item the annotated sequent at $t'$ contains at least one
formula that is in focus;
 \item the rule applied at $t'$ is not $\RuFocus$ or
$\RuUnfocus$;
\item 
% \label{it:itpcl-3}
% YV: the above label was multiply defined
if $t' = f^R(s')$ and \RuBox is applied at $s'$ then \RuBox is applied at
$t'$.
\end{enumerate}
\end{urlist}
\end{claimfirst}

\begin{pfclaim}
As mentioned, our argument proceeds by induction on the complexity of the proof
$\Pi$, or, to be somewhat more precise, by induction on the depth of $s$ in
$\Pi$.
Here we will use the same case distinction as the construction of $\Pi^L_s$ 
and $\Pi^R_s$. 
We focus on the proof $\Pi^{L}$, the case of $\Pi^{R}$ being similar.

First we make an auxiliary observation that will be helpful for understanding 
our proof:
\begin{equation} \label{eq:core}
\text{if } s \in T_{\mu} \cup T_{\nu} \text{ then $f^L(s)$ contains a formula 
in focus in } \Pi^{L}_{s}.
\end{equation}
For a proof of this, first assume that $s \in T_{\mu}$, i.e., $\eta_{s} = \mu$.
Then $\Si^{L}_{s}$ contains a formula in focus by item~\ref{i:in focus} of
Definition~\ref{d:coloring}.
On the other hand, if $s \in T_{\nu}$, then since the annotation $\ann$ is 
consistent with $\eta$, we have $\ann(s) = f$, so that the formula 
$\ip{s}^{\ann(s)}$ is in focus.
\medskip

Now we turn to the inductive proof of the claim proper.
It is obvious from the construction that the root $f(s)$ of $\Pi^L_s$ is labelled
with the annotated sequent $\Si^{L}_{s}, \ip{s}^{\ann(s)}$, and it is not 
hard to see that the open assumptions of this proof are indeed of the form
claimed above. 
To show that $\Pi^L_s$ is indeed a \Focus-proof we need to check the 
conditions from Definition~\ref{d:proof}.

Condition~\ref{i:local condition}, which requires the annotated sequents
to match the applied proof rule at every node, can be easily verified by
inspecting the nodes that are added in each step of the construction of
$\Pi^L_s$.
% and $\Pi^R_s$. 
Similarly, it is clear that only leaves get labelled
with discharge tokens and thus condition~\ref{i:leaf condition} is satisfied.

It is also not too hard to see that all non-axiomatic leaves that are not
open assumptions are discharged. 
This is just our (already established) claim that all open assumptions of
$\Pi^L_s$ are in the set $\{ \Si^{L}_{l},\dx_{c(l)}^{\ann(l)} \mid P^{+}c(l)s 
\text{ and } P^{*}sl \}$. 
This means that condition~\ref{i:discharge condition} is satisfied. 
(Note that it is here where we conveniently allow for open leaves that are 
labelled with a discharge token rather than with $\star$.)

It is left to consider condition~\ref{i:path condition}. We have to
consider any path between a leaf $l'$ and its companion $c(l')$ in
$\Pi^{L}_s$. We can focus on the case, where $c(l')$ is the root
$f^L(s)$ of $\Pi^L_s$; in later steps of the induction the labels of the
node only get changed by substitutions of formulas for the open fixpoint
variables, which by Proposition~\ref{p:psubst} does not affect
condition~\ref{i:path condition}. Note then that $l' = f^L(l)$ for
some leaf $l$ of $\Pi$ with $c(l) = s$ and $c(l') = f^L(s)$. The path
from $s$ to $l$ in $\Pi$ satisfies condition~\ref{i:path condition}
because $\Pi$ is a \Focus-proof. That the path from $f^L(s)$ to $l'$
satisfies condition~\ref{i:path condition} follows from the
statements~\eqref{it:itpcl-1}, \eqref{it:itpcl-2} and \eqref{it:itpcl-3}
that we are about to prove.

\medskip

To prove the parts~\eqref{it:itpcl-1}, \eqref{it:itpcl-3} and
\eqref{it:itpcl-3} of the inductive statement, let $t'$ be a node on a
path from the root $f(s)$ of $\Pi^L_s$ to one of its open assumptions.
We now make our case distinction.

\begin{description}
\item[Case $s \in \Dom(c)$.]
In this case $\Pi^{L}_{s}$ contains $f^L(s)$ as its single node, and so
\eqref{it:itpcl-1} follows by \eqref{eq:core}, while \eqref{it:itpcl-2}
and \eqref{it:itpcl-3} are obvious by construction.

\item[Case $s \not\in \Dom(c)$ and $s \not\in \Ran(c)$.]
Let $v_{0},\ldots,v_{n-1}$ be the children of $s$ (in $\Pi$).
Then by construction $\Pi^{L}_{s}$ consists of the pre-proofs $\Pi^{L}_{v_{0}},
\ldots,\Pi^{L}_{v_{n-1}}$, linked to the root $f^L(s)$ via an instance $\Xi^{L}$ 
of Proposition~\ref{p:locitp}, in such a way that (i) all open leafs of 
$\Pi^{L}_{s}$ belong to one of the $\Pi^{L}_{v_{i}}$ where $s$ and $v_{i}$ are
connected, and (ii) $\Pi^{L}_{v_{i}}$ is \emph{directly} pasted to the 
corresponding leaf of $\Xi^{L}$ in case $s$ and $v_{i}$ are connected
(that is, no focus or unfocus rule are needed).
Concerning the position of the node $t'$ in $\Pi^{L}_{s}$, it follows from (i)
and (ii) that there is a child $v = v_{i}$ of $s$, which is connected to $s$ 
and such that $t'$ either lies (in the $\Pi^{L}_{v}$-part of $\Pi^{L}_{s}$) on
the path from $f^L(v)$ to an open leaf, or on the path in $\Pi^{L}_{s}$ from
$f^L(s)$ to $f^L(v)$.
Since the first case is easily taken care of by the inductive hypothesis, we 
focus on the latter.
It follows from (ii) that the full path from $f^L(s)$ to $f^L(v)$ is taken from 
the pre-proof $\Xi^{L}$ as provided by Proposition~\ref{p:locitp}.
But then \eqref{it:itpcl-1}, \eqref{it:itpcl-2} and \eqref{it:itpcl-3} are immediate by
item~\ref{i:itptrf}(a), (b) and (c) from mentioned proposition, given the fact that 
by \eqref{eq:core} the node $f^L(v)$ features a formula in focus.
(Note that the rule applied at $s$ in $\Pi^{L}_{s}$ is not the focus rule since 
$s \in T_{\mu} \cup T_{\nu}$ and thus $\Si_{s}$ contains a formula in 
focus.)

\item[Case $s \in \Ran(c)$.]
Let $s^{+}$ be the unique successor of $s$ in $\Pi$. Then by
construction $\Pi^{L}_{s}$ consists of a substitution instance of
$\Pi^{L}_{s^{+}}$, connected to $f^L(s)$ via the application of the
rules $\Ru_{\eta_s}$ (at the unique successor of $f^L(s)$) and
\RuDischarge{\dx_{s}} (at $f^L(s)$ itself). Clearly then there are two
possible locations for the node $t'$. If $t'$ is situated in the subtree
rooted at $f^L(s^{+})$, then \eqref{it:itpcl-1} and \eqref{it:itpcl-2}
follow from the inductive hypothesis (note that when we apply a
substitutions to the derivation $\Pi^{L}_{s^{+}}$ we do not change the
proof rules or alter the annotations). On the other hand, the only two
nodes of $\Pi^{L}_{s}$ that do not belong to mentioned subtree are
$f^L(s)$ itself and its unique child. These nodes carry the same sequent
label, and so in this case \eqref{it:itpcl-1} follows from
\eqref{eq:core}. Finally, \eqref{it:itpcl-2} and \eqref{it:itpcl-3} are
obvious since we already saw that the rules applied in $\Pi^{L}_{s}$ at
$f^L(s)$ and its successor are $\RuDischarge{\dx_{s}}$ and
$\Ru_{\eta_s}$, respectively.
\end{description}
This finishes the proof of the claim.
\end{pfclaim}
 
Finally, the proof of the Proposition is immediate by these claims if we
consider the case $s = r$, where $r$ denotes the root of the tree.
\end{proof}

We close this section with an example that illustrates the computation of
the interpolant:

\newcounter{nodecounter}
\renewcommand{\thenodecounter}{(\alph{nodecounter})}
\newcommand{\node}{\refstepcounter{nodecounter}\thenodecounter\ }
\newcommand{\se}{,\,}
\newcommand{\interpol}[1]{}

\newcommand{\eq}[1]{}

\begin{example}
In this part of the appendix we discuss an example in which we compute
an interpolant by induction on the complexity of a \Focus-proof. The
example is the interpolant for the implication
\begin{equation} \label{eq:implication}
(\alpha(p) \rightarrow p) \rightarrow (\alpha(q) \lor q),
\end{equation}
 where $\alpha(p)$ is the following formula:
\begin{align*}
\alpha(p) & = \mu x . \psi_1(p) \lor \psi_2(p) \lor \psi_3(p) \lor
\varphi \lor \Diamond x \\
\psi_1(p) & = (p \land \Diamond p) \lor (\atneg{r} \land \Diamond p)
\lor (\atneg{p} \land r \land \Box \atneg{p}) \\
\psi_2(p) & = p \land \atneg{r} \\
\psi_3(p) & = \Diamond \atneg{p} \land \Diamond{p} \\
\varphi & = \nu x . \Box (r \land x)
\end{align*}
This example is based on the example provided in \cite{stud:ckbp09},
which is in turn based on an earlier example by \cite{maks:temp91}, to
show that epistemic logic with common knowledge does not have Craig
interpolation. If substitutes the formula $\mu x . \Diamond(\atneg{s}
\land x)$ for the propositional letter $r$ in the definition of $\alpha$
then one obtains the translations of the formulas from
\cite{stud:ckbp09} to the alternation-free $\mu$-calculus. We will see
that the interpolant of \eqref{eq:implication} can be expressed in the
alternation-free $\mu$-calculus.

\begin{figure}
\begin{prooftree}
\def\fCenter{\: \mid \:}

\Axiom$\node \label{topmost} [\varphi^f \se \atneg{p}\se \Diamond p\se \atneg{r}\se
\Diamond \alpha(p) \fCenter q \se \alpha(q)]^\dx
\interpol{\dx}$

\RightLabel{\RuAnd, \AxLit}
\UnaryInf$\node \label{second} (r \land \varphi)^f \se \atneg{p}\se \Diamond p\se
\atneg{r}\se \Diamond \alpha(p) \fCenter q \se \alpha(q)
\interpol{\bot \lor \dx \eq{\dx}}$

\RightLabel{\RuAnd, \AxLit}
\UnaryInf$\node \label{third} (r \land \varphi)^f \se \atneg{p} \se p \land \Diamond p \se p \land
\atneg{r} \se \Diamond \alpha(p) \fCenter q \se \alpha(q)
\interpol{\bot \lor (\bot \lor \dx) \eq{\dx}}$

\RightLabel{\RuWeak}
\UnaryInf$(r \land \varphi)^f \se \atneg{p}\se \psi_1(p)\se \psi_2(p)\se
\psi_3(p) \se \Diamond \alpha(p) \fCenter q \se \alpha(q)
\interpol{\dx}$

\RightLabel{\RuMu, \RuOr}
\UnaryInf$(r \land \varphi)^f \se \atneg{p} \se \alpha(p) \fCenter q \se
\alpha(q)
\interpol{\dx}$

\RightLabel{\RuBox}
\UnaryInf$\node \label{firstbox} \Box (r \land \varphi)^f \se \Diamond \atneg{p} \se \Diamond
\alpha(p) \fCenter \Diamond q \se \Diamond \alpha(q)
\interpol{\Diamond \dx}$

\RightLabel{\RuWeak}
\UnaryInf$p \se \Box (r \land \varphi)^f \se  \Box \atneg{p} \se \Diamond
\atneg{p} \se \Diamond \alpha(p) \fCenter \atneg{q} \se
\Diamond q \se \atneg{r} \se \Diamond \alpha(q)
\interpol{\Diamond \dx}$

\RightLabel{\RuNu}
\UnaryInf$p \se \varphi^f \se  \Box \atneg{p} \se
\Diamond \atneg{p} \se \Diamond \alpha(p) \fCenter \atneg{q} \se
\Diamond q \se \atneg{r} \se \Diamond \alpha(q)
\interpol{\Diamond \dx}$

\RightLabel{\RuAnd, \AxLit}
\UnaryInf$\node \label{addnegr} p \se (r \land \varphi)^f \se  \Box \atneg{p} \se \Diamond
\atneg{p} \se \Diamond \alpha(p) \fCenter \atneg{q} \se
\Diamond q \se \atneg{r} \se \Diamond \alpha(q)
\interpol{\atneg{r} \land \Diamond \dx}$

\RightLabel{\RuAnd, \RuWeak, \RuBox, \AxLit}
\UnaryInf$\node \label{complexand} p \se (r \land \varphi)^f \se  \Box \atneg{p} \se \Diamond
\atneg{p} \land \Diamond p \se \Diamond \alpha(p) \fCenter \atneg{q} \se
\Diamond q \se \atneg{r} \se \Diamond \alpha(q)
\interpol{(\atneg{r} \land \Diamond \dx) \lor \Diamond \bot
\eq{\atneg{r} \land \Diamond \dx}}$

\RightLabel{\RuAnd, \AxLit}
\UnaryInf$\node \label{morenegr} p \se (r \land \varphi)^f \se \atneg{p} \land r \land \Box
\atneg{p} \se \Diamond \atneg{p} \land \Diamond p \se \Diamond \alpha(p)
\fCenter \atneg{q} \se \Diamond q \se \atneg{r} \se \Diamond \alpha(q)
\interpol{\bot \lor (\atneg{r} \lor (\atneg{r} \land \Diamond \dx))
\eq{\atneg{r} \lor \Diamond \dx}}$

\RightLabel{\RuWeak}
\UnaryInf$p \se (r \land \varphi)^f \se \psi_1(p)\se \psi_2(p)\se \psi_3(p)\se
\Diamond \alpha(p) \fCenter \atneg{q} \se \Diamond q \se \atneg{r} \se
\Diamond \alpha(q)
\interpol{\atneg{r} \lor \Diamond \dx}$

\RightLabel{\RuMu, \RuOr}
\UnaryInf$p \se (r \land \varphi)^f \se \alpha(p) \fCenter \atneg{q} \se
\Diamond q \se \atneg{r} \se \Diamond \alpha(q)
\interpol{\atneg{r} \lor \Diamond \dx}$

\RightLabel{\RuAnd, \AxLit}
\UnaryInf$p \se (r \land \varphi)^f \se \alpha(p) \fCenter \atneg{q} \se
q \land \Diamond q \se q \land \atneg{r} \se \Diamond \alpha(q)
\interpol{\top \land (\top \land (\atneg{r} \lor \Diamond \dx))
\eq{\atneg{r} \lor \Diamond \dx}}$

\RightLabel{\RuWeak}
\UnaryInf$p \se (r \land \varphi)^f \se \alpha(p) \fCenter \atneg{q} \se
\psi_1(q)\se \psi_2(q)\se \psi_3(q)\se \Diamond \alpha(q)
\interpol{\atneg{r} \lor \Diamond \dx}$

\RightLabel{\RuMu, \RuOr}
\UnaryInf$p \se (r \land \varphi)^f \se \alpha(p) \fCenter \atneg{q} \se
\alpha(q)
\interpol{\atneg{r} \lor \Diamond \dx}$

\RightLabel{\RuBox}
\UnaryInf$\node \label{secondbox} \Diamond p \se \Box (r \land \varphi)^f \se \Diamond
\alpha(p) \fCenter \Diamond \atneg{q} \se \Diamond \alpha(q)
\interpol{\Diamond(\atneg{r} \lor \Diamond \dx)}$

\RightLabel{\RuWeak}
\UnaryInf$\atneg{p} \se \Diamond p\se \atneg{r}\se \Box(r \land
\varphi)^f \se \Diamond \alpha(p) \fCenter q \se \Box \atneg{q} \se \Diamond
\atneg{q} \se \Diamond \alpha(q)
\interpol{\Diamond(\atneg{r} \lor \Diamond \dx)}$

\RightLabel{\RuNu}
\UnaryInf$\atneg{p} \se \Diamond p\se \atneg{r}\se \varphi^f \se \Diamond
\alpha(p) \fCenter q \se \Box \atneg{q} \se \Diamond \atneg{q} \se \Diamond \alpha(q)
\interpol{\Diamond(\atneg{r} \lor \Diamond \dx)}$

\RightLabel{\RuAnd, \RuWeak, \RuBox, \AxLit}
\UnaryInf$\node \label{morecomplexand} \atneg{p} \se \Diamond p\se \atneg{r}\se \varphi^f \se \Diamond
\alpha(p) \fCenter q \se \Box \atneg{q} \se \Diamond \atneg{q} \land
\Diamond q \se \Diamond \alpha(q)
\interpol{\Diamond (\atneg{r} \lor \Diamond \dx) \land \Box \top
\eq{\Diamond (\atneg{r} \lor \Diamond \dx)}}$

\RightLabel{\RuAnd, \AxLit}
\UnaryInf$\node \label{addr} \atneg{p} \se \Diamond p\se \atneg{r}\se \varphi^f \se
\Diamond \alpha(p) \fCenter q \se \atneg{q} \land r \land \Box \atneg{q}
\se \Diamond \atneg{q} \land \Diamond q \se \Diamond \alpha(q)
\interpol{\top \land (r \land \Diamond (\atneg{r} \lor \Diamond \dx))
\eq{r \land \Diamond (\atneg{r} \lor \Diamond \dx)}}$

\RightLabel{\RuWeak}
\UnaryInf$\atneg{p} \se \Diamond p\se \atneg{r}\se \varphi^f \se \Diamond
\alpha(p) \fCenter q \se \psi_1(q)\se \psi_2(q)\se \psi_3(q)\se
\Diamond \alpha(q)
\interpol{r \land \Diamond (\atneg{r} \lor \Diamond \dx)}$

\RightLabel{\RuMu, \RuOr}
\UnaryInf$\atneg{p}\se \Diamond p\se \atneg{r}\se \varphi^f \se \Diamond
\alpha(p) \fCenter q \se \alpha(q)
\interpol{r \land \Diamond (\atneg{r} \lor \Diamond \dx)}$

\RightLabel{\RuDischarge{\dx}}
\UnaryInf$\node \label{discharge} \atneg{p}\se \Diamond p\se \atneg{r}\se \varphi^f \se \Diamond
\alpha(p) \fCenter q \se \alpha(q)
\interpol{\mu \dx . r \land \Diamond (\atneg{r} \lor \Diamond \dx)}$

\RightLabel{\RuFocus,\RuUnfocus}
\UnaryInf$\atneg{p}^f\se \Diamond p\se \atneg{r}\se \varphi\se \Diamond
\alpha(p) \fCenter q^f \se \alpha(q)^f
\interpol{\mu \dx . r \land \Diamond (\atneg{r} \lor \Diamond \dx)}$

\RightLabel{\RuAnd, \AxLit}
\UnaryInf$\atneg{p}^f\se p \land \Diamond p\se p \land \atneg{r} \se \Diamond \alpha(p) \fCenter q^f \se \alpha(q)^f
\interpol{\bot \lor (\bot \lor \mu \dx . r \land \Diamond (\atneg{r}
\lor \Diamond \dx)) \eq{\mu \dx . r \land \Diamond (\atneg{r} \lor
\Diamond \dx)}}$

\RightLabel{\RuWeak}
\UnaryInf$\atneg{p}^f\se \psi_1(p)\se \psi_2(p)\se \psi_3(p)\se \varphi\se
\Diamond \alpha(p) \fCenter q^f \se \alpha(q)^f
\interpol{\mu \dx . r \land \Diamond (\atneg{r} \lor \Diamond \dx)}$

\RightLabel{\RuMu, \RuOr}
\UnaryInf$\atneg{p}^f \se \alpha(p)^f \fCenter q^f \se \alpha(q)^f
\interpol{\mu \dx . r \land \Diamond (\atneg{r} \lor \Diamond \dx)}$

\RightLabel{\RuOr}
\UnaryInf$\node \label{root} (\atneg{p} \lor \alpha(p))^f \fCenter (q \lor \alpha(q))^f
\interpol{\mu \dx . r \land \Diamond (\atneg{r} \lor \Diamond \dx)}$
\end{prooftree}
\caption{A \Focus-proof of \eqref{eq:implication}}
\label{fig:example}
\end{figure}

Figure~\ref{fig:example} contains a \Focus-proof of the implication from
$\alpha(p) \rightarrow p$ to $\alpha(q) \lor q$. All the sequents in
this proof are already partitioned. At many steps we apply multiple
proof rules or apply the same rules multiple times. For instance at the
node labelled with \ref{complexand}, moving toward the node labeled with
\ref{addnegr}, we first apply the rule \RuAnd to the formula $\Diamond
\atneg{p} \land \Diamond p$. This splits the proof into two branches.
The left branch for the residual formula $\Diamond \atneg{p}$ is the
node labeled with \ref{addnegr}. The right branch for the residual
formula $\Diamond p$ is not written out. It continues with an
application of weakening to reduce the sequent to $\Box \atneg{p},
\Diamond p \mid$. On this branch the proof continues with an application
of \RuBox follows by \AxLit. We leave it to the reader to reconstruct
these details for all other nodes of the proof in
Figure~\ref{fig:example}.

\begin{figure}
\begin{center}
\begin{tabular}{rrl}
node & interpolant  & \phantom{$\equiv$} simplification \\
\ref{topmost} & $\dx$ & \\
\ref{second} & $\bot \lor \dx$ & $\equiv \dx$ \\
\ref{third} & $\bot \lor (\bot \lor \dx)$ & $\equiv \dx$ \\
\ref{firstbox} & $\Diamond \dx$ & \\
\ref{addnegr} & $\atneg{r} \land \Diamond \dx$ & \\
\ref{complexand} & $(\atneg{r} \land \Diamond \dx) \lor \Diamond \bot$
& $\equiv \atneg{r} \land \Diamond \dx$ \\
\ref{morenegr} & $\bot \lor (\atneg{r} \lor (\atneg{r} \land \Diamond
\dx))$ & $\equiv \atneg{r} \lor \Diamond \dx$ \\
\ref{secondbox} & $\Diamond(\atneg{r} \lor \Diamond \dx)$ & \\
\ref{morecomplexand} & $\Diamond (\atneg{r} \lor \Diamond \dx) \land \Box \top$
& $\equiv \Diamond (\atneg{r} \lor \Diamond \dx)$ \\
\ref{addr} & $\top \land (r \land \Diamond (\atneg{r} \lor \Diamond
\dx))$ & $\equiv r \land \Diamond (\atneg{r} \lor \Diamond \dx)$ \\
\ref{discharge} & $\mu \dx . r \land \Diamond (\atneg{r} \lor \Diamond
\dx)$ & \\
\ref{root} & $\mu \dx . r \land \Diamond (\atneg{r} \lor \Diamond
\dx)$ & \\
\end{tabular}
\end{center}
\caption{Interpolant computed from the proof in Figure~\ref{fig:example}}
\label{fig:table}
\end{figure}
Following Definitions \ref{d:locitp}~and~\ref{d:itp}, we can compute the
interpolant of \eqref{eq:implication} by induction over the proof in
Figure~\ref{fig:example}. The most important steps of this computation
are in the table of Figure~\ref{fig:table}.
At some nodes we rewrite the interpolant into a simpler equivalent
formula, and then continue the computation with the simplified version
of the interpolant. The formula $\mu \dx . r \land \Diamond (\atneg{r}
\lor \Diamond \dx)$ at the root node \ref{root} is the interpolant of
$\alpha(p) \rightarrow p$ and $\alpha(q) \lor q$.
\end{example}

\section{Conclusion \& Questions}

In this paper we saw that the idea of placing formulas in \emph{focus} can be
extended from the setting of logics like \textsc{ltl} and 
\textsc{ctl}~\cite{lang:focu01} to that of the alternation-free modal 
$\mu$-calculus: we designed a very simple and natural, cut-free sequent system
which is sound and complete for all validities in the language 
consisting of all (guarded) formulas in the alternation-free fragment $\AFMC$ 
of the modal $\mu$-calculus.
We then used this proof system $\Focus$ to show that the alternation-free 
fragment enjoys the Craig Interpolation Theorem.
Clearly, both results add credibility to the claim that $\AFMC$ is an interesting 
logic with good meta-logical properties.
\medskip

\noindent
Below we list some directions for future research.

\begin{enumerate}
\item
Probably the most obvious question is whether the restriction to guarded formulas
can be lifted. 
In fact, we believe that the focus proof system, possibly with some minor 
modifications in the definition of a proof, is also sound and complete for the 
full alternation-free fragment.
To prove this observation, one may bring ideas from Friedmann \& 
Lange~\cite{frie:deci13} into our definition of tableaux and tableau games.
\item
Another question is whether we may tidy up the focus proof system, in the same 
way that Afshari \& Leigh did with the Jungteerapanich-Stirling 
system~\cite{afsh:cutf17,jung:tabl10,stir:tabl14}.
As a corollary of this it should be possible to obtain an annotation-free
sequent system for the alternation-free fragment of the $\mu$-calculus, and to 
prove completeness of Kozen's (Hilbert-style) axiomatisation for $\AFMC$.
\item
Moving in a somewhat different direction, we are interested to see to which 
degree the focus system can serve as a basis for sound and complete derivation
systems for the alternation-free validities in classes of frames satisfying 
various kinds of frame conditions.
\item
We think it is of interest to see which other fragments of the modal 
$\mu$-calculus enjoy Craig interpolation.
A very recent result by L.~Zenger~\cite{zeng:proo21} shows that the fragments
$\Sigma^{\mu}_{1}$ and $\Pi^{\mu}_{1}$ consisting of, respectively, the 
$\mu$-calculus formulas that \emph{only} contain least- or greatest fixpoint
operators, each have Craig interpolation.
Clearly, a particular interesting question would be whether our focus system can 
be used to shed some light on the interpolation problem for propositional dynamic
logic (see the introduction for some more information) and other fragments of the
alternation-free $\mu$-calculus.
Looking at fragments of the modal $\mu$-calculus that are \emph{more} expressive
than $\AFMC$, an obvious question is whether \emph{every} bounded level of the
alternation hierarchy admits Craig interpolation.
\item
Finally, the original (uniform) interpolation proof for the full $\mu$-calculus 
is based on a direct automata-theoretic construction~\cite{dago:logi00}. 
Is something like this possible here as well? 
That is, given two modal automata $\bbA_{\phi}$ and $\bbA_{\psi}$ corresponding
to $\AFMC$-formulas $\phi$ and $\psi$, can we directly construct a modal 
automaton $\bbB$ which serves as an interpolant for $\bbA_{\phi}$ and 
$\bbA_{\psi}$ (so that we may obtain an $\AFMC$-interpolant for $\phi$ and $\psi$
by translating the automaton $\bbB$ back into $\AFMC$)?
Recall that the automata corresponding to the alternation-free $\mu$-calculus are 
so-called \emph{weak} modal parity 
automata~\cite{mull:alte92,carr:powe20}.
\end{enumerate}

%{\small
\bibliographystyle{plain}
\bibliography{references,extra}
%}

\appendix
\section{Infinite games}
\label{sec:games}

In this brief appendix we give the basic definitions of infinite two-player 
games.
We fix two players that we shall refer to as $\eloi$ (female) and $\abel$
(male).

% \begin{definition}
% \label{d:game}
A {\em two-player game} is a quadruple $\bbG = (V,E,\Own,\WC)$ where $(V,E)$ 
is a graph, $\Own$ is a map $\Own: V \to \{ \eloi, \abel \}$, and $\WC$ is a 
set of infinite paths in $(V,E)$.
We denote $G_{\Pi} \isdef \Own^{-1}(\Pi)$.
An \emph{initialised game} is a pair consisting of a game $\bbG$ and an element
$v$ of $V$; such a pair is usually denoted as $\bbG@v$.
% \end{definition}

We will refer to $(V,E)$ as the \emph{board} or \emph{arena} of the game. 
Elements of $V$ will be called \emph{positions}, and $\Own(v)$ is the 
\emph{owner} of $v$.
Given a position $v$ for player $\Pi \in \{ \eloi, \abel\}$, the set $E[v]$ 
denotes the set of \emph{moves} that are \emph{legitimate} or \emph{admissible
to} $\Pi$ at $v$.
The set $\WC$ is called the \emph{winning condition} of the game.

A \emph{match} of an initialised game consists of the two players moving a token
from one position to another, starting at the initial position, and following 
the edge relation $E$.
Formally, a \emph{match} or \emph{play} of the game $\bbG = (V,E,\Own,\WC)$
starting at position $v_{I}$ is simply a path $\pi$ through the graph $(V,E)$
such that $\first(\pi) = v_{I}$.
Such a match $\pi$ is \emph{full} if it is maximal as a path, that is, either
finite with $E[\last(\pi)] = \nada$, or infinite.
The owner of a position is responsible for moving the token from that position 
to an adjacent one (that is, an $E$-successor); in case this is impossible 
because the node has no $E$-successors, the player \emph{gets stuck} and 
immediately loses the match.
If neither player gets stuck, the resulting match is infinite; we declare 
$\eloi$ to be its winner if the match, as an $E$-path, belongs to the set $\WC$.
Full matches that are not won by $\eloi$ are won by $\abel$.

Given these definitions, it should be clear that it does not matter which player
owns a state that has a unique successor; for this reason we often take $\Own$ to
be a \emph{partial} map, provided $\Own(v)$ is defined whenever $\sz{E[v]} 
\neq 1$.

A position $v$ is a winning position for a player if they have a way of playing
the game that guarantees they win the resulting match, no matter how their 
opponent plays. 
To formalise this, we let $\PM_{\Pi}$ denote the collection of partial matches 
$\pi$ ending in a position $\last(\pi) \in V_{\Pi}$, and define $\PM_{\Pi}@v$ as
the set of partial matches in $\PM_{\Pi}$ starting at position $v$.
A \emph{strategy for a player} $P$ is a function $f: \PM_{P} \to V$; if $f(\pi)
\not\in E[\last(\pi)]$, for some $\pi \in \PM_{P}$, we say that $f$ prescribes 
an \emph{illegitimate move} in $\pi$.
A match $\pi = (v_{i})_{i<\kappa}$ is \emph{guided} by a $P$-strategy $f$ if 
$f(v_{0}v_{1}\cdots v_{n-1}) = v_{n}$ for all $n<\kappa$ 
such that $v_{0}\cdots v_{n-1}\in \PM_{P}$.
A position $v$ is \emph{reachable} by a strategy $f$ is there is an $f$-guided
match $\pi$ with $v = \last(\pi)$.
A $P$-strategy $f$ is \emph{legitimate from a position $v$} if the moves that it
prescribes to $f$-guided partial matches in $\PM_{P}@v$ are always legitimate, 
and \emph{winning for $P$ from $v$} if in addition $P$ wins all $f$-guided full
matches starting at $v$.
When defining a strategy $f$ for one of the players in a board game, we can 
and in practice will confine ourselves to defining $f$ for partial matches 
that are themselves guided by $f$.
A position $v$ is a \emph{winning position} for player $P \in \{ \eloi, \abel\}$
if $P$ has a winning strategy in the game $\bbG@v$; the set of these positions 
is denoted as $\Win_{P}(\bbG)$.
The game $\bbG$ is \emph{determined} if every position is winning for either 
$\eloi$ or $\abel$.

A strategy is \emph{positional} if it only depends on the last position of a 
partial match, i.e., if $f(\pi) = f(\pi')$  whenever $\last(\pi) = \last(\pi')$;
such a strategy can and will be presented as a map $f: V_{P} \to V$.

A \emph{priority map} on the board $V$ is a map $\Om: V \to \om$ with finite 
range.
A \emph{parity game} is a board game $\bbG = (V,E,\Own,\WC_{\Om})$ in which the
winning condition $\WC_{\Om}$ is given as follows.
Given an infinite match $\pi$, let $\Inf(\pi)$ be the set of positions that
occur infinitely often in $\pi$; then $\WC_{\Om}$ consists of those infinite 
paths $\pi$ such that $\max\big(\Om[\Inf(\pi)]\big)$ is even.
Such a parity game is usually denoted as $\bbG = (V,E,\Own,\Om)$.
The following fact is independently due to Emerson \& Jutla~\cite{emer:tree91}
and Mostowski~\cite{most:game91}.

\begin{fact}[Positional Determinacy]
\label{f:pdpg}
Let $\bbG = (G,E,\Own,\Om)$ be a parity game.
Then $\bbG$ is determined, and both players have positional winning strategies.
\end{fact}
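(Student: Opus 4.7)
The plan is to prove positional determinacy by a nested induction, following the classical argument of Zielonka: the outer induction is on the cardinality of the priority range $\Om[V]$, and the inner induction, for the case analysis described below, is on $\sz{V}$. The trivial base case is when only a single priority $d$ appears in $\Om[V]$: then the parity of $d$ alone determines the winner of every infinite match, and both winning regions admit an obvious positional strategy (pick any legitimate successor) provided we first remove, via an attractor construction, the positions from which the winning player can be forced to get stuck.

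The key tool is the $P$-attractor. Given $X \sse V$ and a player $P \in \{\eloi,\abel\}$, the set $\mathrm{Attr}_{P}(X)$ is defined as the $\sse$-least superset of $X$ containing every $P$-owned position with at least one successor already in the set, and every opponent-owned position all of whose successors lie in the set. A routine induction on the stage at which a position enters $\mathrm{Attr}_{P}(X)$ yields a positional strategy $f_{X}^{P}$ that forces the token from $\mathrm{Attr}_{P}(X)$ into $X$ in finitely many moves; moreover, the complement $V \setminus \mathrm{Attr}_{P}(X)$ is closed under the opponent's moves, so it naturally induces a well-defined subgame.

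For the inductive step of the main argument, let $d \isdef \max \Om[V]$, and assume WLOG that $d$ is even (the odd case being dual). Set $U \isdef \Om^{-1}(d)$ and $A \isdef \mathrm{Attr}_{\eloi}(U)$, and consider the subgame $\bbG'$ obtained by restricting to $V \setminus A$. Since $d$ no longer appears in $\bbG'$, the outer inductive hypothesis gives a positional decomposition $V \setminus A = W'_{\eloi} \sqcup W'_{\abel}$ with positional winning strategies $f'$ and $g'$. Now set $B \isdef \mathrm{Attr}_{\abel}(W'_{\abel})$ in the original game. If $B = V$, then combining $g'$ on $W'_{\abel}$ with Abelard's attractor strategy $f_{W'_{\abel}}^{\abel}$ on $B \setminus W'_{\abel}$ produces a positional winning strategy for Abelard on all of $V$: any $f$-guided infinite match eventually stays within $W'_{\abel}$ (since each excursion outside is forced back in finitely many steps, and the attractor strategy does not leave $B$), and $g'$ wins there. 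If instead $B \subsetneq V$, we restrict to the subgame on $V \setminus B$, which is strictly smaller than $V$, and apply the inner inductive hypothesis; Eloise's positional winning strategy there can be spliced together with her $U$-attractor strategy on $A \setminus B$ and with $f'$ on $W'_{\eloi} \setminus B$ to yield a positional winning strategy for her on $V \setminus B$.

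The main obstacle is the careful bookkeeping in the final splicing: one has to verify that every infinite play conforming to Eloise's composite strategy either eventually remains in a region where her inductively given strategy wins, or visits a position of priority $d$ infinitely often. The attractor definition is precisely what makes this dichotomy work, since it prevents Abelard from indefinitely postponing the token's visit to $U$ once the play enters $A$. Once this verification is carried out for both parities of $d$ and both players, the statement of Fact~\ref{f:pdpg} follows in full.
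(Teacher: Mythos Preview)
The paper does not prove this statement: it is recorded as a background fact, attributed to Emerson \& Jutla and to Mostowski, and used without argument. Your proposal follows Zielonka's later attractor-based proof, which is the standard textbook route and in principle a correct approach --- so there is no in-paper proof to compare against, only the question of whether your sketch stands on its own.

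There is a genuine gap in your treatment of the case $B \subsetneq V$. Applying the inner inductive hypothesis to the subgame on $V \setminus B$ only gives you a partition of $V \setminus B$ into winning regions $W''_{\eloi}$ and $W''_{\abel}$ of that subgame; it does \emph{not} give that $\eloi$ wins on all of $V \setminus B$. In general $W''_{\abel}$ is nonempty --- this is precisely why Zielonka's recursion has exponential rather than polynomial worst-case running time. The correct conclusion is $\Win_{\eloi}(\bbG) = W''_{\eloi}$ and $\Win_{\abel}(\bbG) = B \cup W''_{\abel}$: $\eloi$'s positional strategy on $W''_{\eloi}$ comes straight from the hypothesis (the complement of an $\abel$-attractor is an $\abel$-trap, so her subgame strategy already works in $\bbG$), and $\abel$'s strategy on $W''_{\abel}$ uses his subgame strategy together with the observation that any escape by $\eloi$ lands in $B \sse \Win_{\abel}(\bbG)$. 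The splicing you describe --- the $U$-attractor on $A \setminus B$ together with $f'$ on $W'_{\eloi} \setminus B$ --- is the construction for the \emph{other} branch $W'_{\abel} = \nada$, where it yields a winning strategy for $\eloi$ on all of $V$; it does not establish what you claim in the present branch. A secondary issue: your inner induction is on $\sz{V}$, but the Fact is stated for arbitrary arenas, and removing a nonempty $B$ from an infinite $V$ need not reduce its cardinality; for infinite boards the inner step must be recast as a transfinite iteration of the $B$-removal.
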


\end{document}